\newtheorem{theorem}{Theorem}
\newtheorem{lemma}{Lemma}
\newtheorem{remark}{Remark}
\newtheorem{proposition}{Proposition}
\def\ben{\begin{eqnarray}}
\def\be*{\begin{eqnarray*}}
\def\non{\end{eqnarray}}
\def\no*{\end{eqnarray*}}
\begin{document}
	\begin{frontmatter}
	\title{Intelligent Sampling and Inference for Multiple Change Points in Extremely Long Data Sequences}
\begin{aug}
	\author{Zhiyuan Lu},
	\author{Moulinath Banerjee}
	\and
	\author{George Michailidis}
\end{aug}



\begin{abstract}
	Change point estimation in its offline version is traditionally performed by optimizing a data fit criterion over the data set of interest
	that considers each data point as the true location parameter. The data point that minimizes the criterion is declared as the change point estimate. For estimating multiple change points, the available procedures in the literature are analogous in spirit, but significantly more involved in execution. 
	Since change-points are local discontinuities, only data points close to the actual change point provide useful information for estimation, while data points far
	away are superfluous, to the point where using only the points close to the true
	parameter is just as precise as using the full data set. Leveraging this ``locality principle", we introduce a two-stage procedure
	for the problem at hand: the 1st stage uses a {\em sparse subsample} to obtain pilot estimates of the underlying
	change points, and the 2nd stage updates these estimates by sampling densely in appropriately defined neighborhoods around them, and computing a local change-point estimator. 
	We establish that this method achieves the same rate of convergence and even
	virtually the same asymptotic distribution as an analysis of the full data, while reducing computational complexity to
	$\sqrt{N}$ time ($N$ being the length of data set) under favorable circumstances, as opposed to at least $O(N)$ time for all current procedure.
This makes our new method promising for the analysis of exceedingly long data sets with adequately spaced out change
	points. The main results, which, in particular, lead to a prescription for constructing explicitly computable joint (asymptotic) confidence intervals for a
	{\em growing number} of change-points via the proposed procedure, are established under a signal plus noise model with independent and identically distributed error terms, but extensions to dependent data settings, as well as multiple stage ($>2$) procedures are also provided. The performance of our procedure -- which is coined ``intelligent sampling" --  is illustrated on both emulated data and a real internet data stream.
\end{abstract}
\end{frontmatter}

\section{Introduction} \label{sec:introductionsection}
Change point analysis has been extensively studied in both the statistics and econometrics literature \cite{Basseville1993detection}, due to its wide applicability in diverse fields, including economics and finance \cite{frisen2008financial}, quality control \cite{qiu2013introduction}, neuroscience \cite{koepcke2016single}, etc. A change point represents a discontinuity in the parameters of the data generating process. The literature has investigated both the {\em offline} and {\em online} versions of the problem \cite{Basseville1993detection,csorgo1997limit}. In the former case, one is given a sequence of observations and questions of interest include: (i) whether there exists a change point and (ii) if there exists one (or multiple) change point(s), identify its (their) location, as well as estimate the parameters of the data generating process to the left and right of it (them). In the latter case, one obtains new observations in a sequential manner and the main interest is in quickest detection of the change point. This paper deals with the \emph{offline} problem for very large data sequences. 

The offline analysis of \emph{extremely long} data sequences poses a body of challenges, not least owing to the fact that conventional modes of analysis based on the entire data are computationally challenging, owing to the massive size involved. The identification of change-points in a sequence, which constitute local discontinuities, requires some sort of a search procedure. A list of such methods, along with summaries of their results, can be found in \cite{niu2016multiple}. Time complexities for identification of multiple change points range between $O(N)$ and $O(N^2)$ depending on spacing conditions between change points, where $N$ is the length of the time series:\footnote{Precise identification of change-points under very relaxed spacing conditions can be accomplished, for example, by a dynamic programming based algorithm, which is of order $N^2$.}such time complexities are not attractive for very long sequences. 

Such offline scenarios arise when network engineers examine \emph{previously stored traces of traffic} in high-speed computer and communication networks and examine them in greater detail aiming to identify shifting persistent patterns followed by attribution analysis, possibly correlating the detected change points with other data that would point to either infrastructure related issues or shifts in user patterns. This should be contrasted with \emph{online monitoring} where the goal is to raise alarms \emph{in real time} whenever the underlying process is ``out-of-control" and notify network engineers, who determine the tolerance level for alarms, i.e. how large the change should be to be considered negatively impacting network operations \cite{van2014opennetmon}, \cite{ahmed2016survey}. In the \emph{offline setting}, the number of observations in network traces, corresponding to the number of packets or their payload in bytes at a granular temporal scale, is in the thousands per minute. Analogous problems come up in other systems including manufacturing processes \cite{shen2016change}, or other cyber-physical systems equipped with sensors \cite{khan2016optimal}, where data may possibly be stored in a distributed manner. 

The point of view adopted in the paper is as follows: we are given a data sequence of massive length and are interested in identifying the major changes in this series. While \emph{short-lived} changes of intensities in mean shifts may occur frequently, such transient perturbations do not affect the overall performance of most engineered or physical systems. With such applications in mind, it is very reasonable to assume that the number of truly significant changes that \emph{persist} over time, is \emph{not particularly large}. 
To communicate the core ideas effectively, we employ a canonical model: a long piece-wise constant mean model with multiple jumps -- where the number of jumps increases at a rate much slower than the length of the series ---  that are not too small relative to the fluctuations induced by noise. Also, the flat stretches between consecutive jumps are taken to be persistent. Indeed, the `piece-wise constant with jumps' strategy has been considered by a variety of authors (see \cite{niu2016multiple} and references therein) for the problem at hand and is convenient for developing the methodology. We demonstrate subsequently that our proposed strategy is robust to mis-specifications of the persistent piecewise constant model, in the sense that its detection of persistent stretches is essentially unaffected in the presence of short-lived `idiosyncratic' signal bursts.


Next, we summarize key contributions of this paper. 

{\bf 1.} We propose an effective solution to the computational problem discussed above via a strategy called ``intelligent sampling'', which proceeds by making two (or more) passes over the time-series at different levels of resolution. The first pass is carried out at a coarse time resolution and enables the analyst to obtain pilot estimates of the change-points, while the second investigates stretches of the time-series in (relatively) small neighborhoods of the initial estimates to produce updated estimates enjoying a high level of precision; in fact, essentially the same level of precision, as would be achieved by an analysis of the entire massive time-series. The core advantage of our proposed method is that it reduces computational time from \emph{linear} to \emph{sub-linear} under appropriate conditions, and, in fact close to square-root order, if the number of change-points is small relative to the length of the time-series. It is established that the computational gains (and analogously other processing gains from input-output operations) can be achieved \emph{without compromising} any statistical efficiency. From a realistic stand-point, the time resolution to be used for the first pass is difficult to determine beforehand, and to alleviate this methodological problem we propose an adaptive \emph{doubling algorithm} [Section 5.4] that samples at increasing levels of resolution till a sufficient number of initial change point estimates are found. As the emulation experiment in Section \ref{sec:realdata} demonstrates, our proposed method consistently picks up persistent structural changes in the presence of multiple spiky signals while staying agnostic to the latter. Nevertheless, if short duration shifts are also of interest, the interested researcher can further analyze the persistent segments (which are of smaller order than $N$) with any of the available procedures in the literature in a \emph{parallel} fashion, thus retaining the computational gains achieved by intelligent sampling.

{\bf 2.} Most of our results are rigorously developed in the signal plus noise model with independent and identically distributed (iid) sub-Gaussian errors, and in certain cases, normal errors. Normal errors have been widely used in the change point literature to showcase methods and establish theoretically their performance guarantees, e.g. \cite{fryzlewicz2014wild,niu2012screening,zhang2007modified}, since they provide an attractive canonical model and are more amenable to analysis. However, we also provide results and indicate extensions when the errors exhibit short or long-range dependence and also in the presence of non-stationarity, since both these features are likely to be present with very long data sequences. Some empirical evidence is provided to this effect. 
\newline
\indent Further, the focus of the presentation is on 2-stage procedures that provide all the key insights into the workings of the intelligent sampling procedure. However, for settings where the size of the data exceeds $10^{10}$, multiple stages are required to bring down the analyzed subsample to a manageable size. We therefore cover extensions to multi-stage intelligent sampling procedures as well as address how samples should be allocated at these different stages. Furthermore, such massive data sequences, often, can not be effectively stored in a single location. This does not pose a problem for intelligent sampling as it adapts well to distributed computing: it can be applied on the reduced size subsamples at the various locations where the original data are stored, followed by a single subsequent back and forth communication between the various locations and the central server, and subsequent calculations essentially carried out on the local servers. This is elaborated on in Section \ref{sec:compmethod}. 

{\bf 3.} On the inferential front, we establish asymptotic approximations to the joint distribution of the change-point estimates obtained by intelligent sampling in terms of the distribution functions of (typically asymmetrically) drifting random walks on the set of integers [Theorems \ref{thm:increasingJasymprotics} and \ref{thm:multidepend}], which can then be used to provide explicitly computable asymptotic joint confidence intervals for both finitely many and a \emph{growing number} of change points. While concentration properties of change-point estimates around the the true parameters in multiple change point problems are known, to the best of our knowledge, such results involve hard-to-pin-down constants (this is discussed in some more detail in Section \ref{sec:refitting} in the context of binary segmentation) and are therefore difficult to use in practical settings. Our prescribed methods involve estimating signal to noise ratios at the different change-points, which is easy to accomplish, and values of quantiles of drifting Gaussian random walks for different values of the drift parameter (which can be pre-generated on a computer). The arguments involved in establishing Theorem \ref{thm:increasingJasymprotics} require, among other things, careful analyses of the distribution functions of both symmetric and asymmetric Gaussian random walks (see part B of Supplement, Section \ref{sec:supplementB}) which may very well prove useful in many other contexts. To the best of our knowledge, our prescription of joint asymptotic confidence intervals for a growing number of change-points with readily computable estimates of the distributional parameters involved [see Theorem \ref{thm:increasingJasymprotics}] is the \emph{first of its kind} in the change point literature. 

{\bf 4:} We illustrate our procedure by applying it to two data-sets, the first in Section \ref{partly-emulated-data} on a partly emulated data set where the error process arises from real network traffic data and we create ground-truths by 'injecting' change-points into the error process through the addition of piecewise constant mean shifts. Section \ref{real-data-section} examines a real data set, where our intelligent sampling proposal is useful. It corresponds to network (internet) traffic destined for an autonomous system - a collection of Internet addresses under the control of a single network operator. The corresponding time series reflects aggregate incoming traffic to the autonomous system at 2 sec resolution. Short lived spikes are of less interest to network operators, since they usually correspond to transient traffic patterns. On the other hand, persisting shifts are of significant interest, indicating possible malicious activity, technical issues with the infrastructure, or at larger time scales, traffic growth that needs to be mitigated by corresponding capacity growth of the network's infrastructure. While this data is better described by a piecewise linear regime (as opposed to the piecewise constant model used for our theoretical development), our intelligent sampling technique continues to work well after modifying it to fit piecewise linear models at each stage: the procedures to obtain the first stage pilot estimates, as well as the second stage updated estimates are simply tweaked so that they can tackle piecewise linear functions. This demonstrates the scope of intelligent sampling to data-sets with more general mean structures than the one tackled in depth in this paper, and makes it attractive as a general principle.


The remainder of the paper is organized as follows. Section \ref{sec:single-CP} addresses intelligent sampling for the simpler {\em single} change point problem, which provides some fundamental insights into the nature of the procedure and its theoretical and computational properties. Section
\ref{sec:multiplechangepointsintro} deals with the main topic of this study: intelligent sampling for the multiple change point problem with a growing number of change-points, and presents the main theoretical results of the paper. Section \ref{sec:prac-imple} develops the practical methodology for intelligent sampling using \emph{binary segmentation} as the working procedure at Stage 1, and studies the computational complexity of the resulting approach. Section \ref{sec:compmethod} provides an elaborate study of the minimum subsample size required for precise inferences as a function of the length of the full data sequence and the signal-to-noise ratio using multiple stage procedures. Extensions to non-iid settings which are more pertinent for the more special case of time-series data are discussed in Section \ref{sec:dependenterrors}. The numerical performance of the procedure is calibrated via thorough simulations in Section \ref{sec:simulations}, while  applications to both emulated and real Internet data are presented in \ref{sec:realdata}. Section \ref{sec:discussion} concludes with a discussion of possible extensions of the intelligent sampling procedure, both in terms of alternative Stage 1 procedures (like wild binary segmentation and SMUCE), and also to other kinds of data (non-Gaussian data, discrete data, decaying signals). 
In the interests of space, all proofs and elaborate discussions of various facets of intelligent sampling are collected in the Supplement.

\section{Intelligent Sampling for the Single Change Point Problem}
\label{sec:single-CP}
\subsection{Single Change Point Model}\label{sec:singlemodel}
The simplest possible setting for the change point problem is the {\em stump} model, where data $(1/N,Y_1),\cdots,$ $(N/N,Y_N)$ are available
with $Y_i=f(i/N)+\varepsilon_i$ for $i=1,\cdots, N$, and where the error term $\varepsilon_i$ is independent and identically distributed (iid) following a $N(0,\sigma^2)$ distribution, while the function $f$ takes the form
\begin{eqnarray}\label{eq:stumpsignal}
f(x)=\alpha\cdot 1(x\leq \tau)+\beta\cdot 1(x>\tau),\qquad x\in (0,1),
\end{eqnarray}
for some constants $\alpha,\beta\in\mathbb{R}$, $\alpha\neq\beta$, and $\tau\in (0,1)$: the so-called `stump' model. 
For estimating the {\em change point} $\tau$ we employ a least squares criterion, given by
\begin{eqnarray}\label{eq:fullestimators}
(\hat{\alpha},\hat{\beta},\hat{\tau}):=\underset{(a,b,t)\in \mathbb{R}^2\times (0,1)}{\arg\min}\sum_{i=1}^N\left( Y_i-a\cdot 1(i/N\leq t)-b\cdot 1(i/N>t) \right)^2.
\end{eqnarray}

Using techniques similar to those in Section 14.5 of \cite{kosorok2007introduction}, 
we can establish that the estimator $\hat{\tau}$ is consistent for $\tau_N:=\lfloor N\tau\rfloor/N$, which acts as the change point among the covariates lying on the even grid. 
\begin{proposition}\label{prop:fullestimators}
For the stump model with normal errors the following hold: \\
(i) Both $(\hat{\alpha}-\alpha)$ and $(\hat{\beta}-\beta)$ converge to 0 with rate $O_p(N^{-1/2})$. \\
(ii) The change point estimate $\hat{\tau}$ satisfies
\begin{eqnarray}
\mathbb{P}\left[N(\hat{\tau}-\tau_N)=k\right]\to \mathbb{P}[L=k]\text{ for all }k\in\mathbb{Z}
\end{eqnarray}
where $L=\underset{i\in\mathbb{Z}}{\arg\min}\,X(i)$, and the random walk $\{X(i)\}_{i\in\mathbb{Z}}$ is defined as
\begin{equation}\label{Zprocess}
X(i)=\begin{cases} \Delta(\varepsilon_1^*+...+\varepsilon_{i-1}^*+\varepsilon_i^*)+i\Delta^2/2, \quad & i>0\\0, & i=0\\
-\Delta(\varepsilon_{i+1}^*+...+\varepsilon_{-1}^*+\varepsilon_{0}^*)+|i|\Delta^2/2,\quad & i<0,\end{cases}
\end{equation}
with $\varepsilon_0^*,\varepsilon_1^*,\varepsilon_2^*,\dots$ and $\varepsilon_{-1}^*,\varepsilon_{-2}^*,\dots$ being iid $N(0,\sigma^2)$ random variables and $\Delta := \beta - \alpha$.
\end{proposition}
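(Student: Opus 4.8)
\emph{Proof proposal.} The plan is to profile out the jump sizes $\alpha,\beta$, localise the resulting criterion around the true split, identify the localised process with the drifted Gaussian random walk in \eqref{Zprocess}, and read off the limit via an argmin continuous-mapping argument adapted to $\mathbb Z$. First I would profile: for a fixed candidate split $m\in\{1,\dots,N-1\}$ the inner minimisation over $(a,b)$ in \eqref{eq:fullestimators} is attained at the segment sample means $\bar Y_{\le m}$ and $\bar Y_{>m}$, so $\hat\tau=\hat m/N$ with $\hat m=\arg\min_{m}R(m)$ and $R(m):=\sum_{i\le m}(Y_i-\bar Y_{\le m})^2+\sum_{i>m}(Y_i-\bar Y_{>m})^2$. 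Writing $m_0:=\lfloor N\tau\rfloor$ so that $f(X_i)=\alpha\cdot 1(i\le m_0)+\beta\cdot 1(i>m_0)$ and $\Delta=\beta-\alpha$, everything reduces to the behaviour of the localised process $\mathbb M_N(k):=\tfrac12\bigl(R(m_0+k)-R(m_0)\bigr)$, $k\in\mathbb Z$, since $N(\hat\tau-\tau_N)=\hat m-m_0=\arg\min_{k}\mathbb M_N(k)$. Consistency $\hat m/N\to\tau$ follows from the argument alluded to before the statement; sharpening it to the localisation rate $\hat m-m_0=O_p(1)$ is a standard peeling computation --- over the range $M\le |k|\le\delta N$ the deterministic drift in $R(m_0)-R(m_0+k)$ is of order $-|k|\Delta^2$ while the stochastic part is a centred Gaussian partial sum of order $\Delta\sqrt{|k|}$, so a maximal inequality forces $\mathbb P(|\hat m-m_0|>M)$ to be small for $M$ large. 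Part (i) is then immediate: $\hat\alpha=\bar Y_{\le\hat m}=\bar Y_{\le m_0}+O_p(N^{-1})=\alpha+\bar\varepsilon_{\le m_0}+O_p(N^{-1})=\alpha+O_p(N^{-1/2})$, and likewise for $\hat\beta$.

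Next I would make the local expansion exact up to a negligible, $k$-free shift. Using the identity $g_m(a,b)=R(m)+m(\bar Y_{\le m}-a)^2+(N-m)(\bar Y_{>m}-b)^2$ for $g_m(a,b):=\sum_{i\le m}(Y_i-a)^2+\sum_{i>m}(Y_i-b)^2$, one gets $R(m)=g_m(\alpha,\beta)-c(m)$ with $c(m):=m(\bar Y_{\le m}-\alpha)^2+(N-m)(\bar Y_{>m}-\beta)^2$. For $k$ in a fixed window $[-M,M]$ one checks that $\bar Y_{\le m_0+k}-\alpha=\bar\varepsilon_{\le m_0}+O_p(N^{-1})$ and $\bar Y_{>m_0+k}-\beta=\bar\varepsilon_{>m_0}+O_p(N^{-1})$ uniformly in $k$, whence $c(m_0+k)-c(m_0)=o_p(1)$ uniformly in $k$. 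On the other hand, substituting $Y_i=\alpha+\varepsilon_i$ for $i\le m_0$ and $Y_i=\beta+\varepsilon_i$ for $i>m_0$ and cancelling gives the \emph{exact} identity $g_{m_0+k}(\alpha,\beta)-g_{m_0}(\alpha,\beta)=2X_N(k)$, where $X_N(k):=\Delta\sum_{j=1}^{k}\varepsilon_{m_0+j}+k\Delta^2/2$ for $k>0$, $X_N(k):=-\Delta\sum_{j=k+1}^{0}\varepsilon_{m_0+j}+|k|\Delta^2/2$ for $k<0$, and $X_N(0)=0$. Hence $\mathbb M_N(k)=X_N(k)+o_p(1)$ uniformly on $[-M,M]$, and since $(\varepsilon_{m_0+j})_{j}$ are iid $N(0,\sigma^2)$ the process $(X_N(k))_{k\in\mathbb Z}$ has, for $N$ large, exactly the law of the walk $(X(k))_{k\in\mathbb Z}$ in \eqref{Zprocess}; in particular $\mathbb M_N\Rightarrow X$ in the sense of finite-dimensional distributions on $\mathbb Z$.

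Then I would conclude by an argmin argument tailored to $\mathbb Z$. The walk has $X(i)\to+\infty$ almost surely as $|i|\to\infty$, because the drift $|i|\Delta^2/2$ dominates the random-walk term, which is $O(\sqrt{|i|\log\log|i|})$ by the law of the iterated logarithm; thus $L=\arg\min_i X(i)$ is a.s.\ attained, and it is a.s.\ unique since $\mathbb P[X(i)=X(j)]=0$ for $i\ne j$ (the difference is a nondegenerate Gaussian). Combining this with the finite-dimensional convergence just established and the tightness $\hat m-m_0=O_p(1)$: for each fixed $M$, $\arg\min$ over $\{-M,\dots,M\}$ is a continuous functional at configurations with a unique minimiser, so $\mathbb P[\arg\min_{|k|\le M}\mathbb M_N(k)=j]\to\mathbb P[\arg\min_{|k|\le M}X(k)=j]$ for every $j$, and letting $M\to\infty$ (using tightness on the left and $L<\infty$ a.s.\ on the right) yields $\hat m-m_0\Rightarrow L$. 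Since both sides are $\mathbb Z$-valued this is precisely $\mathbb P[N(\hat\tau-\tau_N)=k]\to\mathbb P[L=k]$ for all $k$, proving (ii).

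The step I expect to be the genuine obstacle is the localisation rate $\hat m-m_0=O_p(1)$: controlling the suprema of the centred Gaussian partial sums against the linear drift \emph{simultaneously at all scales} $M\le|k|\le\delta N$ --- this is exactly the kind of argument carried out in Section 14.5 of \cite{kosorok2007introduction}. Granting it, the remaining work is routine: the only delicate bookkeeping is the uniform $O_p(N^{-1})$ control of the segment means that renders $c(m_0+k)-c(m_0)$ negligible, and the (easy) verification that $L$ is a.s.\ finite and a.s.\ uniquely attained.
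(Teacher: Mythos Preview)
Your proposal is correct and follows essentially the same argmin continuous-mapping route that the paper invokes: the paper does not actually prove Proposition~\ref{prop:fullestimators} but simply cites Section~14.5 of \cite{kosorok2007introduction}, which carries out precisely the profile--localise--argmin programme you describe. Your decomposition $R(m)=g_m(\alpha,\beta)-c(m)$, the uniform $o_p(1)$ control of $c(m_0+k)-c(m_0)$ on compacts, and the exact identification of $\tfrac12(g_{m_0+k}-g_{m_0})$ with the drifted walk $X_N(k)$ are all sound, and your three-step passage to the limit (finite-dimensional convergence, a.s.\ uniqueness and finiteness of $L$, then $M\to\infty$ via tightness) mirrors the structure the paper uses later in the supplement when proving the analogous Theorems~\ref{thm:singlerate}--\ref{thmsingledist} and~\ref{thm:multidepend} for the two-stage estimator.
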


Next, we make several notes on the random variable $L$ introduced in the above proposition, as it appears multiple times throughout the remainder of this paper. Although, at a glance, the distribution of $L$ depends on two parameters, $\Delta$ and $\sigma$, in actuality $L$ is completely determined by the signal-to-noise ratio $\Delta/\sigma$ due to the Gaussian setting. To see this, note that we can re-write $L=\underset{i\in\mathbb{Z}}{\arg\min}(Z_i/|\Delta\sigma|)$ where
\begin{eqnarray}\label{eq:Zprocess2}
\frac{X(i)}{|\Delta\sigma|}=\begin{cases}
\text{sgn}(\Delta)(\varepsilon_1^*/\sigma+\dots+\varepsilon_i^*/\sigma)+i|\Delta/\sigma|/2,\quad& i>0\\0,& i=0\\-\text{sgn}(\Delta)(\varepsilon_{i+1}^*/\sigma+\dots+\varepsilon_{0}^*/\sigma)+|i\Delta/\sigma|/2,\quad& i<0
\end{cases}
\end{eqnarray} 
Since $\{\text{sgn}(\Delta)\varepsilon_i^*/\sigma\}_{i\in\mathbb{Z}}$ are iid $N(0,1)$ random variables, invariant under $\Delta$ and $\sigma$, it follows that $L$ only depends on the single parameter $\Delta/\sigma$. Hence, from here on, denote the associated random process as
\begin{eqnarray}
\label{eq:Zprocess3}
X_\Delta(i)=\begin{cases}
\text{sgn}(\Delta)(\varepsilon_1^\diamond+\dots+\varepsilon_i^\diamond)+i|\Delta|/2,\quad& i>0\\0,& i=0\\-\text{sgn}(\Delta)(\varepsilon_{i+1}^\diamond+\dots+\varepsilon_{0}^\diamond)+|i|\cdot|\Delta|/2,\quad& i<0
\end{cases}
\end{eqnarray}
where $\varepsilon_j^\diamond$ for $j\in \mathbb{Z}$ are all iid $N(0,1)$ random variables. Denote the argmin of the random walk $X_\Delta(i)$ as $L_\Delta=\underset{i\in\mathbb{Z}}{\arg\min}X_\Delta(i)$. An immediate observable property of $L_\Delta$ is the stochastic ordering with respect to $|\Delta|$:

\begin{proposition}\label{prop:Ldist}
Suppose we have constants $\Delta_1,\Delta_2\in\mathbb{R}$ such that $0<|\Delta_1|<|\Delta_2|$, then for any positive integer $k$ 
\begin{eqnarray}
\mathbb{P}[|L_{\Delta_1}|\leq k]\leq \mathbb{P}[|L_{\Delta_2}|\leq k]
\end{eqnarray}
\end{proposition}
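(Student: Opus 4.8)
The plan is to prove the (stronger) statement by a coupling argument combined with an elementary deterministic fact: adding to a function on $\mathbb{Z}$ a ``V-shaped'' term that is minimized at the origin can only pull its argmin inward, never push it outward. First I would use the sign-invariance already recorded around \eqref{eq:Zprocess3}: since $\{\mathrm{sgn}(\Delta)\varepsilon_j^*\}_{j\in\mathbb{Z}}$ are iid $N(0,1)$ irrespective of the sign of $\Delta$, the law of $X_\Delta$, and hence that of $L_\Delta$, depends on $\Delta$ only through $|\Delta|$; so it suffices to treat $0<\Delta_1<\Delta_2$. I would then realize both processes on one probability space by feeding a single iid $N(0,1)$ sequence $\{\varepsilon_j^*\}_{j\in\mathbb{Z}}$ into \eqref{eq:Zprocess3}, which gives the pointwise identity
\[
X_{\Delta_2}(i)=X_{\Delta_1}(i)+\frac{\Delta_2-\Delta_1}{2}\,|i|,\qquad i\in\mathbb{Z}.
\]

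Next I would check that, almost surely, each $X_{\Delta_m}$ ($m=1,2$) attains its minimum over $\mathbb{Z}$ at a \emph{unique} index. Existence: by the strong law of large numbers the relevant partial sums are $o(|i|)$ on each side, so $X_{\Delta_m}(i)\to+\infty$ as $|i|\to\infty$ while $X_{\Delta_m}(0)=0$; hence $\{i:X_{\Delta_m}(i)\le 0\}$ is a.s. finite and nonempty, and the minimum is attained. Uniqueness: for fixed $i\ne j$ the difference $X_{\Delta_m}(i)-X_{\Delta_m}(j)$ is a nondegenerate Gaussian (it involves at least one uncancelled $\varepsilon_\ell^*$) shifted by a constant, hence $\ne 0$ a.s.; a countable union over pairs shows that a.s. all the values $\{X_{\Delta_m}(i)\}_{i\in\mathbb{Z}}$ are distinct, for both $m=1$ and $m=2$ simultaneously. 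Restrict to this full-probability event, so that $L_{\Delta_1}$ and $L_{\Delta_2}$ are the genuine (unique) argmins.

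The deterministic core is then the following: if $h:\mathbb{Z}\to\mathbb{R}$ has a strict minimum at $i_0$ and $c>0$, then every minimizer $j_0$ of $\tilde h(i):=h(i)+c|i|$ satisfies $|j_0|\le|i_0|$. Indeed, if $|j_0|>|i_0|$ then $j_0\ne i_0$, and minimality of $j_0$ for $\tilde h$ gives $h(j_0)+c|j_0|\le h(i_0)+c|i_0|$, i.e.\ $h(j_0)\le h(i_0)-c(|j_0|-|i_0|)<h(i_0)$, contradicting that $i_0$ is the strict minimizer of $h$. Applying this with $h=X_{\Delta_1}$, $c=(\Delta_2-\Delta_1)/2$, $\tilde h=X_{\Delta_2}$, $i_0=L_{\Delta_1}$ and $j_0=L_{\Delta_2}$ yields $|L_{\Delta_2}|\le|L_{\Delta_1}|$ almost surely, whence $\{|L_{\Delta_1}|\le k\}\subseteq\{|L_{\Delta_2}|\le k\}$ up to a null set, and taking probabilities gives $\mathbb{P}[|L_{\Delta_1}|\le k]\le\mathbb{P}[|L_{\Delta_2}|\le k]$ for every $k\ge 0$ (in particular every positive integer).

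I do not expect a serious obstacle here; the only point that needs a little care is the a.s.\ uniqueness of the argmins, so that the deterministic comparison can be invoked without ambiguity about tie-breaking — this is exactly where the continuity of the Gaussian increments enters. (Even under an arbitrary tie-breaking convention for $L_\Delta$ the conclusion survives, since on the full-probability event that $X_{\Delta_1}$ has a unique minimizer one has $|L_{\Delta_2}|\le\max\{|j_0|:j_0\text{ minimizes }X_{\Delta_2}\}\le|L_{\Delta_1}|$.) The argument also makes transparent why the statement is about $|L_\Delta|$ rather than $L_\Delta$ itself: the perturbation $|i|(\Delta_2-\Delta_1)/2$ is symmetric about $0$, so it controls the distance of the argmin to $0$ but not its sign.
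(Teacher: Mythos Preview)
Your proof is correct and follows essentially the same coupling argument the paper employs: the paper does not give a separate proof of Proposition~\ref{prop:Ldist}, but the identical idea---realizing $X_{\Delta_1}$ and $X_{\Delta_2}$ on the same noise so that $X_{\Delta_2}(i)=X_{\Delta_1}(i)+\tfrac{|\Delta_2|-|\Delta_1|}{2}|i|$ and then observing that the added V-shape can only pull the argmin inward---appears as the proof of the first inequality in Lemma~\ref{lem:twoXrandwalkcomp} (stated there for general index sets $S\supseteq[-m,m]$). Your treatment is somewhat more careful in spelling out a.s.\ existence and uniqueness of the argmin, whereas the paper works directly with the event $\{|\arg\min|\le m\}$ and avoids the uniqueness discussion; both routes yield the same conclusion.
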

Practically, this stochastic ordering implies that if the $1-\alpha$ quantile $Q_{\Delta_1}(1-\alpha)$ of $|L_{\Delta_1}|$ is known, then $Q_{\Delta_1}(1-\alpha)$ can also serve as a conservative $1-\alpha$ quantile of $|L_{\Delta_2}|$ for any $|\Delta_2|\geq|\Delta_1|$. This can be useful in settings where given $J>0$ random variables $L_{\Delta_i}$ for $i=1\dots,J$, we desire positive integers $\ell_i$ for $i=1,\dots,J$ such that $\mathbb{P}[ |L_{\Delta_i}|\leq \ell_i ]\geq 1-\alpha$ for $i=1,\dots,J$. This scenario will appear in later sections where we consider models containing several change points with possibly different jump sizes. In such situations, a simple solution is to take $\ell_i=Q_{\Delta_m}(1-\alpha)$ for all $i$ where $m=\underset{1,\dots,J}{\arg\min}|\Delta_i|$, or in other words letting each $\ell_i$ be the $1-\alpha$ quantile of the $|L_{\Delta_i}|$ with the smallest parameter. Alternatively we can generate a table of quantiles for distributions $L_{\delta_1},L_{\delta_2},L_{\delta_3},\dots$ for a mesh of positive constants $\delta_1<\delta_2<\dots $ (e.g. we can let the $\delta_j=0.1j$ for $j=5,\dots,1000$), and let $\ell_i=Q_{\delta_j }(1-\alpha)$ where $\delta_j=\max\{ \delta_k:\delta_k\leq \Delta_i \}$, for $i=1,\dots,J$. 

\subsection{The Intelligent Sampling Procedure and its Properties}\label{sec:intelligentsampling}
\label{sec:singleprocedure}
\begin{enumerate}[label=(ISS\arabic*):]
\setlength{\itemindent}{.5in}
\item From the full data set of $\left(\frac{1}{N},Y_1\right),\left(\frac{2}{N},Y_2\right),...,\left(1,Y_N\right)$, take an evenly spaced subsample of approximately size $N_1=K_1N^{\gamma}$ for some $\gamma\in(0,1)$, $K_1>0$: thus, the data points are $\left(\frac{\lfloor N/N_1 \rfloor}{N},Y_{\lfloor N/N_1 \rfloor}\right)$, $\left(\frac{2\lfloor N/N_1 \rfloor}{N},Y_{2\lfloor N/N_1 \rfloor}\right)$, $\left(\frac{3\lfloor N/N_1 \rfloor}{N},Y_{3\lfloor N/N_1 \rfloor}\right)$ \dots
\item On this subsample apply least squares to obtain estimates $\left(\hat{\alpha}^{(1)},\hat{\beta}^{(1)},\hat{\tau}^{(1)}_N\right)$ for parameters $( \alpha,\beta,\tau_N)$. 
\end{enumerate}
By the results for the single change-point problem presented above, $\hat{\tau}^{(1)}_N-\tau_N$ is $O_p(N^{-\gamma})$. Therefore, if we take $w(N)=K_2N^{-\gamma+\delta}$ for some small $\delta>0$ (much smaller than $\gamma$) and any constant $K_2>0$, with probability increasing to 1, $\tau_N \in [\hat{\tau}^{(1)}_N-w(N),\hat{\tau}^{(1)}_N+w(N)]$. In other words, this provides a neighborhood around the true change point as desired; hence,
in the next stage only points within this interval will be used. 
\begin{enumerate}[label=(ISS\arabic*):]
\setlength{\itemindent}{.5in}
\setcounter{enumi}{2} 
\item Fix a small constant $\delta>0$. Consider all $i/N$ such that $i/N\in[\hat{\tau}^{(1)}_N-K_2N^{-\gamma+\delta},\hat{\tau}^{(1)}_N+K_2N^{-\gamma+\delta}]$ and $(i/N,Y_i)$ was not used in the first subsample. Denote the set of all such points as $S^{(2)}$. 
\item Fit a step function on this second subsample by minimizing
\begin{equation*}
\sum_{i/N\in S^{(2)}}\Big( Y_i-\hat{\alpha}^{(1)}1(i/N\leq d)- \hat{\beta}^{(1)}1(i/N> d)\Big)^2
\end{equation*}
with respect to $d$, and take the minimizing $d$ to be the second stage change point estimate $\hat{\tau}^{(2)}_N$.
\end{enumerate}

The next theorem establishes that the intelligent sampling estimator $\hat{\tau}^{(2)}_N$ is \emph{consistent with the same rate of convergence} as the \emph{estimator based on the full data}.

\begin{theorem}\label{thm:singlerate}
For the stump single change point model, the estimator obtained based on intelligent sampling satisfies		
$$|\hat{\tau}_N^{(2)}-\tau_N|=O_p(1/N).$$
\end{theorem}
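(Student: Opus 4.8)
The plan is to work conditionally on the Stage-1 $\sigma$-field $\mathcal{F}_1$ generated by the subsample used in (ISS1)--(ISS2). Because (ISS3) forms $S^{(2)}$ out of grid points \emph{not} used in Stage~1 and the errors are independent across indices, conditionally on $\mathcal{F}_1$ the set $S^{(2)}$ and the plugged-in values $\hat\alpha^{(1)},\hat\beta^{(1)}$ are fixed, while $\{\varepsilon_i:i/N\in S^{(2)}\}$ remain i.i.d.\ $N(0,\sigma^2)$; thus the Stage-2 criterion in (ISS4) is, conditionally, an honest least-squares-type objective with deterministic nuisance levels. Applying Proposition~\ref{prop:fullestimators}(i) to the Stage-1 subsample of size $N_1=K_1N^{\gamma}$ gives $\hat\alpha^{(1)}-\alpha=O_p(N^{-\gamma/2})$ and $\hat\beta^{(1)}-\beta=O_p(N^{-\gamma/2})$, and the discussion preceding (ISS3) gives $\hat\tau^{(1)}_N-\tau_N=O_p(N^{-\gamma})$. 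Hence there is an event $A_N$ with $\mathbb{P}(A_N)\to1$ on which, simultaneously, $\tau_N$ lies in the interior of the Stage-2 window with at least $m_N\asymp N^{1-\gamma+\delta}\to\infty$ grid points on each side, $|\hat\alpha^{(1)}-\alpha|\vee|\hat\beta^{(1)}-\beta|\le\eta_N$ with $\eta_N=N^{-\gamma/2}\log N\to0$, and $\hat\Delta^{(1)}:=\hat\beta^{(1)}-\hat\alpha^{(1)}$ satisfies $(\hat\Delta^{(1)})^2\ge\Delta^2/2$. It then suffices to prove the bound on $A_N$.

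Next I would analyze the localized objective-difference process. For integer $i$ one has $1(i/N\le\tau)=1(i/N\le\tau_N)$, so $Y_i=\alpha\,1(i/N\le\tau_N)+\beta\,1(i/N>\tau_N)+\varepsilon_i$ on the grid; writing $M_N(d)$ for the Stage-2 criterion, for $k\in\mathbb{Z}$ with $\tau_N+k/N\in S^{(2)}$ set $D_N(k):=M_N(\tau_N+k/N)-M_N(\tau_N)$. Only the grid points strictly between $\tau_N$ and $\tau_N+k/N$ switch their assigned level, so $D_N(k)$ telescopes into a deterministic drift $\tfrac12|k|(\hat\Delta^{(1)})^2\,(1+o(1))$, a centred random-walk term $\pm\,\hat\Delta^{(1)}\sum_{j}\varepsilon_j$ over those $\approx|k|$ points, and a plug-in remainder equal to $(\hat\alpha^{(1)}-\alpha)$ or $(\hat\beta^{(1)}-\beta)$ times a sum of at most $|k|$ terms, each a bounded constant plus one $\varepsilon_j$. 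On $A_N$, for every fixed $C$ the remainder over $|k|\le C$ is $O_p(C\eta_N)=o_p(1)$, the drift coefficient converges to $\Delta^2/2>0$, and the random-walk term converges in finite-dimensional distribution to the symmetric drifted random walk $X(\cdot)$ of \eqref{Zprocess}; hence $D_N(\cdot)$ converges weakly to $X(\cdot)$ on every finite subset of $\mathbb{Z}$. Since $X$ drifts to $+\infty$ on both sides, it has an a.s.\ finite and a.s.\ unique minimiser, so a continuous-mapping argument for argmins in the spirit of Section~14.5 of \cite{kosorok2007introduction} handles the behaviour of $\hat\tau^{(2)}_N$ near $\tau_N$, \emph{provided} escape of the minimiser to large $|k|$ is excluded.

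The escape control is the crux and the step I expect to need the most care, since it must hold uniformly over the whole window $|k|\le m_N$ rather than over compacts. The mechanism is that the drift grows linearly, $\tfrac12|k|(\hat\Delta^{(1)})^2\ge\tfrac14|k|\Delta^2$ on $A_N$, whereas $\sup_{|k|\le t}\big|\sum_{j\le|k|}\varepsilon_j\big|=O_p(\sqrt t)$ by the Kolmogorov/Doob maximal inequality, and the plug-in remainder at shift $k$ is, again by a maximal inequality, at most $O_p(\eta_N)\big(\sqrt{|k|}+|k|\eta_N\big)$; both are $o(|k|)$ for $|k|$ large, uniformly over the window. Decomposing $\{C<|k|\le m_N\}$ into dyadic blocks and applying the maximal inequality on each, $\mathbb{P}[\exists\,k\text{ with }C<|k|\le m_N:\ D_N(k)\le0\mid\mathcal{F}_1]$ is bounded on $A_N$ by a constant multiple of $\sum_{\ell\ge0}2^{-\ell}/C=O(1/C)$, which $\to0$ as $C\to\infty$ uniformly in $N$. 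Together with tightness of the limit this yields $\lim_{C\to\infty}\limsup_N\mathbb{P}[\,|\hat\tau^{(2)}_N-\tau_N|>C/N\,]=0$, i.e.\ $|\hat\tau^{(2)}_N-\tau_N|=O_p(1/N)$; moreover, since the minimiser sits at scale $1/N$ it stays well inside the window and avoids all but possibly the single Stage-1 grid point at $\tau_N$ itself (Stage-1 points being $\asymp N^{1-\gamma}$ apart), so the exclusions in (ISS3) are immaterial. Pushed a little further, the same expansion identifies the limit law as $N(\hat\tau^{(2)}_N-\tau_N)\Rightarrow L$, in agreement with the full-data estimator of Proposition~\ref{prop:fullestimators}, which is the form needed for the inferential statements promised in the Introduction.
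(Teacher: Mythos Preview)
Your argument for the rate $|\hat\tau^{(2)}_N-\tau_N|=O_p(1/N)$ is correct and follows the same overall architecture as the paper: condition on the Stage-1 output inside a high-probability region, expand the Stage-2 criterion difference as a positive linear drift plus a centred partial-sum process plus an $o_p$ plug-in remainder, then show the drift dominates uniformly over $|k|>C$ via a maximal inequality. The paper (Section~\ref{sec:proofgeneralratesingle}) proves this in the more general Lipschitz setting and carries out the escape step slightly differently---splitting $\{|k|>C\}$ into a near piece handled by the H\'ajek--R\'enyi inequality and a far piece handled by a sub-Gaussian tail bound---but your dyadic-block version of the same idea is equally valid and leads to the same $O(1/C)$ control.

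However, your final sentence overreaches. The claim that ``the exclusions in (ISS3) are immaterial'' is true for the \emph{rate} but false for the \emph{distribution}: as Remark~\ref{rem:nosingleasym} spells out, $N(\hat\tau^{(2)}_N-\tau_N)$ need not converge in distribution at all. Along subsequences where $\tau_N$ coincides with a Stage-1 grid point, $\hat\tau^{(2)}_N=\tau_N$ is impossible (since $\tau_N\notin S^{(2)}$), so $\mathbb{P}[N(\hat\tau^{(2)}_N-\tau_N)=0]=0$; along other subsequences this probability converges to $\mathbb{P}[L_{\Delta/\sigma}=0]>0$. This is exactly why the paper replaces $N(\hat\tau^{(2)}_N-\tau_N)$ by the modified distance $\lambda_2(\tau_N,\hat\tau^{(2)}_N)$ in Theorem~\ref{thmsingledist}. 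So drop the last sentence or rephrase it in terms of $\lambda_2$.
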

\begin{proof}
See Section \ref{sec:proofgeneralratesingle} in Supplement Part A, where a result for a more general model is proven.
\end{proof}

To derive a clean statement of the asymptotic distribution, we introduce a slight modification to the definition of the true change point and define a new type of 'distance' function $\lambda_2:[0,1]^2\mapsto\mathbb{Z}$, as follows. First, for convenience, denote the set of $i/N$'s of the first stage subsample as 
\begin{eqnarray}
S^{(1)}:=\left\{ \frac{i}{N}:i\in\mathbb{N},\,i<N,\, i\text{ is divisible by }\lfloor N/N_1\rfloor  \right\},
\end{eqnarray}
\noindent then for any $a,b\in (0,1)$
\begin{eqnarray}
\lambda_2(a,b):=\begin{cases} \sum\limits_{i=1}^N 1\left(a<\frac{i}{N}\leq b,\, \frac{i}{N}\notin S^{(1)} \right)\qquad &\text{ if }a\leq b\,,\\
- \sum\limits_{i=1}^N 1\left(b<\frac{i}{N}\leq a,\, \frac{i}{N}\notin S^{(1)} \right) &\text{ otherwise}. 
\end{cases}
\end{eqnarray}
The modified ``distance" $\hat{\tau}_N$ is $\lambda_2(\tau_N,\hat{\tau}^{(2)}_N)$, instead of $N(\hat{\tau}^{(2)}_N-\tau_N)$, does converge weakly to a distribution as the next results establishes.
\begin{theorem}\label{thmsingledist}
For any integer $\ell$,
\begin{eqnarray}
\mathbb{P}\left[ \lambda_2\left(\tau_N,\hat{\tau}_N^{(2)}\right)=\ell \right] &\to & \mathbb{P}[L_{\Delta/\sigma}=\ell] \,.
\end{eqnarray}
\end{theorem}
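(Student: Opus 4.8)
The plan is to show that, once Stage 1 has pinned down a shrinking neighbourhood of $\tau_N$, the Stage-2 minimisation over that neighbourhood is, in the limit, the same as minimising the two-sided drifted Gaussian walk $X_{\Delta/\sigma}$ of \eqref{eq:Zprocess3} along the second-stage grid, with $\lambda_2(\tau_N,\cdot)$ recording position along that grid. Fix an integer $\ell$. By Theorem~\ref{thm:singlerate}, $N(\hat\tau_N^{(2)}-\tau_N)=O_p(1)$, and since the fraction of grid points in the Stage-2 window that were already used at Stage 1 is $O(N^{\gamma-1})\to0$, the count $K_N:=\lambda_2(\tau_N,\hat\tau_N^{(2)})$ is $O_p(1)$; also, with probability tending to one $\tau_N$ sits deep inside $S^{(2)}$, with at least $M$ Stage-2 points on either side of it, for any fixed $M$. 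Hence for any $\varepsilon>0$ one can pick $M>|\ell|$ with $\limsup_N\mathbb{P}(|K_N|>M)<\varepsilon$ and $\mathbb{P}(|L_{\Delta/\sigma}|>M)<\varepsilon$, and it suffices to analyse the Stage-2 criterion restricted to the $2M+1$ candidate breakpoints $d_k$ with $\lambda_2(\tau_N,d_k)=k$, $|k|\le M$ (on the event $\{|K_N|\le M\}$ the global minimiser coincides with the minimiser over this restricted set).

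Next I would compute the increment of the Stage-2 objective $\mathrm{Obj}(d)$ relative to its value at $\tau_N$. Because the level estimates $\hat\alpha^{(1)},\hat\beta^{(1)}$ are held fixed, moving the fitted break from $\tau_N$ to $d_k$ with $k>0$ changes the fitted value only at the $k$ Stage-2 points in $(\tau_N,d_k]$, from $\hat\beta^{(1)}$ to $\hat\alpha^{(1)}$; since $\tau_N=\lfloor N\tau\rfloor/N$, there are no grid points strictly between $\tau_N$ and $\tau$, so each such point has true mean $\beta$. Expanding squares,
\[
\mathrm{Obj}(d_k)-\mathrm{Obj}(\tau_N)=(\hat\beta^{(1)}-\hat\alpha^{(1)})\sum_{m=1}^{k}\bigl(2\varepsilon_{j_m}+2\beta-\hat\alpha^{(1)}-\hat\beta^{(1)}\bigr),
\]
and the analogous identity (with $\alpha\leftrightarrow\beta$ and an overall sign change) holds for $k<0$. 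Two facts make this tractable: the errors $\varepsilon_{j_m}$ attached to $S^{(2)}$ are independent of $\hat\alpha^{(1)},\hat\beta^{(1)}$, which are functions of the disjoint Stage-1 subsample, and with probability tending to one the indices $j_1,\dots,j_k$ are the consecutive integers $\lfloor N\tau\rfloor+1,\dots,\lfloor N\tau\rfloor+k$, so $\sum_{m=1}^{k}\varepsilon_{j_m}$ is exactly a sum of $k$ i.i.d.\ $N(0,\sigma^2)$ variables. The Stage-1 subsample is itself a stump-model data set of size $\asymp N^{\gamma}$, so Proposition~\ref{prop:fullestimators}(i) gives $\hat\alpha^{(1)}\xrightarrow{p}\alpha$ and $\hat\beta^{(1)}\xrightarrow{p}\beta$; combining this with Slutsky, the vector $\bigl(\mathrm{Obj}(d_k)-\mathrm{Obj}(\tau_N)\bigr)_{|k|\le M}$ converges in distribution to $\bigl(2X(k)\bigr)_{|k|\le M}$ with $X$ as in \eqref{Zprocess}, equivalently (after the $\Delta/\sigma$ rescaling) to $\bigl(2X_{\Delta/\sigma}(k)\bigr)_{|k|\le M}$.

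I would then finish with the continuous mapping theorem: on the restricted candidate set $K_N$ equals the argmin of $\bigl(\mathrm{Obj}(d_k)-\mathrm{Obj}(\tau_N)\bigr)_{|k|\le M}$, the map $\arg\min:\mathbb{R}^{2M+1}\to\{-M,\dots,M\}$ is continuous at vectors with a unique minimum, and the limit $\bigl(2X_{\Delta/\sigma}(k)\bigr)_{|k|\le M}$ has a.s.\ distinct coordinates because $\Delta\ne0$; hence $\mathbb{P}(K_N=\ell)\to\mathbb{P}(\arg\min_{|k|\le M}X_{\Delta/\sigma}(k)=\ell)$ up to an error at most $\limsup_N\mathbb{P}(|K_N|>M)<\varepsilon$. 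Since $X_{\Delta/\sigma}$ has strictly positive drift $\tfrac12|\Delta/\sigma|^2$ away from the origin in each direction, $\arg\min_{|k|\le M}X_{\Delta/\sigma}=L_{\Delta/\sigma}$ except on an event of probability $\le\mathbb{P}(|L_{\Delta/\sigma}|>M)<\varepsilon$, and letting $M\to\infty$ (hence $\varepsilon\to0$) yields $\mathbb{P}(\lambda_2(\tau_N,\hat\tau_N^{(2)})=\ell)\to\mathbb{P}(L_{\Delta/\sigma}=\ell)$.

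The step I expect to be the main obstacle is making the window-truncation uniform in $N$, i.e.\ showing that the Stage-2 argmin does not escape to $|k|>M$ with non-negligible probability uniformly over large $N$, despite the candidate set having width of order $N^{1-\gamma+\delta}$. This needs a maximal-inequality bound guaranteeing that $\mathrm{Obj}(d_k)-\mathrm{Obj}(\tau_N)>0$ for all moderate-to-large $|k|$ with high probability, which rests on the per-step drift $(\hat\beta^{(1)}-\hat\alpha^{(1)})\bigl(2\beta-\hat\alpha^{(1)}-\hat\beta^{(1)}\bigr)\to\Delta^2>0$ staying bounded away from zero for large $N$, together with a fluctuation bound for the partial sums of the i.i.d.\ errors. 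The rate statement of Theorem~\ref{thm:singlerate} already supplies the tightness of $K_N$, so the residual work is to dovetail that localisation with the exact finite-dimensional random-walk limit established above.
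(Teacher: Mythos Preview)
Your proposal is correct and follows essentially the same three-step architecture as the paper's proof: (i) use the $O_p(1/N)$ rate from Theorem~\ref{thm:singlerate} together with tightness of $L_{\Delta/\sigma}$ to truncate to a finite window $\{-M,\dots,M\}$; (ii) show the finite-dimensional vector of criterion increments converges weakly to $(2X_{\Delta/\sigma}(k))_{|k|\le M}$, exploiting independence of the Stage-2 errors from the Stage-1 estimates and $\hat\alpha^{(1)},\hat\beta^{(1)}\xrightarrow{p}\alpha,\beta$; (iii) apply the continuous mapping theorem to the $\arg\min$ functional. One small imprecision: your claim that ``with probability tending to one the indices $j_1,\dots,j_k$ are the consecutive integers $\lfloor N\tau\rfloor+1,\dots,\lfloor N\tau\rfloor+k$'' is not a probabilistic statement and need not hold along all subsequences (cf.\ Remark~\ref{rem:nosingleasym}), but this does not matter for your argument, since the $\varepsilon_{j_m}$ are i.i.d.\ $N(0,\sigma^2)$ and independent of Stage~1 regardless of whether the indices are consecutive; the $\lambda_2$ count is precisely designed to make the skipped Stage-1 points irrelevant to the limit.
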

\begin{proof}
See Section \ref{sec:proofsingledistgeneral} in Supplement Part A where this is established for a more general model. 
\end{proof}
{\bf Computational gains:} The results above establish that the two stage procedure can, using a subset of the full data, be asymptotically almost as precise as employing
the full data set. In practice this allows for quicker estimation of big data sets without losing precision.
The first stage uses about $N_1 \sim N^\gamma$ points to perform least squares fitting of a stump model, and this step takes $O(N^\gamma)$ computational time. The second stage applies a least-squares fit of a step function on the set $S^{(2)}$, which contains $O(N^{1-\gamma+\delta})$ points and therefore uses $O(N^{1-\gamma+\delta})$ time. 

Hence, the two stage procedure requires order $N^\gamma \vee N^{1-\gamma+\delta}$ computation time, which is minimized by setting $\gamma=1-\gamma+\delta$, or $\gamma=\frac{1+\delta}{2}$. As $\delta$ tends to 0 (any small positive value of $\delta$ yields the above asymptotic results), the optimal $\gamma$ tends to $1/2$. Therefore, one should employ $N_1=\sqrt{N}$ at the first stage and the second stage sample should be all points in the interval $[\hat{\tau}^{(1)}_N-K_2\sqrt{N},\hat{\tau}^{(1)}_N+K_2\sqrt{N}]$, minus those at the first stage, where $K_2$ ensures that this interval contains $\tau$ with an acceptable high probability $1-\alpha$.  If one knows the jump size $\Delta$, $K_2$ can be determined as the $1-\frac{\alpha}{2}$ quantile of the random variable $L_{\Delta/\sigma}$; in the realistic unknown $\Delta$ case, a lower estimate of $\Delta$ can yield a corresponding conservative value of $K_2$.

\begin{remark}\label{rem:nosingleasym}
The $\lambda_2$ distance was introduced above because it is generally not possible to derive an asymptotic distribution for $N(\hat{\tau}_N^{(2)}-\tau_N)$. 
Indeed, one can manufacture parameter settings quite easily, that produce different limit distributions along different subsequences. For a specific example, see Remark \ref{rem:nosingleasymgen} in Supplement Part A.
\end{remark}

\begin{remark}\label{rem:singlemult}
The 2-stage procedure can be extended to multiple stages. \indent In the 2-stage version, we first use a subsample of size $N^\gamma$ to find some interval $[\hat{\tau}^{(1)}-K_1N^{-\gamma+\delta_1},\hat{\tau}^{(1)}+K_1N^{-\gamma+\delta_1}]$ which contains the true value of $\tau$ with probability going to 1. However, it is possible to refrain from using all the data
points in the interval at the second stage. Instead, a 2-stage procedure can be employed as follows: take a subset of $N^\zeta$ (for some $0<\zeta< 1-\gamma+\delta_1$) points from the second stage interval and obtain an estimate $\hat{\tau}^{(2)}$ and an interval $[\hat{\tau}^{(2)}-K_2N^{-\gamma-\zeta+\delta_1 + \delta_2},\hat{\tau}^{(2)}+K_2N^{-\gamma -\zeta +\delta_1 + \delta_2}]$ (note that $\delta_1$ and $\delta_2$ can be as small as one pleases) which contains $\tau$ with probability going to 1. In the third stage, all points in the aforementioned interval (leaving aside those used in previous stages) are used to obtain the final estimate $\hat{\tau}^{(3)}$.
\newline
\newline
\indent Such a procedure will have the same rate of convergence as the one using the full data: $(\hat{\tau}^{(3)}-\tau)=O_p(1/N)$, and the same asymptotic distribution (in terms of a ``third stage distance'' similar to how $\lambda_2$ was defined) as the one and two stage procedures. In terms of computational time, the first stage takes $O(N^\gamma)$ time, the second stage $O(N^\zeta)$ time, and the final stage $O(N^{1-\gamma-\zeta+\delta_1+\delta_2})$ time, for a total of $O((N^\gamma\vee N^\zeta\vee N^{1-\gamma-\zeta+\delta_1+\delta_2}))$ time, which can reach almost $O(N^{1/3})$ time.  In general, a $K$ stage procedure, which works along the same lines can operate in almost as low as $O(N^{1/K})$ time.
\end{remark} 

\section{The Case of Multiple Change Points}\label{sec:multiplechangepointsintro}

Suppose one has access to a data set $Y_1,Y_2,\dots, Y_N$ generated according to the following model:
\begin{equation}\label{model}
Y_i=\theta_i+\varepsilon_i,\qquad i=1,2,3,...,N,
\end{equation}
where the $\theta_i$'s form a piecewise constant sequence for any fixed $N$ and the $\varepsilon_i$'s are zero-mean error terms\footnote{Specifically, we consider the triangular array of sequences $\theta_{i,N}$, which are piecewise constant in $i$. The error terms $\varepsilon_i=\varepsilon_{i,N}$ also form a triangular array, but we suppress the notation for brevity}. 
The signal is flat apart from jumps at some unknown change points $1=\tau_0<\tau_1<...<\tau_J<\tau_{J+1}=N$:  i.e. $\theta_{i_1}=\theta_{i_2}$ whenever $i_1,i_2\in (\tau_j,\tau_{j+1}]$ for some $j\in\{0,...,J\}$. The number of change points $J=J(N)$ is also unknown and needs to be estimated from the data. We impose the following basic restrictions on this model:
\begin{enumerate}[label=(M\arabic*):]
\setlength{\itemindent}{.5in}
\item there exists a constant $\bar{\theta}\in (0,\infty)$ not dependent on $N$, such that $\underset{i=1,...,N}{\max}|\theta_i|\leq \bar{\theta}$;
\item  there exists a constant $\underline{\Delta}$ not dependent on $N$, such that $\underset{i=0,...,J}{\min} \left| \theta_{\tau_{i+1}}-\theta_{\tau_i}\right|\geq \underline{\Delta}$;
\item there exists a $\Xi\in [0,1)$ and some $C>0$, such that $\delta_N:=\underset{i=0,...,J}{\min}(\tau_{i+1}-\tau_i)\geq CN^{1-\Xi}$ for all large $N$;
\item $\varepsilon_i$ for $i=1,...,N$ are independent centered subgaussian random variables, with subgaussian parameters\footnote{The subgaussian parameter of a variable $X$ is any value $\sigma>0$ such that $\mathbb{E}[\exp(sX)]\leq \exp(s^2\sigma^2/2)$ for all real values $s$} $\varepsilon_1,\dots,\varepsilon_N$ that are all bounded above by a constant $\sigma_{\max}>0$ which is not dependent on $N$.
\end{enumerate}
\begin{remark}
\label{comments-on-assumptions} 
\indent The third assumption above stipulates that the minimum gap between two consecutive stretches is bounded away from 0. This is areasonable assumption for identifying long and significantly well-separated persistent stretches in a big data setting. Condition (M4) places a restriction on the tail probability of the error terms, but still accommodates for some heteroscedastic behavior (a concern for long data sequences) by allowing noise sequence comprising of independent random variables with different distributions. 
\end{remark}

\subsection{Intelligent Sampling on Multiple Change Points}\label{sec:procedure}
The intelligent sampling procedure in the multiple change-points case works in two (or more) stages as follows: 
in the two-stage version, as in 
Section \ref{sec:single-CP}, the first stage aims to find rough estimates of the change points using a uniform subsample 
(Steps ISM1-ISM4) and the second stage produces the final estimates (Steps ISM5 and ISM6).
\begin{enumerate}[label=(ISM\arabic*):]
\setlength{\itemindent}{.5in}
\item Start with a data set $Y_1,\cdots,Y_N$ described in (\ref{model}). 
\item Take $N_1=K_1N^{\gamma}$ for some $K_1$ and $\gamma\in (\Xi,1)$ such that $N/N_1=o(\delta_N)$; for $j=1,\cdots,  N^*$ where $N^*:=\left\lfloor \frac{N}{\lfloor N/N_1\rfloor}\right\rfloor $, consider the subsample $\{Z_j\}=\{Y_{j\lfloor N/N_1 \rfloor}\}$. 
\end{enumerate}
The subsample $Z_1,Z_2,...$ can also be considered a data sequence structured as in $(\ref{model})$, and since $\delta_N>>N/N_1$, there are jumps in the signal at $\tau_j^*:=\left\lfloor \frac{\tau_j}{\lfloor N/N_1\rfloor}\right\rfloor$ for $j=1,...,J$, with corresponding minimum spacing 
\begin{equation}\label{sparsesep}
\delta_{N^*}^*:=\min_{i=1,...,J+1}|\tau_i^*-\tau_{i-1}^*|= \frac{1}{\lfloor N/N_1\rfloor} \left(\min_{i=1,\cdots,J+1}|\tau_i-\tau_{i-1}|+O(1) \right)= \left(\frac{N_1}{N}\delta_N\right)(1+o(1)) \,.
\end{equation}
\begin{enumerate}[label=(ISM\arabic*):]
\setlength{\itemindent}{.5in}
\setcounter{enumi}{2} 
\item Apply some multiple change point estimation procedure (such as binary segmentation) to the set of $Z_i$'s to obtain estimates $\hat{\tau}_1^*,...,\hat{\tau}_{\hat{J}}^*$ for the $\tau_i^*$s and $\hat{\nu}^{(1)}_0$,\dots, $\hat{\nu}^{(1)}_{\hat{J}}$ for the levels $(\nu_0,\nu_1,\dots,\nu_J)=(\theta_1,\theta_{\tau_1+1},\theta_{\tau_2+1},\dots,\theta_{\tau_J+1})$. 
\begin{itemize}
	\item the choice of the procedure does not matter so long as the estimates satisfy
	\begin{equation}\label{eq:firstconsistent}
	\mathbb{P}\left[ \hat{J}=J,\, \max_{i=1,...,J}|\hat{\tau}^*_i-\tau_i^*|\leq w^*(N^*),\,\max_{i=0,...,J}|\hat{\nu}^{(1)}_i-\nu_i|\leq \rho_N \right]\to 1
	\end{equation}
	for some sequence $w^*(N^*)$ such that $w^*(N^*)\to\infty$, $w^*(N^*)=o(\delta^*_{N^*})$ and $\rho_N\to 0$.
\end{itemize}
\end{enumerate}

\begin{enumerate}[label=(ISM\arabic*):]
\setlength{\itemindent}{.5in}
\setcounter{enumi}{3} 
\item Convert these into estimates for the $\tau_i$'s by letting $\hat{\tau}_j^{(1)}:=\hat{\tau}^*_j\lfloor N/N_1 \rfloor$ for $j=1,...,\hat{J}$. 
\begin{itemize}
	\item taking $w(N):=(w^*(N^*)+1)\lfloor N/N_1\rfloor$, expression (\ref{eq:firstconsistent}) gives
	\begin{eqnarray}\label{eq:firstconsistent2}
	\mathbb{P}\left[\hat{J}=J,\,\underset{i=1,...,J}{\max}|\hat{\tau}^{(1)}_i-\tau_i|\leq w(N),\, \max_{i=0,...,J}|\hat{\nu}^{(1)}_i-\nu_i|\leq \rho_N\right]\to 1.
	\end{eqnarray}
	\item as a consequence of conditions in (ISM3), $w(N)\to \infty$, $w(N)=o(\delta_N)$, and $w(N)>CN^{1-\gamma}$ for some constant $C$. 
\end{itemize}
\item Fix any integer $K>1$, and consider the intervals 
$\left[ \hat{\tau}_i^{(1)}-Kw(N), \hat{\tau}_i^{(1)}+Kw(N) \right]$ for $i=1,...,\hat{J}$. Denote by $S^{(2)}\left( \hat{\tau}_i^{(1)} \right)$ all integers in this interval not divisible by $\lfloor N/N_1\rfloor$.
\item For each $i=1,...,\hat{J}$, let		
\begin{equation}
\hat{\tau}^{(2)}_i=\underset{d\in S^{(2)}\left(\hat{\tau}_i^{(1)}\right)}{\arg\min} \left(\sum_{j\in S^{(2)}\left(\hat{\tau}_i^{(1)}\right)}\Big[ Y_j-(\hat{\nu}^{(1)}_{i-1}1(j<d)+\hat{\nu}^{(1)}_i1(j\geq d)) \Big]^2\right) \,. 
\end{equation}
\end{enumerate}
\begin{remark}
As with the single change point problem, a $p>2$ stage procedure can be constructed. This would involve steps (ISM1) to (ISM5), but afterwards a $p-1$ stage procedure as described in Remark \ref{rem:singlemult} for estimating single change points will be applied on every interval $\left[ \hat{\tau}_i\pm Kw(N) \right]$. 
\end{remark}
The intervals $[\hat{\tau}^{(1)}_j\pm Kw(N)]$ referred to at stage ISM4, all respectively contain $[\tau_j\pm (K-1)w(N)]$ with probability going to 1. These latter intervals have width going to $\infty$, and both of these intervals contain exactly one change point (as their widths are $O(w(N))=o(\delta_N)$). Hence, with probability $\to 1$ the multiple change point problem has simplified to $\hat{J}$ single change point problems, justifying ISM6 where a stump model is fitted inside each of $S^{(2)}(\hat{\tau}^{(1)}_j)$'s. 
\newline
\newline

\noindent 
{\bf Asymptotic behavior of the intelligent sampling based estimators:} Next, we present results on the large sample properties of the intelligent sampling estimates. 
There are several types of results we will showcase to characterize the asymptotic behavior, with the first result focusing on the rate of convergence:
\begin{theorem} \label{thm:multiorder}
	Suppose conditions (M1) to (M4) are satisfied and the first stage estimates satisfy the consistency result (\ref{eq:firstconsistent2}). Then, for any $\varepsilon>0$, there exist constants $C_1$ and $C_2$ (depending on $\epsilon$) such that
	\begin{eqnarray}
	\mathbb{P}\left[\hat{J}=J,\,\max_{k=1,...,J}|\hat{\tau}^{(2)}_k-\tau_k|\leq C_1\log(J)+C_2\right]\geq 1-\varepsilon
	\end{eqnarray}
	for all sufficiently large $N$.
\end{theorem}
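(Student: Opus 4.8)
The plan is to reduce the problem, on a high--probability event, to $J$ separate single--change--point estimation problems, to establish one exponential tail bound for each that is uniform in the change--point index and in $N$, and then to take a union bound over the $J$ indices; the factor $J$ in that union bound is exactly what produces the additive $\log J$ term. Concretely, I would work on the event $\mathcal{A}_N=\{\hat J=J,\ \max_{i\le J}|\hat\tau^{(1)}_i-\tau_i|\le w(N),\ \max_{i\le J}|\hat\nu^{(1)}_i-\nu_i|\le\rho_N\}$, which by (\ref{eq:firstconsistent2}) has probability tending to $1$; I then fix $N$ large enough that $\mathbb{P}(\mathcal{A}_N^c)\le\varepsilon/2$, $\rho_N\le\underline\Delta/4$, and $\lfloor N/N_1\rfloor\ge 2$. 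Note $\mathcal{A}_N$ is measurable with respect to $\mathcal{F}_1:=\sigma(\{Z_j\})$, the first--stage subsample. On $\mathcal{A}_N$, as already observed after step (ISM6), each localization window $[\hat\tau^{(1)}_j\pm Kw(N)]$ contains exactly one true change point $\tau_j$ (its width is $O(w(N))=o(\delta_N)$), so $\hat\tau^{(2)}_j$ is the minimizer over $S^{(2)}(\hat\tau^{(1)}_j)$ of a stump least--squares criterion with the $\mathcal{F}_1$--measurable fitted levels $\hat\nu^{(1)}_{j-1},\hat\nu^{(1)}_j$. Since $S^{(2)}(\hat\tau^{(1)}_j)$ is disjoint from the first--stage grid, conditionally on $\mathcal{F}_1$ the observations in $\bigcup_i S^{(2)}(\hat\tau^{(1)}_i)$ are still i.i.d.\ Gaussian about the true signal, and the $\hat\nu^{(1)}$'s may be treated as fixed constants within $\rho_N$ of $\nu_{j-1},\nu_j$.

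The core of the proof is a uniform tail bound for one such single problem. For the profiled objective $M_j(d):=G_j(d)-G_j(\tau_j)$, where $G_j$ is the sum of squares in (ISM6), a one--line expansion shows that, for $d$ to the right of $\tau_j$ with $m$ retained second--stage points strictly between $\tau_j$ and $d$, $M_j(d)$ is --- up to an $O_{\mathbb{P}}(1)$ boundary correction near $\tau_j$ --- a sum of $m$ i.i.d.\ terms, each an affine function of the associated $N(0,\sigma^2)$ noise variable (hence Gaussian), with conditional mean $(\nu_j-\hat\nu^{(1)}_{j-1})^2-(\nu_j-\hat\nu^{(1)}_j)^2$, which by (M1)--(M2) and $\rho_N\le\underline\Delta/4$ is at least $\Delta_j^2-C_0\rho_N\ge\underline\Delta^2/2$, and conditional variance $4(\hat\nu^{(1)}_j-\hat\nu^{(1)}_{j-1})^2\sigma^2$, bounded by a constant depending only on $\bar\theta,\sigma$; the mirror statement holds to the left of $\tau_j$. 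Hence the event $\{|\hat\tau^{(2)}_j-\tau_j|>t\}$ forces a random walk with bounded sub--Gaussian increments and per--step drift at least $\underline\Delta^2/2$ to return below $0$ after at least $t-1$ steps (the localization window, of width $t=O(\log J)=o(\lfloor N/N_1\rfloor)$, contains at most one first--stage point once $N$ is large). Summing the Gaussian deviation bounds over the number of steps gives a one--sided maximal inequality $\mathbb{P}[\,|\hat\tau^{(2)}_j-\tau_j|>t\mid\mathcal{F}_1\,]\le C e^{-ct}$ on $\mathcal{A}_N$, with $C,c>0$ depending on the model only through $\underline\Delta,\bar\theta,\sigma$, hence uniform in $j$, in $N$, and in the realization of $\mathcal{F}_1$.

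Taking expectations (legitimate because $\mathcal{A}_N\in\mathcal{F}_1$) and union--bounding over $j=1,\dots,J$ gives $\mathbb{P}[\hat J\ne J\text{ or }\max_{k\le J}|\hat\tau^{(2)}_k-\tau_k|>t]\le\mathbb{P}(\mathcal{A}_N^c)+JC e^{-ct}\le\varepsilon/2+JC e^{-ct}$, and choosing $t=C_1\log J+C_2$ with $C_1:=2/c$, $C_2:=c^{-1}\log(2C/\varepsilon)$ makes the last term $\le\varepsilon/2$ for all large $N$; relabelling constants, this is the assertion.

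An alternative, shorter derivation is available when one additionally assumes $J\rho_N\to0$ (which the remark following Theorem \ref{thm:increasingJasymprotics} notes is satisfied by the Stage--1 procedures considered): with $\alpha\in(0,\varepsilon)$ fixed, (\ref{eq:increasingJasymprotics1}) together with $P_\Delta(p)\ge p$ yields an event of probability $\ge1-\alpha-o(1)$ on which $\hat J=J$ and $|\lambda_2(\tau_j,\hat\tau^{(2)}_j)|\le Q_{(|\Delta_j|-2\rho_N)/\sigma}(\sqrt[J]{1-\alpha})$; since $|\Delta_j|\ge\underline\Delta$ and $\rho_N\le\underline\Delta/4$, Proposition \ref{prop:Ldist} bounds this by $Q_{\underline\Delta/(2\sigma)}(\sqrt[J]{1-\alpha})$, which is $O(\log J)$ because $\sqrt[J]{1-\alpha}\ge1-|\log(1-\alpha)|/J$ and $|L_\delta|$ has an exponential tail; finally $|\hat\tau^{(2)}_j-\tau_j|\le2\big(|\lambda_2(\tau_j,\hat\tau^{(2)}_j)|+1\big)$ once $\lfloor N/N_1\rfloor\ge2$, because the first--stage grid has spacing $\lfloor N/N_1\rfloor$ and so contributes at most $|\hat\tau^{(2)}_j-\tau_j|/\lfloor N/N_1\rfloor+1$ omitted points between $\tau_j$ and $\hat\tau^{(2)}_j$. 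Either way, the hardest single ingredient is the uniform one--sided maximal inequality for the drifted random walk: one must verify that substituting the estimates $\hat\nu^{(1)}$ for the true levels does not erode the drift (the only place $\rho_N$ enters, and only $\rho_N\to0$ is actually needed), dispose of the few excised first--stage points inside the window without spoiling uniformity, and extract constants that depend on the model only through $\underline\Delta,\bar\theta,\sigma$ --- precisely the random--walk estimates the paper prepares in Supplement Part B for Theorem \ref{thm:increasingJasymprotics}, which I would cite rather than reprove.
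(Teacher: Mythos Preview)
Your proposal is correct and, remarkably, anticipates \emph{both} proofs the paper gives: your primary approach (condition on $\mathcal{A}_N$, reduce to $J$ independent drifted-random-walk argmin problems, derive a uniform exponential tail bound $\mathbb{P}[\,|\hat\tau^{(2)}_j-\tau_j|>t\mid\mathcal{F}_1\,]\le Ce^{-ct}$ via sub-Gaussian increments with drift $\ge\underline\Delta^2/2$, union bound over $j$) is exactly the direct argument the paper presents as its ``alternative proof'' in Supplement Section~\ref{sec:multiorderotherproof}; your ``alternative, shorter derivation'' via Theorem~\ref{thm:increasingJasymprotics}, Proposition~\ref{prop:Ldist}, and the quantile bound $Q_{\underline\Delta/(2\sigma)}(\sqrt[J]{1-\alpha})=O(\log J)$ is the proof the paper places in the main text. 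You have also correctly flagged that the latter route uses the extra hypothesis $J\rho_N\to0$ inherited from Theorem~\ref{thm:increasingJasymprotics}, whereas the direct random-walk argument needs only $\rho_N\to0$ --- this is precisely the distinction the paper draws in Remark~\ref{non-Gaussian-rates}.
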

\begin{proof}
	See Section \ref{sec:multiorderotherproof} of Appendix B.
\end{proof}
The rate of convergence matches that of certain estimators when applied to the full data (see the convergence properties of wild binary segmentation in Theorem 3.2 of \cite{fryzlewicz2014wild}), consistent with the proposal that the two stage estimator is on the same level of accuracy as methods using the full data set. 
\newline\newline
\indent To further characterize the asymptotic behavior of the intelligent sampling estimators for the purpose of inference, additional assumptions are required for the sake of deriving their asymptotic distribution. First, we consider a methodological issue: if the error terms around a change point are arbitrarily heteroscedastic (e.g., every error term has a distinct marginal distribution), then it is essentially impossible to estimate the distributions of the noise terms, which play a critical role in determining the asymptotic distributions of the change point estimators. Hence, we require a condition where the distribution around the change points are stable. In the sequel we assume that the error terms on either side of a change-point are i.i.d in slowly growing neighborhoods, though the distribution of errors need not be identical over the entirety of the observed data sequence. We split our results into two parts: $J$ \emph{fixed} and $J$ growing with $N$, as 
the results we establish in these two cases are somewhat differently formulated. \\\\
\noindent
{\bf $J$ fixed with $N$:} In this case, we assume the following further conditions on the model.
\begin{enumerate}[label=(M\arabic*):]
	\setlength{\itemindent}{.5in}
	\setcounter{enumi}{4}
	\item The jump sizes $\Delta_j:=\nu_j-\nu_{j-1}$ for $j=1,...,J$ are also constants not dependent on $N$.
	\newline
	\item For every $0\leq j\leq J$, the random variables $\{\varepsilon_{\tau_j+1},\dots, \varepsilon_{\tau_{j+1}}\}$ all have the same distribution as the random variable $\mathcal{E}_{j} $, where $\mathcal{E}_0,\mathcal{E}_2,\dots, \mathcal{E}_{J}$ are fixed random variables with distributions not changing with $N$.
\end{enumerate}
Under these two additional conditions, it is possible to characterize the asymptotic distribution of the $\hat{\tau}^{(2)}_j$'s. Prior to stating the result, we introduce the following distance function $\lambda_2(\cdot,\cdot)=\lambda_{2,N}(\cdot,\cdot)$, which accounts for the fact that at step (ISM5), the first stage subsample points are left out in the second subsample (and thus no first subsample point can be the final estimator):
\begin{eqnarray}\label{eq:lambda2multdef}
\lambda_2(a,b):=\begin{cases} \sum\limits_{i=1}^N 1\left(a<i\leq b)\cdot 1(i\neq k\lfloor N/N_1\rfloor\text{ for any integer } k \right)\qquad &\text{ if }a\leq b\\
- \sum\limits_{i=1}^N 1\left(b<i\leq a)\cdot 1(i\neq k\lfloor N/N_1\rfloor\text{ for any integer } k \right) &\text{ otherwise}
\end{cases}.
\end{eqnarray}
In terms of this distance function, $\hat{\tau}^{(2)}_j$'s converge as follows:
\begin{theorem}\label{thm:multidepend}
	Suppose conditions (M1) to (M6), and the consistency condition (\ref{eq:firstconsistent2}) are satisfied. Define the independent random variables $L_j$ for $1\leq j\leq J$ as
	\begin{eqnarray}
	L^*_j&:=&\underset{k\in\mathbb{Z}}{\arg\min}\; Z_j(k)\nonumber\\
	Z_j(k)&:=&\begin{cases}
	|\Delta_j|(\varepsilon^*_{j,1}+\dots+\varepsilon^*_{j,k})+k\Delta_j^2/2,\qquad & k>0\\
	0, & k=0\\
	-|\Delta_j|(\varepsilon^*_{j-1,k+1}+\dots+ \varepsilon^*_{j-1,0})+k\Delta_j^2/2, & k<0 \,,
	\end{cases}
	\end{eqnarray}
	where each $\{\varepsilon^*_{j,k}:k\in\mathbb{Z}\}$ is a set of iid random variables with the same distribution as the random variable $\mathcal{E}_j$, for $j=0,\dots,J$. 
	\newline
	\newline
	The deviations $\left\{\lambda_2\left(\tau_j,\hat{\tau}^{(2)}_j\right)\right\}_{j=1}^{\hat J}$ jointly converge to the distribution of $\left(L^*_{1},...,L^*_J\right)$. That is, for any integers $k_1,\dots,k_J$,  
	\begin{eqnarray}
	\mathbb{P}\left[\hat{J}=J,\, \lambda_2\left(\tau_j,\hat{\tau}^{(2)}_j\right)=k_j \text{ for  }1\leq j\leq J\right]\to\prod_{j=1}^J \mathbb{P}\left[L^*_j=k_j\right] \,.
	\end{eqnarray}
\end{theorem}
\begin{proof}
	See Section \ref{sec:multidependthmproof} of Supplement Part B.
\end{proof}
In practical terms the result enables statistical inference - construction of confidence regions.
\newline
\newline
\indent Note that we require a finite number of change points, which is a strong restriction from a methodological and 
applications point of view. Nevertheless, the result in Theorem \ref{thm:multidepend} is still useful, as shown through numerical experiments in Section  \ref{sec:simulations}. Next, we establish a similar result for growing number of
change points $J$.
\newline
\newline
\textbf{$J$ grows with $N$ and Gaussian Errors:} If we restrict ourselves to the case where the error terms are independent Gaussian random variables, the distribution of the intelligent sampling estimators can be characterized \emph{even if} $J\to \infty$. Specifically, assume:
\begin{enumerate}[label=(M\arabic*-Gaussian):]
	\setlength{\itemindent}{.5in}
	\setcounter{enumi}{3}
	\item $\varepsilon_1,\dots,\varepsilon_N$ are independent zero mean Gaussian random variables, with variances bounded below by $\sigma_{\min}^2$ and above by $\sigma^2_{\max}$, where $\sigma_{\min}$ and $\sigma_{\max}$ are positive constants not dependent on $N$.
\end{enumerate}
Further, also consider conditions (M1) to (M3), and (M5). Note that (M5) along with this new condition implies that we can write $\mathcal{E}_j\sim N(0,\sigma_j^2)$ for some constants $\sigma_j$s, lying between the values $\sigma_{\min}$ and $\sigma_{\max}$, a convention we will use for the remainder of the section.
\newline
\newline
\indent When the number of parameters and (the corresponding) estimates go to $\infty$ as $N$ increases, there is no fixed distribution to converge to, hence to characterize the large sample distribution, our result use probability bounds and the quantiles of a growing number of distributions. 
For all $\alpha\in(0,1)$ and positive values $\Delta,\sigma_1,\sigma_2$, let $Q_{\Delta,\sigma_1,\sigma_2}(1-\alpha)$ be the $1-\alpha$ quantile of $|L_{\Delta,\sigma_1,\sigma_2}|$, where 
\begin{eqnarray}\label{eq:gauss_walk_def}
&&L_{\Delta,\sigma_1,\sigma_2}=\underset{t\in \mathbb{Z}}{\arg\min}X_{\Delta,\sigma_1,\sigma_2}(t)\nonumber\\
&&X_{\Delta,\sigma_1,\sigma_2}(t)=\begin{cases}
t\frac{|\Delta|}{2}+\sum_{i=1}^t\varepsilon^*_{ i }\qquad & t>0\\
0 & t=0\\
\frac{|t\Delta|}{2}+\sum_{i=0}^{t-1}\varepsilon^*_{-i} \quad t<0
\end{cases}\nonumber\\
&&\text{where }\varepsilon^*_i\overset{\text{iid}}{\sim} N(0,\sigma_1^2)\text{ for }i\leq 0\text{ and } \varepsilon^*_i\overset{\text{iid}}{\sim} N(0,\sigma^2_2) \text{ for }i>0 \,.
\end{eqnarray}
One might envisage a result of the form: 
\begin{eqnarray}
\mathbb{P}\left[ \hat{J}=J;\, \left|\lambda_2\left(\tau_j,\hat{\tau}_j^{(2)}\right)\right|\leq Q_{\Delta_j,\sigma_{j-1},\sigma_j}(\sqrt[J]{1-\alpha})\text{ for all }j=1,\dots,J \right]\to 1-\alpha,
\end{eqnarray}
where $\lambda_2:=\lambda_{2,N}$ was defined in (\ref{eq:lambda2multdef}). However, as the distribution of $\underset{t\in\mathbb{Z}}{\arg\min}\;X_{\Delta,\sigma_1,\sigma_2}(t)$ is discrete, it is generally not possible to get the probabilities exactly equal to $\sqrt[J]{1-\alpha}$. Instead, we derive the following result: 
\begin{theorem}\label{thm:increasingJasymprotics}
	For any $\alpha \in (0,1)$ and positive values $\Delta,\sigma_1,\sigma_2$, define: 
	\begin{eqnarray}
	P_{\Delta,\sigma_1,\sigma_2}(1-\alpha)=\mathbb{P}\left[ \left|\underset{t\in\mathbb{Z}}{\arg\min}\;X_{\Delta,\sigma_1,\sigma_2}(t)\right|\leq Q_{\Delta,\sigma_1,\sigma_2}(1-\alpha) \right] \,.
	\end{eqnarray}
	Suppose that conditions (M1) to (M3) are satisfied, conditions (M4-Gaussian) and (M6) are satisfied, and the first stage estimates satisfy (\ref{eq:firstconsistent2}) with a $\rho_N$ such that $J\rho_N\to 0$. Then
	\begin{eqnarray}\label{eq:increasingJasymprotics1}
	&&\mathbb{P}\left[  \hat{J}=J;\, \left|\lambda_2\left( \tau_j,\hat{\tau}^{(2)}_j \right)\right|\leq Q_{|\Delta_j|-2\rho_N,\sigma_{j-1},\sigma_j}(\sqrt[J]{1-\alpha})\text{ for all }j=1\dots, J \right]\nonumber\\
	&= &\left(\prod_{j=1}^J P_{|\Delta_j|-2\rho_N,\sigma_{j-1},\sigma_j}(\sqrt[J]{1-\alpha})\right)+o(1)\,,
	\end{eqnarray}
	and 
	\begin{eqnarray}\label{eq:increasingJasymprotics2}
	&&\mathbb{P}\left[  \hat{J}=J;\, \left|\lambda_2\left( \tau_j,\hat{\tau}^{(2)}_j \right)\right|\leq Q_{|\Delta_j|+2\rho_N,\sigma_{j-1},\sigma_j}(\sqrt[J]{1-\alpha})\text{ for all }j=1\dots, J \right]\nonumber\\
	&= &\left(\prod_{j=1}^JP_{|\Delta_j|+2\rho_N,\sigma_{j-1},\sigma_j}(\sqrt[J]{1-\alpha})\right)+o(1) \,.
	\end{eqnarray}
\end{theorem}
\begin{proof}
See Section \ref{sec:increasingJasymproticsproof} for a long and detailed proof of Supplement Part B. A shorter proof sketch which highlights the key steps is also provided in Section \ref{proof-sketch-Theorem-5}. 
\end{proof}
\begin{remark}
The proof of this result relies on probability bounds for the $L_{\Delta,\sigma_1,\sigma_2}$ distributions, specifically bounds on their tail probabilities, as well as bounds on the probability that $L_{\Delta,\sigma_1,\sigma_2}\neq L_{\Delta',\sigma_1',\sigma_2'}$ when $(\Delta,\sigma_1,\sigma_2)\neq (\Delta',\sigma_1',\sigma_2')$. Using the Gaussianity assumption, such probability bounds can be methodically derived.
The additional condition requiring $J\rho_N\to 0$ stems from the details of the proof, but this is satisfied by several existing change point methods, including binary segmentation that will be showcased later on in the paper. 
\end{remark}

\section{Practical Implementation of Intelligent Sampling}
\label{sec:prac-imple} 
In the previous section, we laid out a generic scheme for intelligent sampling which requires the use of a multiple change point estimation procedure on a sparse subsample of the \emph{data-sequence}. Recall that any procedure that satisfies (\ref{eq:firstconsistent}) can be used here. A variety of such procedures have been explored by various authors (see, e.g., 
\cite{venkatraman1992consistency}, \cite{fryzlewicz2014wild}, \cite{frick2014multiscale}, and \cite{bai1998estimating}), and therefore a number of options are available. For the sake of concreteness, we pursue intelligent sampling with binary segmentation (henceforth abbreviated to ``BinSeg") employed at Step (ISM3). One main advantage of BinSeg is its computational scaling at an optimal rate of $O(N^*\log(N^*))$ when applied to a data sequence of length $N^*$, and in addition it has the upside of being easy to program, which accounts for its popularity in the change point literature. We later discuss other potential options. 
\newline
\newline
\indent However, there are some issues involved in applying the results of BinSeg to our setting. First, BinSeg does not directly provide the signal estimators that are required in (\ref{eq:firstconsistent}). We address this issue in Section \ref{sec:binsegdescription}, where we establish that given certain consistency conditions on the change points, which are satisfied by BinSeg, consistent signal estimators can be obtained by averaging the data between change point estimates. Second, there is no established method for constructing explicit confidence intervals for the actual change points using BinSeg, as existing results give orders of convergence, but no asymptotic distributions or probability bounds with \emph{explicit} constants. However, to implement intelligent sampling, one wants to have high-probability intervals around the initial change-point estimates on which to do the second round sampling, which requires calibration in terms of the coverage probability. To this end, in Section \ref{sec:refitting} we describe a procedure to be performed after applying BinSeg on the first stage subsample: the extra steps provide us with explicit confidence intervals, while not being slower than BinSeg in terms of order of computational time.
\newline
\newline
\indent Before we begin, we remind the reader that this section deals with the first stage subsample and not the whole data set. We will henceforth use the $\star$ notation in connection with the quantities involved at the first stage. 
Therefore, we let $\nu^*_j:=\nu_j$ for $j=0,\dots,J$ and $\rho^*_{N^*}:=\rho_N$, and referring back to notation used in step (ISM2) and (ISM3), we consider the sub-dataset $Z_1,\dots,Z_{N^*}$ as a multiple change point model, with change points $\tau_j^*$'s and levels $\nu^*_j$'s, following conditions (M1) to (M4) for all large $N$ (as a consequence of $Y_1,\dots, Y_N$ satisfying conditions (M1) to (M4)). Using this notation, (\ref{eq:firstconsistent}) translates to the requirement that a change point estimation scheme applied upon $Z_1,\dots,Z_{N^*}$ w procures estimates $\hat{\tau}_j^*$'s  and $ \hat{\nu}^*_j$'s (equal to $\hat{\nu}^{(1)}_j$'s in (ISM3)) such that
\begin{eqnarray}\label{eq:firstconsistentsparse}
\mathbb{P}\left[ \hat{J}=J,\,\max_{j=1,\dots J}\left| \hat{\tau}_j^*-\tau_j^* \right|\leq w^*(N^*),\, \max_{j=0,\dots,J}\left|\hat{\nu}^*_j-\nu_j^* \right|\leq \rho^*_{N^*}\right]\to 1,
\end{eqnarray}
for some sequences $w^*(N^*)$ and $\rho^*_{N^*}$ such that $w^*(N^*)\to \infty$, $w^*(N^*)=o(\delta^*_{N^*})$, and $\rho^*_{N^*}\to 0$ as $N^*\to \infty$. We, subsequently, refer to this latter condition.

\subsection{A Brief Description of Binary Segmentation}\label{sec:binsegdescription}	
Consider the model given in (\ref{model}). For any positive integers
$1\leq s\leq b<e\leq N^*$, let $n=e-s+1$ and define the Cumulative Sum (CUSUM) statistic at $b$ with endpoints $(s,e)$ as 
\begin{equation*}
\bar{Z}_{s,e}^b=\sqrt{\frac{e-b}{n(b-s+1)}}\sum_{t=s}^bZ_t-\sqrt{\frac{b-s+1}{n(e-b)}}\sum_{t=b+1}^e Z_t.
\end{equation*}
Binary segmentation is performed by iteratively maximizing the CUSUM statistics over the segment between change point estimates, 
accepting a new change point if the maximum passes a threshold parameter $\zeta_{N^*}$. Specifically,
\begin{enumerate}
\item Fix a threshold value $\zeta_{N^*}$ and initialize the segment set $SS=\{ (1,N^*) \}$ and the change point estimate set $\underline{\hat{\tau}}^*=\emptyset$.
\item Pick any ordered pair $(s,e)\in SS$, remove it from $SS$ (update $SS$ by $SS\leftarrow SS-\{ (s,e) \}$). If $s\geq e$ then skip to step 5, otherwise continue to step 3.
\item Find the argmax and max of the CUSUM statistic over the chosen $(s,e)$ from the previous step: $b_0=\underset{b\in \{s,...,e-1\}}{\arg\max}|\bar{Z}^b_{s,e}|$ and $|\bar{Z}^{b_0}_{s,e}|$
\item If $|\bar{Z}^{b_0}_{s,e}|\geq \zeta_{N^*}$, then add $b_0$ to the list of change point estimates (add $b_0$ to $\underline{\hat{\tau}}^*$), and add ordered pairs $(s,b_0)$ and $(b_0+1,e)$ to $SS$, otherwise skip to step 5.
\item Repeat steps 2-4 until $SS$ contains no elements.
\end{enumerate}
\begin{remark} 
For our model, this algorithm provides consistent estimates of {\em both} the location of the change points and the corresponding levels to the left and right of them, given certain assumptions.
Specifically, consistency results for BinSeg require settings where the minimal separation distance between change points grows faster than the length of the data sequence raised to an appropriate power. Theorem 3.1 of \cite{fryzlewicz2014wild} presents a concrete result in this direction, where this particular power -- $\Theta$ in their notation (which is $1- \Xi$ in our notation) -- is restricted to be strictly larger than $3/4$. The same spacing condition appears in later work by the same author on BinSeg in high dimensional settings \cite{cho2015multiple}. However, there appears to be a caveat. A corrigendum to \cite{cho2015multiple} released by the author \cite{chofrycorrection} 
shows that this spacing condition does not ensure consistency of BinSeg; rather, a stronger spacing condition, $\Theta > 6/7$, is needed. From our correspondence with the authors, there is strong reason to believe that the $3/4$ in \cite{fryzlewicz2014wild} \emph{should also change to} $6/7$, and accordingly, in the sequel where we focus on a BinSeg based approach, we restrict ourselves to this more stringent regime, to be conservative. 
\end{remark}  
We require the following additional assumptions:
\begin{itemize}
	\setlength{\itemindent}{.5in}
	\item[(M4 (BinSeg))] The error terms of the data sequence are i.i.d. $N(0,\sigma^2)$, where $\sigma$ is a positive constant not dependent on $N$.
	\item[(M7 (BinSeg))] $\Xi$ (from condition (M3)) is further restricted by $\Xi\in [0,1/7)\,,$
	\item[(M8 (BinSeg))] $N_1$, from step (ISM2), is chosen so that $N_1=K_1N^\gamma$ for some $K_1>0$ and $\gamma>7\Xi$.
\end{itemize}

Condition (M7 (BinSeg)) ensures that BinSeg will be consistent on some subsample of the data sequence, for if the condition was not satisfied and the minimum spacing $\delta_N$ grows slower than $N^{6/7}$, then established results on BinSeg (see Theorem \ref{frythm}) could not guarantee consistency on any subsample of the (or even the entire) data sequence. The latter of the above conditions implies selecting a large enough subsample so that Theorem \ref{frythm} becomes applicable.
When (M8 (BinSeg)) is satisfied, the first stage subsample would have size $N^*=(K_1+o(1))N^\gamma$ with minimal change point separation of $\delta^*_{N^*}=(N_1/N+o(1))\delta_N=(C+o(1))(N^{*})^{1-\Xi/\gamma}$ for some positive constant $C$. Finally,(M4 (BinSeg)) imposes a more restrictive structure on the error terms in order to satisfy error term assumptions on established results. These conditions, when taken together, lead to the following result:
\begin{theorem}\label{frythm}
Suppose that conditions (M1) to (M3), (M4 (BinSeg)), (M7 (BinSeg)), and (M8 (BinSeg)) are satisfied, with the tuning parameter $\zeta_{N^{\star}}$ chosen appropriately so that 
\begin{itemize}
	\item if $\Xi/\gamma>0$ then $\zeta_{N^*}=c_1(N^{*})^{\xi}$ where $\xi\in (\Xi/\gamma,1/2-\Xi/\gamma)$ and $c_1>0$
	\item if $\Xi/\gamma=0$ then $c_2\left(\log(N^*)\right)^p\leq \zeta_{N^*}\leq c_3(N^*)^\xi$ where $p>1/2, \xi<1/2$, and $c_2,c_3>0\,.$
\end{itemize}
Define $E_{N^*}=\left(\frac{N^*}{\delta^*_{N^*}}\right)^2\log(N^*)$. Then, there exist positive constants $C, C_1$ such that 
\begin{equation}\label{eq:binsegprob}
\mathbb{P}\Big[ \hat{J}=J;\quad \max_{i=1,...,J} |\hat{\tau}^*_i-\tau^*_i|\leq CE_{N^*} \Big]\geq 1-C_1/N^* \,.
\end{equation}
\end{theorem}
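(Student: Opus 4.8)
The plan is to leverage the existing consistency theorem for BinSeg from \cite{fryzlewicz2014wild} (the corrected version requiring $\Theta > 6/7$) essentially as a black box, after verifying that the subsampled sequence $Z_1,\dots,Z_{N^*}$ satisfies all the hypotheses of that result. First I would record the structural facts about the subsample established just above the theorem statement: by (M2)--(M7 (BinSeg)), the sequence $Z_1,\dots,Z_{N^*}$ is itself a signal-plus-iid-Gaussian-noise model with $J$ change points, piecewise-constant levels $\nu_j^*=\nu_j$ (so the minimum jump size is still bounded below by $\underline{\Delta}$), error variance $\sigma^2$, length $N^*=(K_1+o(1))N^\gamma$, and minimal change-point spacing $\delta^*_{N^*}=(N_1/N+o(1))\delta_N$. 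Because $\gamma>7\Xi$ by (M7 (BinSeg)), one computes $\delta^*_{N^*}=(C+o(1))(N^*)^{1-\Xi/\gamma}$ with $1-\Xi/\gamma > 1-1/7 = 6/7$, so the effective spacing exponent $\Theta^* := 1-\Xi/\gamma$ lies in the admissible range $(6/7,1]$ for the corrected BinSeg theorem. This is the crux of why conditions (M6 (BinSeg)) and (M7 (BinSeg)) were imposed.

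Next I would match the threshold choices. Theorem 3.1 of \cite{fryzlewicz2014wild} requires the threshold $\zeta_{N^*}$ to sit between a lower bound driven by the noise fluctuations (of order a power of $\log N^*$, or more precisely $C(\log N^*)^{1/2}$ times a constant, but any $(\log N^*)^p$ with $p>1/2$ suffices to control the maximum of the relevant Gaussian process via a union bound) and an upper bound dictated by the signal strength, which scales like $\underline{\Delta}\sqrt{\delta^*_{N^*}}$, i.e. a power $(N^*)^{\Theta^*/2}$ of the length. Translating: in the non-degenerate case $\Xi/\gamma>0$ we have $\Theta^*/2 = (1-\Xi/\gamma)/2 = 1/2 - \Xi/(2\gamma)$, and one checks the interval $\xi\in(\Xi/\gamma,\, 1/2-\Xi/\gamma)$ in the statement is (after absorbing constants) contained in the range for which $(N^*)^\xi$ both dominates the noise floor and is dominated by the signal envelope; in the degenerate case $\Xi/\gamma=0$ the spacing is essentially linear, the signal envelope is $(N^*)^{1/2}$, and a $(\log N^*)^p$ threshold with $p>1/2$ is both necessary and sufficient, matching the second bullet. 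Having verified the hypotheses, the cited theorem yields $\hat J = J$ and $\max_i|\hat\tau_i^* - \tau_i^*| \le C (N^*/\delta^*_{N^*})^2\log N^*$ on an event of probability at least $1 - C_1/N^*$, which is exactly (\ref{eq:binsegprob}) with $E_{N^*} = (N^*/\delta^*_{N^*})^2\log N^*$.

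The main obstacle is not any deep probabilistic estimate — it is the careful bookkeeping needed to confirm that the exponent ranges for $\xi$ (and the $(\log N^*)^p$ bounds) stated here really do fall inside the admissible window of the corrected \cite{fryzlewicz2014wild} result once everything is re-expressed in terms of $N^*$ rather than $N$, and that the corrigendum \cite{chofrycorrection}'s strengthened spacing requirement $\Theta>6/7$ is genuinely met with room to spare (which it is, since $\gamma > 7\Xi$ gives $\Theta^* > 6/7$ strictly). A secondary point to be careful about: the cited theorem is stated for a fixed-length sequence, so one must note that $N^*\to\infty$ as $N\to\infty$ (guaranteed by $\gamma>0$) and that the constants $C, C_1$ produced by that theorem depend only on $\underline{\Delta}$, $\sigma$, $\bar\theta$, and the exponent $\Theta^*$ — all of which are bounded away from their critical values uniformly in $N$ — so the probability bound $1-C_1/N^*$ holds with a single pair of constants for all large $N$. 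Beyond these verifications, the proof is an immediate appeal to the imported theorem.
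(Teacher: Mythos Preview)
Your proposal is correct and matches the paper's approach exactly: the paper does not provide an independent proof of this theorem but states in Remark~\ref{rem:frythm} that it is ``adapted from Theorem 3.1 of \cite{fryzlewicz2014wild},'' which is precisely the black-box invocation you describe. Your verification that the subsample $Z_1,\dots,Z_{N^*}$ meets the spacing requirement $\Theta^*=1-\Xi/\gamma>6/7$ (via $\gamma>7\Xi$) and that the threshold ranges line up is the bookkeeping the paper leaves implicit, so if anything you have been more explicit than the original.
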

\begin{remark}
\label{rem:frythm}
The above theorem is adapted from Theorem 3.1 of \cite{fryzlewicz2014wild} which applies to the more general setting where $\underline{\Delta}$, the minimum signal jump, can decrease to 0 as $N\to\infty$. There are some methodological issues with the above result. First, there is no easy recipe for determining $C$ explicitly, which is addressed in the subsequent section. Second, this result does not give an explicit value for the tuning parameter $\zeta_N$, but there do exist methods such as model selection using a Schwarz information criterion (see \cite{fryzlewicz2014wild}), plus software to perform this within the R \texttt{changepoint} package.
\end{remark}

Next, we introduce estimators for the signals $\hat{\nu}_j^{\star} :=\mathbb{E}[Z_{\tau_j^{\star}+1}]$, for $j=0,\dots,J$. Intuitively, they can be estimated by the average of data points between change points estimates:
\begin{equation}\label{eq:signalestdef}
\hat{\nu}^*_j=\frac{1}{\hat{\tau}^*_{j+1}-\hat{\tau}^*_j}\left(\sum_{\hat{\tau}^*_j<i\leq \hat{\tau}^*_{j+1}}Z_i\right)\qquad \text{ for }j=0,...,\hat{J}
\end{equation}
with the convention of $\hat{\tau}^*_0:=0$ and $\hat{\tau}^*_{\hat{J}+1}:=N^*$. These estimators are consistent: 
\begin{lemma}\label{lem:binsegsignalconsistent}
Suppose conditions (M1) to (M3), (M4 (BinSeg)), (M6 (BinSeg)), and (M7 (BinSeg)) are satisfied, the $\hat{\tau}^*_i$'s are the BinSeg estimators, and $\hat{\nu}^*_i$'s are the signal estimators defined in (\ref{eq:signalestdef}). Then there exists a sequence $\rho^*_{N^*}\to 0$ such that $J\rho^*_{N^*}\to 0$ and 
\begin{eqnarray}\label{eq:binseg1stconsistent}
\mathbb{P}\left[\hat{J}=J;\quad \max_{i=1,...,J}|\hat{\tau}^*_i-\tau^*_i|\leq C E_{N^*};\quad \max_{i=0,...,J}|\hat{\nu}^*_i-\nu^*_i|\leq \rho^*_{N^*}\right]\to 1
\end{eqnarray}
as $N^*\to\infty$. 
\end{lemma}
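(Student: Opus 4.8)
The plan is to condition on the high‑probability event supplied by Theorem~\ref{frythm}, and then to bound the signal estimation error pathwise on that event, splitting $\hat{\nu}^*_j-\nu^*_j$ into a deterministic ``signal‑contamination'' contribution and a random ``noise‑average'' contribution. First I would record the magnitudes implied by (M1)--(M4), (M6 (BinSeg)) and (M7 (BinSeg)), as already noted around Theorem~\ref{frythm}: $N^*\asymp N^{\gamma}$, $\delta^*_{N^*}\asymp N^{\gamma-\Xi}\asymp (N^*)^{1-\Xi/\gamma}$, $J+1\le N^*/\delta^*_{N^*}\asymp (N^*)^{\Xi/\gamma}$, and $E_{N^*}=(N^*/\delta^*_{N^*})^2\log N^*\asymp (N^*)^{2\Xi/\gamma}\log N^*$; since $\gamma>7\Xi$ (in particular $\gamma>3\Xi$), $E_{N^*}/\delta^*_{N^*}\asymp (N^*)^{3\Xi/\gamma-1}\log N^*\to 0$, i.e.\ $E_{N^*}=o(\delta^*_{N^*})$. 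Set $\mathcal{E}_1:=\{\hat{J}=J,\ \max_i|\hat{\tau}^*_i-\tau^*_i|\le CE_{N^*}\}$, so $\mathbb{P}[\mathcal{E}_1]\ge 1-C_1/N^*\to 1$ by Theorem~\ref{frythm}. On $\mathcal{E}_1$, for all large $N$ the estimates $0=\hat{\tau}^*_0<\hat{\tau}^*_1<\dots<\hat{\tau}^*_{J+1}=N^*$ are correctly ordered, each window $I_j:=(\hat{\tau}^*_j,\hat{\tau}^*_{j+1}]$ has length $\hat{n}_j:=\hat{\tau}^*_{j+1}-\hat{\tau}^*_j\ge\tfrac12\delta^*_{N^*}$, and $I_j$ contains at most $2CE_{N^*}$ indices $i$ with $\theta^*_i\neq\nu^*_j$ (the contamination enters only from the two neighbouring true segments, at most $CE_{N^*}$ indices from each; for $j=0$ and $j=J$ there is just one such neighbour).

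Writing $Z_i=\theta^*_i+\varepsilon^*_i$, where $\varepsilon^*_1,\dots,\varepsilon^*_{N^*}$ is a subsequence of the original errors and hence i.i.d.\ $N(0,\sigma^2)$ by (M4), equation~(\ref{eq:signalestdef}) gives
\[
\hat{\nu}^*_j-\nu^*_j=\left(\frac{1}{\hat{n}_j}\sum_{i\in I_j}\theta^*_i-\nu^*_j\right)+\frac{1}{\hat{n}_j}\sum_{i\in I_j}\varepsilon^*_i .
\]
On $\mathcal{E}_1$ the first (bias) term is bounded, uniformly in $j$, by $\hat{n}_j^{-1}\cdot 2CE_{N^*}\cdot 2\bar{\theta}\le 8C\bar{\theta}\,E_{N^*}/\delta^*_{N^*}=:b_{N^*}$ using (M1), and $b_{N^*}\to 0$ by the first paragraph.

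For the second (noise) term the summation window $I_j$ is data‑dependent, so I would replace it by a bound uniform over all admissible windows: for every pair $0\le a<b\le N^*$ with $b-a\ge\tfrac12\delta^*_{N^*}$ one has $(b-a)^{-1}\sum_{a<i\le b}\varepsilon^*_i\sim N(0,\sigma^2/(b-a))$, so a Gaussian tail bound together with a union bound over the $\le (N^*)^2$ such pairs gives, for any $t>0$,
\[
\mathbb{P}\!\left[\,\exists\, a<b,\ b-a\ge \tfrac12\delta^*_{N^*}:\ \left|(b-a)^{-1}\sum_{a<i\le b}\varepsilon^*_i\right|>t\,\right]\le 2(N^*)^2\exp\!\left(-\frac{\delta^*_{N^*}t^2}{4\sigma^2}\right).
\]
Choosing $t=t_{N^*}:=(N^*)^{-\beta}$ for a fixed $\beta$ with $\Xi/\gamma<\beta<\tfrac12(1-\Xi/\gamma)$ --- a nonempty range because $\gamma>3\Xi$ --- makes this probability $o(1)$ (here $\delta^*_{N^*}t_{N^*}^2\asymp (N^*)^{1-\Xi/\gamma-2\beta}$ dominates $\log N^*$) and simultaneously gives $J\,t_{N^*}\lesssim (N^*)^{\Xi/\gamma-\beta}\to 0$. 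Let $\mathcal{E}_2$ denote the complement of the event in the last display at $t=t_{N^*}$, so $\mathbb{P}[\mathcal{E}_2^c]=o(1)$; on $\mathcal{E}_1\cap\mathcal{E}_2$ one has $\max_j|\hat{\nu}^*_j-\nu^*_j|\le b_{N^*}+t_{N^*}=:\rho^*_{N^*}$. Then $\rho^*_{N^*}\to 0$, and $J\rho^*_{N^*}\le Jb_{N^*}+Jt_{N^*}\to 0$ because $Jb_{N^*}$ is of order $(N^*)^{4\Xi/\gamma-1}\log N^*\to 0$ (using $\gamma>7\Xi>4\Xi$) and $Jt_{N^*}\to 0$ by the choice of $\beta$. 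Since $\mathbb{P}[\mathcal{E}_1\cap\mathcal{E}_2]\ge 1-\mathbb{P}[\mathcal{E}_1^c]-\mathbb{P}[\mathcal{E}_2^c]\to 1$, this establishes~(\ref{eq:binseg1stconsistent}) with this $\rho^*_{N^*}$.

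I expect the main obstacle to be the third paragraph: because the averaging window $I_j$ is the random output of BinSeg, a fixed‑window concentration inequality cannot be applied directly, and the remedy --- a union bound over \emph{all} admissible endpoint pairs --- is affordable only because every admissible window has length at least of order $\delta^*_{N^*}$, and because the spacing regime $\gamma>7\Xi$ leaves enough slack to choose a single sequence $\rho^*_{N^*}$ that is at once $o(1)$, $o(1/J)$, and large enough to dominate both the contamination bias $b_{N^*}$ and the uniform noise fluctuation $t_{N^*}$.
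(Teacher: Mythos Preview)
Your argument is correct, and the scheme is sound: on the BinSeg event $\mathcal{E}_1$, each window $I_j$ has length at least $\delta^*_{N^*}/2$ and at most $2CE_{N^*}$ contaminated indices, so the bias bound $b_{N^*}\asymp (N^*)^{3\Xi/\gamma-1}\log N^*$ is valid; the uniform noise bound over all admissible windows via the union over $\le (N^*)^2$ endpoint pairs goes through because $\delta^*_{N^*}t_{N^*}^2$ is a positive power of $N^*$ under $\beta<\tfrac12(1-\Xi/\gamma)$; and the range $\Xi/\gamma<\beta<\tfrac12(1-\Xi/\gamma)$ is nonempty since $\gamma>7\Xi>3\Xi$. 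The checks that $Jb_{N^*}\to 0$ and $Jt_{N^*}\to 0$ are also right.

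Your route, however, differs from the paper's. The paper does not do a global union bound over all $(N^*)^2$ windows; instead it proves a general auxiliary result (Lemma~\ref{cor:signalconsistent}) which, for each $i$, sums over the $O(w^*(N^*))$ admissible values of $\hat{\tau}^*_i$ (and for interior segments also of $\hat{\tau}^*_{i+1}$), splitting the average at the midpoint $\tau_i^{*(m)}$ so that the bias is absorbed into the z-score. It then verifies, for the BinSeg-specific choices $w^*(N^*)=CE_{N^*}$ and $B_{N^*}=C_5/N^*$, that a power choice $\rho^*_{N^*}=(N^*)^\theta$ with $\theta\in\big((3\Xi/\gamma-1)\vee(-3/8),\,-\Lambda/\gamma\big)$ satisfies all the hypotheses of that lemma and gives $J\rho^*_{N^*}\to 0$. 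What this buys is a sharper probability bound (the union is over $\sim w^*(N^*)$ terms per change point rather than $(N^*)^2$ overall) and a reusable lemma that applies to any first-stage procedure with the stated localisation rate, not just BinSeg. What your approach buys is brevity and transparency: the bias--plus--noise split with a crude union bound avoids the midpoint case analysis entirely, and the exponential Gaussian tail easily absorbs the extra polynomial factor, so nothing is lost for the present purpose.
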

\begin{proof}
See Section \ref{sec:binsegsignalconsistentproof} in Supplement Part B.
\end{proof}
By setting $\rho^*_{N^*}=\rho_N$ and $CE_{N^*}=w^*(N^*)$, condition (\ref{eq:firstconsistent}), will be satisfied for a $\rho_N$ satisfying $J\rho_N\to 0$, which meets all requirement of Theorems \ref{thm:increasingJasymprotics}.

\subsection{Calibration of intervals used in Stage 2 of Intelligent Sampling}\label{sec:refitting}

\indent Constructing confidence intervals based on Theorem \ref{frythm} would require putting a value on $CE_{N^*}=C(N^*/\delta^*_{N^*})^2\log(N^*)$ from (\ref{eq:binsegprob}). An estimate of $\delta^*_{N^*}$ can be obtained from the minimum difference of consecutive $\hat{\tau}^*_j$'s, but an explicit expression
for $C$ is unavailable, and the existing literature on binary segmentation does not appear to provide such an explicit expression. To address this issue, we introduce a calibration method which allows the construction of confidence intervals with explicitly calculable width around the first stage estimates $\hat{\tau}^{(1)}_j$'s: the idea is to fit stump models on data with indices $[\hat{\tau}^{(1)}_{j-1}+1,\hat{\tau}^{(1)}_{j+1}]$, as each of these stretches forms a stump model with probability going to 1. 
\newline
\newline
Consider starting right after step (ISM4) (e.g., Figure \ref{fig:calibration1}), where we have rough estimates $\hat{\tau}_i^*$'s of the change points (with respect to the $\{Z_i\}$ sequence) and $\hat{\nu}^{(1)}_i$'s of the signals, obtained from the $N^*$ sized subsample $\{Z_i\}$.

\begin{minipage}[t]{.48\textwidth}
\begin{center}
	\begin{overpic}[scale=0.27]{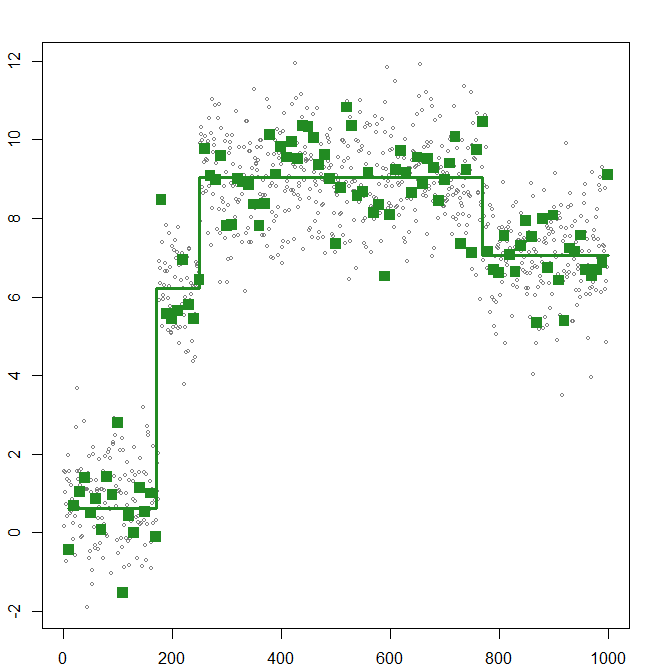}			\end{overpic}
\end{center}
\begin{figure}[H]
	\caption{Green points are $Z_i$'s, solid green line is the BinSeg estimate.}
	\label{fig:calibration1}
\end{figure}
\end{minipage}
\begin{minipage}[t]{0.04\textwidth}
~~	
\end{minipage}
\begin{minipage}[t]{0.48\textwidth}
\begin{center}
	\begin{overpic}[scale=0.27]{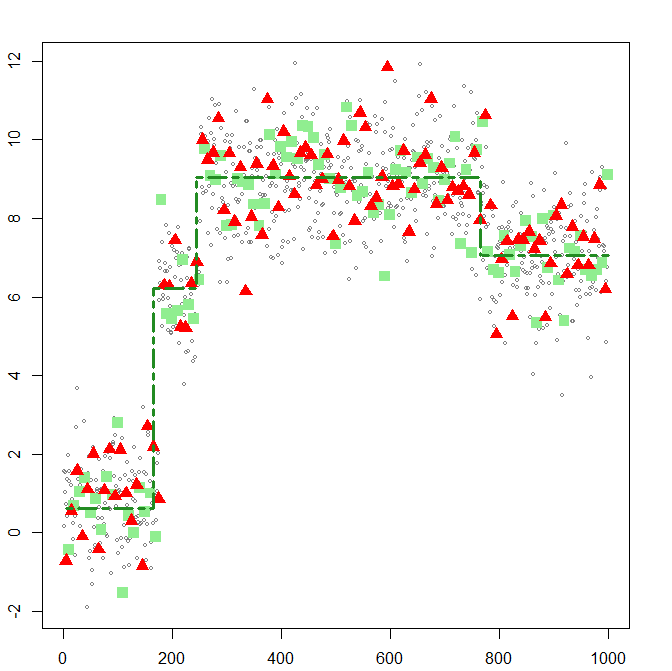}			\end{overpic}
\end{center}
\begin{figure}[H]
	\caption{$Z_i$'s are light green points, BinSeg estimates as dashed green line, $V_i$'s as red points.}
	\label{fig:calibration2}
\end{figure}
\end{minipage}

\begin{minipage}[t]{.48\textwidth}
\centering
\begin{overpic}[scale=0.27]{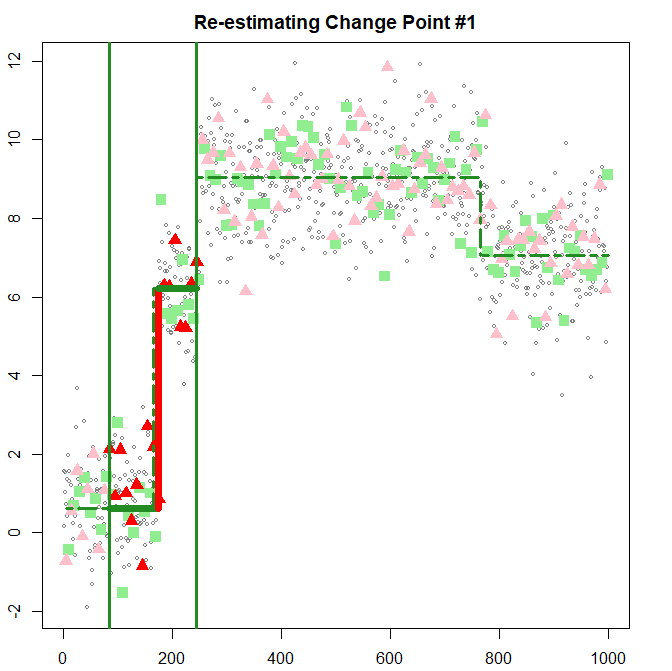}			\end{overpic}
\end{minipage}
\begin{minipage}[t]{0.04\textwidth}
~~	
\end{minipage}
\begin{minipage}[t]{0.48\textwidth}
\centering
\begin{overpic}[scale=0.27]{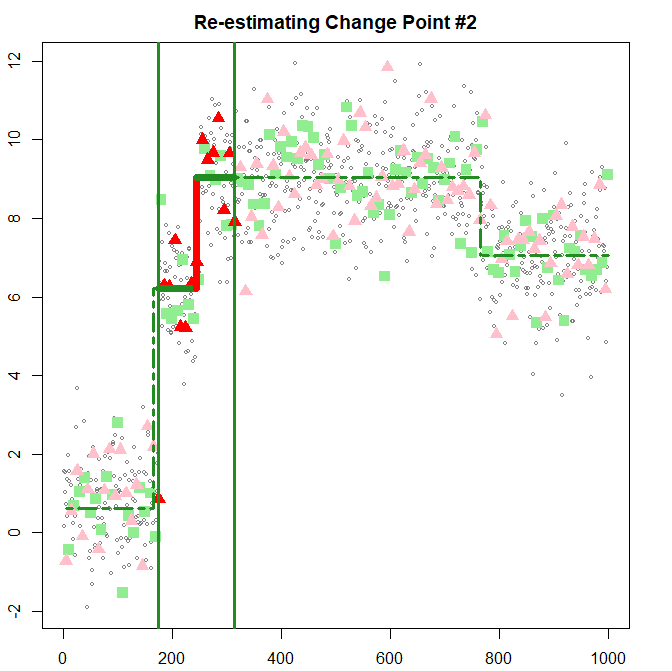}			\end{overpic}
\end{minipage}
\begin{figure}[H]
\caption{Re-estimation of the first (\textbf{Left Panel}) and second (\textbf{Right Panel}) detected change points (similar procedure for third estimated change point not shown). Solid green and solid red lines denote stump estimates using $V_i$'s from $\{V_k:\hat{\tau}^*_{j}-\hat{d}_j\leq k\leq \hat{\tau}^*_{j}+ \hat{d}_j\}$ intervals.}
\label{fig:calibration3}
\end{figure}

We then pick a different subsample $\{V_i\}$ of equal size to the $\{Z_i\}$ subsample and consider the $\hat{\tau}^*_i$'s and $\hat{\nu}^{(1)}_i$'s as estimates for the parameters of this data sequence (e.g. Figure \ref{fig:calibration2}).
For each $j$, fit a one-parameter stump model $f_j^{(d)}(k)=\hat{\nu}^{(1)}_{j-1}1(k\leq d)+\hat{\nu}^{(1)}_{j}1(k>d)$ (here $d$ is the discontinuity parameter) to the subset of $\{V_k:\hat{\tau}^*_{j-1}+1\leq k\leq \hat{\tau}^*_{j+1}\}$ given by $\{V_k:\hat{\tau}^*_{j}-\hat{d}_j\leq k\leq \hat{\tau}^*_{j}+\hat{d}_j\}$ where $\hat{d}_j=\min\{ (\hat{\tau}^*_j-\hat{\tau}^*_{j-1}),(\hat{\tau}^*_{j+1}-\hat{\tau}^*_{j}) \}$, to get an updated least squares estimate of $\tau_j$ (e.g. Figure \ref{fig:calibration3})\footnote{This subset is used instead of the full interval to avoid situations where $\frac{\tau_j-\tau_{j-1}}{\tau_{j+1}-\tau_j}\to$ 0 or $\infty$, which makes matters easier for theoretical derivations.}.
\newline
\newline
Formally, the calibration steps are:
\begin{enumerate}[label=(ISM4-\arabic*):]
\setlength{\itemindent}{.5in}
\item pick a positive integer $k_N$ less than $\lfloor N/N_1\rfloor$ \footnote{A good pick is $k_N=\left\lfloor\frac{\lfloor N/N_1\rfloor}{2}\right\rfloor$}, take a subsample $\{V_i\}$ from the dataset of $\{Y_i\}$ which is the same size as the $\{Z_i\}$ subsample, by letting $V_i=Y_{i\lfloor N/N_1\rfloor -k_N}$ for all $i$. 
\end{enumerate}
The $\{V_i\}$ subsample also conforms to the model given in (\ref{model}), with change points $\tau^{**}_i=\max\{j\in\mathbb{N}: j\lfloor N/N_1\rfloor -k_N\leq \tau_i\}$ and minimum spacing $\delta_{N^*}^{**}:= \min_k(\tau_{k+1}^{**}-\tau_k^{**})$ which satisfies $|\delta_{N^*}^{**}-\delta_{N^*}^*|\leq 1$.
\begin{enumerate}[label=(ISM4-\arabic*):]
\setlength{\itemindent}{.5in}
\setcounter{enumi}{1}
\item For each $i=1,...,J$, consider the estimates $\hat{\tau}_i^*$ (obtained from the $\{Z_j\}$ subsample at step (ISM3)) as estimators for $\tau^{**}_i$. From (\ref{eq:firstconsistent}) it is possible to derive that 
\begin{equation}\label{eq:reconsistent}
\mathbb{P}\left[ \hat{J}=J,\, \max_{i=1,...,J}|\hat{\tau}^*_i-\tau_i^{**}|\leq w^*(N^*)+1,\,\max_{i=0,...,J}|\hat{\nu}^{(1)}_i-\nu_i|\leq \rho_N  \right]\to 1
\end{equation}
where $w^*(N)+1\to\infty$ and $(w^*(N^*)+1)/\delta^{**}_{N^*}\to 0 \,.$
\item For each $i=1,...,\hat{J}$, define $\hat{d}_i=\min\{ (\hat{\tau}_{i+1}^*-\hat{\tau}_{i}^*),(\hat{\tau}_{i}^*-\hat{\tau}_{i-1}^*) \}$ (with $\hat{\tau}_{0}^*=0$ and $\hat{\tau}_{\hat{J}+1}^*=\lfloor N/\lfloor N/N_1\rfloor\rfloor$), and re-estimate the change points by letting
\begin{eqnarray}
\hat{\tau}^{re}_i:=\underset{t:|t-\hat{\tau}_{i}^*|< \hat{d}_i}{\arg\min}\left[\sum_{j: |j-\hat{\tau}_{i}^*|<\hat{d}_i}\big( V_i-\hat{\nu}^{(1)}_{i-1}1(j\leq t)-\hat{\nu}^{(1)}_i1(j>t) \big)^2\right]
\end{eqnarray}
for $i=1,...,\hat{J}\,.$
\item To translate the $\hat{\tau}^{re}_i$'s (change point estimates for the subsample for the $\{V_j\}$'s) into estimates for $\tau_1,\dots,\tau_J$ (change points for the full data set), set the first stage change point estimators as $\hat{\tau}^{(1)}_i:=\hat{\tau}^{re}_i\lfloor N/N_1\rfloor-k_N\,.$ 
\end{enumerate}
\begin{remark}
It is important to note that although the above steps are presented in the context of using BinSeg in the first stage, in practice they can be used with many other procedures. Note that any change point detection procedure that provides consistent estimates can be used in the first stage. More broadly, these steps can be used outside of the intelligent sampling framework. Given a consistent change point estimation scheme \emph{for which a method to construct explicit confidence intervals is not known}, one could split the data in two subsamples, the odd points (first, third, fifth, etc data points) and the even points. The aforementioned estimation scheme could be applied to the odd points, and the subsequent steps (ISM4-1) to (ISM4-4) could be applied  to the even points. A result similar to that of Theorem \ref{thm:reconsistentineq}, presented below, could then be used to construct confidence intervals.
\end{remark} 

Using this procedure, we can express the deviations of the $\hat{\tau}^{re}$ estimators using the quantiles $Q_{\Delta,\sigma_1,\sigma_2}$ that were defined in the paragraph before (\ref{eq:gauss_walk_def}).
\begin{theorem}\label{thm:reconsistentineq}
Suppose conditions (M1) to (M3), and (M4 (BinSeg)) are satisfied, and the estimation method used in step (ISM3) satisfies (\ref{eq:firstconsistent}), and the pertinent $\rho_N$ appearing in (\ref{eq:firstconsistent}) also satisfies $J\rho_N\to 0$. For any sequence\footnote{The sequence could be a sequence that converges to 0 or not converge to 0, it only needs to stay between 0 and 1, and satisfy the bound $\alpha\geq cN^{-\eta}$.} $\alpha_N$ between 0 and 1 such that $\alpha_N\geq CN^{-\eta}$ for some positive $C$ and $\eta$, we have
\begin{eqnarray}\label{eq:reconsistentineq}
\mathbb{P}\left[ \hat{J}=J,|\hat{\tau}_j^{re}-\tau_j^{**}|\leq Q_{\Delta_j,\sigma,\sigma}\left(1-\frac{\alpha_N}{J}\right)  \text{ for }j=1,\dots,J \right]\geq 1-\alpha_N+o(\alpha_N)\nonumber\\
\end{eqnarray}
\end{theorem}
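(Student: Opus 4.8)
The plan is to run the argument behind Theorem~\ref{thm:increasingJasymprotics} in a cleaner setting, exploiting two features of the calibration construction. First, the subsample $\{V_i\}$ used in (ISM4-3) lives on the indices $i\lfloor N/N_1\rfloor-k_N$, which are disjoint from the indices $i\lfloor N/N_1\rfloor$ carrying $\{Z_i\}$, so $\{V_i\}$ is \emph{independent} of the first-stage quantities $\hat\tau_i^*,\hat\nu_i^{(1)},\hat d_i$. Second, inside each window $\{V_k:|k-\hat\tau_i^*|<\hat d_i\}$ every point is retained, so no $\lambda_2$-type correction is needed and the re-estimation objective is a genuine random walk on consecutive integers, exactly as in the stump problem behind Theorem~\ref{thmsingledist}. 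I would first condition on $\{Z_i\}$ and restrict to the event $E_N$ of (\ref{eq:reconsistent}), on which $\hat J=J$, $\max_i|\hat\tau_i^*-\tau_i^{**}|\le w^*(N^*)+1$ and $\max_i|\hat\nu_i^{(1)}-\nu_i|\le\rho_N$, with $\mathbb{P}[E_N]\to1$. On $E_N$ the windows are pairwise disjoint, each contains exactly the one true change point $\tau_j^{**}$, and the half-width $\hat d_j\ge\delta^{**}_{N^*}-2(w^*(N^*)+1)$ grows at a polynomial rate in $N$, whereas (by Lemma~\ref{lem:quantbound} and $\alpha_N\ge CN^{-\eta}$) the quantiles $Q_{\,\cdot\,}(1-\alpha_N/J)$ are uniformly (over $j$, via $\underline{\Delta}$) of order $\log(J/\alpha_N)=O(\log N)$.

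Working on $E_N$ and conditionally on $\{Z_i\}$, for each $j$ I would expand the criterion in (ISM4-3), written $M_j(t)$, around $\tau_j^{**}$; using $V_\ell=\nu_j+\varepsilon_\ell$ for $\ell>\tau_j^{**}$,
\begin{align*}
M_j(t)-M_j(\tau_j^{**})
&=\big(\hat\nu^{(1)}_j-\hat\nu^{(1)}_{j-1}\big)\sum_{\ell=\tau_j^{**}+1}^{t}\big(2V_\ell-\hat\nu^{(1)}_{j-1}-\hat\nu^{(1)}_j\big)\\
&=2\big(\hat\nu^{(1)}_j-\hat\nu^{(1)}_{j-1}\big)\sum_{\ell=\tau_j^{**}+1}^{t}\Big(\varepsilon_\ell+\tfrac{1}{2}\tilde\Delta_j^{+}\Big),\qquad t>\tau_j^{**},
\end{align*}
with $\tilde\Delta_j^{+}:=2\nu_j-\hat\nu^{(1)}_{j-1}-\hat\nu^{(1)}_j=\Delta_j-(e_{j-1}+e_j)$, $e_i:=\hat\nu^{(1)}_i-\nu_i$, $|e_i|\le\rho_N$, and the multiplier $\hat\nu^{(1)}_j-\hat\nu^{(1)}_{j-1}$ having the sign of $\Delta_j$ on $E_N$; the symmetric computation for $t<\tau_j^{**}$ produces the mirror sum with drift parameter $\tilde\Delta_j^{-}:=\hat\nu^{(1)}_{j-1}+\hat\nu^{(1)}_j-2\nu_{j-1}=\Delta_j+(e_{j-1}+e_j)$. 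Since the argmin is invariant under multiplication by positive constants, $\hat\tau_j^{re}-\tau_j^{**}$ is the argmin over the window of a two-sided Gaussian random walk whose branches agree in law with those of $X_{\tilde\Delta_j^{+}/\sigma}$ and $X_{\tilde\Delta_j^{-}/\sigma}$, with $|\tilde\Delta_j^{\pm}|/\sigma\in[(|\Delta_j|-2\rho_N)/\sigma,(|\Delta_j|+2\rho_N)/\sigma]$. By the monotonicity of the argmin in the drift parameter (Proposition~\ref{prop:Ldist} and its asymmetric-walk extension in Supplement Part~B), $|\hat\tau_j^{re}-\tau_j^{**}|$ restricted to the window is stochastically dominated by $|L_{(|\Delta_j|-2\rho_N)/\sigma}|$; and because $Q_{|\Delta_j|/\sigma}(1-\alpha_N/J)=o(\hat d_j)$ while the argmin over $\mathbb{Z}$ has exponential tails, replacing the window by $\mathbb{Z}$ costs only $O(e^{-c\hat d_j})$.

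Assembling: on $E_N$ and given $\{Z_i\}$, for each $j$ the probability that $|\hat\tau_j^{re}-\tau_j^{**}|\le Q_{|\Delta_j|/\sigma}(1-\alpha_N/J)$ is at least $\mathbb{P}[|L_{(|\Delta_j|-2\rho_N)/\sigma}|\le Q_{|\Delta_j|/\sigma}(1-\alpha_N/J)]-O(e^{-c\hat d_j})$. Since $Q_{|\Delta_j|/\sigma}(1-\alpha_N/J)$ is by definition a $(1-\alpha_N/J)$-quantile of $|L_{|\Delta_j|/\sigma}|$, and since $\Delta\mapsto\mathbb{P}[|L_\Delta|\le m]$ is Lipschitz in $\Delta$ uniformly in $m$ over $\{|\Delta|\ge\underline{\Delta}/\sigma\}$ (another ingredient supplied by Supplement Part~B), the right-hand side is at least $1-\alpha_N/J-C\rho_N-O(e^{-c\hat d_j})$. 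The $V$-windows being conditionally independent given $\{Z_i\}$, the conditional probability of the intersection over $j=1,\dots,J$ is at least $\prod_{j=1}^J\big(1-\alpha_N/J-C\rho_N-O(e^{-c\hat d_j})\big)\ge 1-\alpha_N-CJ\rho_N-O(Je^{-c\min_j\hat d_j})$ by $\prod_j(1-x_j)\ge1-\sum_j x_j$; since $J\rho_N\to0$ and $J$ grows sub-polynomially while $\min_j\hat d_j$ grows polynomially, the correction is $o(1)$. Integrating over $\{Z_i\}$ and adding $\mathbb{P}[E_N^c]=o(1)$ yields (\ref{eq:reconsistentineq}). The main obstacle is precisely the uniform-in-$m$ Lipschitz control of $\Delta\mapsto\mathbb{P}[|L_\Delta|\le m]$ together with the asymmetric-random-walk comparison needed to absorb the branch-dependent drifts $\tilde\Delta_j^{\pm}$; the remainder is a transcription of the single-change-point computation behind Theorem~\ref{thmsingledist} plus routine tail and product bounds.
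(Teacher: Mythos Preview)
Your approach is the paper's: condition on the $\{Z_i\}$-sample, restrict to the first-stage consistency event, expand each re-estimation criterion as a two-sided Gaussian walk whose branch-dependent drifts $\tilde\Delta_j^\pm$ lie within $2\rho_N$ of $|\Delta_j|$, invoke the random-walk comparison lemmas of Supplement Part~C together with the window-to-$\mathbb{Z}$ replacement, and aggregate over $j$ collecting a total error of order $J\rho_N+Je^{-c\,\delta^*_{N^*}}=o(1)$. The only substantive variation is the direction of comparison: the paper routes through the larger-drift walk $X_{(|\Delta_j|+2\rho_N)/\sigma}$ at the quantile $Q_{(|\Delta_j|+2\rho_N)/\sigma}(1-\alpha_N/J)$ and then uses $Q_{(|\Delta_j|+2\rho_N)/\sigma}\le Q_{|\Delta_j|/\sigma}$, whereas you go through $L_{(|\Delta_j|-2\rho_N)/\sigma}$ and a Lipschitz step; both routes carry an $O(\rho_N)$ cost per $j$ from the asymmetric-walk comparison.

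Two technical imprecisions are worth fixing. First, the stochastic domination of $|\hat\tau_j^{re}-\tau_j^{**}|$ by $|L_{(|\Delta_j|-2\rho_N)/\sigma}|$ is not available as a \emph{clean} inequality: the pathwise argument behind Proposition~\ref{prop:Ldist} requires the drift increment to be symmetric in $|t|$, and a one-sided drift increase can push the argmin further out on the opposite side. What the supplement actually supplies (Lemmas~\ref{lem:randomwalkcompoppo} and~\ref{lem:randomwalkcomp}) is the inequality with an additional $O(\rho_N)$ correction---harmless, since it is absorbed by your $C\rho_N$ term. Second, the re-estimation windows $\{k:|k-\hat\tau_j^*|<\hat d_j\}$ need not be disjoint: with near-equally-spaced $\hat\tau_j^*$'s one has $\hat d_j=\hat d_{j+1}=\hat\tau_{j+1}^*-\hat\tau_j^*$, and windows $j$ and $j{+}1$ share the entire stretch $(\hat\tau_j^*,\hat\tau_{j+1}^*)$, so the $\hat\tau_j^{re}$ are not conditionally independent given $\{Z_i\}$. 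This does not affect your conclusion, since you only use the product through $\prod_j(1-x_j)\ge1-\sum_j x_j$; the paper applies the union bound $\mathbb{P}\big[\bigcap_j A_j\big]\ge1-\sum_j\mathbb{P}[A_j^c]$ directly.
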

\begin{proof}
See Section \ref{sec:reconsistentproof} of Supplement Part B.
\end{proof}

\begin{remark}
Similar to Theorem \ref{thm:increasingJasymprotics}, the condition $J\rho_N\to 0 $ is required here. This is because the proofs of both results are similar in structure. For intelligent sampling with BinSeg used in stage 1, we re-iterate that $J\rho_N\to 0$ is automatically satisfied.
\end{remark}
The practical implication of this result for implementing intelligent sampling is that we can obtain explicitly calculable simultaneous confidence intervals for the change-points. 
The intervals $[\hat{\tau}^{re}_j\pm Q_{\Delta_j,\sigma,\sigma}\left(1-\frac{\alpha_N}{J}\right)]$ for $j=1,\cdots,J$  capture the sparse scale change points $\tau^{**}_j$ for $j=1,\cdots,J$ with probability approaching 1 if we choose some $\alpha_N\to 0$. Converting back to the original scale, the intervals $[\hat{\tau}^{(1)}_j\pm (Q_{\Delta_j,\sigma,\sigma}\left(1-\frac{\alpha_N}{J}\right)+1)\lfloor N/N_1\rfloor]$ for $j=1,\cdots,\hat{J}$ have the properties that they simultaneously capture $\tau_1,\cdots,\tau_J$ with probability approaching 1.\footnote{In practice, the $\Delta_j/\sigma$'s are estimated from data.} The second stage samples are then picked as points within each of these intervals that are not the form $i\lfloor N/N_1 \rfloor$ or $i\lfloor N/N_1 \rfloor-k$ for some integer $i$ (since the first stage points cannot be used at steps (ISM5) and (ISM6)).

\subsection{Computational Considerations}\label{sec:BSWBS1} 
The order of computational time is higher than in the single change point case, owing to the fact that the procedure can involve a growing number of data intervals at the second stage. For the sake of our analysis, we make the simplifying assumption that $\delta_N/N^{1-\Xi}\to C_1$ and $J(N)/N^\Lambda\to C_2$ for some $\Lambda\in[0,\Xi]$ and some positive constants $C_1,C_2$. As a reminder, for intelligent sampling with BinSeg used in stage 1, conditions (M6 (BinSeg)) and (M7 (BinSeg)) automatically impose the condition that $\Lambda\leq\Xi<1/7$. 
\newline
\newline
\indent Under the previous assumptions, we consider performing our two stage procedure, with the first stage being:
\begin{enumerate}
	\item apply BinSeg on an evenly spaced subsample with $N_1=CN^\gamma$ for some $\gamma \in (0,1]$, $C>0$,
	\item use the BinSeg results to apply steps (ISM4-1) to (ISM4-4), obtaining confidence intervals,
\end{enumerate}
and then the second stage of the procedure following through steps (ISM5) and (ISM6). We state here (full details in Section \ref{sec:time_order_longer} of the supplement) that the total computational time of these two stages will be $O( N^{\gamma_{\min}\vee (1-\gamma_{\min}+\Lambda)  }\log(N) )$ where
\begin{eqnarray}
\gamma_{\min} = \max\left\{ \frac{1+\Lambda}{2},7\Xi+\eta \right\}
\end{eqnarray}
where $\eta$ is any small positive value. We note that in the special case where $\Lambda=0$ and $\Xi=0$, which translates to finitely many ($J\nrightarrow \infty$) change points that are roughly spaced evenly apart, we would have $\gamma_{\min} =1/2$ and the total computational time would be $O(\sqrt{N}\log(N))$. This time scaling would be the shortest time order for the two stage procedure.
\newline
\newline
\indent To achieve lower computational time order, we can also consider multiple ($>2$) stage procedures. A three stage procedure would have the same first stage (BinSeg and re-estimate on separate subsamples); the second stage estimate would estimate a stump model within each confidence interval from stage 1, but using a subsample from each interval instead of the whole interval of data, to create another set of confidence intervals; at stage 3 stump models would be fitted with all the points within each confidence interval created in stage 2. A four and five stage procedure can be similarly designed. In terms of running time, a $k$-stage procedure can, in principle, run in $O(\sqrt[k]{N}\log(N))$ time in certain cases, but for methodological concerns and for illustrative purposes we focused on the two stage procedure. Further details and discussions regarding more stages can be found in Section \ref{sec:compmethod} of the Supplement.

\section{Performance Evaluation of Intelligent Sampling: Simulation Results}\label{sec:simulations}

We next present a series of simulation results. Part of our computational work illustrates the validity of our theoretical work, such as the rate of convergence, the lower than $O(N\log(N))$ computational running time, and the asymptotic distributions. We also present some other simulations to analyze the practical benefits of our procedure. All simulations in this section were performed on a server with 24 Intel Xeon ES-2620 2.00 GHz CPUs, with a total RAM of 64 GB.

\subsection{Rate of Convergence and Time}\label{sec:ratesimu}
  Our first showcased set of simulation aims to confirm the predicted scaling of the maximal deviation (no faster than $\log(J)$) and computational time ($O(\sqrt{N}\log(N))$ in the best cases). We simulate a sequence of data sequences with increasing $N$, and to show that the $\sqrt{N}\log(N)$ time scaling is reachable, we set the number of change points to be growing slowly and the separation between consecutive change points to be somewhat close to being equal.
\newline
\newline
We generated sequences of length $N$ varying from $10^5$ to $10^{7.5}$, evenly on the log scale, with the number of change points being $J\approx\log_{10}(N)^2$. The change point location and the signal levels were randomly generated:
\begin{itemize}
\item The spacings $(\tau_1,\tau_2-\tau_1,\tau_3-\tau_2,....,N-\tau_J)$ were generated as the sum of the constant $\frac{N}{1.5J}$ and the rounded values of 
$\left(N-\frac{(J+1)N}{1.5J}\right)\cdot (U_{(1)},U_{(2)}-U_{(1)},\dots ,U_{(J)}-U_{(J-1)},1-U_{(J)})  $ where $U_{(1)},\dots, U_{(J)}$ are the ordered variables from $J$ iid uniform $[0,1]$ variables.
\item The signals were generated as a Markov chain with $\nu_0$ initialized as 0, and iteratively, given $\nu_i$, $\nu_{i+1}$ is generated from a density proportional to $f(x)=\exp(-0.3(x-\nu_i-\underline{\Delta}))1(\nu_i+\underline{\Delta}\leq x\leq M)+\exp(0.3(x+\nu_i-\underline{\Delta}))1(\nu_i-\underline{\Delta}\geq x\geq -M)$, where $M$ was taken to be 10 and $\underline{\Delta}$ was taken to be 1.
\end{itemize}
For each of 10 values of $N$, 50 configurations of change points and signals were generated, and on each of those configurations 40 datasets with iid $N(0,1)$ error terms simulated. At stage 1 we performed BinSeg and the steps outlined in Section \ref{sec:refitting}, along with some ad-hoc additional steps to increase accuracy. In detail, these steps are
\begin{enumerate}
	\item an even subsample was taken with $N_1=50\sqrt{N}$, and BinSeg was performed on this dataset with tuning parameter $\zeta_{N}=N_1^{0.2}$ (which is within the range described in Theorem \ref{frythm})
	\item remove points too close together by removing every $\hat{\tau}_i^*$ where $\hat{\tau}_i^*-\hat{\tau}_{i-1}^*\leq 15$
	\item estimate for the signals and remove every $\hat{\tau}_i^{*}$ where the estimated signal to the left and the right have a difference less than 0.5
	\item perform steps (ISM4-1) to (ISM4-4)
	\item repeat steps 2 and 3
	\item construct confidence intervals using the $L_{\hat{\Delta}_j,\hat{\sigma},\hat{\sigma}}$ distributions and then proceed with steps (ISM5) and (ISM6)
\end{enumerate}
To gauge the rate of convergence, the maximum deviations $\underset{j=1,...,J}{\max}|\lambda_2(\tau_j,\hat{\tau}^{(2)}_j)|$ were recorded for instances where $\hat{J}=J$ (which occurs over 99.5\% of the time for all $N$). We also compared the mean computational times of intelligent sampling vs the mean time using BinSeg on the whole data sequence, where the latter was averaged over 100 iterations with similarly generated data (50 configurations of randomly generated change points and signals, 2 runs each configuration). The results are depicted in Figure \ref{fig:convergence-rate-mcp} and are in
accordance with the theoretical investigations: the quantiles of $\max\,\lambda_2\left(\tau_j,\hat{\tau}_j^{(2)}\right)$ scale sub-linearly with $\log(J)$ (which is $\sim \log(\log(N))$ in this setup where $J\sim \log^2(N)$) as predicted in Theorem \ref{thm:multiorder}, and the computational time of intelligent sampling scales in the order of $\sqrt{N}$ compared to the order $N$ computational time of using BinSeg on the entire dataset. 
\begin{figure}[H]
	\begin{center}
		\includegraphics[scale=0.37]{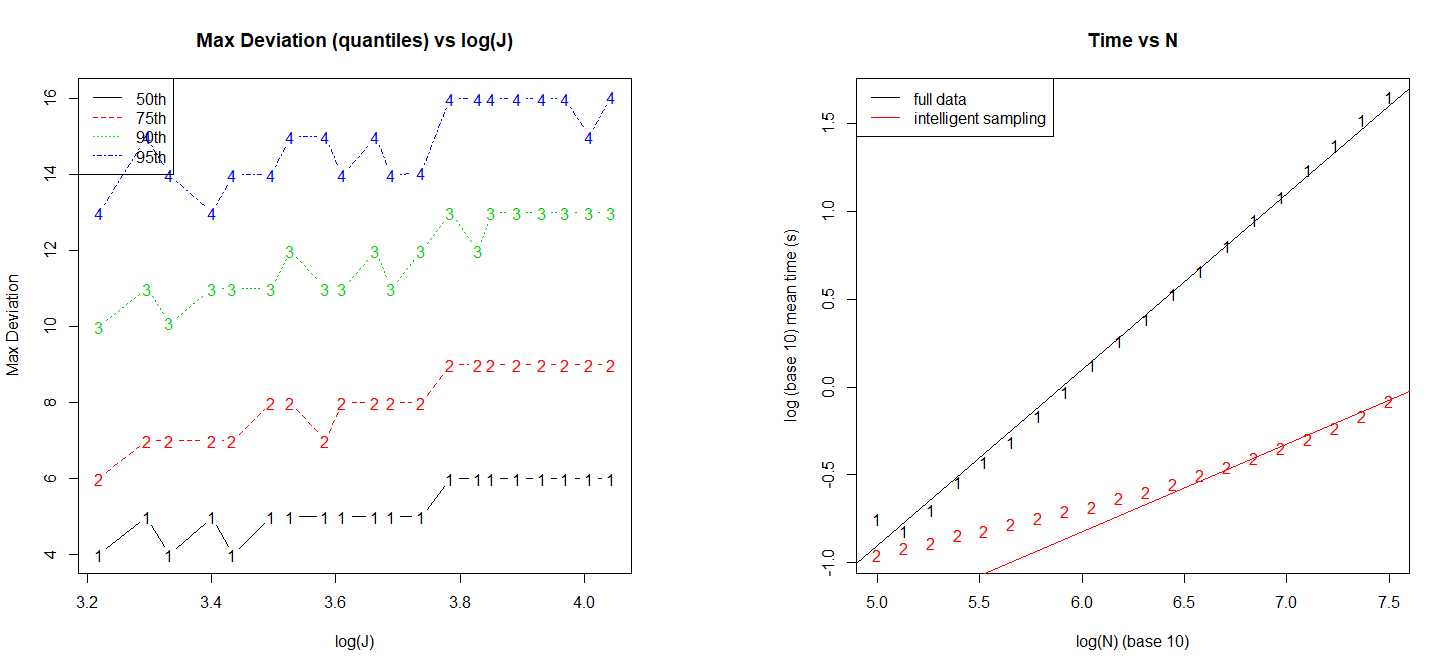}
	\end{center}
	\caption{\textbf{Left:} Quantiles of the max deviations versus $\log(\log(N))$, which is the same order as $\log(J)$. Over the observed regime of parameters, the maximal deviation scales  with $J$, as was predicted by Theorem \ref{thm:multiorder}. \textbf{Right:} Log-log plot of mean computational time when using intelligent sampling to obtain the final change-point estimates at stage two, and using BinSeg on the full data to construct change-point estimates, with reference lines of slope 1 (black) and 0.5 (red) respectively. To give some sense of the actual values, 
		for $N=10^{7.5}$ the average time for intelligent sampling vs full data were, respectively, 0.644 and 31.805 seconds. \label{fig:convergence-rate-mcp}} 
\end{figure}

\subsection{Asymptotic Distribution}\label{sec:simu_asym_dist}
A second set of simulation experiments was used to illustrate the asymptotic distribution of the change point deviations. We considered a setting with $N=10^7$ and 55 change points, which was the maximum number of change points used in the last simulation setting. However, we consider different error term distributions and placement of change points in 4 different settings:
\begin{enumerate}[label=(Setup \arabic*):]
\setlength{\itemindent}{.5in}
\item one set of signal and change point locations generated as in the previous set of simulations, with i.i.d. $N(0,1)$ error terms
\item change points evenly spaced apart with signals 0,1,0,1,.., repeating, and i.i.d. $N(0,1)$ error terms
\item change points evenly spaced part with 0,1,0,1,... repeating signals, and error terms generated as 
$\varepsilon_i=\frac{ \varepsilon^*_i+\varepsilon^*_{i+1}+\varepsilon^*_{i+2}+\varepsilon^*_{i+3}+\varepsilon^*_{i+4}}{\sqrt{5}}$ for all $i=1,...,N$, where the $\varepsilon^*_i$'s are generated as i.i.d. $N(0,1)$;
\item change points evenly spaced with signals 0,1,0,1,..., and error terms generated from an AR(1) series with parameter 0.8, and each marginally $N(0,1)$.
\end{enumerate}
The first and second setups will demonstrate the result of Theorem \ref{thm:multidepend}. As for the other two setups with dependent error terms, we claim that the asymptotic distribution will follow a similar form as the asymptotic distribution in the iid case, with more specific details in Section \ref{sec:dependenterrors} of the supplement. For all 4 cases the first stage of intelligent sampling was performed identically as for the previous set of simulations, and with the same tuning parameters. At the second stage, first stage subsample points were omitted for data with setups 1 and 2, but not for setups 3 and 4.    
From 2000 iterations on each of the 4 simulation setups, the distributions of the maximum deviations (maximum of $|\lambda_2(\tau_j,\hat{\tau}_j^{(2)})|$ for the first two setups and $|\hat{\tau}^{(2)}_j-\tau_j|$ for the other two setups) are seen to match well with their predicted asymptotic distributions. To illustrate the convergence of the individual change point estimates, we also show that the distribution of the 27th change point matches with the $L$-type distributions appearing in Proposition \ref{thm:notiidconsistent} of the supplement. 

\begin{figure}[H]
\begin{center}
	\includegraphics[scale=0.29]{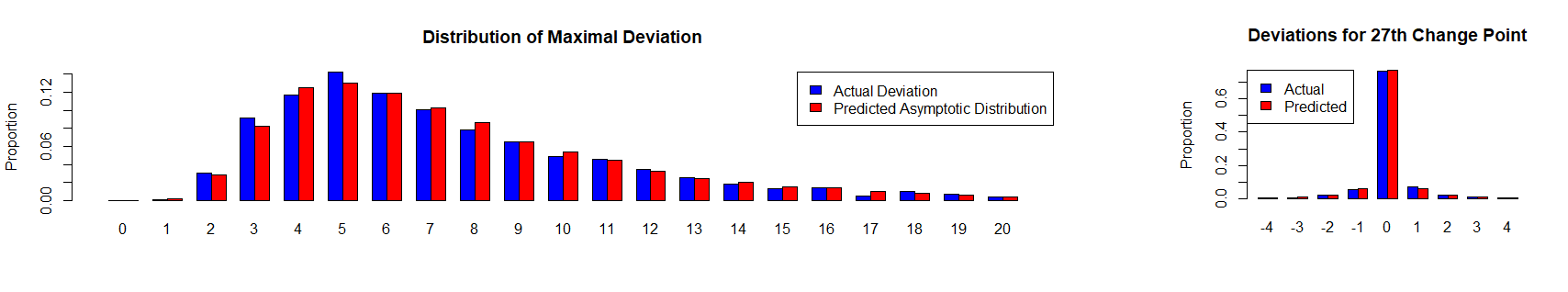}
\end{center}
\caption{Distributions of $\max_{1\leq j\leq 55}\lambda_2\left(\tau_j,\hat{\tau}_j^{(2)}\right)$ (\textbf{Left}) and $\lambda_2\left(\tau_{27},\hat{\tau}_{27}^{(2)}\right)$ (\textbf{Right}) from simulations of setup 1. 
}\label{fig:setup1}
\end{figure}
\begin{figure}[H]
\begin{center}
	\includegraphics[scale=0.29]{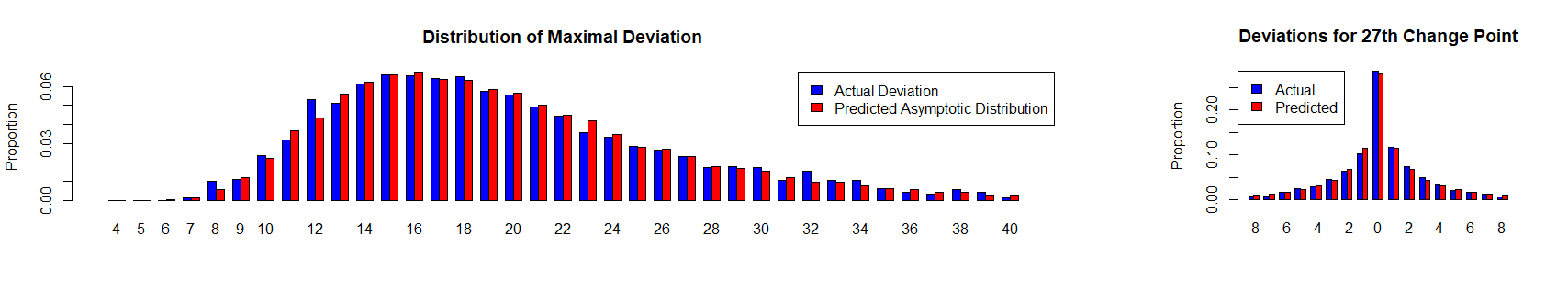}
\end{center}
\caption{Distributions of $\max_{1\leq j\leq 55}\lambda_2\left(\tau_j,\hat{\tau}_j^{(2)}\right)$ (\textbf{Left}) and $\lambda_2\left(\tau_{27},\hat{\tau}_{27}^{(2)}\right)$ (\textbf{Right}) from setup 2. 
}\label{fig:setup2}
\end{figure}
\begin{figure}[H]
\begin{center}
	\includegraphics[scale=0.29]{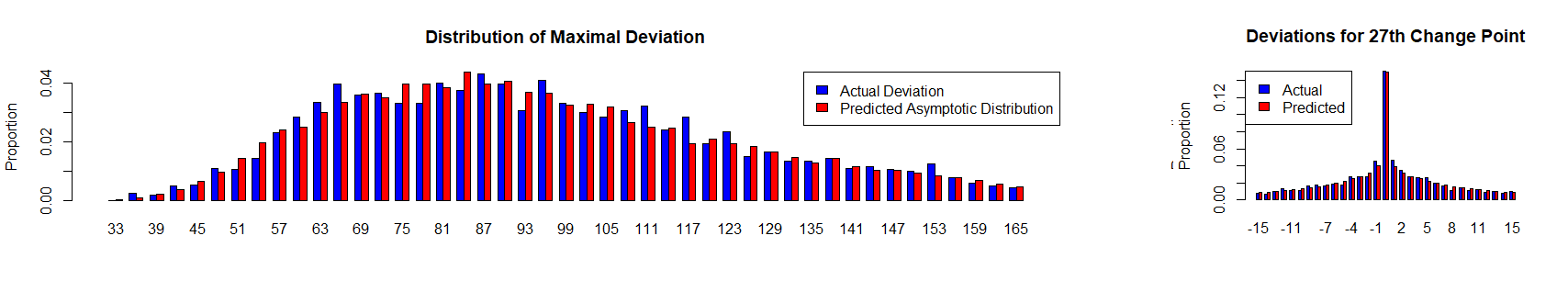}
\end{center}
\caption{Distributions of $\max_{1\leq j\leq 55}\left|\tau_j-\hat{\tau}_j^{(2)}\right|$ (\textbf{Left}) and $\left|\tau_{27}-\hat{\tau}_{27}^{(2)}\right|$ (\textbf{Right}) from setup 3
}\label{fig:setup3}
\end{figure}
\begin{figure}[H]
\begin{center}
	\includegraphics[scale=0.29]{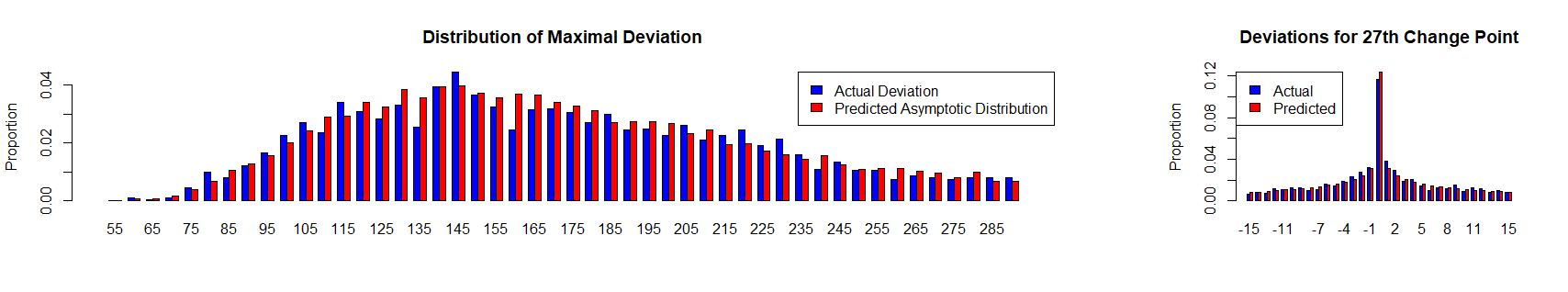}
\end{center}
\caption{Distributions of $\max_{1\leq j\leq 55}\left|\tau_j-\hat{\tau}_j^{(2)}\right|$ (\textbf{Left}) and $\left|\tau_{27}-\hat{\tau}_{27}^{(2)}\right|$ (\textbf{Right}) from setup 4.
}\label{fig:setup4}
\end{figure}
The distribution of the deviations for setup 1 is the least spread out, with the primary reason that the jump between signals is randomly generated, but lower bounded by 1, while the other 3 setups have signal jumps all fixed at 1. Setups 3 and 4 have the most spread out distributions, as the dependence among the error terms causes the estimation to be less accurate, but only up to a constant rather than an order of magnitude. Nevertheless, in all 4 setups, the change point estimates behave very closely to what Theorems \ref{thm:increasingJasymprotics} and \ref{thm:multidepend} (for the first two scenarios) and Proposition \ref{thm:notiidconsistent} (for the third and fourth scenarios) predict.
\newline
\newline
\subsection{Asymptotic Distribution for the Heteroscedastic Case} We further explored the validity of Proposition \ref{thm:notiidconsistent} by examing a setting involving heteroscedatic errors. A data sequence of length $N=10^7$ with 55 evenly spaced changed points and signals of magnitude $0,1,0,1,0,1,\dots,1$ was generated. Instead of generating error terms from a stationary series, we, instead, generated them as independent segments of Gaussian processes as follows. For $j=1,2,3,\dots$, 
\begin{enumerate}
\item from $\tau_{4j}$ to $\frac{\tau_{4j+2}+\tau_{4j+1}}{2}$ the errors are iid $N(0,1)$; 
\item from  $\frac{\tau_{4j+2}+\tau_{4j+1}}{2}$ to $\frac{\tau_{4j+3}+\tau_{4j+2}}{2}$ the errors are $\varepsilon_i=\frac{\varepsilon_i^*+0.5\varepsilon_{i+1}^*+0.25\varepsilon_{i+2}^*}{\sqrt{1^2+0.5^2+0.25^2}}$ where the $\varepsilon_i^*$'s are iid $N(0,1)$ (and will be treated as a generic iid $N(0,1)$ sequence from here on;)
\item from $\frac{\tau_{4j+3}+\tau_{4j+2}}{2}$ to $\tau_{4j+3}$, error terms are $\varepsilon_i=0.5\cdot\frac{\varepsilon_i^*+\varepsilon_{i+1}^*+\varepsilon_{i+2}^*+\varepsilon_{i+3}^*}{\sqrt{4}}$; 
\item from $\tau_{4j+3}$ to $\tau_{4j+4}$ the error terms are $\varepsilon_i=0.7\cdot\frac{\varepsilon_i^*+\varepsilon_{i+3}^*}{\sqrt{2}}$;  
\end{enumerate}
and the error terms generated in each stretch are independent of those in any other stretch. This creates a situation where around $\tau_{4j+1}$ the error terms are iid $N(0,1)$, around $\tau_{4j+2}$ the error terms are stationary, and around $\tau_{4j+3}$ and $\tau_{4j+4}$ the error terms are stationary to the left and to the right, but their autocorrelation and marginal variances change at the change points. With the same intelligent sampling procedure as setups 3 and 4, and the same tuning parameters, we ran 2000 replicates of this setup and recorded the $\hat{\tau}^{(2)}_j-\tau_j$ values for $j=1,\dots,55$.
\begin{figure}[H]
\begin{center}
	\includegraphics[scale=0.3]{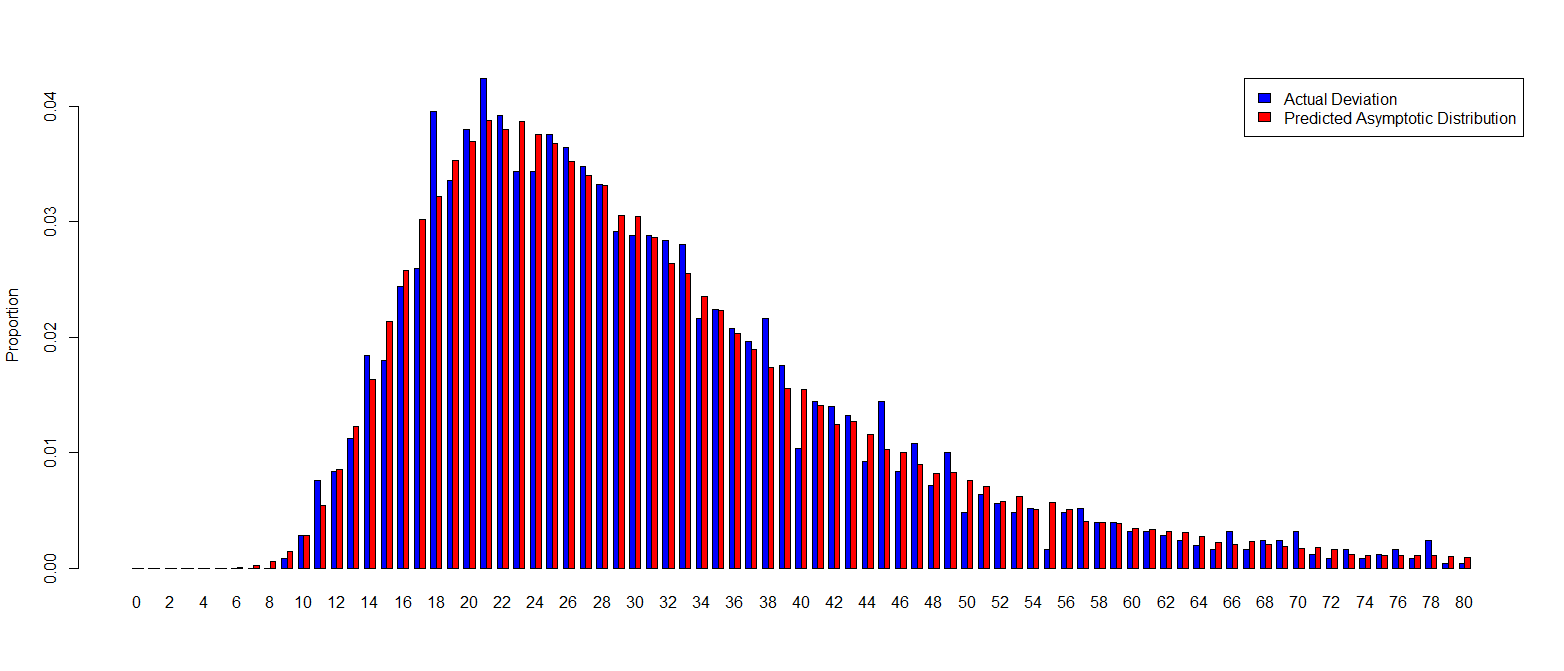}
\end{center}
\end{figure}
\begin{figure}[H]
\begin{center}
	\includegraphics[scale=0.27]{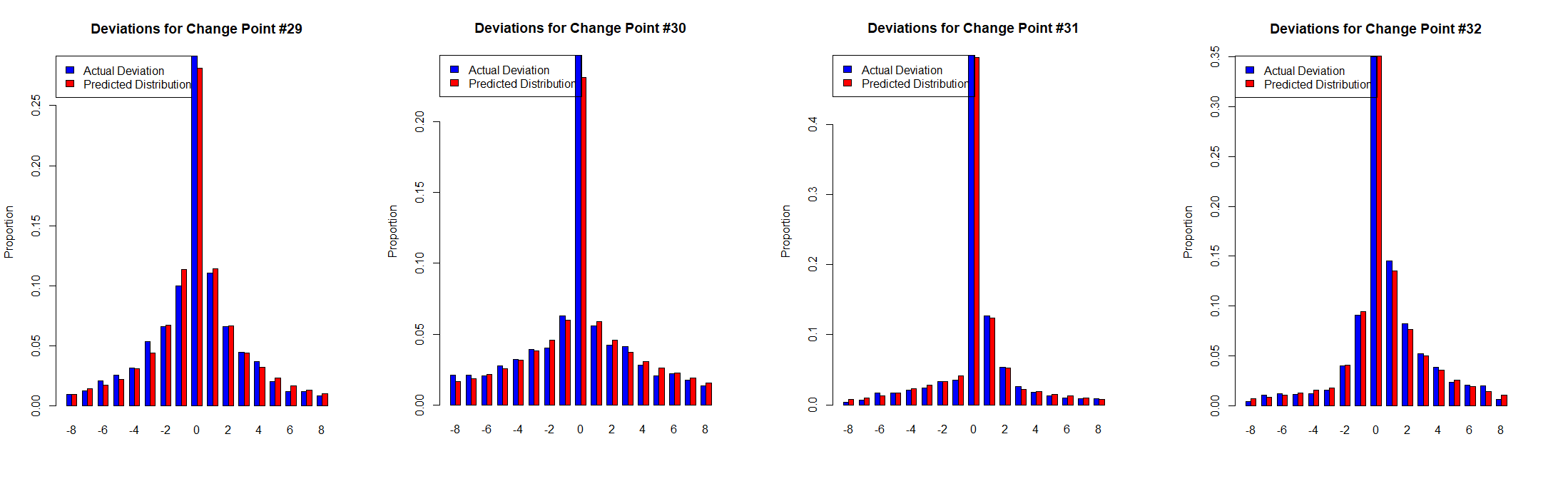}
\end{center}
\caption{\textbf{Top:} Predicted and actual distribution of the maximal deviation $\max_{1\leq j\leq 55}|\hat{\tau}^{(2)}_j-\tau_j|$. \textbf{Bottom, Left to Right:} Predicted and actual distributions of the individual deviations $|\hat{\tau}^{(2)}_j-\tau_j|$ for $j=29, 30, 31,$ and $32$. }
\end{figure}
\noindent		
These numerical results are consistent with Proposition \ref{thm:notiidconsistent} and even for change points whose error distributions are different to the left and the right of the corresponding change point, the deviations match up very closely with the stated asymptotic distributions.

\subsection{Choice of Subsample Size}\label{sec:simu_binseg_double}
In order to showcase the veracity of our asymptotic results, for the previous simulation settings we tuned the first stage subsample size to ensure that the exact number of change points was detected with high accuracy. In practice, it is difficult to know how to choose the best size of the first stage subsample, since one does not know the number of change points or how far apart they are. In this section we propose a systematic \emph{data-driven} way of determining a correct subsample size at the first stage: draw a (small) subsample of size $N_1$, perform BinSeg, record the number of change-points detected; then perform BinSeg on a fresh subsample whose size is enhanced by a constant factor and re-record the number of change-points. Keep increasing the subsample size and re-estimating until some stopping criterion related to the growth in the number of detected change-points is reached. Formally, we propose the following procedure, coined {\bf ``BinSeg-Double''}:
\begin{enumerate}
	\item perform BinSeg on an even subsample with $N_1=2\sqrt{N}$, obtaining an estimate $\hat{J}_1$ for $J$ along with change point estimates;
	\item now generate a new subsample of size $N_1$ that is double the previous value and obtain an estimate $\hat{J}_2$ along with change point estimates;
	\item continue the previous step until we reach a $\hat{J}_i$ such that $\hat{J}_i>\hat{J}_{i-2}+5$ and $|\hat{J}_i-\hat{J}_{i-1}|<5$, or until $\left|\min\{ \hat{J}_{i-3},\dots,\hat{J}_i \}-\max\{ \hat{J}_{i-3},\dots,\hat{J}_i \}\right|<5$.
\end{enumerate}
With regard to the stopping criteria at the last step, we reason that generally, the number of change points detected will go up at every iteration until nearly all change points are detected, or the subsample size has increased to the size of the full data set. Hence, we stop doubling the subsample once the number of detected change points essentially stabilizes. This stopping criterion has good results in several settings as we shall see in subsequent simulations, but other scenarios might require alternative stopping rules for better accuracy. In practical settings the practitioner may use a different last step to adapt to the difficulty of the problem.
\newline
\newline
\indent We gauged the effectiveness of this method as follows. We generated data sequences of length $N=10^6$ and iid $N(0,1)$ error terms, with evenly spaced change points, and the signals alternating between $0,\Delta, 0,\Delta,\dots$, where $\Delta$ is the SNR parameter, allowed to vary aross simulation settings in addition to the number of change-points. For each configuration, we invoked intelligent sampling at stage 1 using the BinSeg-Double procedure followed by curtailing consecutive change point estimates that are within 5 points of each other. As an added detail, for the sake of methodology, we used the R  changepoint package functions in this and all subsequent sections, instead of the manually coded BinSeg algorithm with pre-specified tuning parameter as in the previous sections. The second stage calculations remain unaltered. We recorded the computational time and confidence level coverage for each simulation configuration, and compared them with the computational time of running BinSeg on the entire dataset. 
\begin{table}[H]
	\caption{Two Stage Running Time}
\begin{tabular}{|c||c|c|c|c|c|c|c|c|c|}
	\hline 
	\backslashbox{SNR~~}{$J$~~} & 50 & 100 & 150 & 200 & 250 & 300 & 350 & 400 & 450\\ 
	\hline\hline 
	1 & 1.67 & 4.6 & 11.84 & 15.1 & 39.58 & 49.8 & 53.95 & 80.41 & 132.22\\ \hline 
	1.5 & 1.08 & 2.85 & 9.04 & 7.42 & 20.12 & 20.84 & 27.34 & 39.3 & 74.16\\ \hline 
	2 & 0.71 & 1.68 & 3.34 & 7.63 & 11.7 & 20.37 & 24.48 & 45.79 & 57.06\\ \hline 
	
\end{tabular}\\
Average running time in seconds across 200 iterations of every configuration.

\caption{Full Data Running Time}
\begin{tabular}{|c||c|c|c|c|c|c|c|c|c|}
	\hline 
	\backslashbox{SNR~~}{$J$~~} & 50 & 100 & 150 & 200 & 250 & 300 & 350 & 400 & 450\\ 
	\hline\hline 
	1 & 17.63 & 26.66 & 35.77 & 45.14 & 54.56 & 64.2 & 73.77 & 83.72 & 93.47\\ \hline 
	1.5 & 17.39 & 26.55 & 35.92 & 44.59 & 53.75 & 63.34 & 72.75 & 82.71 & 92.4\\ \hline 
	2 & 17.13 & 26.01 & 35.17 & 44.44 & 53.63 & 63.39 & 72.61 & 82.37 & 91.86\\ \hline 
	
\end{tabular}\\
Average running time in seconds across 50 iterations of every configuration.
\end{table}
As is evident from these tables, the running time of intelligent sampling relative to the full data procedure is lowest when the number of change points is low and the SNR at these changes are high. Conversely, when the number of change points is large and/or the SNR values are low, it becomes necessary to take a bigger subsample at the first stage, increasing the relative computation time.

We also show that our proposed first stage procedure leads to accurate results. In each iteration, after the second stage change point estimates were obtained and the timer stopped, we constructed $(1-\alpha)$ level (not simultaneous) confidence intervals around each estimate, and recorded the proportion of true change points captured in the union of these intervals. Specifically, at each iteration $k$ we constructed a confidence region $\hat{C}_k$, and recorded 
\begin{eqnarray}
\frac{1}{J\cdot\text{(\# of iterations)}}\sum_{ k=1 }^{\text{(\# of iterations)}}\sum_{j=1}^J 1( \tau_j\in \hat{C}_k ).
\end{eqnarray}
\begin{table}[H]
	\caption{Change point Coverage for Two Stages}
\begin{tabular}{|c||c|c|c|c|c|c|c|c|c|}
	\hline 
	\backslashbox{SNR~~}{$J$~~} & 50 & 100 & 150 & 200 & 250 & 300 & 350 & 400 & 450\\ 
	\hline\hline 
	1 & 0.966 & 0.971 & 0.945 & 0.944 & 0.968 & 0.966 & 0.935 & 0.967 & 0.969\\ \hline 
	1.5 & 0.978 & 0.985 & 0.958 & 0.958 & 0.983 & 0.981 & 0.954 & 0.982 & 0.982\\ \hline 
	2 & 0.993 & 0.997 & 0.972 & 0.971 & 0.996 & 0.995 & 0.971 & 0.995 & 0.996\\ \hline 
	
\end{tabular}\\
Proportion of change points covered over all 200 iterations with $\alpha=0.01$. As all proportions are close to 0.99, this indicates accurate estimates.
\end{table}
We next perform a similar set of simulations, with $N$ and the error structure staying the same but with an added element of randomness from the placement of the change points and values of the signals. The change points $(\tau_1,\dots,\tau_J)$ are generated as $N/(4(J+1))$ plus the random variables $\frac{3N}{4(J+1)}(U_{(1)}, U_{(2)}-U_{(1)},U_{(3)}-U_{(2)},\dots,1-U_{(J)} )$, where $U_{(1)},\dots, U_{(J)}$ are the ordered variables from $J$ iid uniform (0,1) random variables. The changes in signal at the different change point are randomly generated as independent $(2S-1)*V$ variables, where $S$ is Bernoulli with $p=0.5$ and $V$ is uniform on [1,4].

\begin{table}[H]
	\caption{Running Time for Random Design}
\begin{tabular}{|c||c|c|c|c|c|c|c|c|c|}
	\hline 
	\backslashbox{Method~~}{$J$~~} & 50 & 100 & 150 & 200 & 250 & 300 & 350 & 400 & 450\\ 
	\hline\hline 
	Two Stage & 1.02 & 3.3 & 9.84 & 21.67 & 32.06 & 44.7 & 64.09 & 86.89 & 109.84\\ \hline 
	Full Data & 18.07 & 27.36 & 36.81 & 46.38 & 56.04 & 65.77 & 75.64 & 85.32 & 95.16\\ \hline 
	
\end{tabular}
\\
Average running time in seconds across 200 iterations of the random design setup.
\end{table}
The coverage proportion of actual change points are
\begin{table}[H]
	\caption{Coverage Proportion for Random Design}
\begin{tabular}{|c||c|c|c|c|c|c|c|c|c|}
	\hline 
	$J$~~ & 50 & 100 & 150 & 200 & 250 & 300 & 350 & 400 & 450\\ 
	\hline\hline 
	Coverage & 0.985 & 0.986 & 0.993 & 0.994 & 0.995 & 0.996 & 0.996 & 0.995 & 0.996\\ \hline 
	
\end{tabular}
\\
Proportion of change points covered by $\alpha=0.01$ level (non-simultaneous) confidence sets across 200 iterations.  
\end{table}
~\indent We note that in both simulation setups, intelligent sampling gains the most computational time savings over the full data analysis when the separation between change points is long (corresponding to smaller $J$) and when the SNR is high. Such setups require only a small subsample at stage 1 to capture all the change points. 
In certain cases where $J$ is high enough and/or the SNR low enough, \emph{it would be impossible to detect all the change points at stage 1 with any strictly smaller subsample, meaning BinSeg-Double would continue until the full dataset is utilized}. In this respect, the doubling procedure is \emph{automatically calibrated} to the intrinsic difficulty of the problem. 

Our findings are perfectly consistent with our previous claims that intelligent sampling is most useful for detecting sparsely placed, high SNR change points in long data sequences; in settings with densely placed, lower SNR change points (or a large number of change points), one can not obtain accurate results without utilizing large subsamples (or even the entire data set), thus mitigating, or even nullifying, the computational time savings of intelligent sampling.

\section{Data Applications}\label{sec:realdata}
We apply the intelligent sampling framework to two network traffic data sets. 
In the first one that exhibits a stationary behavior, we artificially inject change points, by shifting the local mean of
the data across different stretches. Hence, there is a known ground truth to calibrate the performance of intelligent sampling in a setting where other features of the data (error distributions, presence of temporal dependence) may not fully adhere to
the assumptions used in establishing the theoretical properties of the procedure. The second data set is a fully observed data set where the ground truths are unknown. 
\subsection{Partly Emulated Data} 
\label{partly-emulated-data} 
The effectiveness of the proposed intelligent sampling procedure is illustrated on an Internet traffic data set, obtained from the public CAIDA repository
\texttt{http://data.caida.org/datasets/passive/passive-oc48/20020814-160000.UTC/pcap/} that contains traffic traces from an OC48 (2.5 Gbits/sec)
capacity link. The trace under consideration contains traffic for a two hour period from a large west coast Internet service provider back in 2002.
The original trace contains all packets that went through the link in an approximately 2 hour interval, but after some aggregation into bins of length 300 microseconds, the resulting data sequence comprises $N=1.5\times 10^7$ observations. After applying a square-root transformation,
a small segment of this sequence is depicted in Figure \ref{fig:raw-data} and some of its statistical characteristics in Figure \ref{fig:realnormal}, respectively.
\begin{figure}[H]
\begin{center}
	\includegraphics[scale=0.37]{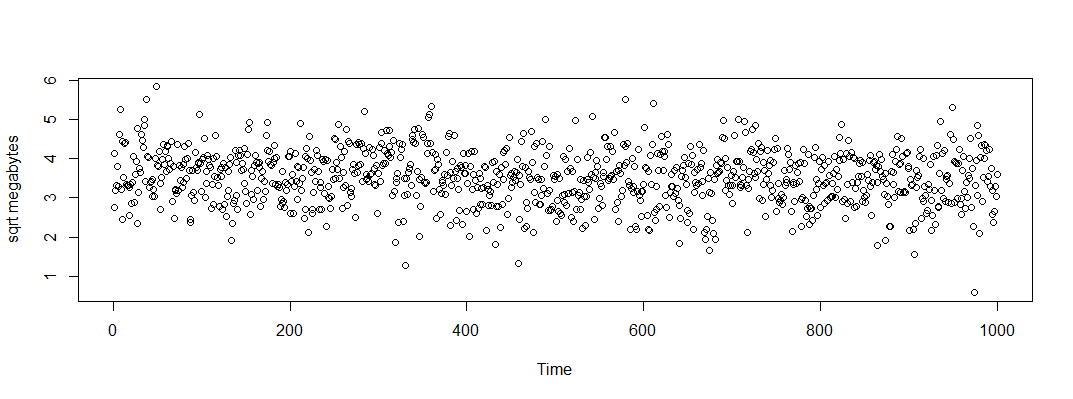}
\end{center}
\caption{First 5000 time points of the data after a square root transformation.}\label{fig:raw-data}
\end{figure}
\begin{figure}[H]
\begin{center}
	\includegraphics[scale=0.37]{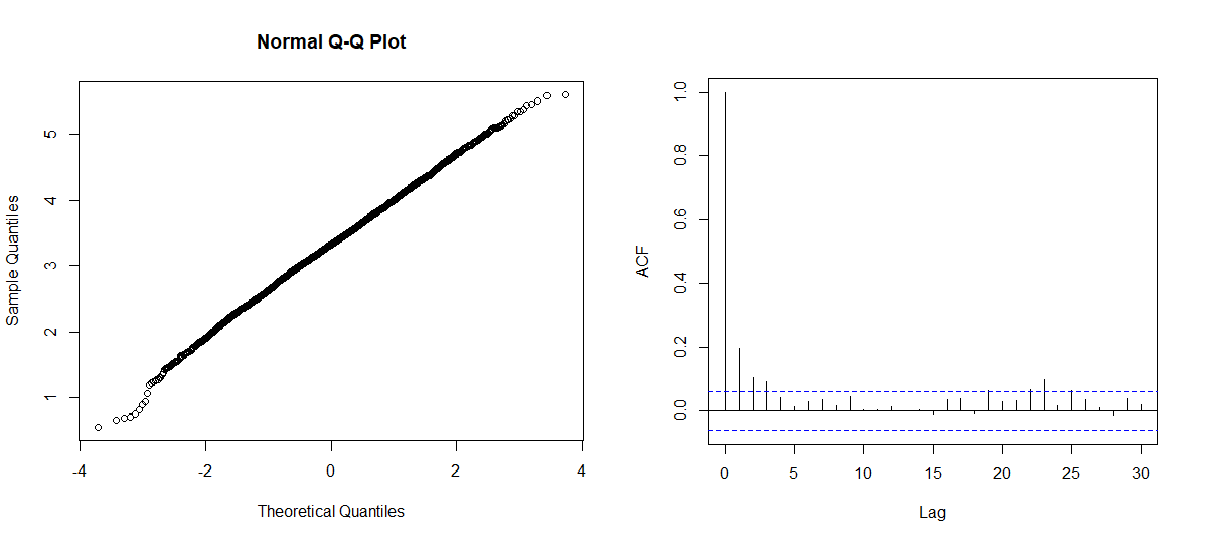}
\end{center}
\caption{QQ plot and estimated ACF of first 5000 points of data set after square root transformation. Note the normality of the transformed data.}\label{fig:realnormal}
\end{figure}
It can be seen that the data are close to marginally normally distributed, while their autocorrelation decays rapidly and essentially disappears after 
a lag of 10. Similar exploratory analyses performed for multiple stretches of the data lead to similar conclusions. Hence, for the remainder of the analysis,
we work with the square-root transformed data and model them as a short range dependent sequence.

To illustrate the methodology, we used an {\em emulation} setting, where we injected various mean shifts to the mean level of the data of random durations,
as described next. This allows us to test the proposed intelligent sampling procedure, while at the same time retaining all features in the original data.

In our emulation experiments, we posit that there are two types of disruptions, short term spikes that may be the result of specific events (release of a software upgrade, a new product or a highly anticipated broadcast) and longer duration disruptions that may be the result of malicious activity \cite{carl2006denial,kallitsis2016amon}. 
We first take the post square root transformation data $Y_1,\dots,Y_N$ and add long, persistent change points by taking $Y_i\leftarrow Y_i+\theta_i$, where $\theta_i$ is piecewise constant with change points $(\tau_1,\dots,\tau_J)$ ($J=200$) generated as $\frac{N}{2.5(J+1)}+\frac{1.5N}{2.5(J+1)}\cdot (U_{(1)},U_{(2)},\dots, U_{(J)})$, where $(U_{(1)},\dots, U_{(J)})$ are the order statistics of $J$ iid uniform (0,1) variables, and signals $(0,S_1,0,S_2,\dots)$ where $S_1,S_2,\dots$ are iid uniform (0.7,2) variables. Next we emulate a larger number of \emph{spikes} in the data, or short term large increases in the signal, by randomly selecting 400 locations for these spikes as $(V_1,\dots,V_{400})=(N-50)\cdot(U_1,\dots,U_{500})$ (where $U_1,U_2,\dots$ are iid uniform (0,1)), and setting $(Y_{V_j},\dots,Y_{V_j+50})\leftarrow  (Y_{V_j},\dots,Y_{V_j+50})+W_j$ for all $j$, where $W_j$ is an iid uniform (10,15) variable. A depiction of a segment of the data with the emulated signal is given in Figure \ref{fig:emulated-traffic}.
\begin{figure}[!h]
\begin{center}
	\begin{overpic}[scale=0.35,tics=10]{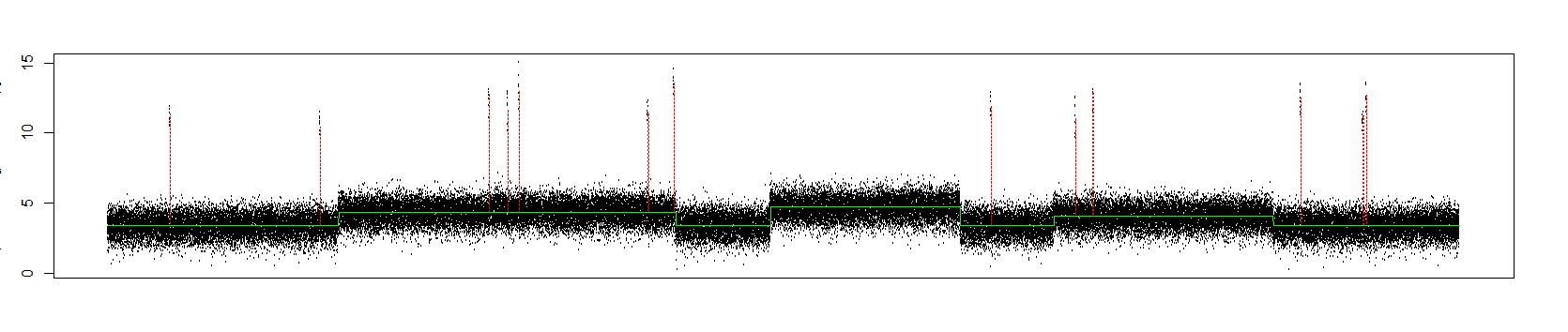}
	\end{overpic}
\end{center}
\caption{Example of the first half million points of an emulated data example. 
} \label{fig:emulated-traffic}
\end{figure}

As mentioned in the introduction, the main objective of the proposed methodology is to identify {\em long duration, persistent shifts} in the data sequence using a limited number of data points; in the emulation scenario used, this corresponds to change points $(\tau_1,\tau_J)$, while we remain indifferent to
the points $V_j, V_j+50$ for $j=1,\dots,400$, which as previously described are change points corresponding to the spikes\footnote{We note that the theoretical development does not include spiky signals. Nonetheless, we included spiky signals in our emulation to mimic the pattern of internet traffic data. As will be seen later, our method is quite robust to the presence of this added feature.}

The two-stage intelligent sampling procedure was implemented as follows:  (i) the first stage used BinSeg-Double , followed by removal of estimates less than 4 points apart, followed by steps (ISM4-1) to steps (ISM4-4), and then followed by removal of estimates less than 16 apart.  (ii) For each $j$, the second stage interval surrounding $\hat{\tau}_j^{(1)}$ was chosen to have half width  $Q_{\hat{\Delta}_j,\hat{\sigma},\hat{\sigma}}\left(1-\frac{0.01}{\hat{J}}\right)$ where $\tilde{\Delta}_j$'s and $\hat{\sigma}$ are estimates of the jump and the common standard deviation. A stump model was then fitted to the data in each second stage interval to obtain the final estimates of the change-points and the final (2nd stage) CIs were constructed. 
\newline
\newline
\indent To assess the accuracy, we calculated the coverage proportion of the 90\%, 95\%, and 98\% level confidence intervals over different emulation settings. To construct these confidence intervals we had to randomly generate data sequences with identical distribution structure as the data (which would give us a random sample of $L$-type distributions and their quantiles). We generated these sequences as marginally normal random variables, with marginal standard deviation the same as the sample sd of the first 50,000 points of the $1.5\times 10^7$ length data. Finally, the ACF of the generated series was matched with the sample ACF of the first 50,000 points up to a lag of 20: we first generated vectors of iid normal variables, then multiplied them with the Cholesky square root of the band matrix created with the sample ACF (the bandwidth of this matrix is 20, and non-zero entries are taken from the first 20 values of the sample ACF).

Intelligent sampling exhibits highly satisfactory performance: among all 200 change points corresponding to persistent changes, the average coverage probability for the 90\%, 95\%, and 98\% nominal confidence intervals were 0.874, 0.914, and 0.939 respectively. On the other hand, for change points induced by the spikes, the average coverage probability was lower than 0.035 even for the 98th confidence interval. However, since the focus of intelligent sampling is on long duration persistent signals, missing the spiky signals is of no great consequence. In terms of computational burden, the average emulation setting utilized 17.1\% of the full dataset, requiring an average time of 29.7 seconds to perform the estimation. 

\begin{figure}[H]
\begin{center}
	\includegraphics[scale=0.35]{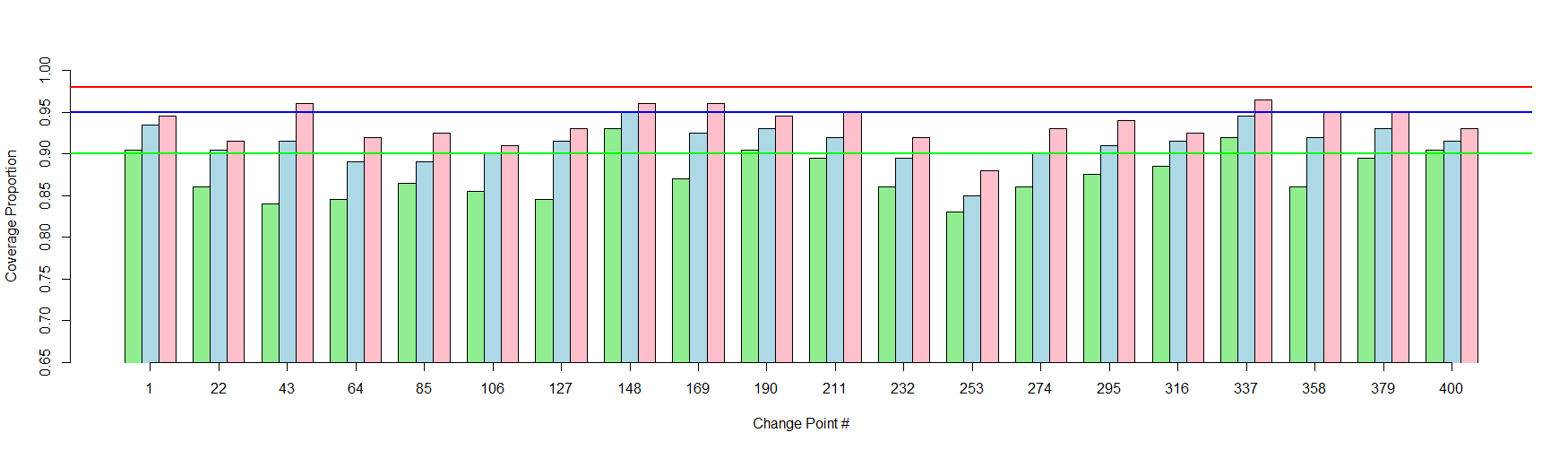}
\end{center}
\caption{Coverage proportions, the proportion of time when the change point was covered by some confidence interval, for the 90\% level (green bars), 95\% level (blue bars), and 98\% level (red bars) within the 500 iterations, for a select number of 20 change points (change point \# 2 is always the second one in order, \# 3 is the third in order, etc). Horizontal reference lines are at 0.9 (green), 0.95 (blue), and 0.98 (red).}
\end{figure}	

\subsection{Real Data Example}
\label{real-data-section} 
	We further gauged the practical performance of intelligent sampling by evaluating its performance when applied to a different data set with \emph{naturally occurring change points} of varying durations. The growing capacity of Internet links
	(routinely at 10 Gbps) together with proliferation of such links have rendered network monitoring a more challenging 
	but at the same time critical task, given increased malicious activities. However, most monitoring tools rely on sampling packets at rates of 1:1000 or 1:10000 in order to minimize interference with network infrastructure. In an experimental project,
an alternative that aggregates traffic at 2 sec intervals and then records it was developed at Merit Network (see details on
the technology in \cite{kallitsis2016amon}). The data analyzed next correspond to a trace of all incoming traffic to the autonomous system administered by Merit Network and corresponds to a data sequence obtained during the spring of 2019,  comprising of 2080768 observations. The objective is to identify persistent change points in the traffic pattern. However, 
since the data have not been previously analyzed, no ground truth is known.

Unlike the 2002 trace previously analyzed, the current trace (segments in logarithmic scale shown in 
Figure \ref{fig:dataportion1}) exhibits a richer set of patterns, but broadly resembles a piecewise linear signal plus noise, with multiple discontinuities and abrupt changes in slopes.
\begin{figure}[H]
	\begin{center}
		\includegraphics[scale=0.3]{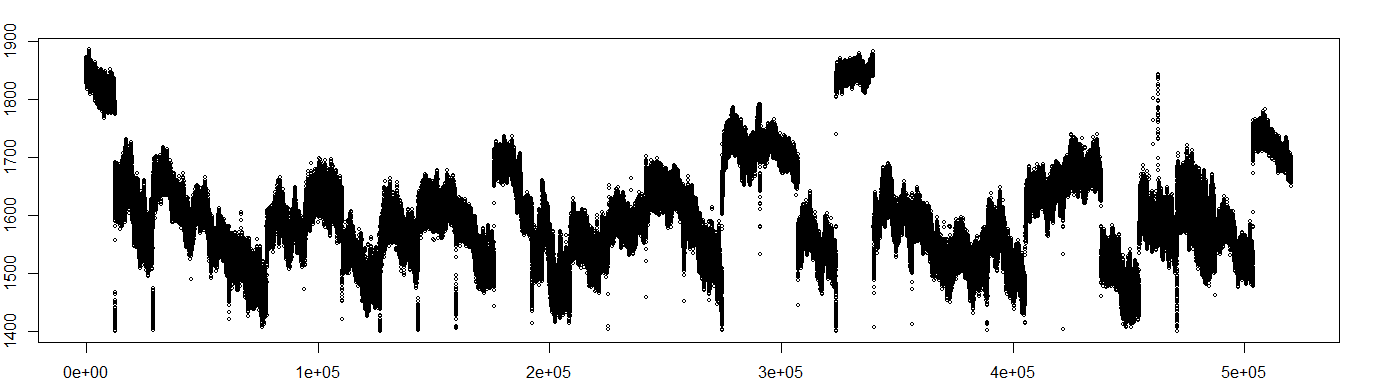}
	\end{center}
	\caption{Plot of the last quarter of the data (i.e., points $0.75N$ to $N$ of the data).}\label{fig:dataportion1}
\end{figure}

\begin{figure}[H]
	\begin{center}
		\includegraphics[scale=0.3]{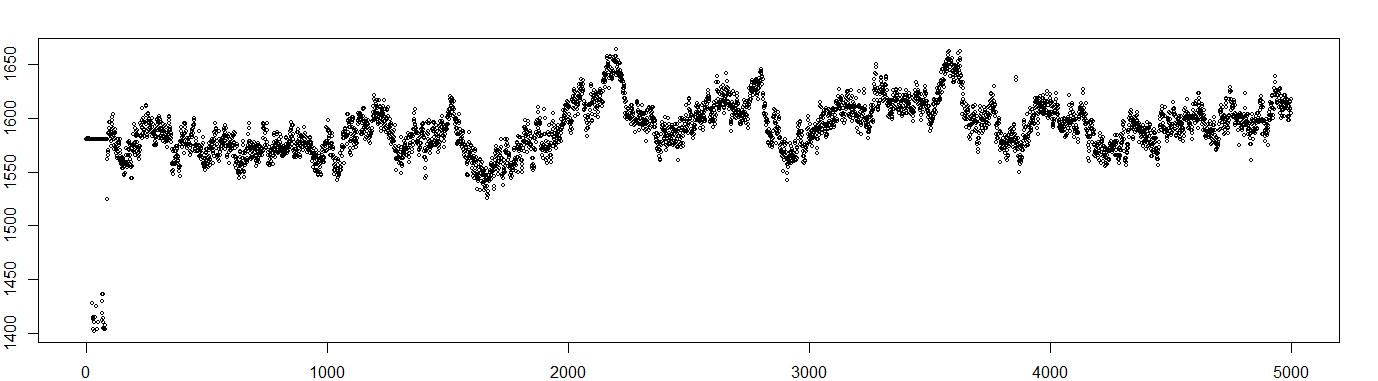}
	\end{center}
	\caption{Plot of the first 5000 points of the data.}\label{fig:dataportion2}
\end{figure}
~\indent The data set contains several particularly prominent change points exhibiting very large jumps, but zooming in to short intervals reveals that the signal oscillates up and down with a wavelength of a few hundred points, as shown in Figure \ref{fig:dataportion2}. Therefore, the data also contain numerous less prominent change points in signal slope at the crest and troughs of the signal vibrations. For our analysis, we are more interested in the locations where large jumps are located. Detection of \emph{all minor signal changes} requires a thorough analysis of the entire data set.
\newline
\newline
\indent Analysis of this data set via intelligent sampling is conceptually similar to data sets with piecewise constant stretches that have been the primary focus of this paper. The multiple change point algorithm in Stage 1 must now be able to deal with piecewise linear signals. Once the change-points from Stage 1 are localized, a piecewise linear function with a single jump can be fitted at Stage 2 in each local neighborhood to get updated estimates of the change points and corresponding confidence intervals by simulating from an estimated error distribution. Specifically, instead of the BinSeg-Double procedure described in Section \ref{sec:simu_binseg_double} in the first stage, we use a doubling procedure that performs Narrowest-Over-Threshold (denoted as NOT henceforth) detection (introduced in \cite{baranowski2019narrowest})  after each doubling. The NOT procedure functions similarly to BinSeg. At each step, a CUSUM-like expression is maximized over sub-intervals of the data\footnote{To be very specific, the maximization is performed over random intervals contained in the segment.}. A change point is declared if the maximum exceeds a specified threshold value, after which the segment is bisected by the new change point estimate and the process repeats over the two components. Because we are trying to detect only the biggest changes in signals instead of all of them, we can use the value of the threshold tuning parameter to limit the number of change points detected by the algorithm. In the associated \texttt{R} package \texttt{not}, we performed this tuning by varying the \texttt{q.max} parameter of the \texttt{features} function, but we set all other tuning parameters to their default values. The stopping criteria for the first stage was slightly altered from before.\footnote{Another stopping criterion was added: if the number of change points ever decreases, stop and keep the previous results.} 
\begin{figure}[H]
	\begin{center}
		\includegraphics[scale=0.23]{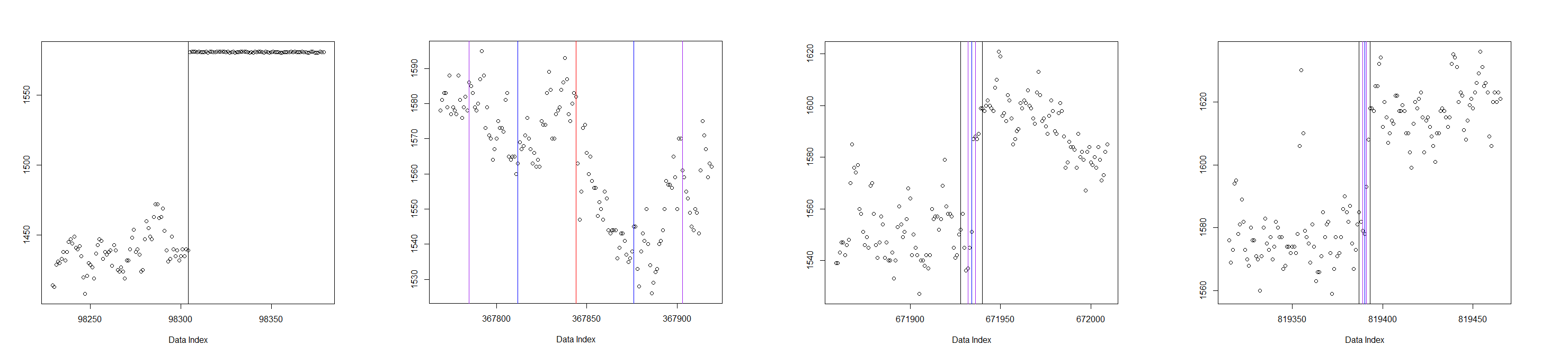}
	\end{center}
	\caption{A few selected estimates (red) and confidence intervals (blue denotes 90\% level, purple for 95\%, and black for 99\%). To construct the confidence levels, for each change point estimate we fit 2 AR time series models using 75 points on the left and 75 points on the right. }\label{fig:CFI_examples}
\end{figure}
We are interested in comparing the performance of intelligent sampling with NOT versus applying NOT over the whole data. Thus, over several values of the \texttt{q.max} tuning parameter, we performed estimation 40 times each for intelligent sampling and full data estimation. All computing was performed on a desktop with an Intel Core i7-8700K CPU, with a script that utilized the \texttt{parallel} package in \texttt{R}. We remind the reader that because of the randomized nature of the NOT algorithm, results can vary between the different runs of estimation.
\newline
\newline

\begin{table}[H]

	\begin{center}
		\caption{Average running time over 40 iterations.\label{runtime}}

		\begin{tabular}{|c||c|c|c|c|c|c|c|}
			\hline 
			\backslashbox{Method~~}{\texttt{q.max}~~} & 50 & 75 & 100 & 125 & 150 & 175 & 200\\ 
			\hline\hline 
			Two Stage & 12.03 & 4.81 & 8.57 & 9.94 & 21.88 & 8.52 & 11.73\\ \hline 
			Full Data & 463.3 & 475.94 & 494.75 & 517.46 & 503.08 & 500.86 & 507.12\\ \hline 
			
		\end{tabular}
	\end{center}
	\begin{center}
	
		\caption{Average number of change points detected over 40 iterations.\label{numCP}}
		
		\begin{tabular}{|c||c|c|c|c|c|c|c|}
			\hline 
			\backslashbox{Method~~}{\texttt{q.max}~~} & 50 & 75 & 100 & 125 & 150 & 175 & 200\\ 
			\hline\hline 
			Two Stage & 49.8 & 72.8 & 77.3 & 80.83 & 81.2 & 77.55 & 80.42\\ \hline 
			Full Data & 49.85 & 70.12 & 76.33 & 78.75 & 83.1 & 80.62 & 79.92\\ \hline 
			
		\end{tabular}
	\end{center}
\end{table}

\noindent In terms of the average number of change points detected, intelligent sampling and the full data analysis detected around the same number of change points for every scenario. When \texttt{q.max} is low, both methods detected close to the maximum number of change points allowed by the package function, but as is clear from Table \ref{runtime}, the number of detected change points stabilized at around 80 for both procedures, irrespective of the value of \texttt{q.max}. Table \ref{numCP} gives a snapshot of the time comparison of the full data analysis with intelligent sampling, with the latter generally seen to be between 40 to 60 times faster. \\\\
To see whether these detected change points correspond to the larger change points, we also show the distribution of the estimated jump sizes.
\begin{figure}[H]
	\begin{center}
		\includegraphics[scale=0.3]{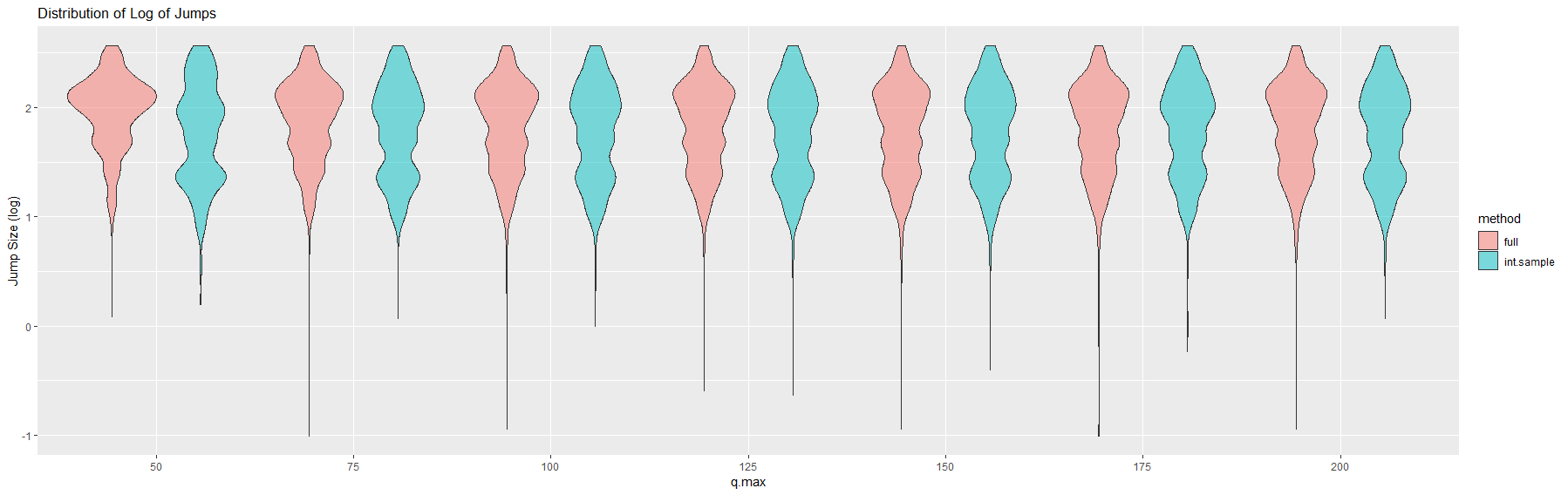}
	\end{center}
	\caption{Distribution of the log (base 10) of the absolute values of jump sizes.}\label{fig:jumpdist}
\end{figure}

\begin{minipage}{0.48\hsize}\centering
	\begin{figure}[H]
		\begin{center}
			\includegraphics[scale=0.25]{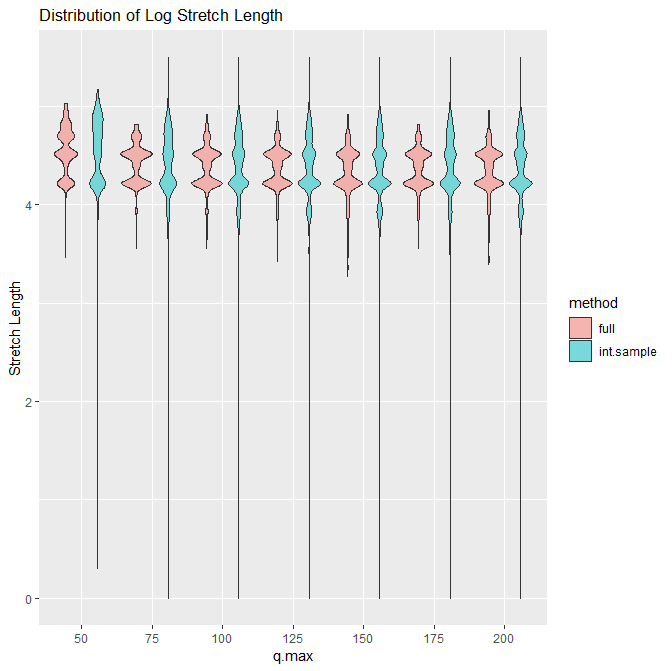}
		\end{center}
		\caption{Distribution of the log (base 10) of the distances between consecutive change point estimates.}\label{fig:stretchdist}
	\end{figure}
\end{minipage}
\begin{minipage}{0.04\hsize}
	~~
\end{minipage}
\begin{minipage}{0.48\hsize}\centering
	
	\begin{table}[H]
		\centering
		\caption{Proportion of Large Jumps}
		\begin{tabular}{|c||c|c|}
			\hline 
			q.max & Two Stage & Full Data\\ 
			\hline\hline 
			50 & 0.55 & 0.78  \\ \hline 
			75 & 0.59 & 0.67  \\ \hline 
			100 & 0.57 & 0.62  \\ \hline 
			125 & 0.56 & 0.61  \\ \hline 
			150 & 0.57 & 0.58  \\ \hline 
			175 & 0.6 & 0.61  \\ \hline 
			200 & 0.57 & 0.61  \\ \hline 
			\hline 
		\end{tabular}\\
		~\newline
		\newline
		Proportion of jumps that are greater than 50. The threshold of 50 was chosen because the data in Figure \ref{fig:dataportion1} seems to oscillate with an amplitude of no more than 50 when far from the change points.
	\end{table}
\end{minipage}
\begin{figure}[H]
	\centering
	\includegraphics[scale=0.25]{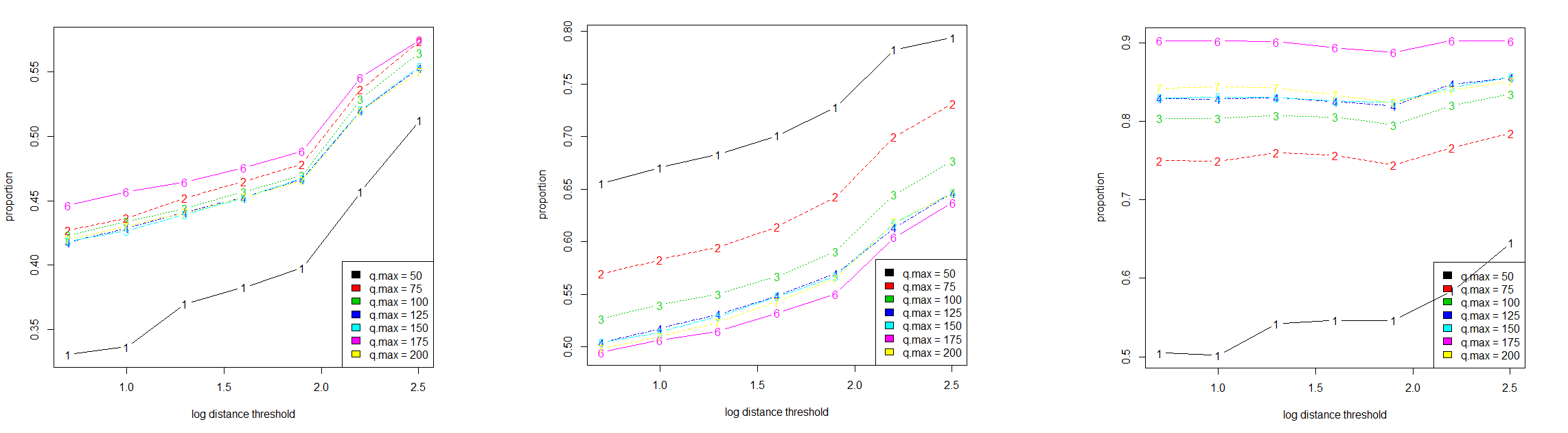}
	\caption{
		\textbf{Left:} Proportion of estimates from intelligent sampling that are close to the estimates from full data estimation, calculated using the expression in (\ref{eq:proportion_expression}). \textbf{ Center: }A comparison between half of the full data estimates versus the other half, using an expression similar to (\ref{eq:proportion_expression}). \textbf{Right: }The ratio of the values in the left graph divided by the values in the center graph.}\label{fig:closeproportions}
\end{figure}
~\newline
\noindent For each \texttt{q.max} value, the distribution of the logarithm of the jump sizes are similar, with the jump distribution from the intelligent sampling procedure being skewed more towards the lower values. The proportion of detected jumps which exceed 50 (a value that appears to be a reasonable upper bound on the noise amplitude from inspection of Figure 14) is also similar between the two estimation methods. It is clear that most of the detected jumps are large.  Figure \ref{fig:stretchdist} demonstrates the distributions of distances between consecutive change points, and it is clear from the plots that the vast majority of inter change point distances obtained by intelligent sampling are between 10000 and 10000 points, which shows that our approach is picking up long and persistent changes, exactly what it is designed for. 
\newline
\newline 
\indent Another way of comparing the change point estimates procured from intelligent sampling is to quantify the proportion of intelligent sampling estimates that are close to some full data estimate. 
For each fixed \texttt{q.max} value, we have the full data estimates $\{ \hat{\tau}^{(j)}_{i,f}:i=1,\dots,\hat{J}_{j,f} \}$ and the intelligent sampling estimates $\{ \hat{\tau}^{(j)}_{i,s}:i=1,\dots,\hat{J}_{j,s} \}$ for iterations $j=1,\dots, 40$. We fix a bandwidth parameter $\delta\in \{5,10,20,\dots,320\}$, and calculate the following proportions to gauge how similar the two sets of estimates are:
\begin{eqnarray}\label{eq:proportion_expression}
\frac{1}{40}  \sum_{k=1}^{40}  \left[ \frac{1}{\sum_{j=1}^{40}\hat{J}_{j,s}}\sum_{\substack{j=1,\dots,40\\i=1,\dots,\hat{J}}_{j,s}}  1\left(\min_{\ell=1,\dots,\hat{J}_{k,f}}\left| \hat{\tau}_{\ell,f}^{(k)}-\hat{\tau}_{i,s}^{(j)} \right|<\delta\right) \right].
\end{eqnarray}
These values can be seen in the left panel of Figure \ref{fig:closeproportions} for a variety of parameter values, and they come fairly close to 60\%. Although this proportion is well under 100\%, this is primarily due to the randomness of performing NOT estimation on this specific data set. Indeed, if we split the full data estimates with the first 20 iterations in one group and the last 20 iterations in the other group, then comparison of the two groups of the full data estimates with an expression similar to (\ref{eq:proportion_expression}) will still result in values not exceeding 60\%. Taking the ratios of the proportions for each \texttt{q.max} and $\delta$ values, we see that for large \texttt{q.max} values the ratios are very close to 1. In other words, the comparison of intelligent sampling estimates to the full data estimates show no more dis-similarity than when comparing the full data estimates across disjoint runs. 

For drawing downstream conclusions from an analysis of this type, one should next flag the change-point locations that are persistent across the 
different estimation runs, and subject them to a more refined local analysis. 
We however do not proceed in that direction, primarily because the point 
of this analysis is somewhat different -- namely, the effectiveness of intelligent sampling compared to full data analysis on real data. 
\newline
\newline
\indent In conclusion, the above analysis amply demonstrates that intelligent sampling was able to extract a similar number of change points as a full data analysis on the entire data, including finding a large proportion of estimators which also shows up in a full data estimate. It was able to do this while using much less time, as the running time data demonstrates.

\section{Concluding Remarks and Discussion}\label{sec:discussion}
This paper introduced sampling methodology that reduces significantly the computational requirements in multi-change point problems, while
not compromising on the statistical accuracy of the resulting estimates. It leverages the {\em locality principle}, which is obviously at work in the
context of the classical signal-plus-noise model employed throughout this study. Intelligent sampling is devised specifically for detecting major but relatively infrequent changes in a data stream, what one can think of as 
significant 'regime changes', and should \emph{not} be relied upon to identify short-lived disturbances or small perturbations to the mean level of a data stream. 
While the paper has dealt with a one-dimensional data sequence, extensions to problems involving multiple (potentially high-dimensional) data [e.g., sequences produced by cyber-physical systems equipped with a multitude of sensors monitoring physical or man-made phenomena] are of obvious interest. Also, while our theoretical development studies the canonical model with piecewise flat stretches between change-points, extensions to allow variations in the mean level between change-points, e.g. piecewise polynomial or nonparametric specifications of the mean function, while requiring more careful estimation of mean levels, pose no added conceptual difficulties: indeed, our second data set uses a piecewise linear model. Finally, the focus in this paper has primarily been on a two-stage procedure, which is easiest to implement in practice and suitable for many applications. Nevertheless, as illustrated in Section
\ref{sec:compmethod}, in specific settings involving data sets of length exceeding $10^{10}$ points, a multi-stage procedure may be advantageous.  

A key technical requirement for intelligent sampling is that the procedure used to obtain the 1st stage estimates needs to exhibit consistency properties, e.g. $(\ref{eq:firstconsistent2})$. The choice of binary segmentation in our exposition, or its wild binary segmentation variant (which modifies BinSeg by computing the cusum statistics on an additional number of random intervals) presented in detail in Section \ref{WBINSEG-SUPP} of the Supplement, is due to their computational attractiveness and the fact that they readily provide consistent estimates of the number of change points and their locations. Yet another variant of binary segmentation that we just became aware of, named `seeded binary segmentation' \cite{kovacs2020seeded}, shows promise as a first stage procedure for intelligent sampling, and deserves future exploration. 

We now briefly turn our attention to two other popular multiple change point methods and their potential relevance to the intelligent sampling problem. 
Two popular for models defined as in $(\ref{model})$ are the estimation of multiple structural breakpoints introduced in \cite{bai1998estimating} and PELT as described in \cite{killick2012optimal}. The method described in \cite{bai1998estimating} does give consistent estimates, but only under the much stricter condition that $J$ is a constant and there exists values $\beta_1,\dots,\beta_J\in (0,1)$ such that $\tau_j=[\beta_jN]$ for all $j=1,\dots,J$ and $N$. Further, to run the actual procedure would require the use of dynamic programming which is computationally expensive ($O(N^2)$ time). With the PELT procedure, the implementation itself runs in a more manageable $O(N)$ time; however, this works under the very different Bayesian setting where the spacings $\tau_{j+1}-\tau_j$ are iid generated from some distribution. Further, PELT was built upon a procedure described in \cite{yao1984estimation}, which examines another Bayesian model where every point $\{1,\dots,N\}$ has a probability $p$ of being a change point, and the development did not go into details regarding rates of convergence of the change point estimates. Due to the theoretical and computational restrictions of the multiple structural breakpoints method and the differing framework under which PELT works, we focused our analysis on binary segmentation.

We also mention the SMUCE procedure, introduced in \cite{frick2014multiscale}, where lower probability bounds for the events $\mathbb{P}[\hat{J}\neq J]$ and $\mathbb{P}[\underset{j=1,\dots,J}{\max}\underset{i=1,\dots,\hat{J}}{\min}|\hat{\tau}_i-\tau_j|\leq c_N]$, for any sequence $c_N$, were derived.  These results can be combined to yield  $\mathbb{P}[\hat{J}=J;\quad \underset{j=1,\dots,J}{\max}|\hat{\tau}_j-\tau_j|\leq c_N]\to 1 $ under certain restrictions and for some sequences $c_N $ that are $o(\delta_N)$, and therefore could be used in the first stage of intelligent sampling. SMUCE has the flexibility of working for a broader class of error terms \footnote{Some results apply when the errors are iid from a general exponential family.} but as was stated in \cite{frick2014multiscale}, the procedure involves dynamic programming which runs in $O(N^2)$ time. This last point is less of an issue for a modified version of SMUCE designed for iid Gaussian errors with heterogeneous variances. H-SMUCE, in \cite{pein2016heterogeneous}, could run the procedure in as low as $O(N)$ time in some cases. Overall, SMUCE could be used as the first part of intelligent sampling, and the regimes of $\delta_N$ and restrictions on the subsample size $N_1$ needed for intelligent sampling to be consistent could be fleshed out in a similar manner as in this paper. However, as BinSeg and WBinSeg are somewhat easier to implement computationally, we chose to perform our analysis with them instead. 

Our simulation results indicate that for non-Gaussian errors, as well as dependent error terms (with or without Gaussian distributions), the deviations of our estimators behave like the $L$-type distributions, even in $J$ growing with $N$ settings. This suggests that our results could be extended to broader classes of error terms, and future work could consider models incorporating errors with dependence structures and/or with non-Gaussian distributions. Extending Theorem \ref{thm:increasingJasymprotics} to these settings would require an in-depth investigation into probability bounds on the argmin of a drifted random walk with non-i.i.d and/or non-Gaussian random components along the lines of the derivations in Section \ref{sec:supplementpartc} of the Supplement for random walks with Gaussian innovations. We speculate that this work would be more amenable to rigorous analysis when the tail probabilities of the error terms decay exponentially (similar to Gaussian distributions) and when the dependence is local , e.g. $m$-dependence, where each error term is only correlated with the $m$ neighbors to its left and right. 

In conclusion, any procedure used at stage 1 of intelligent sampling puts restrictions on the model specifications, as consistent second stage estimators cannot be obtained if the first stage procedure is not consistent. Established results for BinSeg, as in \cite{venkatraman1992consistency} and \cite{fryzlewicz2014wild}, consider only the i.i.d Gaussian framework. Extending the BinSeg based approach 
to a more flexible class of error terms therefore requires theoretical exploration of BinSeg's properties beyond Gaussian errors, or using alternative methods at stage 1 which do not need the Gaussian error framework, e.g. \cite{bai1998estimating} and \cite{frick2014multiscale}. 
%

\section{Supplement Part A (Single Change Point Problem)}
\subsection{Problem Setup}\label{sec:singlenonparametric}
Instead of proving Theorems \ref{thm:singlerate} and \ref{thmsingledist} directly, we shall consider a more general nonparametric result from which the two theorems will follow as a special case. As before suppose the time series data is $(x_1,Y_1),...,(x_N,Y_N)$, where $x_i=i/N$ and $Y_i=f(x_i)+\varepsilon_i$  for $i=1,...,N$. We will make the weaker assumptions that
\begin{itemize}
	\item $f$ is a right continuous function in $[0,1]$ with a single left discontinuity at some point $\tau_0\in (0,1)$, with jump size $f(\tau_0+)-f(\tau_0-)=\Delta$ 
	\item there exists a $\beta_f>0$ where $|f(x)-f(y)|\leq \beta_f|x-y|$ whenever $(x-\tau_0)(y-\tau_0)>0$ 
	\item the errors $\varepsilon_i$'s are iid $N(0,\sigma^2)$ error terms 
\end{itemize}
The main difference between this model and the model presented in section \ref{sec:singlemodel} is the looser restriction on the signal $f$: here $f$ could be any Lipschitz continuous function with a single discontinuity and not be constrained to the family of piecewise constant functions. 
\newline
\newline
\indent We will first remark on some background regarding this this model before moving on to proving some results. Estimation procedures for such a dataset can be found in \cite{loader1996change}, where one-sided polynomial fitting was used to obtain an estimate $\hat{\tau}_N$ for $\tau_N:=\lfloor N\tau_0\rfloor/N$. In summary, fix a sequence of bandwidth $h=h_N$, a non-negative integer $p$, and a kernel function $K$ with support on $[-1,1]$. Next, for all $x_m\in (h,1-h)$, consider the signal estimates

\begin{eqnarray}
&&\hat{f}_-(x_m):=\pi_1\left(\underset{(a_0,\dots,a_p)\in\mathbb{R}^{p+1}}{\arg\min}\left(\sum_{j=0}^{Nh}K\left(\frac{j}{Nh}\right)\big (Y_{m-j-1}-a_0-a_1j-...-a_pj^p\big)^2 \right)\right)\nonumber\\
&&\hat{f}_+(x_m):=\pi_1\left(\underset{(a_0,\dots,a_p)\in\mathbb{R}^{p+1}}{\arg\min}\left(\sum_{j=0}^{Nh}K\left(\frac{j}{nh}\right)\big (Y_{m+j}-a_0-a_1j-...-a_pj^p\big)^2\right)\right),
\end{eqnarray}
where $\pi_1$ is the projection functions such that $\pi_1(a_0,\dots,a_p)=a_0$. The change point estimate is
\begin{eqnarray}
\hat{\tau}_N:=\underset{x_i\in (h,1-h)}{\arg\max}|\hat{f}_+(x_i)-\hat{f}_-(x_i)|.
\end{eqnarray}
This estimator is consistent under a few regularity conditions on the kernel $K$ and conditions on how fast $h$ converges to 0. For the sake of brevity we will not mention all those conditions here, but we will note that under said conditions, $\mathbb{P}[N(\hat{\tau}_N-\tau_N)=k]\to\mathbb{P}[L(\Delta/\sigma)=k]$ for all $k\in\mathbb{Z}$, which is the exactly asymptotic result for $\hat{\tau}_N$ obtained by least squares in a stump model setting. Finally, for our purposes we propose estimators $\hat{\alpha}$ and $\hat{\beta}$ for $f(\tau_0-)$ and $f(\tau_0+)$, respectively, defining them as
\begin{eqnarray}
\hat{\alpha}:=\frac{\sum_{j=0}^{Nh} K\left(\frac{j}{Nh}\right)Y_{N\hat{\tau}_N-j-1} }{\sum_{j=1}^{Nh}K\left( \frac{j}{Nh} \right)}\nonumber\\
\hat{\beta}:=\frac{\sum_{j=0}^{Nh} K\left(\frac{j}{Nh}\right)Y_{N\hat{\tau}_N+j} }{\sum_{j=1}^{Nh}K\left( \frac{j}{Nh} \right)}.
\end{eqnarray}
These two estimators are consistent:
\begin{lemma}
	\label{cor:frateconv} $|\hat{\alpha}-f(\tau_0+)|$ and $|\hat{\beta}-f(\tau_0-)|$ are $O_p(h\vee (Nh)^{-1/2})$.
\end{lemma}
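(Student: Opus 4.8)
The plan is to split $\hat{\beta}-f(\tau_0+)$ into three pieces: an error coming from the fact that the summation window is centred at the random index $N\hat{\tau}_N$ rather than at $N\tau_N$, a deterministic approximation (bias) term, and a Gaussian noise term; I would bound each piece separately and then observe that $\hat{\alpha}-f(\tau_0-)$ is handled by the mirror-image computation. Throughout I would use the kernel regularity assumed in this subsection (in particular $K$ bounded, Riemann integrable, supported on $[-1,1]$ with $\int_0^1 K>0$), which gives $\sum_{j=1}^{Nh}K(j/(Nh))\asymp Nh$ and $\sum_{j=1}^{Nh}K(j/(Nh))^2\asymp Nh$ by convergence of Riemann sums, and I would use the already-recalled fact that $N(\hat{\tau}_N-\tau_N)=O_p(1)$.

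\emph{Reduction to an oracle window.} For $\varepsilon>0$ pick $M$ so that the event $\mathcal{E}_M:=\{|N\hat{\tau}_N-N\tau_N|\le M\}$ has probability at least $1-\varepsilon$ for all large $N$. On $\mathcal{E}_M$ the window $\{N\hat{\tau}_N+j:0\le j\le Nh\}$ differs from the oracle window $\{N\tau_N+j:0\le j\le Nh\}$ in at most $2M+1$ indices, each carrying weight $K(j/(Nh))/\sum_{i=1}^{Nh}K(i/(Nh))=O(1/(Nh))$. Since the $Y_i=f(x_i)+\varepsilon_i$ have uniformly bounded means and Gaussian fluctuations, a maximal inequality over these $O(M)$ displaced summands shows that $\hat{\beta}$ differs from the oracle average $\tilde{\beta}:=\sum_{j=0}^{Nh}K(j/(Nh))Y_{N\tau_N+j}\big/\sum_{j=1}^{Nh}K(j/(Nh))$ by $O_p(1/(Nh))$; this absorbs, in particular, the at most $O_p(1)$ summands which — when $\hat{\tau}_N$ undershoots $\tau_N$ — have mean $f(\tau_0-)$ instead of $f(\tau_0+)$. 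It therefore suffices to bound $|\tilde{\beta}-f(\tau_0+)|$.

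\emph{Bias and noise of the oracle estimator.} Writing $Y_{N\tau_N+j}=f(x_{N\tau_N+j})+\varepsilon_{N\tau_N+j}$, the deterministic part of $\tilde{\beta}$ is $\sum_j K(j/(Nh))f(x_{N\tau_N+j})\big/\sum_j K(j/(Nh))$. For $j\ge 1$ one has $x_{N\tau_N+j}=\tau_N+j/N\ge\tau_0$ (since $\tau_0-\tau_N<1/N$) and $|x_{N\tau_N+j}-\tau_0|\le j/N\le h$, so the Lipschitz bound gives $|f(x_{N\tau_N+j})-f(\tau_0+)|\le\beta_f h$; the single index $j=0$ contributes an extra $O(1/(Nh))$. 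Hence the bias is $O(h)+O(1/(Nh))=O(h)$. The stochastic part of $\tilde{\beta}$ is a centred Gaussian with variance $\sigma^2\sum_j K(j/(Nh))^2\big/\big(\sum_j K(j/(Nh))\big)^2\asymp\sigma^2(Nh)/(Nh)^2=\sigma^2/(Nh)$, so it is $O_p((Nh)^{-1/2})$.

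\emph{Conclusion and the main difficulty.} Combining the three bounds, $|\hat{\beta}-f(\tau_0+)|\le|\hat{\beta}-\tilde{\beta}|+|\text{bias}|+|\text{noise}|=O_p(1/(Nh))+O(h)+O_p((Nh)^{-1/2})$, and since $Nh\to\infty$ forces $1/(Nh)=o((Nh)^{-1/2})$, this collapses to $O_p(h\vee(Nh)^{-1/2})$; the same computation on the left-hand window yields $|\hat{\alpha}-f(\tau_0-)|=O_p(h\vee(Nh)^{-1/2})$. I expect the main obstacle to be the reduction to the oracle window: one must control, uniformly over the $O(M)$ indices by which the random window can be displaced, both the deterministic contribution (possibly evaluated on the wrong side of $\tau_0$) and the Gaussian contribution, and argue that their aggregate effect is only $O_p(1/(Nh))$ — which is then harmlessly dominated by $(Nh)^{-1/2}$. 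The remaining estimates are routine Riemann-sum and Gaussian-tail calculations.
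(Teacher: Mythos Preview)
Your bias--noise decomposition of the oracle $\tilde\beta$ is correct and is exactly the computation the paper carries out. The gap is in the reduction step. The two index sets $\{N\hat\tau_N+j:0\le j\le Nh\}$ and $\{N\tau_N+j:0\le j\le Nh\}$ do differ in only $O(M)$ elements, but $\hat\beta-\tilde\beta$ is \emph{not} controlled by those boundary indices alone: a common index $i$ receives weight $K((i-N\hat\tau_N)/(Nh))/D$ in $\hat\beta$ and weight $K((i-N\tau_N)/(Nh))/D$ in $\tilde\beta$, and for a non-constant kernel these differ at \emph{every} interior $i$. Equivalently, in the $j$-parametrization
\[
\hat\beta-\tilde\beta=\frac{1}{D}\sum_{j=0}^{Nh}K\!\left(\frac{j}{Nh}\right)\bigl(Y_{N\hat\tau_N+j}-Y_{N\tau_N+j}\bigr)
\]
is a full sum of $Nh+1$ nonzero terms, not $O(M)$ terms. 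So the conclusion $|\hat\beta-\tilde\beta|=O_p(1/(Nh))$ does not follow from the symmetric-difference observation; for a general bounded kernel you would still need to control the interior weight shift (e.g.\ via Lipschitz continuity of $K$ plus a union bound over the $2M{+}1$ possible values of $N\hat\tau_N-N\tau_N$ to decouple $\hat\tau_N$ from the noise), and you do not supply that argument.

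The paper avoids introducing an oracle at all. It defines $\hat\beta_N(t)$ as the kernel average centered at a \emph{fixed} $t$, notes that on the good event $\hat\beta=\hat\beta_N(\hat\tau_N)$ with $\hat\tau_N$ ranging over at most $2C_1+1$ grid points, and bounds $|\hat\beta_N(t)-f(\tau_0+)|$ directly for each such $t$ by the same $A(t)+|B(t)|$ split you use on $\tilde\beta$: $A(t)\le C(h\vee(Nh)^{-1})$ deterministically (the $(Nh)^{-1}$ piece entering only when $t<\tau_0$ and the window straddles the jump), and $B(t)$ centered Gaussian with variance $\asymp (Nh)^{-1}$. A finite union bound over the candidate $t$'s then gives the result. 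This route needs only boundedness of $K$, no smoothness, and it cleanly separates the randomness of $\hat\tau_N$ from the Gaussian noise in the window---precisely the coupling your oracle comparison stumbles on.
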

~\newline
\indent It is possible to perform intelligent sampling to this nonparametric setting as in steps (ISS1)-(ISS4), though with a slight adjustment. Instead of fitting a stump function at step (ISS2), use one-sided local polynomial fitting with bandwidth $h$ on the first stage subsample to obtain estimates $(\hat{\alpha}^{(1)},\hat{\beta}^{(1)},\hat{\tau}_N)$ for the parameters $(f(\tau_0-),f(\tau_0+),\tau_N)$. These first stage estimators satisfy the following consistency result: 
\begin{eqnarray}\label{eq:singleconsistentcond}
\mathbb{P}\left[ |\hat{\tau}_N^{(1)}-\tau_N|\leq w(N);\quad |\hat{\alpha}^{(1)}-f(\tau_0-)|\vee|\hat{\beta}^{(1)}-f(\tau_0+)|\leq \rho_N \right]\to 1
\end{eqnarray}
for the sequence $w(N)=CN^{1-\gamma+\delta}$ where $\delta$ and $C$ can be any positive constants, and some sequence $\rho_N\to 0$ (an explicit sequence can be derived by Lemma \ref{cor:frateconv}). The consistency condition in (\ref{eq:singleconsistentcond}) is sufficient for a generalized versions of Theorems \ref{thm:singlerate} and \ref{thmsingledist}:
\begin{theorem}\label{thm:singlerategeneral}
	\begin{eqnarray}
	\hat{\tau}^{(2)}-\tau_N=O_p(N^{-1})
	\end{eqnarray}
\end{theorem}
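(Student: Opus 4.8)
The plan is to condition on the good event for the first stage, rewrite the second-stage least-squares problem as the minimisation of an increment process that behaves like a positively drifted random walk in the candidate location, and then run a peeling (shell) argument over dyadic deviation ranges; the structure parallels the stump case of Theorem~\ref{thm:singlerate}, the new ingredient being the Lipschitz drift of $f$ away from $\tau_0$. Let $\mathcal{E}_N$ be the event in \eqref{eq:singleconsistentcond}, so $\mathbb{P}[\mathcal{E}_N]\to1$; it then suffices to show that for every $\epsilon>0$ there is a constant $M$ with $\mathbb{P}\big[\{N|\hat\tau^{(2)}-\tau_N|>M\}\cap\mathcal{E}_N\big]<\epsilon$ for all large $N$. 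On $\mathcal{E}_N$ the second-stage window $I_N:=[\hat\tau^{(1)}_N-K_2N^{-\gamma+\delta},\,\hat\tau^{(1)}_N+K_2N^{-\gamma+\delta}]$ contains $\tau_N$ and is contained in the deterministic window $[\tau_N-2K_2N^{-\gamma+\delta},\,\tau_N+2K_2N^{-\gamma+\delta}]$, which spans $O(N^{1-\gamma+\delta})=o(N)$ indices; in particular $m:=N|\hat\tau^{(2)}-\tau_N|=O(N^{1-\gamma+\delta})$ on $\mathcal{E}_N$.

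Next I would write the second-stage criterion as an increment process. With $\hat\Delta^{(1)}:=\hat\beta^{(1)}-\hat\alpha^{(1)}$ and $d\in I_N$, set
\[
M_N(d):=\sum_{i/N\in S^{(2)}}\Big[\big(Y_i-\hat\alpha^{(1)}1(i/N\le d)-\hat\beta^{(1)}1(i/N>d)\big)^2-\big(Y_i-\hat\alpha^{(1)}1(i/N\le\tau_N)-\hat\beta^{(1)}1(i/N>\tau_N)\big)^2\Big],
\]
so that $\hat\tau^{(2)}$ minimises $M_N$ and $M_N(\tau_N)=0$. For $d>\tau_N$ only the indices $i/N\in S^{(2)}$ with $N\tau_N<i\le Nd$ change their fitted level, from $\hat\beta^{(1)}$ to $\hat\alpha^{(1)}$, and expanding the square (writing $u_i:=Y_i-\hat\beta^{(1)}=f(x_i)-\hat\beta^{(1)}+\varepsilon_i$) gives
\[
M_N(d)=\sum_{\substack{N\tau_N<i\le Nd\\ i/N\in S^{(2)}}}\Big[(\hat\Delta^{(1)})^2+2\hat\Delta^{(1)}\big(f(x_i)-\hat\beta^{(1)}\big)+2\hat\Delta^{(1)}\varepsilon_i\Big];
\]
the case $d<\tau_N$ is symmetric, with $\hat\alpha^{(1)},\hat\beta^{(1)}$ interchanged, and in both cases the leading term $(\hat\Delta^{(1)})^2$ per summand is positive. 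On $\mathcal{E}_N$: (i) $|\hat\Delta^{(1)}-\Delta|\le2\rho_N$, so $|\hat\Delta^{(1)}|$ is bounded away from $0$ and from above; (ii) since the first-stage subsample is evenly spaced with gap $\lfloor N/N_1\rfloor\to\infty$, the number of summands above is at least $m(1-o(1))-1$; (iii) by the Lipschitz bound for $f$ away from $\tau_0$ and $|\hat\beta^{(1)}-f(\tau_0+)|\le\rho_N$ (Lemma~\ref{cor:frateconv}), $\big|\sum(f(x_i)-\hat\beta^{(1)})\big|\le\beta_f\sum|x_i-\tau_0|+m\rho_N=O(m^2/N)+O(m\rho_N)$; and $\sum\varepsilon_i$ is a sum of at most $m$ i.i.d.\ $N(0,\sigma^2)$ variables.

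Combining these, there are constants $c_1,c_2,c_3>0$ independent of $N$ such that on $\mathcal{E}_N$, for every $d$ (on either side of $\tau_N$) with $m=N|d-\tau_N|$ in the admissible range,
\[
M_N(d)\ \ge\ c_1 m\ -\ c_2\,R_m\ -\ c_3\big(m^2/N+m\rho_N\big),\qquad R_m:=\sup_{1\le k\le m}\Big|\sum_{\ell=1}^{k}\varepsilon_\ell\Big|.
\]
Because $m=O(N^{1-\gamma+\delta})=o(N)$ and $\rho_N\to0$, the last term is $o(m)$, hence $\le\tfrac12 c_1 m$ for large $N$, leaving $M_N(d)\ge\tfrac12 c_1 m-c_2 R_m$. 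The reflection-principle maximal inequality gives $\mathbb{P}[R_m\ge t\sqrt m]\le2\exp(-t^2/(2\sigma^2))$. Partitioning the deviation range dyadically into shells $m\in[2^{j},2^{j+1})$ and using monotonicity of $R_m$, the probability that $M_N\le0$ somewhere on the $j$-th shell is $\le2\exp(-c\,2^{j})$ for a fixed $c>0$; summing the shells with $2^{j}>M$ yields $\sum_{2^{j}>M}2\exp(-c\,2^{j})$, which tends to $0$ as $M\to\infty$ uniformly in $N$. Since $\hat\tau^{(2)}$ minimises $M_N$ and $M_N(\tau_N)=0$, this bounds $\mathbb{P}\big[\{N|\hat\tau^{(2)}-\tau_N|>M\}\cap\mathcal{E}_N\big]$ and completes the proof.

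I expect the main obstacle to be controlling the Lipschitz ``bias'' term $O(m^2/N)$ against the linear drift $c_1 m$: their ratio is $O(m/N)$, which is $o(1)$ over the \emph{entire} second-stage window only because that window spans $O(N^{1-\gamma+\delta})=o(N)$ indices, so the argument genuinely uses that the first-stage rate $N^{-\gamma+\delta}$ vanishes (equivalently $\gamma>\delta$, with $\delta$ small). The remaining points are routine: evaluating $M_N$ at the reference point $\tau_N$ even when $\tau_N\in S^{(1)}$ (the criterion still partitions $S^{(2)}$ at $\tau_N$); applying the maximal inequality on the deterministic window sandwiching the data-dependent $I_N$, valid on $\mathcal{E}_N$; and checking the dyadic sum of Gaussian tail bounds is summable with constants not depending on $N$.
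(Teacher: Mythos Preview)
Your proposal is correct and shares the paper's overall strategy: condition on the first-stage good event, rewrite the second-stage criterion as an increment process anchored at $\tau_N$, and show that the linear positive drift (coming from $(\hat\Delta^{(1)})^2$ per summand) dominates both the Lipschitz bias $O(m^2/N+m\rho_N)$ and the Gaussian noise. The treatment of the bias term and the observation that $m/N=O(N^{-\gamma+\delta})=o(1)$ uniformly over the second-stage window match the paper's reasoning.

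The technical execution differs. The paper splits the deviation range into a ``medium'' zone $(rN^{-\gamma+\delta},\eta N^{-\gamma+\delta}]$ and a ``far'' zone $(>\eta N^{-\gamma+\delta})$; on the medium zone it applies the H\'ajek--R\'enyi inequality to the weighted partial sums of the $\varepsilon_i$, yielding a bound of order $(rN^{1-\gamma+\delta})^{-1}$ and thus the $O_p(1/N)$ rate by choosing $r\asymp N^{-1+\gamma-\delta}$; on the far zone it invokes a uniform exponential bound from empirical process theory (Corollary~8.8 of van~de~Geer). You instead run a single dyadic peeling argument across all shells $m\in[2^j,2^{j+1})$ and control the noise via the reflection-principle maximal inequality, producing a directly summable tail $\sum_{2^j>M}\exp(-c\,2^j)$. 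Your route is somewhat more self-contained (no external empirical-process lemma) and avoids the two-region split; the paper's route, via H\'ajek--R\'enyi, has the mild advantage of not relying on Gaussianity of the errors for the medium zone. Either way the argument goes through. One small point: the paper anchors at the modified point $\tau_N^{(2)}$ (equal to $\tau_N$ or $\tau_N-1/N$) to guarantee the anchor lies in $S^{(2)}$; since you only need $M_N(\tau_N)=0$ as a reference value, your choice is harmless, as you note.
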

\begin{theorem}\label{thm:singledistgeneral}
	Suppose the conditions of Theorem \ref{thm:singledistgeneral} are satisfied, then for all integers $k\in\mathbb{Z}$ we have
	\begin{eqnarray}
	\mathbb{P}\left[ \lambda_2\left( \tau_N,\hat{\tau}_N^{(2)} \right)=k \right]\to\mathbb{P}\left[ L_{\Delta/\sigma}=k \right]
	\end{eqnarray}
\end{theorem}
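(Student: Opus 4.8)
The plan is to condition on the first-stage output, reduce the statement to a perturbed random-walk argmin on a growing window, and then integrate out. Let $\mathcal{R}_N$ be the event in (\ref{eq:singleconsistentcond}), with the additional (harmless, since $\hat\tau_N^{(1)}-\tau_N$ is of smaller order than the stage-2 window) requirement that $\hat\tau_N^{(1)}$ sits deep inside the stage-2 window, so $\mathbb{P}[\mathcal{R}_N]\to1$. The key structural point is that $(\hat\tau_N^{(1)},\hat\alpha^{(1)},\hat\beta^{(1)})$ depends only on $\{Y_i:i/N\in S^{(1)}\}$, whereas the stage-2 fit uses only $\{Y_i:i/N\in S^{(2)}\}$ with $S^{(1)}\cap S^{(2)}=\emptyset$; since the $Y_i$ are independent, the stage-2 data are independent of the conditioning, so conditioning on $(\hat\tau_N^{(1)},\hat\alpha^{(1)},\hat\beta^{(1)})=(t,a,b)$ leaves their law unchanged. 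Writing $G_N$ for the law of $(\hat\tau_N^{(1)},\hat\alpha^{(1)},\hat\beta^{(1)})$, it then suffices to prove
\[
\sup_{(t,a,b)\in\mathcal{R}_N}\Bigl|\mathbb{P}\bigl[\lambda_2(\tau_N,\hat\tau_N^{(2)})=k\,\big|\,(\hat\tau_N^{(1)},\hat\alpha^{(1)},\hat\beta^{(1)})=(t,a,b)\bigr]-\mathbb{P}[L_{\Delta/\sigma}=k]\Bigr|\to0,
\]
because integrating this against $dG_N$ over $\mathcal{R}_N$ and using $\mathbb{P}[\mathcal{R}_N^c]\to0$ yields the theorem.

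Fix $(t,a,b)\in\mathcal{R}_N$. By (ISS4), $\hat\tau_N^{(2)}=\arg\min_d\sum_{i/N\in S^{(2)}}(Y_i-a\,1(i/N\le d)-b\,1(i/N>d))^2$, and subtracting the value at $d=\tau_N$ and expanding the square gives $\hat\tau_N^{(2)}=\arg\min_d\hat X^{(2)}(d)$ with
\[
\hat X^{(2)}(d)=2(b-a)\sum_{i/N\in S^{(2)}}\Bigl(Y_i-\tfrac{a+b}{2}\Bigr)\bigl(1(i/N\le d)-1(i/N\le\tau_N)\bigr).
\]
Reindex the stage-2 grid by $\ell=\lambda_2(\tau_N,d)\in\mathbb{Z}$. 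Because the stage-1 grid removes only one point in every $\lfloor N/N_1\rfloor\to\infty$ and the window around $t$ covers a neighbourhood of $\tau_N$ of diverging index-length, for each fixed $K$ and all large $N$ this reparametrization is a bijection onto a block $I_N\supseteq\{|\ell|\le K\}$ with $I_N\uparrow\mathbb{Z}$; moreover, for $\ell>0$ the summands entering $\hat X^{(2)}$ are indexed by the $\ell$ grid points immediately to the right of $\tau_N$, and for $\ell<0$ by the $|\ell|$ points immediately to its left, so their error terms are i.i.d.\ $N(0,\sigma^2)$ and mutually independent across the two sides, exactly as in (\ref{Zprocess}).

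Next I would establish the finite-window approximation $\hat X^{(2)}(\ell)=2X_\Delta(\ell)+o_p(1)$, uniformly over $(t,a,b)\in\mathcal{R}_N$ and $|\ell|\le K$, where $X_\Delta$ is the walk (\ref{Zprocess}) built from the relevant $N(0,\sigma^2)$ error terms: indeed $b-a=\Delta+O(\rho_N)$, while for the finitely many indices $i$ involved $f(X_i)=f(\tau_0\pm)+O(\beta_f|\ell|/N)$ by the Lipschitz hypothesis and $a=f(\tau_0-)+O(\rho_N)$, $b=f(\tau_0+)+O(\rho_N)$, so each term $2(b-a)(Y_i-(a+b)/2)$ equals the corresponding increment of $2X_\Delta$ up to an error $O(\rho_N)(|\varepsilon_i|+1)+O(|\ell|/N)$, whose partial sum over $|\ell|\le K$ is $O_p(\rho_N\sqrt K+\rho_N K+K^2/N)=o_p(1)$ since $\rho_N\to0$. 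Multiplying a random walk by the positive constant $2$ does not change its argmin, and by the discussion following Proposition \ref{prop:fullestimators} the argmin of the $N(0,\sigma^2)$-driven walk $X_\Delta$ coincides with $L_{\Delta/\sigma}$; hence the finite-dimensional distributions of $\hat X^{(2)}$ on any $\{|\ell|\le K\}$ converge to those of $X_{\Delta/\sigma}$.

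It remains to upgrade this to convergence of the argmin over $I_N$ uniformly in the conditioning, and this is the main obstacle. Since $|\Delta|>0$ is fixed, the drift of $\hat X^{(2)}$ is bounded below by a positive constant on both sides of $0$ for all large $N$, uniformly over $(t,a,b)\in\mathcal{R}_N$ (the level errors and the Lipschitz bias perturb it only by $o(1)$); the random-walk argmin tail and comparison estimates used elsewhere in the paper (e.g.\ Lemmas \ref{lem:randomwalkcomp}, \ref{lem:twoXrandwalkcomp} and \ref{lem:eqprob}, in their single-walk form) then give constants $A,B>0$, independent of $N$ and of $(t,a,b)$, with $\mathbb{P}[\,|\arg\min_{\ell\in I_N}\hat X^{(2)}(\ell)|>K\mid t,a,b]\le Ae^{-BK}$ and the matching bound $\mathbb{P}[|L_{\Delta/\sigma}|>K]\le Ae^{-BK}$. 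Combining the finite-window convergence of the previous paragraph with these uniform tails, letting $N\to\infty$ and then $K\to\infty$, yields the displayed uniform convergence of the conditional mass functions, and integrating over $\mathcal{R}_N$ finishes the proof. The genuinely delicate part, as flagged, is keeping the localization and the random-walk comparison uniform over the \emph{random} first-stage values in $\mathcal{R}_N$ — i.e.\ controlling $\hat X^{(2)}-2X_{\Delta/\sigma}$ simultaneously in $N$ and in the conditioning; everything else is routine bookkeeping with the rates $w(N)$, $\rho_N$ and the Lipschitz bound.
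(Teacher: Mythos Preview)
Your argument is correct, but it takes a different route from the paper's own proof. The paper does not condition on the first-stage output at all: it first invokes Theorem~\ref{thm:singlerategeneral} (the rate result) to get unconditional tightness $\hat m-m=O_p(1)$, establishes finite-dimensional convergence of $\hat\Delta^{(2)}(x^{(2)}_{m+j})$ to the limiting walk via a separate lemma (Lemma~\ref{lem:randomwalkconv}, whose $o_p(1)$ absorbs the randomness of $\hat\alpha^{(1)},\hat\beta^{(1)}$), and then runs a standard three-$\epsilon$ argument (tightness of $L$, tightness of $\hat m-m$, continuous mapping on the finite window). Your proof instead mirrors the structure of the \emph{multiple} change-point argument (Theorem~\ref{thm:increasingJasymprotics}): explicit conditioning on $(t,a,b)$, independence of the two stages, and direct exponential tail control of the conditional walk using the Supplement~C machinery in place of the rate theorem.

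Both approaches are valid. The paper's is shorter here because tightness comes for free from the already-proved rate result, and the $o_p(1)$ lemma hides the uniformity-in-$(t,a,b)$ bookkeeping you correctly flag as the delicate part. Your approach is more self-contained in that it does not rely on Theorem~\ref{thm:singlerategeneral} as a black box, and it makes the parallel with the growing-$J$ analysis transparent; the cost is that the cited lemmas are stated for walks with \emph{constant} drift and standard normal increments, so you need either to sandwich your perturbed walk between two such walks (via Lemmas~\ref{lem:randomwalkcomp}/\ref{lem:randomwalkcompoppo}) or to rerun the elementary argument of Lemma~\ref{lem:probbound} with a drift bounded below by a positive constant --- both straightforward, but worth saying explicitly.
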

These results can hold under a general nonparametric setting, but they still hold for the stump model from Section \ref{sec:single-CP}. Since consistency condition $(\ref{eq:singleconsistentcond})$ is if $f$ is a stump function and least square fitting was used at step (ISS2) as it was written in Section \ref{sec:singleprocedure}, Theorems \ref{thm:singlerategeneral} and \ref{thm:singledistgeneral} do imply Theorems \ref{thm:singlerate} and \ref{thmsingledist}. The proof of Theorem \ref{thm:singlerategeneral} will be covered in Section \ref{sec:proofgeneralratesingle}, while the proof of Theorem \ref{thm:singledistgeneral} will be covered in Appendix B at Section \ref{sec:proofsingledistgeneral}.
\begin{remark}
	We note that not only do the consistency results of intelligent sampling for stump models generalize to this nonparametric setting, the computational time aspects translates also does not change. Local polynomial fitting on $n$ data points takes $O(N)$ computational time, see e.g. \cite{seifert1994fast}. Therefore the computational time analysis in Section \ref{sec:singlemodel} still holds for this nonparametric case.
\end{remark}
\begin{remark}\label{rem:nosingleasymgen}
	It is not possible derive asymptotic distribution results for $N(\hat{\tau}_N^{(2)}-\tau_N)$. Consider the case where $\tau=0.5$, $N_1=\sqrt{N}$ (or $\gamma=0.5$), and the two subsequences $N=2^{2j}$ or $N=3^{2j}$ for some large integer $j$. In such cases the first stage subsample would choose points that have integer multiples of $1/2^j$ or $1/3^j$ as their x-coordinate.
	
	\begin{itemize}
		\item If $N=2^{2j}$, $\tau_N=\frac{\lfloor 2^{2j}\cdot 0.5\rfloor}{2^{2j}}=0.5$ is an integer multiple of $1/2^j$, and hence $\tau_N$ is an x-coordinate used in the first stage.
		\item If $N=3^{2j}$, then $\tau_N=\frac{\lfloor 3^{2j}\cdot 0.5\rfloor}{3^{2j}}$, and it can be checked that $\lfloor 3^{2j}\cdot 0.5\rfloor$ is an even integer not divisible by 3. Since the x-coordinate of every first stage data point takes the form $\frac{k}{3^j}=\frac{3^jk}{3^{2j}}$ for some integer $k$, this means $\tau_N$ is not used in the first stage.
	\end{itemize}
	Hence, in the former case we cannot ever have $\hat{\tau}^{(2)}_N=\tau_N$, while in the latter case, we have $\tau^{(2)}_N=\tau_N$ and Theorem \ref{thmsingledist} tells us that $\mathbb{P}[\hat{\tau}^{(2)}_N=\tau_N]$ converges to the nonzero $\mathbb{P}[L(\Delta/\sigma)=0]$ as $j$ increases. Clearly, we have two subsequences for which $\mathbb{P}[\hat{\tau}^{(2)}_N=\tau_N]$ converges to different values.
\end{remark}
\indent To further validate the extension to this nonparametric setting, we also ran a set of simulations for when $Y_i=2\sin(4\pi x_i)+2\cdot 1(x_i>0.5)+\varepsilon_i$, where $\varepsilon_i$ are iid $N(0,1)$. We took 15 values of $N$ between 2500 and $10^6$, chosen evenly on the log scale, and applied intelligent sampling on 1000 replicates. For each of these values of $N$. First stage used roughly $N_1=\sqrt{N}$ points, which were subjected to one sided local polynomial fitting with a parabolic kernel and bandwidth $h=N_1^{-0.3}$, while the second stage interval had half-width $8/\sqrt{N}$. Figures \ref{fig:singleratenonparametric} and \ref{fig:singledistnonparametric} show results consistent with Theorems \ref{thm:singlerategeneral} and \ref{thm:singledistgeneral}.

\begin{figure}[H]
	\begin{center}
		\includegraphics[scale=0.37]{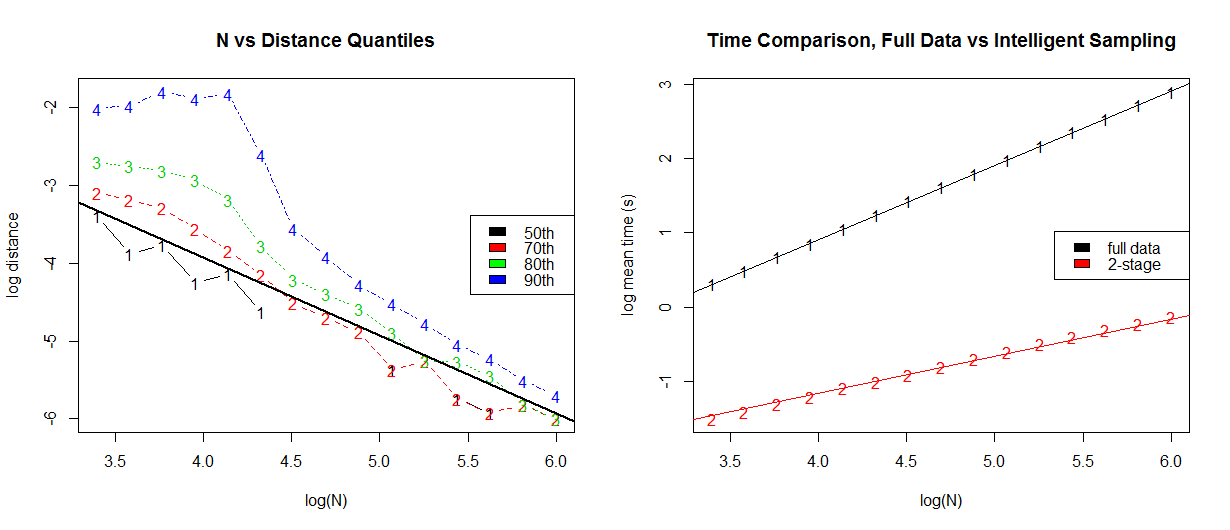}
		\caption{left graph shows log-log plot of the quantiles of $|\hat{\tau}^{(2)}_N-\tau_N|$ versus $N$, with the solid black line having a slope of exactly -1. Some datapoints for the quantiles of the 50th quantiles do not appear since for some $N$, the median of $|\hat{\tau}^{(2)}_N-\tau_N|$ was 0. Right graph is a log-log plot of the mean computational time of using all datapoints (black) and intelligent sampling (red), with the solid black line having a slope of exactly 1 and the solid red a slope of exactly 0.5.  }
		\label{fig:singleratenonparametric}
	\end{center}
\end{figure}

\begin{figure}[H]
	\begin{center}
		\includegraphics[scale=0.4]{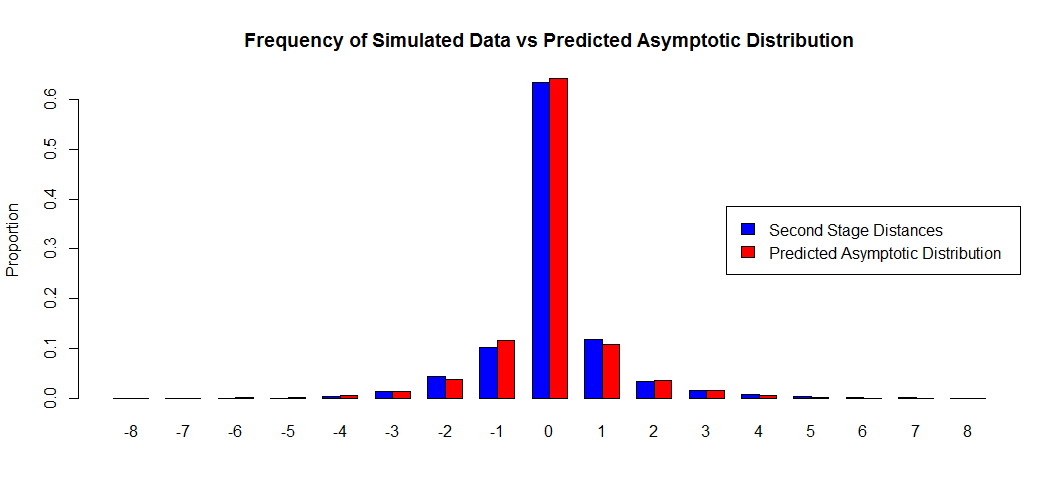}
		\caption{ distribution of $\lambda_2\left(  \tau_N,\hat{\tau}^{(2)}\right)$ values (blue) compared with the distribution of $L$ from Theorem \ref{thmsingledist}. }
		\label{fig:singledistnonparametric}
	\end{center}
\end{figure}

\subsection{Proof of Corollary \ref{cor:frateconv}}
\begin{proof}
	We will show that for any $\epsilon>0$, 
	\begin{eqnarray}
	\mathbb{P}\left[  |\hat{\beta}-f(\tau+)|>C_0(h\vee (Nh)^{-1/2})\right]\leq \epsilon.
	\end{eqnarray}
	We start off by utilizing the rate of convergence of the change point estimator: there is a constant $C_1>0$ such that 
	\begin{equation*}
	\mathbb{P}\left[|\hat{\tau}_N-\tau|> \frac{C_1}{N}\right]< \frac{\epsilon}{2}
	\end{equation*}
	for all sufficiently large $N$. Hence, for any $C>0$ we have, 
	\begin{eqnarray}
	&&\mathbb{P}\Big[ |\hat{\beta}_N)-f(\tau+)|>C \Big]\leq\nonumber\\
	&&\mathbb{P}\left[ |\hat{\beta}_N-f(\tau+)|>C \text{ and }|\hat{\tau}_N-\tau|\leq \frac{C_1}{N} \right]+\mathbb{P}\left[ |\hat{\tau}_N-\tau|> \frac{C_1}{N} \right]\leq \nonumber\\
	&&\mathbb{P}\left[ |\hat{\beta}_N(t)-f(\tau+)|> C\text{ for some }|t-\tau|\leq \frac{C_1}{N}\right]+\frac{\epsilon}{2}\label{p1}
	\end{eqnarray}
	where 
	\begin{eqnarray}
	\hat{\beta}_N(t):=\frac{\sum_{j=0}^{Nh}K\left( \frac{j}{Nh} \right)Y_{Nt+j}}{\sum_{j=1}^{Nh}K\left( \frac{j}{Nh}\right)}
	\end{eqnarray}
	Next, we bound the first term above, so consider only the case where $|t-\tau|\leq \frac{C_1}{N}$. By expanding we have
	\begin{eqnarray}
	|\hat{\beta}(t)-f(\tau+)|&=& \left| \frac{\sum_{j=0}^{Nh}K\left(\frac{j}{Nh}\right)[f(t+j/N)+\varepsilon_{Nt+j}]}{\sum_{j=1}^{Nh}K\left(\frac{j}{Nh}\right)}-f(\tau+)\right|\nonumber\\
	&\leq&  \frac{\sum_{j=0}^{Nh}K\left(\frac{j}{Nh}\right)|f(t+j/N)-f(\tau+)|}{\sum_{j=1}^{Nh}K\left(\frac{j}{Nh}\right)}+\left| \frac{\sum_{j=0}^{Nh}K\left(\frac{j}{Nh}\right)\varepsilon_{Nt+j}}{\sum_{j=1}^{Nh}K\left(\frac{j}{Nh}\right)}\right|\nonumber\\
	&:=& A(t)+|B(t)|
	\end{eqnarray}
	First, we derive a bound for $A(t)$. If $t\geq \tau$ then we have 
	\begin{eqnarray}
	A(t) &=& \frac{\sum_{j=0}^{Nh}K\left(\frac{j}{Nh}\right)|f(t+j/N)-f(\tau+)|}{\sum_{j=0}^{Nh}K\left(\frac{j}{Nh}\right)}\nonumber\\
	&\leq & \frac{\sum_{j=0}^{Nh}K\left(\frac{j}{Nh}\right)\beta_f|t+j/N-\tau|}{\sum_{j=0}^{Nh}K\left(\frac{j}{Nh}\right)}\nonumber\\
	&\leq &\frac{\beta_f\sum_{j=0}^{Nh}K\left(\frac{j}{Nh}\right)\left(\frac{C_1+j}{N}\right)}{\sum_{j=0}^{Nh}K\left(\frac{j}{Nh}\right)}\nonumber\\
	&=& \beta_f h\frac{ \frac{1}{Nh}\sum_{j=0}^{Nh}K\left(\frac{j}{Nh}\right)\left(\frac{j}{Nh}\right) }{\frac{1}{Nh} \sum_{j=0}^{Nh}K\left(\frac{j}{Nh}\right)}+\frac{C_1}{N}
	\end{eqnarray}
	Note that since $Nh\to\infty$ as $N\to \infty$, we have $\frac{1}{Nh}\sum_{j=0}^{Nh}K\left(\frac{j}{Nh}\right)\frac{j}{Nh}\to \int_0^1xK(x)\,dx$ (which exists) and $\sum_{j=1}^{Nh}K\left(\frac{j}{Nh}\right)\to \int_0^1K(x)\,dx=1$, as $N\to\infty$, hence we can find a constant $M>0$ such that
	\begin{equation}
	A(t)\leq \beta_f M h+\frac{C_1}{N}
	\end{equation} 
	for all sufficiently large $N$. On the other hand, suppose $t<\tau$. For sufficiently large $N$ we would have $N(\tau-t)\leq C_1<Nh$ and so
	\begin{eqnarray}
	A(t) &=& \frac{\sum_{j=0}^{N(\tau-t)-1}K\left(\frac{j}{Nh}\right)|f(t+j/N)-f(\tau+)|}{\sum_{j=0}^{Nh}K\left(\frac{j}{Nh}\right)}+ \frac{\sum_{j=N(\tau-t)}^{Nh}K\left(\frac{j}{Nh}\right)|f(t+j/N)-f(\tau+)|}{\sum_{j=0}^{Nh}K\left(\frac{j}{Nh}\right)}\nonumber\\
	&\leq & \frac{\sum_{j=0}^{N(\tau-t)-1}K\left(\frac{j}{Nh}\right)(\Delta+\beta_f(\tau-t-j/N))}{\sum_{j=0}^{Nh}K\left(\frac{j}{Nh}\right)}+ \frac{\sum_{j=N(\tau-t)}^{Nh}K\left(\frac{j}{Nh}\right)\beta_f(t+j/N-\tau)}{\sum_{j=0}^{Nh}K\left(\frac{j}{Nh}\right)}\nonumber\\
	&\leq & \frac{K^\uparrow N(\tau-t)\Delta }{\sum_{j=1}^{Nh}K\left(\frac{j}{Nh}\right)}+\frac{\beta_f\sum_{j=0}^{Nh}K\left(\frac{j}{Nh}\right)\left(\frac{C_1+j}{N}\right)}{\sum_{j=0}^{Nh}K\left(\frac{j}{Nh}\right)}\nonumber\\
	&&(K^\uparrow \text{ is any constants that uniformly bounds the function }K\text{ from above on}[0,1])\nonumber\\
	&\leq & (K^\uparrow \Delta)\frac{\frac{1}{Nh}C_1}{\frac{1}{Nh}\sum_{j=0}^{Nh}K\left(\frac{j}{Nh}\right)}+\beta_f h\frac{ \frac{1}{Nh}\sum_{j=0}^{Nh}K\left(\frac{j}{Nh}\right)\left(\frac{j}{Nh}\right) }{\frac{1}{Nh} \sum_{j=0}^{Nh}K\left(\frac{j}{Nh}\right)}+\frac{C_1}{N}
	\end{eqnarray}
	which, for sufficiently large $N$, can be bounded by $\frac{M_1}{Nh}+\beta_f Mh+\frac{C_1}{N}$ for some constants $M,M_1>0$. Hence, this shows that $A(t)$ itself is $O(h\vee(Nh)^{-1})$ for all $t$ where $|t-\tau|\leq \frac{C_1}{N}$.
	\newline
	\newline
	Next, we consider the random term
	\begin{equation}
	B(t)= \frac{\sum_{j=0}^{Nh}K\left(\frac{j}{Nh}\right)\varepsilon_{Nt+j}}{\sum_{j=0}^{Nh}K\left(\frac{j}{Nh}\right)}
	\end{equation}
	which satisfies
	\begin{eqnarray}
	\mathbb{E}[B(t)]&=& 0\nonumber\\
	\text{var}(B(t))&=& \frac{\sum_{j=0}^{Nh}K\left(\frac{j}{Nh}\right)^2}{\left(\sum_{j=0}^{Nh}K\left(\frac{j}{Nh}\right)\right)^2}\nonumber\\
	&=& (Nh)^{-1}\frac{\frac{1}{Nh}\sum_{j=0}^{Nh}K\left(\frac{j}{Nh}\right)^2}{\left(\frac{1}{Nh}\sum_{j=0}^{Nh}K\left(\frac{j}{Nh}\right)\right)^2}\nonumber\\
	&\leq &(Nh)^{-1}2\int_0^1K(x)^2\,dx,\qquad\text{ for all sufficiently large }N.
	\end{eqnarray}
	Thus $B(t)=O_p((Nh)^{-1/2})$ by Chebychev's inequality.
	\newline
	\newline
	Combining these results on $A(t)$ and $B(t)$ derived above, one can find constants $C_2,C_3>0$ such that for all $N>N_2$ for some integer $N_2$ we have 
	\begin{eqnarray}
	A(t)&\leq & C_2[h\vee(Nh)^{-1}]\nonumber\\
	\mathbb{P}[|B(t)|>C_3(h\vee(Nh)^{-1/2})]&\leq& \frac{\epsilon}{2(2C_1+3)}
	\end{eqnarray}
	for all $|t-\tau|\leq \frac{C_1}{N}$, to get from (\ref{p1}):
	\begin{eqnarray}
	&&\mathbb{P}\Big[ |\hat{\beta}_N-f(\tau+)|>(C_1+C_2)(h\vee(Nh)^{-1/2}) \Big]\leq\nonumber\\
	&&\mathbb{P}\left[ |\hat{\beta}_N(t)-f(\tau+)|> (C_1+C_2)(h\vee(Nh)^{-1/2})\text{ for some }|t-\tau|\leq \frac{C_1}{N}\right]+\frac{\epsilon}{2}\leq\nonumber\\
	&& \mathbb{P}\left[ A(t)+|B(t)|> (C_1+C_2)(h\vee(Nh)^{-1/2})\text{ for some }|t-\tau|\leq \frac{C_1}{N}\right]+\frac{\epsilon}{2}\leq\nonumber\\
	&& \sum_{t:|t-\tau|\leq C_1/N}\mathbb{P}\left[ A(t)+|B(t)|> (C_1+C_2)(h\vee(Nh)^{-1/2})\right] +\frac{\epsilon}{2}\leq\nonumber\\
	&& \sum_{t:|t-\tau|\leq C_1/n}\mathbb{P}\left[ |B(t)|> C_2(h\vee(Nh)^{-1/2})\right] +\frac{\epsilon}{2}\leq\nonumber\\
	&&\sum_{t:|t-\tau|\leq C_1/N}\frac{\epsilon}{2(2C_1+3)}+\frac{\epsilon}{2}\leq \epsilon
	\end{eqnarray}
	for all $N\geq N_1\vee N_2$. This establishes that $|\hat{\beta}_N-f(\tau+)|$ is $O_p(h\vee(Nh)^{-1/2})$, and the proof for $|\hat{\alpha}_N-f(\tau-)|$ proceeds similarly.
\end{proof}

\subsubsection{Proof of Theorem \ref{thm:singlerategeneral}}\label{sec:proofgeneralratesingle}
The structure of this proof will be similar to the rate of convergence proof found in Lan et al (2007). We will initially set some notations: let $\tau_N:=\lfloor N\tau\rfloor /N$, and define 
\begin{equation}
\tau^{(2)}_N:=\begin{cases}
\tau_N\qquad &\text{if }\tau_N\text{ is not in first subsample }\\
\tau_N-1/N &\text{if }\tau_N\text{ is in first subsample }
\end{cases}
\end{equation}
We will show that $\left(\hat{\tau}^{(2)}_N-\tau^{(2)}_N\right)$ is $O_p(1/N)$, which will also demonstrate the same rate of convergence for $\left(\hat{\tau}^{(2)}_N-\tau_N\right)$. An additional property of $\tau^{(2)}_N$, used later on, is the fact that $\lambda_2\left( \tau^{(2)}_N,\hat{\tau}^{(2)}_N \right)=\lambda_2\left( \tau_N,\hat{\tau}^{(2)}_N \right)$. This will be utilized in the proof of Theorems 
\ref{thmsingledist} and \ref{thm:multidepend}.

\begin{proof} 

	Denote $G_N$ as the joint distribution of $(\hat{\alpha}^{(1)},\hat{\beta}^{(1)},\hat{\tau}^{(1)}_N)$. Given any constant $\epsilon>0$, there is a positive constant $C_\epsilon$ such that for all sufficiently large $N$ we have
	\begin{eqnarray}
	(\hat{\alpha}^{(1)},\hat{\beta}^{(1)},\hat{\tau}^{(1)}_N)&\in& \left[ f(\tau-)- \rho_N,f(\tau-)+ \rho_N \right]\nonumber\\
	&\times& \left[ f(\tau+)- \rho_N,f(\tau+)+ \rho_N \right]\times \left[ \tau-C_\epsilon/N^\gamma,\tau+ C_\epsilon/N^\gamma \right]
	\end{eqnarray}
	with probability at least $1-\epsilon$. Denote this event as $R_N$. It follows that for any sequence $\{a_N\}$,
	\begin{eqnarray}
	&&\mathbb{P}\Big[ N|\hat{\tau}^{(2)}_N-\tau^{(2)}_N|>a_N \Big]\leq \nonumber\\
	&&\int_{R_N}\mathbb{P}\Big[N|\hat{\tau}^{(2)}_N-\tau^{(2)}_N|>a_N\Big|
	(\hat{\alpha}^{(1)},\hat{\beta}^{(1)},\hat{\tau}^{(1)}_N)=(\alpha,\beta,t) \Big]\,dG_N(\alpha,\beta,t)+\epsilon\leq\nonumber\\
	&&\sup_{(\alpha,\beta,t)\in R_N}\mathbb{P}\Big[N|\hat{\tau}^{(2)}_N-\tau^{(2)}_N|>a_N\Big|
	(\hat{\alpha}^{(1)},\hat{\beta}^{(1)},\hat{\tau}^{(1)}_N)=(\alpha,\beta,t) \Big]+\epsilon
	\end{eqnarray}
	Next, we show that this first term is smaller than any $\epsilon>0$ for a sequence $a_N=O(1/N)$ and all sufficiently large $N$, by bounding the probability that 
	$$\mathbb{P}_{\alpha,\beta,t}\left[N|\hat{\tau}^{(2)}_N-\tau|>a_N\right]:=\mathbb{P}\left[N|\hat{\tau}^{(2)}_N-\tau|>a_N\Big|
	(\hat{\alpha}^{(1)},\hat{\beta}^{(1)},\hat{\tau}^{(1)}_N)=(\alpha,\beta,t)\right]$$
	for any given $(\alpha,\beta,t)\in R_N$.
	\newline
	\newline
	Conditional on the first stage estimates equaling $(\alpha,\beta,t)$, we can rewrite $\hat{\tau}^{(2)}_N$ and $\tau$ as maximizers of:
	\begin{eqnarray}
	\hat{\tau}^{(2)}_N&=&
	\underset{d\in S^{(2)}}{\arg\min}\left(\frac{1}{\lambda_2(S^{(2)}(t))}\sum_{i:i/N\in S^{(2)}(t)} \left(Y_i-\frac{\alpha+\beta}{2}\right)(1(i/N\leq d)-1(i/N\leq \tau))\right) \nonumber\\
	&:=& \underset{d\in S^{(2)}}{\arg\min}\;\mathbb{M}_n(d)\nonumber\\
	\tau^{(2)}_N&=&\underset{d\in S^{(2)}}{\arg\min}\left(\frac{1}{\lambda_2(S^{(2)}(t))} \sum_{i:i/N\in S^{(2)}(t)}\left(Y_i-\frac{\alpha+\beta}{2}\right)(1(i/N\leq d)-1(i/N\leq \tau))\right)\nonumber\\
	&:=&\underset{d\in S^{(2)}}{\arg\min}\; M_n(d)
	\end{eqnarray}
	Since $\tau^{(2)}_N-C_\epsilon/N^\gamma\leq t \leq \tau^{(2)}_N+C_\epsilon/N^\gamma+2/N$, for $N$ large enough so that $KN^{\delta}/2>(C_\epsilon+2)$, we have $t-KN^{-\gamma+\delta}<\tau^{(2)}_N-KN^{-\gamma+\delta}/2<\tau^{(2)}_N+KN^{-\gamma+\delta}/2<t+KN^{-\gamma+\delta}$. This enables us to define the function $A(r)$ in the domain where $8N^{-1+\gamma-\delta}<r<K/2$, such that 
	\begin{eqnarray}\label{aexp}
	a(r)&:=&\min\{ M_n(d):|d-\tau^{(2)}_N |\geq  rN^{-\gamma+\delta}\}\nonumber\\
	&=& \min_{|d-\tau^{(2)}|\geq rN^{-\gamma+\delta}}\frac{\sum_{i:i/N\in S^{(2)}}\left(f(i/N)-\frac{\alpha+\beta}{2}\right)(1(i/N\leq d)-1(i/N\leq \tau))}{\lambda_2[t-KN^{-\gamma+\delta},t+KN^{-\gamma+\delta}]}
	\end{eqnarray}
	To make $a(r)$ simpler to work with, we show that for sufficiently large $N$, there exists a constant $A>0$ such that $a(r)\geq Ar$. First, because
	\begin{eqnarray}
	S^{(2)}(t)&:=&\left\{i/N:i\quad\in\mathbb{N},\quad i/N\in [t\pm KN^{-\gamma+\delta}],\quad \lfloor N/N_1\rfloor\text{ does not evenly divide }i \right\}\nonumber\\	
	&\subset& [t-KN^{-\gamma+\delta},t+KN^{-\gamma+\delta}]\nonumber\\
	&\subset & [\tau^{(2)}-2KN^{-\gamma+\delta},\tau^{(2)}+2KN^{-\gamma+\delta}]
	\end{eqnarray}
	this implies 
	\begin{eqnarray}
	|f(i/N)-f(\tau+)|\leq 2\beta_fKN^{-\gamma+\delta}&\qquad &\text{ for all }i/N\in S^{(2)}, i/n>\tau\nonumber\\
	|f(i/n)-f(\tau-)|\leq 2\beta_fKN^{-\gamma+\delta}&\qquad &\text{ for all }i/n\in S^{(2)}, i/n\leq\tau\nonumber
	\end{eqnarray}
	\noindent Combine this with the fact that $|\alpha- f(\tau-)|$ and $|\beta- f(\tau+)|$ are $o(1)$, which implies for sufficiently large $N$, and for any $i/n\in S^{(2)}(t)$,
	\begin{eqnarray}
	f(i/n)-\frac{\alpha+\beta}{2}>\frac{\Delta}{4}\qquad &\text{ if }i/n>\tau\nonumber\\
	f(i/n)-\frac{\alpha+\beta}{2}<-\frac{\Delta}{4}\qquad &\text{ if }i/n\leq\tau\nonumber
	\end{eqnarray}
	The preceding fact implies that every term in the summand of (\ref{aexp}) is positive, and therefore the minimizing $d$ for (\ref{aexp}) would be either $\tau^{(2)}\pm rN^{-\gamma+\delta}$:
	\begin{eqnarray}
	a(r)&=&\left(\frac{\sum_{i:i/n\in S^{(2)}(t)}\left(f(i/n)-\frac{\alpha+\beta}{2}\right)(1(i/n\leq \tau^{(2)}+rN^{-\gamma+\delta})-1(i/n\leq \tau))}{\lambda_2[t-KN^{-\gamma+\delta},t+KN^{-\gamma+\delta}]}\right)\wedge \nonumber\\
	&&\left(\frac{\sum_{i:i/n\in S^{(2)}(t)}\left(f(i/n)-\frac{\alpha+\beta}{2}\right)(1(i/n\leq \tau^{(2)}-rN^{-\gamma+\delta})-1(i/n\leq \tau))}{\lambda_2[t-KN^{-\gamma+\delta},t+KN^{-\gamma+\delta}]}\right)\nonumber\\
	&\geq& \frac{\Delta}{4}\cdot \frac{\lambda_2(\tau^{(2)},\tau^{(2)}+rN^{-\gamma+\delta}]\wedge\lambda_2(\tau^{(2)}-rN^{-\gamma+\delta},\tau^{(2)}]}{\lambda_2[t-KN^{-\gamma+\delta},t+KN^{-\gamma+\delta}]}
	\end{eqnarray}
	It can also be shown that for $N$ large enough (specifically $\lfloor N^{1-\gamma}\rfloor \geq 2$) and any $d_1,d_2\in [t-KN^{-\gamma+\delta},t+KN^{-\gamma+\delta}]$ such that $d_2-d_1 \geq 8/N$, we have
	\begin{eqnarray}
	\lambda_2(d_1,d_2]\geq [N(d_2-d_1)-2]-\left[\frac{N(d_2-d_1)}{\lfloor N^{1-\gamma} \rfloor}+1\right]\geq \frac{N(d_2-d_1)}{8}\nonumber
	\end{eqnarray}
	In a slightly similar fashion, it can be argued that for all large $N$, $\lambda_2[t-KN^{-\gamma+\delta},t+KN^{-\gamma+\delta}]\leq 3KN^{1-\gamma+\delta}$. Since we restricted $r$ to be greater than $8N^{-1+\gamma-\delta}$, this means 
	\begin{eqnarray}\label{eq:armorelinear}
	a(r)&\geq& \frac{\Delta}{4}\cdot\frac{NrN^{-\gamma+\delta}}{8\cdot 3KN^{1-\gamma+\delta}}\nonumber\\
	&\geq & \frac{\Delta}{96K}r
	\end{eqnarray}
	Hence, this shows that $a(r)$ is greater than some linear function with 0 intercept. 
	\newline
	\newline
	\indent Now define $b(r)=(a(r)-M_n(\tau^{(2)}_N))/3=a(r)/3$, then we have the following relation:
	\begin{eqnarray}
	\sup_{d\in S^{(2)}}|\mathbb{M}_n(d)-M_n(d)|\leq b(r)\quad\Rightarrow\quad |\hat{\tau}^{(2)}_N-\tau^{(2)}_N|\leq rN^{-\gamma+\delta}
	\end{eqnarray}
	To show the above is true, suppose $d\in [t-KN^{-\gamma+\delta},t+KN^{-\gamma+\delta}]$ and $|d-\tau^{(2)}_N|>rN^{-\gamma+\delta}$. If, in addition, the left expression above holds, then
	\begin{eqnarray}
	\mathbb{M}_n(d)\geq M_n(d)-b(r)\geq a(r)-b(r)\quad\Rightarrow\nonumber\\
	\mathbb{M}_n(d)-\mathbb{M}_n(\tau^{(2)})\geq a(r)-b(r)-M_n(\tau^{(2)})-b(r)=b(r)>0
	\end{eqnarray}
	Since $\mathbb{M}_n(d)>\mathbb{M}_n(\tau^{(2)}_N)$ and $\hat{\tau}^{(2)}_N$ minimizes $\mathbb{M}_n$ among all points in $S^{(2)}(t)$, this implies $d$ could not equal $\tau^{(2)}_N$, showing that $|\hat{\tau}^{(2)}_N-\tau^{(2)}_N|\leq rN^{-\gamma}$.
	\newline
	\newline
	\indent Next, we bound $\mathbb{P}_{\alpha,\beta,t}\Big[ |\hat{\tau}^{(2)}_N-\tau^{(2)}_N|\leq rN^{-\gamma+\delta}  \Big]$. First, we  split it into the two parts: 
	\begin{eqnarray}
	&&\mathbb{P}_{\alpha,\beta,t}\Big[|\hat{\tau}^{(2)}_N-\tau^{(2)}_N|>rN^{-\gamma+\delta}\Big]\leq \nonumber\\
	&&\mathbb{P}_{\alpha,\beta,t}\left[rN^{-\gamma+\delta}< |\hat{\tau}^{(2)}_N-\tau^{(2)}_N|\leq \eta N^{-\gamma+\delta} \right]+\mathbb{P}_{\alpha,\beta,t}\left[ |\hat{\tau}^{(2)}_N-\tau^{(2)}_N|> \eta N^{-\gamma+\delta} \right]:=\nonumber\\
	&&P_N(\alpha,\beta,t)+Q_N(\alpha,\beta,t)
	\end{eqnarray}
	where $\eta=K/3$. We first consider the term $P_n(\alpha,\beta,t)$. Because
	\begin{eqnarray}
	rN^{-\gamma+\delta}<|\hat{\tau}^{(2)}_N-\tau^{(2)}_N|\leq \eta N^{-\gamma+\delta}\quad&\Rightarrow&\quad \inf_{\tau^{(2)}_N+rN^{-\gamma+\delta}<d\leq \tau^{(2)}_N+\eta N^{-\gamma+\delta}}\mathbb{M}_n(d)\leq \mathbb{M}_n(\tau)
	\nonumber\\
	&\text{or }&\inf_{\tau^{(2)}_N-\eta N^{-\gamma+\delta}<d\leq \tau^{(2)}_N-rN^{-\gamma+\delta}}\mathbb{M}_n(d)\leq \mathbb{M}_n(\tau),
	\end{eqnarray}
	we can first split $P_n(\alpha,\beta,t)$ into the two terms
	\begin{eqnarray}
	P_N(\alpha,\beta,t)&\leq& P_{N,1}(\alpha,\beta,t)+P_{N,2}(\alpha,\beta,t)\nonumber\\
	&=:& \mathbb{P}_{\alpha,\beta,t}\left[ \sup_{\tau^{(2)}_N+rN^{-\gamma+\delta}<d\leq \tau^{(2)}_N+\eta N^{-\gamma+\delta}}(\mathbb{M}_n(\tau)-\mathbb{M}_n(d))\geq 0 \right]+\nonumber\\
	&&\mathbb{P}_{\alpha,\beta,t}\left[ \sup_{\tau^{(2)}_N-\eta N^{-\gamma+\delta}<d\leq \tau^{(2)}_N-rN^{-\gamma+\delta}}(\mathbb{M}_n(\tau)-\mathbb{M}_n(d))\geq 0 \right]
	\end{eqnarray}
	We first form an upper bound for $P_{n,1}(\alpha,\beta,t)$ for all $(\alpha,\beta,t)\in R_n$. Note that 
	\begin{eqnarray}
	&&\mathbb{M}_n(\tau^{(2)}_N)-\mathbb{M}_n(d)\nonumber\\&=& -(\mathbb{M}_n(d)-M_n(d))-M_n(d)\nonumber\\
	&=& -\frac{\sum\limits_{i:\;i/N\in S^{(2)}(t)}\left[\left(Y_i-\frac{\alpha+\beta}{2}\right)-\left(f(i/N)-\frac{\alpha+\beta}{2}\right)\right](1(i/N\leq d)-1(i/N\leq \tau))}{\lambda_2[t-KN^{-\gamma+\delta},t+kN^{-\gamma+\delta}]}-M_n(d)\nonumber\\
	&=& -\frac{ \sum\limits_{i:\;i/N\in S^{(2)}(t)\cap (\tau^{(2)},d]}\varepsilon_i }{\lambda_2[t-KN^{-\gamma+\delta},t+kN^{-\gamma+\delta}]}-\frac{\sum\limits_{i:\;i/N\in S^{(2)}(t)\cap(\tau^{(2)},d]}\left(f(i/N)-\frac{\alpha+\beta}{2}\right)}{\lambda_2[t-KN^{-\gamma+\delta},t+kN^{-\gamma+\delta}]}
	\end{eqnarray}
	As previously explained, the $\left(f(i/N)-\frac{\alpha+\beta}{2}\right)$ term in the second summand can be bounded below by $\Delta/4$ for all sufficiently large $N$, and hence this leads to:
	\begin{eqnarray}
	\mathbb{M}_n(\tau)-\mathbb{M}_n(d)\geq 0\quad\Rightarrow\nonumber\\
	-\sum_{i:\;i/N\in S^{(2)}\cap (\tau^{(2)},d]}\varepsilon_i\geq \frac{\Delta}{4}\lambda_2(\tau^{(2)},d]
	\end{eqnarray}
	It thus follows that
	\begin{eqnarray}
	P_{N,1}(\alpha,\beta,t)\leq \mathbb{P}_{\alpha,\beta,t}\left[\sup_{\substack{\tau^{(2)}_N+rN^{-\gamma+\delta}<d\\\leq \tau^{(2)}_N+\eta N^{-\gamma+\delta}}} \left(\frac{1}{\lambda_2(\tau^{(2)},d]}\right)\left|\sum_{{i:\;i/N\in S^{(2)}(t)\cap (\tau^{(2)},d]}}\varepsilon_i\right|\geq \frac{\Delta}{4}\right]
	\end{eqnarray}
	and by the Hajek-Renyi inequality, we get
	\begin{eqnarray}
	&&\mathbb{P}_{\alpha,\beta,t}\left[\sup_{\substack{\tau^{(2)}_N+rN^{-\gamma+\delta}<d\\\leq \tau^{(2)}_N+\eta N^{-\gamma+\delta}}} \left(\frac{1}{\lambda_2(\tau,d]}\right)\left|\sum_{i:\;i/N\in S^{(2)}(t)\cap (\tau^{(2)},d]}\varepsilon_i\right|\geq \frac{\Delta}{4}\right]\nonumber\\
	&\leq& \frac{16}{\Delta^2}\left(\frac{1}{\lambda_2(\tau^{(2)},\tau^{(2)}+rN^{-\gamma+\delta}]}+\sum_{j=\lambda_2(\tau^{(2)},\tau^{(2)}+rN^{-\gamma+\delta}]}^{\lambda_2(\tau^{(2)},\tau^{(2)}+\eta N^{-\gamma+\delta}]}\frac{1}{j^2}\right)\nonumber\\
	&\leq &\frac{32}{\Delta^2}\cdot\frac{1}{\lambda_2(\tau^{(2)},\tau^{(2)}+rN^{-\gamma+\delta}]}
	\end{eqnarray}
	We argued earlier that $\lambda_2 (\tau^{(2)},\tau^{(2)}+rN^{-\gamma+\delta}]\geq rN^{1-\gamma+\delta}/8$ for $N$ sufficiently large enough, thus 
	\begin{equation}
	P_{N,1}(\alpha,\beta,t)\leq \frac{8B}{rN^{1-\gamma+\delta}}
	\end{equation}
	where  $B=32/\Delta^2$. From this expression we arrive at $P_{N,1}(\alpha,\beta,t)\leq \epsilon$ (for any $\epsilon>0$) eventually, by setting $r=CN^{-1+\gamma-\delta}$ where $C$ is any constant satisfying $C>8$ and $8B/C\leq \epsilon$.
	\newline
	\newline
	To bound $Q_N(\alpha,\beta,t)$, from (\ref{eq:armorelinear}) we've argued that $a(r)$ is eventually greater than a multiple of $r$ when $r> 8N^{-1+\gamma-\delta}$. Since we've defined $b(r)=a(r)/3$, we can find some positive constant $B'$ where $b(r)\geq B'r$ when $r> 8N^{-1+\gamma-\delta}$ (and for all large $N$). Since $\eta=K/3> 8N^{-1+\gamma-\delta}$, eventually, this leads to
	\begin{eqnarray}
	&&\mathbb{P}_{\alpha,\beta,t}\Big[ |\hat{\tau}^{(2)}-d|>\eta N^{-\gamma+\delta} \Big]\nonumber\\
	&\leq& \mathbb{P}_{\alpha,\beta,t}\left[ \sup_{d\in S^{(2)}(t)}|\mathbb{M}_n(d)-M_n(d)|>b(\eta)\right]\nonumber\\
	&\leq& \mathbb{P}_{\alpha,\beta,t}\left[ \sup_{d\in S^{(2)}(t)}|\mathbb{M}_n(d)-M_n(d)|>B'\eta\right]\nonumber\\
	&=&\mathbb{P}_{\alpha,\beta,t}\left[ \sup_{d\in S^{(2)}(t)}\frac{\left|\sum_{i:\; i/N\in S^{(2)}(t)}{\epsilon_i}(\mathbbm{1}(i/N\leq d)-\mathbbm{1}(i/N\leq \tau))\right|}{\lambda_2(S^{(2)}(t))}>B'\eta \right]
	\end{eqnarray}
	Using Corollary 8.8 from \cite{geer2000empirical}, the latter expression is bounded by $C_1\exp(-C_2\eta^2\lambda_2(S^{(2)}(t)))$ for some positive constants $C_1, C_2$, which converges to 0.
\end{proof}

\subsubsection{Proof of Theorem \ref{thm:singledistgeneral}}\label{sec:proofsingledistgeneral}
\begin{proof} Let $\{x_1^{(2)},x_2^{(2)},...\}$ be the x-coordinates of the data, not used in the first stage, with corresponding response variable $(Y_1^{(2)},Y_2^{(2)},...)$ and error terms $(\varepsilon^{(2)}_1,\varepsilon_2^{(2)},...)$. As a set,  $\{x_1^{(2)},x_2^{(2)},...\}$ equals $\{x_1  ,...,x_N\}-\left\{ \frac{\lfloor N/N_1\rfloor}{N},\frac{2\lfloor N/N_1\rfloor}{N},... \right\}$. Note that we do not have $x_{j}^{(2)}=j/N$ for every integer $j$, and additionally we can write $\tau^{(2)}=x_m^{(2)}$ for some integer $m$. Since our estimate will also be one of the $x^{(2)}_i$'s, we can then denote $\hat{m}$ be the integer such that $\hat{\tau}^{(2)}=x^{(2)}_{\hat{m}}$. Note that we have the following relation between $\hat{m}-m$ and the $\lambda_2$ function on intervals:
	\begin{eqnarray}
	\hat{m}-m=\begin{cases}
	\lambda_2(\tau^{(2)},\hat{\tau}^{(2)}]\qquad& \text{ when }\hat{\tau}^{(2)}>\tau\\
	-\lambda_2(\hat{\tau}^{(2)},\tau^{(2)}]&\text{ when }\hat{\tau}^{(2)}\leq\tau
	\end{cases}
	\end{eqnarray}
	Hence we can write results on $\lambda_2(\tau^{(2)},\hat{\tau}^{(2)})$ in terms of $\hat{m}-m$.
	\newline
	\newline
	\indent After taking a subset $S^{(2)}$ of $\{x_1^{(2)},x_2^{(2)},...,x_{N-N_1}^{(2)}\}$ (specifically $S^{(2)}$ are those within $KN^{-\gamma+\delta}$ of the pilot estimate $\hat{\tau}^{(1)}$), we minimize
	\begin{eqnarray}
	\hat{\Delta}^{(2)}(t)&:=&\sum_{i:x_i\in S^{(2)}} \left(Y_i-\frac{\hat{\alpha}_N^{(1)}+\hat{\beta}_N^{(1)}}{2}\right)(1(x_i\leq t)-1(x_i\leq \tau))\nonumber\\
	&=& \sum_{i:x_i^{(2)}\in S^{(2)}} \left(Y_{i}^{(2)}-\frac{\hat{\alpha}_N^{(1)}+\hat{\beta}_N^{(1)}}{2}\right)(1(x_i^{(2)}\leq t)-1(x_i^{(2)}\leq \tau^{(2)}))
	\end{eqnarray}
	over all points $t\in S^{(2)}$ to obtain the estimate for the change point. Equivalently the domain of $\hat{\Delta}^{(2)}(t)$ can be extended to all $t\in \{x^{(2)}_1,x^{(2)}_2,...\}$, letting
	$$\hat{\Delta}^{(2)}(t)=\max \left\{ \hat{\Delta}^{(2)}(r):r\in S^{(2)} \right\}+1 \qquad\text{ for }t\notin S^{{(2)}}$$
	The argmin of this extension is the argmin of the function restricted to $S^{(2)}$. This extended definition will be used for the next result:
	
	\begin{lemma}\label{lem:randomwalkconv} For any fixed positive integer $j_0>0$,
		\begin{eqnarray}
		\hat{\Delta}^{(2)}\left(x^{(2)}_{m+j} \right)&=&\frac{j\Delta}{2}+\epsilon^{(2)}_{m+1}+...+\epsilon^{(2)}_{m+j}+o_p(1)\qquad\text{ for }1\leq j\leq j_0\nonumber\\
		\hat{\Delta}^{(2)}\left(x^{(2)}_{m} \right)&=& 0+o_p(1)\nonumber\\
		\hat{\Delta}^{(2)}\left(x^{(2)}_{m-j} \right) &=& \frac{j\Delta}{2}-\epsilon_m^{(2)}-...-\epsilon_{m-j+1}^{(2)}+o_p(1)\qquad \text{ for } 1\leq j\leq j_0
		\end{eqnarray}
	\end{lemma}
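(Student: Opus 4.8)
The plan is to compute the second-stage objective $\hat{\Delta}^{(2)}(t)$ evaluated at grid points near the true change point $\tau^{(2)}$, conditioning on the first-stage estimates lying in the high-probability set $R_N$ introduced in the proof of Theorem \ref{thm:singlerategeneral}. Fix $j_0$ and consider $t = x^{(2)}_{m+j}$ for $1 \le j \le j_0$. Writing $Y^{(2)}_i = f(x^{(2)}_i) + \varepsilon^{(2)}_i$, we decompose
\[
\hat{\Delta}^{(2)}\left(x^{(2)}_{m+j}\right) = \sum_{i: x^{(2)}_i \in S^{(2)} \cap (\tau^{(2)}, x^{(2)}_{m+j}]} \left(f(x^{(2)}_i) - \frac{\hat{\alpha}^{(1)}_N + \hat{\beta}^{(1)}_N}{2}\right) + \sum_{i: x^{(2)}_i \in S^{(2)} \cap (\tau^{(2)}, x^{(2)}_{m+j}]} \varepsilon^{(2)}_i.
\]
Since $j \le j_0$ is fixed and the grid points just to the right of $\tau^{(2)}$ are not used in the first stage for all large $N$ (distance between consecutive $S^{(1)}$ points exceeds $j_0$), the index set $S^{(2)} \cap (\tau^{(2)}, x^{(2)}_{m+j}]$ equals $\{x^{(2)}_{m+1}, \ldots, x^{(2)}_{m+j}\}$ eventually, so the noise sum is exactly $\varepsilon^{(2)}_{m+1} + \cdots + \varepsilon^{(2)}_{m+j}$. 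For the signal sum, each $x^{(2)}_i$ in this range lies strictly to the right of $\tau_0$ (for large $N$), so $f(x^{(2)}_i) = f(\tau_0+) + O(N^{-1})$ by the Lipschitz bound, while $\frac{\hat{\alpha}^{(1)}_N + \hat{\beta}^{(1)}_N}{2} = \frac{f(\tau_0-) + f(\tau_0+)}{2} + O_p(\rho_N)$ on $R_N$; hence each summand is $\frac{\Delta}{2} + o_p(1)$, and with $j$ fixed the signal sum is $\frac{j\Delta}{2} + o_p(1)$. This gives the first display. The case $j = 0$ is immediate since the summation range is empty, yielding $0$. The case $t = x^{(2)}_{m-j}$ is symmetric: now the range $(x^{(2)}_{m-j}, \tau^{(2)}]$ consists of $j$ points lying to the \emph{left} of $\tau_0$, on which $f(x^{(2)}_i) = f(\tau_0-) + O(N^{-1})$, so each signal term is $-\frac{\Delta}{2} + o_p(1)$ and $1(x_i^{(2)} \le t) - 1(x^{(2)}_i \le \tau^{(2)}) = -1$, producing $\frac{j\Delta}{2} - (\varepsilon^{(2)}_{m-j+1} + \cdots + \varepsilon^{(2)}_m) + o_p(1)$.

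The technical care needed is that these approximations must hold \emph{conditionally} on $(\hat\alpha^{(1)}, \hat\beta^{(1)}, \hat\tau^{(1)}_N) = (\alpha,\beta,t)$ uniformly over $(\alpha,\beta,t) \in R_N$, so that upon integrating against $G_N$ the $o_p(1)$ terms survive. The key point is that the $o_p(1)$ error in each signal summand is bounded deterministically by $\beta_f j_0 / N + \rho_N$ on $R_N$ (a quantity not depending on the realization within $R_N$), which is $o(1)$ uniformly; and the noise sum $\varepsilon^{(2)}_{m+1} + \cdots + \varepsilon^{(2)}_{m+j}$ does not depend on the first-stage estimates at all once we are on the event that these grid points are excluded from $S^{(1)}$ and included in $S^{(2)}$ — the latter inclusion holding with conditional probability $\to 1$ uniformly over $R_N$ because $\hat\tau^{(1)}_N$ is within $C_\epsilon N^{-\gamma}$ of $\tau$ while the half-width $KN^{-\gamma+\delta}$ of $S^{(2)}$ dominates.

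The main obstacle, modest as it is, is bookkeeping around the discretization: one must verify that for all sufficiently large $N$ the specific grid points $x^{(2)}_{m+1}, \ldots, x^{(2)}_{m+j_0}$ and $x^{(2)}_{m-j_0}, \ldots, x^{(2)}_{m-1}$ are genuinely the $2j_0$ points of $\{x_1, \ldots, x_N\} \setminus S^{(1)}$ flanking $\tau^{(2)}$ — i.e., that no first-stage grid point intrudes within distance $j_0$ of $\tau^{(2)}$ — and that all of them lie within the second-stage window $[\hat\tau^{(1)}_N \pm KN^{-\gamma+\delta}]$ with conditional probability tending to $1$. Both follow from the spacing of $S^{(1)}$ ($\lfloor N/N_1 \rfloor \to \infty$) and from $w(N) = o(1)$ combined with $j_0 = O(1)$, exactly as used in the proof of Theorem \ref{thm:singlerategeneral}. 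Once this indexing is pinned down, the remaining estimates are the routine Lipschitz and $\rho_N \to 0$ bounds already invoked above.
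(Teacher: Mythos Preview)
Your proposal is correct and follows essentially the same approach as the paper's proof: decompose $\hat{\Delta}^{(2)}(x^{(2)}_{m\pm j})$ into a signal sum and a noise sum, bound the signal part via the Lipschitz condition on $f$ together with $|\hat\alpha^{(1)}_N-f(\tau_0-)|+|\hat\beta^{(1)}_N-f(\tau_0+)|\le 2\rho_N$, and verify that the points $x^{(2)}_{m-j_0},\ldots,x^{(2)}_{m+j_0}$ lie in $S^{(2)}$ with probability tending to $1$. The only cosmetic wrinkle is your phrasing of the bookkeeping step (the $x^{(2)}_i$'s are by definition not in $S^{(1)}$, so what needs checking is only their inclusion in the second-stage window, plus the bound $|x^{(2)}_{m\pm j}-\tau^{(2)}|\le 2(j+2)/N$ from the $S^{(1)}$ spacing), but the substance matches the paper exactly.
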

	From this lemma it is straightforward to show the asymptotic distribution of $\lambda_2\left( \tau_N,\hat{\tau}^{(2)}_N \right)$ is the distribution of $L_{\Delta/\sigma}$, the argmax of the random process 
	\begin{eqnarray}
	X_{\Delta/\sigma}(j)=\begin{cases}
	\frac{|j|\Delta}{2}-\varepsilon_{-1}^*-...-\varepsilon_{j}^*\qquad &\text{, for }j<0\\
	0&\text{, for }j=0\\
	\frac{j\Delta}{2}+\varepsilon_1^*+...+\varepsilon_j^*&\text{, for }j>0
	\end{cases}
	\end{eqnarray}
	where the $\{ \varepsilon_j \}_{j\in\mathbb{Z}}$ are iid $N(0,\sigma^2)$ random variables.
	
	For any fixed $\epsilon>0$ and integer $j$, we will show that $\left|\mathbb{P}\left[\hat{m}-m=j\right]- \mathbb{P}\left[L_{\Delta/\sigma}=j\right]\right|\leq \epsilon$ for all sufficiently large $N$. To do this we will first establish 3 probability bounds.
	\newline
	\newline
	\textbf{First Bound}: First we will show that with high probability we can approximate the stochastic process $L_{\Delta/\sigma}$, which has support $\mathbb{Z}$, with a stochastic process $L_{\Delta/\sigma}^{(k)}$, which has a finite support $\mathbb{Z}\cap [-k,k]$.
	\newline
	\newline
	\indent We note that there exists an integer $j_1>|j|$, such that $|L_{\Delta/\sigma}|>j_1$ with probability less than $\epsilon/3$. For any integer $k$ with $k\geq j_1$, define $L_{\Delta/\sigma}^{(k)}:=\underset{|i|\leq k}{\arg\min}\{X_{\Delta/\sigma}(i)\}$. In the case that $|L_{\Delta/\sigma}|\leq k$, we have $L_{\Delta/\sigma}^{(k)}=L_{\Delta/\sigma}$, and using this we can show that $\mathbb{P}[L_{\Delta/\sigma}=j]$ is within $\epsilon/3$ of $\mathbb{P}[L_{\Delta/\sigma}^{(k)}=j]$:
	\begin{eqnarray}\label{eqrep2}
	&&\left|  \mathbb{P}[L_{\Delta/\sigma}=j]-\mathbb{P}[L_{\Delta/\sigma}^{(k)}=j] \right| \nonumber\\
	&=& \left|  \mathbb{P}\bigg[L_{\Delta/\sigma}=j,|L_{\Delta/\sigma}|\leq k\bigg]-\mathbb{P}\bigg[L_{\Delta/\sigma}^{(k)}=j,|L_{\Delta/\sigma}|\leq k\bigg]-\mathbb{P}\bigg[L_{\Delta/\sigma}^{(k)}=j,|L_{\Delta/\sigma}|> k\bigg] \right|\nonumber\\
	&\leq& \left|  \mathbb{P}\bigg[L_{\Delta/\sigma}=j,|L_{\Delta/\sigma}|\leq k\bigg]-\mathbb{P}\bigg[L_{\Delta/\sigma}^{(k)}=j,|L_{\Delta/\sigma}|\leq k\bigg]\right|+\mathbb{P}[|L_{\Delta/\sigma}|>k]\nonumber\\
	&=&\left|  \mathbb{P}\bigg[L_{\Delta/\sigma}^{(k)}=j,|L_{\Delta/\sigma}|\leq k\bigg]-\mathbb{P}\bigg[L_{\Delta/\sigma}^{(k)}=j,|L_{\Delta/\sigma}|\leq k\bigg]\right|+\mathbb{P}[|L_{\Delta/\sigma}|>k]\nonumber\\
	&\leq & 0+\frac{\epsilon}{3}
	\end{eqnarray}
	\newline
	\newline
	\textbf{Second Bound}: We will show that there exists an integer $j_0>j_1$ such that $|\hat{m}-m|\leq j_0$ with probability greater than $1-\frac{\epsilon}{3}$. From our theorem on the rate of convergence, we can find some integer $j_0>j_1$ such that for all sufficiently large $N$,
	\begin{eqnarray}
	\mathbb{P}\left[|\hat{\tau}^{(2)}-\tau|\leq \frac{j_0-2}{N}\right]>1-\frac{\epsilon}{3}.
	\end{eqnarray}
	
	\noindent When $|\hat{\tau}^{(2)}-\tau|\leq \frac{j_0-2}{N}$, we have $|\hat{m}-m|\leq j_0$; first we can show
	\begin{eqnarray}\label{ineqminor}
	\left|x_{\hat{m}}^{(2)}-x_{m}^{(2)}\right|&\leq & \left|\hat{\tau}^{(2)}-\tau\right|+\left|\tau-\tau^{(2)}\right|\nonumber\\
	&\leq & \frac{j_0-2}{N}+\frac{2}{N}\nonumber\\&=&\frac{j_0}{N},
	\end{eqnarray}
	and second, because the $\left\{ x^{(2)}_1,x^{(2)}_2,... \right\}$ grid is just the equally spaced $\left\{ 1/N,2/N,...,N/N \right\}$ with some points taken out, the result of (\ref{ineqminor}) implies $|\hat{m}-m|\leq j_0$. Hence
	\begin{eqnarray}
	\mathbb{P}\left[ |\hat{m}-m|\leq j_0\right]&\geq& \mathbb{P}\left[|\hat{\tau}^{(2)}-\tau|\leq \frac{j_0-2}{N}\right]\nonumber\\
	&>& 1-\frac{\epsilon}{3}
	\end{eqnarray}
	\textbf{Third Inequality}: Define $\hat{\tau}^{(2)}_{j_0}$ to be the minimizer of $\hat{\Delta}^{(2)}(\cdot)$ on the set $\left\{x_{m-j_0}^{(2)},x_{m-j_0+1}^{(2)},...,x_{m+j_0}^{(2)}\right\}$, and let $\hat{m}_{j_0}$ be its corresponding index such that $\hat{\tau}^{(2)}_{j_0}=x^{(2)}_{\hat{m}_{j_0}}$. In the case when $|\hat{m}-m|\leq j_0$, then $\hat{\tau}^{(2)}_{j_0}$ would be equal to $\hat{\tau}^{(2)}$, and $\hat{m}=\hat{m}_{j_0}$. Using this notation we can obtain the following bound:
	\begin{align}\label{eqrep1}
	&\left|\mathbb{P}\left[\hat{m}-m=j\right]-\mathbb{P}\left[\hat{m}_{j_0}-m=j\right]\right|\nonumber\\
	&=  \left|\mathbb{P}\bigg[ \hat{m}-m=j,|\hat{m}-m|\leq j_0 \bigg]-\mathbb{P}\bigg[\hat{m}_{j_0}-m=j,|\hat{m}-m|\leq j_0\bigg]-\mathbb{P}\bigg[\hat{m}_{j_0}-m=j,|\hat{m}-m|> j_0\bigg]\right|\nonumber\\
	&=  \left|\mathbb{P}\bigg[ \hat{m}_{j_0}-m=j,|\hat{m}-m|\leq j_0 \bigg]-\mathbb{P}\bigg[\hat{m}_{j_0}-m=j,|\hat{m}-m|\leq j_0\bigg]-\mathbb{P}\bigg[\hat{m}_{j_0}-m=j,|\hat{m}-m|> j_0\bigg]\right|\nonumber\\
	&\leq\mathbb{P}\Bigg[ |\hat{m}-m|>j_0 \Bigg]\nonumber\\
	&\leq \epsilon/3\nonumber\\
	\end{align}
	
	\indent Consider the stochastic process $\hat{\Delta}^{(2)}(x^{(2)}_{m+i})$ for $i\in \{-j_0,...,0,...,j_0\}$. The previous lemma showed that, as a random variable in $\mathbb{R}^{2j_0+1}$, $\left(\hat{\Delta}^{(2)}(x^{(2)}_{m-j_0}),...,\hat{\Delta}^{(2)}(x^{(2)}_{m+j_0})\right)$  converges in distribution to $$(X_{\Delta/\sigma}(-j_0),...,X_{\Delta/\sigma}(j_0)).$$
	Also consider the function $\text{Ind}_{min}:\mathbb{R}^{2j_0+1}\to \mathbb{Z}$, defined as 
	\begin{eqnarray}
	\text{Ind}_{min}(a_1,...,a_{2j_0+1})=\left(\underset{i=1,...,2j_0+1}{\arg\min}(a_i)\right)-(j_0+1).
	\end{eqnarray}
	It can be easily checked that $\text{Ind}_{min}$ is a continuous function, and by definition, we also have
	\begin{eqnarray}
	L^{j_0}_{\Delta/\sigma}&=& \text{Ind}_{min}(X_{\Delta/\sigma}(-j_0),...,X_{\Delta/\sigma}(j_0))\nonumber\\
	\hat{m}_{j_0}-m&=&\text{Ind}_{min} \left(\hat{\Delta}^{(2)}(x^{(2)}_{m-j_0}),...,\hat{\Delta}^{(2)}(x^{(2)}_{m+j_0})\right).
	\end{eqnarray}
	Hence, by the continuous mapping theorem we have $\hat{m}_{j_0}-m$ converging to $L^{j_0}_{\Delta/\sigma}$ in distribution. For sufficiently large $N$, the absolute difference between $\mathbb{P}[L^{j_0}_{\Delta/\sigma}=j]$ and $\mathbb{P}[\hat{m}_{j_0}-m=j]$ will be less than $\epsilon/3$.
	\newline
	\newline
	\indent Combining what we have just shown, for sufficiently large $N$ we have
	\begin{eqnarray}
	&&\left|\mathbb{P}\left[\hat{m}-m=j\right]- \mathbb{P}\left[L_{\Delta/\sigma}=j\right]\right|\nonumber\\
	&\leq & \big|\mathbb{P}\left[\hat{m}-m=j\right]-\mathbb{P}\left[\hat{m}_{j_0}-m=j\right]\big|+\big|\mathbb{P}\left[\hat{m}_{j_0}-m=j\right]-\mathbb{P}\left[L^{j_0}_{\Delta/\sigma}=j\right]\big|\nonumber\\
	&&+\big|\mathbb{P}\left[L^{j_0}_{\Delta/\sigma}=j\right]-\mathbb{P}[L_{\Delta/\sigma}=j]\big|\nonumber\\
	&\leq& \frac{\epsilon}{3}+\frac{\epsilon}{3}+\frac{\epsilon}{3}
	\end{eqnarray}
\end{proof}

\noindent \textbf{Proof of Lemma \ref{lem:randomwalkconv}}
\begin{proof} First note that with probability increasing to 1, $x_{m-j_0}^{(2)},x^{(2)}_{m-j_0+1},...,x^{(2)}_{m+j_0}$ are all contained inside $S^{(2)}$, and this fact will be shown first. Since $\hat{\tau}^{(1)}-\tau=O_p(N^{-\gamma})$, for any $\epsilon>0$ it is possible to find a constant $C>0$ such that
	\begin{eqnarray}
	\mathbb{P}\left[  \hat{\tau}^{(1)}-CN^{-\gamma}\leq \tau\leq \hat{\tau}^{(1)}+CN^{-\gamma}\right]>1-\epsilon
	\end{eqnarray}
	for all sufficiently large $N$. Additionally, for all sufficiently large $N$ we have $\frac{4+j_0}{N}\leq (KN^\delta-C)N^{-\gamma}$, and which means that if $|\tau^{(1)}-\tau|\leq CN^{-\gamma}$ then
	\begin{eqnarray}
	\hat{\tau}^{(1)}-KN^{-\gamma+\delta}&=&\hat{\tau}^{(1)}-(KN^{-\gamma}-C)N^{-\gamma}-CN^{-\gamma}\nonumber\\
	&\leq& \tau-\frac{4+2j_0}{N}\nonumber\\
	\hat{\tau}^{(1)}+KN^{-\gamma+\delta}&=&\hat{\tau}^{(1)}+(KN^{-\gamma}-C)N^{-\gamma}+CN^{-\gamma}\nonumber\\
	&\geq&\tau+\frac{4+2j_0}{N}
	\end{eqnarray}
	Finally, for all sufficiently large $N$, we have $\lfloor N^{1-\gamma}\rfloor > 2$, i.e. the first stage subsample chooses points which are spaced more than $2/N$ points apart. Hence,
	\begin{eqnarray}
	x^{(2)}_{m-j_0}&\geq & x^{(2)}_m-2\left(\frac{j_0+2}{N}\right)=\tau^{(2)}-2\left(\frac{j_0+2}{N}\right)\nonumber\\&\geq &\tau-\frac{2j_0+4}{N}\nonumber\\
	x^{(2)}_{m+j_0}&\leq & x^{(2)}_m+2\left(\frac{j_0+2}{N}\right)\nonumber\\
	&\leq & \tau+\frac{2j_0+4}{N},
	\end{eqnarray}
	which leads to the conclusion that for all $N$ large enough, we have
	\begin{eqnarray}
	1-\epsilon&< &\mathbb{P}\left[  \hat{\tau}^{(1)}-CN^{-\gamma}\leq \tau\leq \hat{\tau}^{(1)}+CN^{-\gamma}\right]\nonumber\\
	&\leq & \mathbb{P}\left[  \hat{\tau}^{(1)}-KN^{-\gamma+\delta}\leq \tau-\frac{2j_0+4}{N}< \tau+\frac{2j_0+4}{N}\leq  \hat{\tau}^{(1)}+KN^{-\gamma+\delta}\right]\nonumber\\
	&\leq &\mathbb{P}\left[ x^{(2)}_{m-j_0}\geq \hat{\tau}^{(1)}-KN^{-\gamma+\delta}\text{ and } x^{(2)}_{m+j_0}\leq \hat{\tau}^{(1)}+KN^{-\gamma+\delta}\right]\nonumber\\
	&=&\mathbb{P}\left[ x^{(2)}_{m-j_0}\text{ and } x^{(2)}_{m+j_0}\text{ are in }S^{(2)}\right]
	\end{eqnarray}
	Therefore, consider the case for which $x^{(2)}_{m-j_0}$ through $x^{(2)}_{m+j_0}$ are contained in $S^{(2)}$. Under this condition we have $\hat{\Delta }^{(2)}\left(x^{(2)}_m\right)=0$ by simple calculation, and for any $0<j\leq j_0$,
	\begin{eqnarray}
	&&\left|\hat{\Delta}^{(2)}(x^{(2)}_{m+j})-\left(\frac{j\Delta}{2}+\sum_{i=1}^j\epsilon^{(2)}_{m+i}\right)\right|\nonumber\\
	&=&\left| \sum_{i:x^{(2)}_i\in S^{(2)}\cap \big(x^{(2)}_m,x^{(2)}_{m+j}\big]}\left( Y_i^{(2)}-\frac{\hat{\alpha}_N+\hat{\beta}_N}{2}\right)-\left(\frac{j\Delta}{2}+\sum_{i=1}^j\epsilon^{(2)}_{m+i}\right)\right|\nonumber\\
	&=&\left| \sum_{i=1}^j\left( f(x_{m+i}^{(2)})-\frac{\hat{\alpha}_N^{(1)}+\hat{\beta}_N^{(1)}}{2}\right)-\frac{j\Delta}{2}\right|\nonumber\\
	&=& \Bigg|  \sum_{i=1}^j\Bigg[\left( f(x_{m+i}^{(2)})-f(\tau+)\right)+\left(\frac{f(\tau+)+f(\tau-)}{2}-\frac{\hat{\alpha}_N^{(1)}+\hat{\beta}_N^{(1)}}{2}\right)+\nonumber\\
	&&\left( \frac{f(\tau+)-f(\tau-)}{2}-\frac{\Delta}{2} \right)\Bigg]\Bigg|\nonumber\\
	&\leq& \left|   \sum_{i=1}^j\left( f(x_{m+i}^{(2)})-f(\tau+)\right)\right| +\frac{j_0}{2}\left(\left| \hat{\alpha}_N^{(1)}-f(\tau-)\right|+\left|\hat{\beta}_N^{(1)}-f(\tau+)\right|\right)
	\end{eqnarray}
	For the first term above, using an earlier argument we make the case that for sufficiently large $N$ we have $\lfloor N/N_1\rfloor>2$ and  $x^{(2)}_{m+i}\leq \tau+\frac{2i+4}{N}$, hence
	\begin{eqnarray}
	\left|   \sum_{i=1}^j\left( f(x_{m+i}^{(2)})-f(\tau+)\right)\right|&\leq &   \sum_{i=1}^j\beta_f \left| x_{m+i}^{(2)}-\tau\right|\nonumber\\
	&\leq& \beta_f\sum_{i=1}^j \frac{2i+4}{N}\nonumber\\
	&\leq & \frac{\beta_f}{N}(2j_0^2+4j_0)
	\end{eqnarray}
	For the second term, it was shown earlier that both $\left| \hat{\alpha}_N^{(1)}-f(\tau-)\right|$ and $\left|\hat{\beta}_N^{(1)}-f(\tau+)\right|$ are $O_p\left(h\vee \sqrt{\frac{1}{N_1 h}} \right)$, and hence, so is their sum. Overall, this shows that for sufficiently large $N$, $\left|\hat{\Delta}^{(2)}(x^{(2)}_{m+j})-\left(\frac{j\Delta}{2}+\sum_{i=1}^j\epsilon^{(2)}_{m+i}\right)\right|$ is (uniformly for all $0\leq j\leq j_0$) bounded above by the random variable 
	\begin{equation}\label{diffbound}
	\frac{\beta_f}{N}(2j_0^2+4j_0)+\frac{j_0}{2}\left(\left| \hat{\alpha}_N^{(1)}-f(\tau-)\right|+\left|\hat{\beta}_N^{(1)}-f(\tau+)\right|\right),
	\end{equation}
	which is $o_p(1)$. Similarly, again for any $0<j\leq j_0$,
	\begin{eqnarray}
	&&\left|\hat{\Delta}^{(2)}(x^{(2)}_{m-j})-\left(\frac{j\Delta}{2}-\sum_{i=1}^j\epsilon^{(2)}_{m-i+1}\right)\right|\nonumber\\
	&\leq& \left|   \sum_{i=0}^{j-1}\left( f(x_{m-i}^{(2)})-f(\tau)\right)\right| +\frac{j_0}{2}\left(\left| \hat{\alpha}_N^{(1)}-f(\tau-)\right|+\left|\hat{\beta}_N^{(1)}-f(\tau+)\right|\right)
	\end{eqnarray}
	which is again uniformly bounded by the expression in (\ref{diffbound}).
	\newline
	\newline
	Therefore, given any $\epsilon>0$, we have
	\begin{eqnarray}
	&&\mathbb{P}\left[ \left|\hat{\Delta}^{(2)}(x^{(2)}_{m+j})-\left(\frac{j\Delta}{2}+\sum_{i=1}^j\epsilon^{(2)}_{m+i}\right)\right|\geq \epsilon\right]\nonumber\\
	&\leq & \mathbb{P}[x^{(2)}_{m-j_0}\notin S^{(2)}\text{ and/or }x^{(2)}_{m+j_0}\notin S^{(2)}]+\nonumber\\
	&&\mathbb{P}\left[ x^{(2)}_{m-j_0},x^{(2)}_{m+j_0}\in S^{(2)}\text{, and }\left|\hat{\Delta}^{(2)}(x^{(2)}_{m+j})-\left(\frac{j\Delta}{2}+\sum_{i=1}^j\epsilon^{(2)}_{m+i}\right)\right|\geq \epsilon\right]\nonumber\\
	&\leq &\mathbb{P}[x^{(2)}_{m-j_0}\notin S^{(2)}\text{ and/or }x^{(2)}_{m+j_0}\notin S^{(2)}]+\nonumber\\
	&&\mathbb{P}\left[\frac{\beta_f}{N}(2j_0^2+4j_0)+\frac{j_0}{2}\left(\left| \hat{\alpha}_N^{(1)}-f(\tau-)\right|+\left|\hat{\beta}_N^{(1)}-f(\tau+)\right|\right)\geq \epsilon\right]\nonumber\\
	&\to & 0+0\qquad\text{ for all }0<j\leq j_0
	\end{eqnarray}
	and similarly, 
	\begin{eqnarray}
	&&\mathbb{P}\left[ \left|\hat{\Delta}^{(2)}(x^{(2)}_{m})\right|\geq \epsilon\right]\to 0\nonumber\\&&
	\mathbb{P}\left[ \left|\hat{\Delta}^{(2)}(x^{(2)}_{m-j})-\left(\frac{j\Delta}{2}+\sum_{i=1}^j\epsilon^{(2)}_{m-i+1}\right)\right|\geq \epsilon\right]\to 0
	\end{eqnarray}
\end{proof}

\section{Supplement Part B (Multiple Change Point Problem)}\label{sec:supplementB}
Here we will provide proofs for the results presented in Section \ref{sec:multiplechangepointsintro} of the main paper. The model setup will be the same as that section.

\subsection{Detailed Computational Time Analysis}\label{sec:time_order_longer}
In this section we will give a more detailed analysis on the computational time that was briefly touched upon in Section \ref{sec:BSWBS1} of the main paper. We will make the same assumption that $\delta_N/N^{1-\Xi}\to C_1$ and $J(N)/N^\Lambda\to C_2$ for some $\Lambda\in[0,\Xi]$ and some positive constants $C_1,C_2$. As a reminder, for intelligent sampling with BinSeg at stage 1, conditions (M6 (BinSeg)) and (M7 (BinSeg)) automatically impose the condition that $\Lambda\leq\Xi<1/7$.
\newline
\newline 
\indent The BinSeg procedure, when applied to a data sequence of $n$ points, takes $O(N\log (N))$ time to compute (see \cite{fryzlewicz2014wild}). Since first stage of intelligent sampling involves applying BinSeg to $O(N^\gamma)$ points, it therefore takes $O(N^\gamma\log(N))$ time to obtain the first stage estimators. After the BinSeg estimates are obtained, we use the method described in Section \ref{sec:refitting} to upgrade them to ones whose 
asymptotic distributions are known, this subsequent step only involving least squares fitting upon $O(N^\gamma)$ points and therefore requiring only $O(N^\gamma)$ computational time, leaving the total time as $O(N^\gamma\log(N))$ up to this point. 
\newline
\newline
\indent From here on, we  use Theorem \ref{thm:reconsistentineq} and construct confidence intervals $\left[ \hat{\tau}^{(1)}_j\pm\left(Q_{\Delta_j,\sigma,\sigma}(\sqrt[\hat{J}]{1-\alpha})+1\right)\lfloor \frac{N}{N_1}\rfloor \right]$ for $j=1,\dots,\hat{J}$. Lemma \ref{lem:quantbound} (see Section \ref{sec:proofquantbound}, Supplement Part C) tells us that $ Q_{\Delta_j,\sigma,\sigma}(\sqrt[\hat{J}]{1-\alpha})$ can be bounded by a multiple of $\log(\hat{J})$, and therefore conditional on the value of $\hat{J}$, the second stage of intelligent sampling will involve least squares fitting on $O\left( \hat{J}N^{1-\gamma}\log(\hat{J}) \right)$ points, taking $O\left( \hat{J}N^{1-\gamma}\log(\hat{J}) \right)$ time to compute. Although the distribution of the $\hat{J}$ obtained from BinSeg is not fully known, a consequence of Theorem \ref{frythm} is that $\mathbb{P}\left[ \hat{J}=J \right]\geq 1-CN^{-1}$ for some constant $C$, and therefore
\begin{eqnarray}
\mathbb{E}\left[ \hat{J}\log(\hat{J}) \right]\leq J\log(J)+C\frac{N\log(N)}{N}=O(J\log(N)).
\end{eqnarray}
This leads to the conclusion that the second stage has a computational time that is on average $O\left( JN^{1-\gamma}\log(N) \right)=O\left( N^{1-\gamma+\Lambda}\log(N) \right)$, and the entire procedure takes $O\left( N^{\gamma\vee(1-\gamma+\Lambda)}\log(N) \right)$ time.
\newline
\newline
\indent Using this result we could choose an optimal $\gamma $ and obtain the optimal computational time for each value of $\Xi\in [0,1/7)$ and $\Lambda\in [0,\Xi]$. This can be done by setting the order of the first stage ($O(N^\gamma\log(N))$) to equal the order of time for the second stage ($O(N^{1-\gamma+\Lambda}\log(N))$) which would be $\gamma=\frac{1+\Lambda}{2}$. However Condition (M7 (BinSeg)) prevents this from being done everywhere by placing the restriction that $\gamma>7\Xi$. Thus $\gamma_{min}$ would be the maximum of $\frac{1+\Lambda}{2}$ and $7\Xi+\eta$ ($\eta$ any tiny positive value), resulting in order $ N^{\gamma_{min}\vee(1-\gamma_{min}+\Lambda)}\log(N)$ computational time. 
\begin{itemize}
	\item For $\Xi\in [0,1/14)$, we have $\frac{1+\Lambda}{2} <4\Xi$ and hence $\gamma_{min}=\frac{1+\Lambda}{2}$ and the computational time is order $N^{(1+\Lambda)/2}\log(N)$.
	\item For $\Xi\in [1/13,1/7)$, we have $\frac{1+\Lambda}{2} >7\Xi$, hence $\gamma_{min}=\frac{1+\Lambda}{2}$ and the computational time is order $N^{4\Xi+\eta}\log(N)$.
	\item For $\Xi \in [1/14,1/13)$, $\gamma_{min}$ can be either $\frac{1+\Lambda}{2}$ or $7\Xi+\eta$, whichever is greater, and the computational time would be either $N^{(1+\Lambda)/2}\log(N)$ or $N^{7\Xi+\eta}\log(N)$. respectively.
\end{itemize}
\begin{table}[H]
	\caption{\label{table-time-binseg}Table of $\gamma_{min}$ and computational times for various values of $\Xi$. Also shown are their values for extreme value of $\Lambda$ ($\Lambda=0$ and $\Lambda=\Xi$). For $\Xi\geq 1/7$ no values of $\gamma$ will allow us to obtain consistency from Theorem \ref{frythm}}
\end{table}
\begin{table}[H]
	\begin{minipage}{0.99\textwidth}
		\centering
		\begin{tabular}{|c|c|c|c|c|}
			\hline
			$\Xi$ & $[0,1/14)$ & $[1/14,1/13)$ & [1/13,1/7) & [1/7,1] \\
			\hline
			$\gamma_{min}$ & $\frac{1+\Lambda}{2}$ & $\max\left\{\frac{1+\Lambda}{2},7\Xi+\eta \right\}$ & $7\Xi+\eta$ & N/A\\
			\hline
			Order of Time & $N^{(1+\Lambda)/2}\log(N)$ & $\max\{ N^{(1+\Lambda)/2},N^{7\Xi+\eta} \}\cdot\log(N$)  & $N^{7\Xi+\eta}\log(N)$ & N/A \\
			\hline
			\hline
			$\gamma_{min}$ ($\Lambda=0$) & $\frac{1}{2}$ & $7\Xi+\eta$ & $7\Xi+\eta$ & N/A\\
			\hline
			Time ($\Lambda=0$) & $N^{1/2}\log(N)$ & $N^{7\Xi+\eta} \log(N$)  & $N^{7\Xi+\eta}\log(N)$ & N/A \\
			\hline
			\hline
			$\gamma_{min}$ ($\Lambda=\Xi$) & $\frac{1+\Xi}{2}$ & $\frac{1+\Xi}{2}$ & $7\Xi+\eta$ & N/A\\
			\hline
			Time ($\Lambda=\Xi$) & $N^{(1+\Xi)/2}\log(N)$ & $N^{(1+\Xi)/2} \log(N)$  & $N^{7\Xi+\eta}\log(N)$ & N/A \\
			\hline
		\end{tabular}
	\end{minipage}
\end{table}

\begin{figure}[!h]
	
	\begin{center}
		\begin{overpic}[scale=0.32,tics=10]{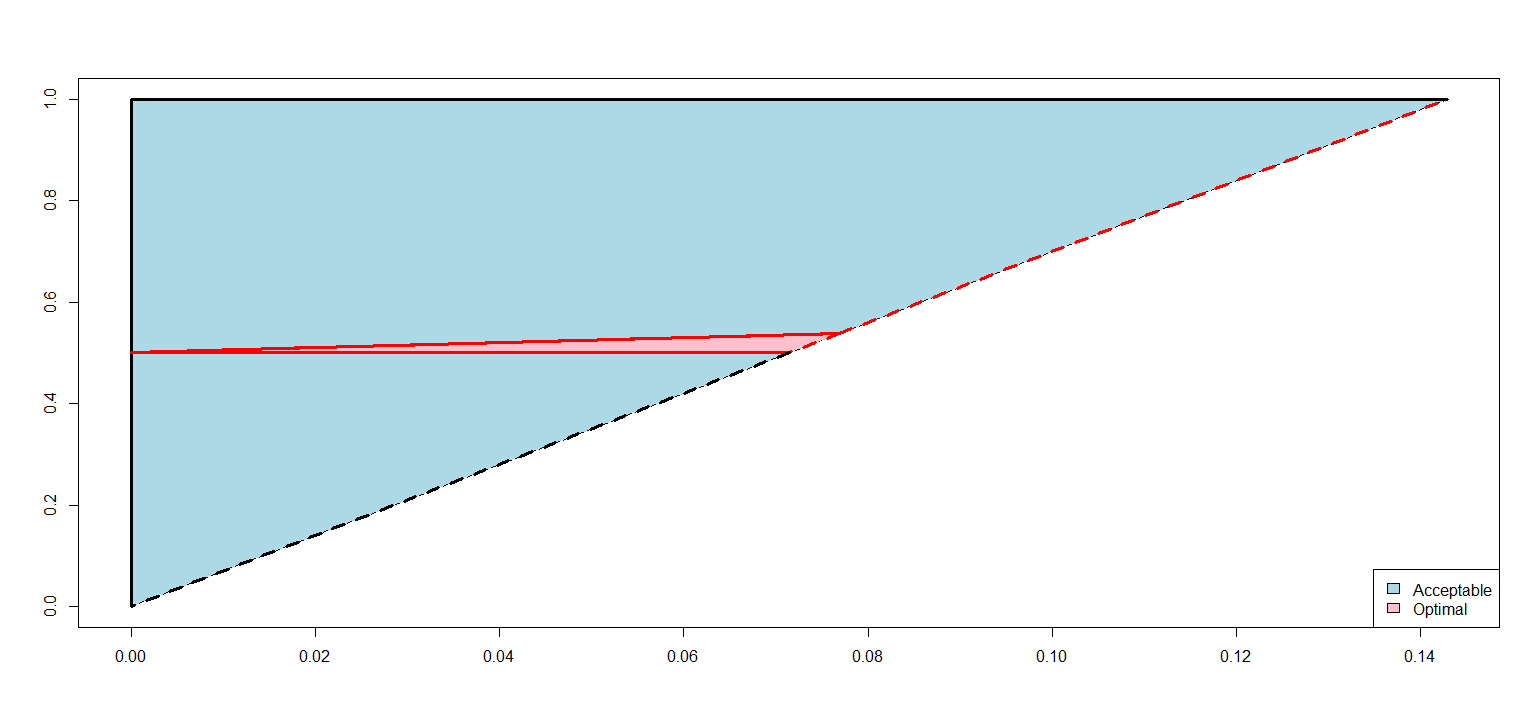}
			\put (55,-1){\Large{$\Xi$}}
			\put (0,24){\Large$\gamma$}
			\put (35,28){$ \color{red} \gamma=\frac{1+\Xi}{2} $}
			\put (20,21){$ \color{red} \gamma=\frac{1}{2} $}
			\put (55,22){$\gamma=7\Xi $}
			\put (44,45){$\gamma$ vs $\Xi$ for BinSeg}
		\end{overpic}
	\end{center}
	\caption{Blue triangle encompasses all valid values of $\gamma$ vs $\Xi$ as set by (M7 (BinSeg)). Pink region, solid red lines, and dotted red lines denotes $\gamma_{min}$ for each $\Xi$ ($\gamma_{min}$ can vary for different values of $\Lambda$ even when $\Xi$ is fixed, hence the red region).}\label{fig-binseg-time}  
\end{figure}

It can be seen that the biggest decrease in order of average computational time is for small values of $\Xi$ and $\Lambda$, and in fact for $\Xi<1/14$ and $\Lambda=0$ it is $O(\sqrt{N}\log(N))$, which is marginally slower than intelligent sampling on a single change point. For larger values of $\Xi$, there is less than a square root drop in $N\log(N)$ (order of using BinSeg on the whole data) to $N^{\gamma_{min}}\log(N)$ (intelligent sampling), to the point where as $\Xi\to 1/7$, both procedures take near the same order of time.

\begin{remark}
	Note that when implementing the intelligent sampling strategy knowledge of $\Xi$ is desirable, but in practice, its value is unknown. If one is willing to impose an upper bound on $\Xi$, intelligent sampling can be implemented with this (conservative) upper-bound.
\end{remark}

\begin{remark} 
	{\bf Multistage Intelligent Sampling in the multiple change-point problem:} We can also consider intelligent sampling with multiple $(>2)$ stages of estimation for model (\ref{model}). An $m$-stage intelligent sampling procedure would entail:
	\newline
	{\bf a.}  Take a uniform subsample $Y_{\lfloor N/N_1\rfloor},Y_{2\lfloor N/N_1\rfloor},Y_{3\lfloor N/N_1\rfloor},\dots$, where $N_1=KN^{\gamma}$ for some $K>1,\gamma\in (0,1)$, to obtain estimates $\hat{J}$, $\hat{\tau}^{(1)}_1,\dots,\hat{\tau}_{\hat{J}}^{(1)}$, and confidence intervals $\left[ \hat{\tau}^{(1)}_j-w(N),\hat{\tau}_j^{(1)}+w(N) \right]$, $1\leq j\leq\hat{J}$, for the change points.
	\newline
	{\bf b.} On each interval $\left[ \hat{\tau}^{(1)}_j-w(N),\hat{\tau}_j^{(1)}+w(N) \right]$ for $1\leq j\leq \hat{J}$ perform the $(m-1)$ stage intelligent sampling procedure for the single change point (as described in Remark \ref{rem:singlemult}). 
\end{remark}

\subsection{Sample Size Considerations from a Methodological Angle}\label{sec:compmethod}

In the asymptotic setting of Section \ref{sec:time_order_longer}, we were concerned about minimizing the \emph{order} of computational time required 
for locating the change points through intelligent sampling, assuming that certain important quantities were known. The focus in this section is on
obtaining explicit expressions for the minimum sample size that the procedure requires to correctly identify the underlying change points. Obviously,
the minimum sample size is the key driver in the computational time formlas provided, albeit not the single one, and also addresses computer memory
usage issues. In order to develop explicit expressions for the total computational time, one would need to know exactly how fast BinSeg runs versus data size, in terms of its model parameters\footnote{For example, BinSeg will generally terminate in fewer steps on a dataset with fewer change points than on another dataset of the same length but more change points.} and this is unavailable as an exact expression. Therefore, we look at minimizing the subsample utilized as a proxy, with the added benefit of deriving the least amount of data that must be held in memory at a single time.

We have already investigated the optimal order of the first stage subsample, denoted $N_1$, and demonstrated in Section \ref{sec:time_order_longer} that in the best cases the size of both the first and second stage subsamples scales as $\sqrt{N}\log(N)$. 
Although valid, these previous analyses only apply to an abstract asymptotic setting. In practice, given a data set with fixed (large) $N$, a different approach is needed to determine the optimal number of points to use at each different stage. 

Given the number of change points and their associated SNR's,  we show below how to optimally allocate samples in order to minimize the total number used, for intelligent sampling. For simplicity we assume the error terms satisfy condition (M4 (BinSeg)). We start with the two-stage intelligent sampling procedure and assume that in stage 1, roughly $N_1$ points are used for BinSeg and another $N_1$ points, for the calibration steps described in Section \ref{sec:refitting}. 
At stage 2, we work with $\hat{J}$ (which is $\approx J$) intervals. Using Theorem \ref{thm:reconsistentineq}, setting the width of the second stage intervals to be $\left(Q_{\Delta,\sigma,\sigma}\left( 1-\frac{\alpha}{J} \right)+1\right)\left\lfloor\frac{N}{N_1}\right\rfloor$ for a small $\alpha$ will ensure that they cover the true change points with high probability (close to $1-\alpha$ if not greater).
Assuming $N_1$ is large enough so that the first stage is accurate (ie $\hat{J}=J$ and $\max_j|\hat{\tau}^{(1)}_j-\tau_j|$ is small with high probability)
, the number of points used in the two stages, combined, is approximately
\begin{eqnarray}\label{eq:totalpoints}
2N_1+\frac{2\left(\sum_{j=1}^J \left(Q_{\Delta_j,\sigma,\sigma}\left(1-\frac{\alpha}{J}\right)+1\right)\right)N}{N_1}.
\end{eqnarray}
This presents a trade-off, e.g. if we decrease $N_1$ by a factor of 2, the second term in (\ref{eq:totalpoints}) increases by a factor of 2. To use a minimal number of points in both stages, we need to set $N_1=\sqrt{N\sum_{j=1}^J \left(Q_{\Delta_j,\sigma,\sigma}\left(1-\frac{\alpha}{J}\right)+1\right)}$. In turn this yields a minimum of $4\sqrt{N\sum_{j=1}^J \left(Q_{\Delta_j,\sigma,\sigma}\left(1-\frac{\alpha}{J}\right)+1\right)}$ when plugged into (\ref{eq:totalpoints}). For any given values of $N$, $J$, and SNR, this provides a lower bound on the minimum number of points that intelligent sampling must utilize, and Tables \ref{tab:percentsmallN} and \ref{tab:percentmiddleN} depict some of these lower bounds for a select number of these parameters. 

\begin{minipage}[t]{.45\linewidth}
	\centering
	\begin{figure}[H]
		\caption{\label{tab:percentsmallN}For $N=1.5\times 10^7$, the minimal percentage of data that must be used for various values of $J$ and SNR, assuming all jumps have equal SNR and $\alpha=0.01$. 
		}
		~\newline
		\centering
		\includegraphics[scale=0.4]{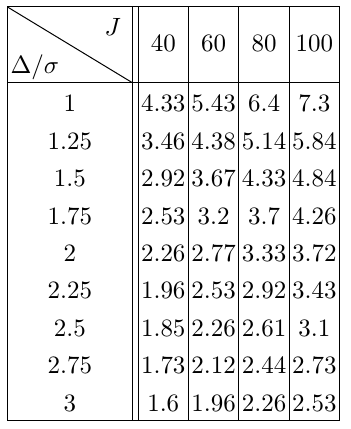}
	\end{figure}
	
	%
\end{minipage}
\begin{minipage}[t]{0.04\linewidth}
	~~	
\end{minipage}
\begin{minipage}[t]{0.45\linewidth}
	\begin{figure}[H]
		\caption{\label{tab:percentmiddleN}For $N=1.5\times 10^{10}$, the minimal percentage of data that must be used for various values of $J$ and SNR, assuming all jumps have equal SNR and $\alpha=0.01$. 
		}
		\centering
		~\newline
		\includegraphics[scale=0.4]{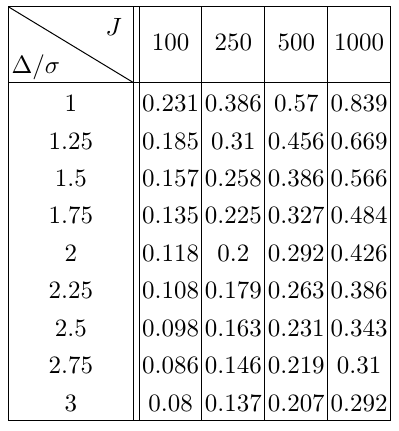}
	\end{figure}
\end{minipage}


Note that while the fraction of points used for the larger $N$ above is smaller, in absolute terms, this still translates to very large subsamples: even just 0.57\% of $1.5\times 10^{10}$ (for SNR 1 and $J=500$ on Table \ref{tab:percentmiddleN}) is a very large dataset of $8.6\times 10^7$, which almost requires server type
computer capabilities. The situation becomes more tenuous for larger values of $N$. This suggests that \emph{a larger number of stages} is in order for sample sizes of $N$ exceeding $10^{10}$. 
\newline
\newline
For a three-stage implementation, suppose $\approx N_1$ points are utilized at stage 1, letting us form simultaneous confidence intervals that are (approximately) of the form
\begin{eqnarray}
\left[ \hat{\tau}^{(1)}_j- \sum_{j=1}^J \left(Q_{\Delta_j,\sigma,\sigma}\left(1-\frac{\alpha}{J}\right)+1\right)\frac{N}{N_1},\hat{\tau}^{(1)}_j+ \sum_{j=1}^J \left(Q_{\Delta_j,\sigma,\sigma}\left(1-\frac{\alpha}{J}\right)+1\right)\frac{N}{N_1}\right]\qquad\text{for }j=1,\dots,J\nonumber
\end{eqnarray}
(assuming $\hat{J}=J$ for simplification). At stage 2, suppose at the $j$'th confidence interval we subsample roughly $N_2^{(j)}$ points, giving us a subsample which skips approximately every $2Q_{\Delta_j,\sigma,\sigma}\left(1-\frac{\alpha}{J}\right)\frac{N}{N_1N_2^{(j)}}$ points. Hence, at stage 3 we work with confidence intervals that are (approximately) of the form
\begin{eqnarray}
\left[ \hat{\tau}^{(2)}_j\pm\left(Q_{\Delta_j,\sigma,\sigma}\left(1-\frac{\alpha}{J}\right)+1\right)\left( 2\left(Q_{\Delta_j,\sigma,\sigma}\left(1-\frac{\alpha}{J}\right)+1\right)\frac{N}{N_1N_2^{(j)}} \right)\right]
\end{eqnarray}
for $j=1,\dots,J$. In total all three stages use around a total of 
\begin{eqnarray}\label{eq:threestageN}
2N_1+\sum_{j=1}^J N_2^{(j)}+\frac{4N}{N_1}\left( \sum_{j=1}^J \frac{\left(Q_{\Delta_j,\sigma,\sigma}\left(1-\frac{\alpha}{J}\right)+1\right)^2}{N_2^{(j)}} \right)
\end{eqnarray}
points. This expression is minimized by setting 
\begin{eqnarray}N_1=N^{1/3}\left(\sum_{k=1}^J \left(Q_{\Delta_j,\sigma,\sigma}\left(1-\frac{\alpha}{J}\right)+1\right)\right)^{2/3}\qquad\text{ and }\nonumber\\ N_2^{(j)}=2N^{1/3}\frac{Q_{\Delta_j,\sigma,\sigma}\left(1-\frac{\alpha}{J}\right)+1}{\left(\sum_{k=1}^J \left(Q_{\Delta_k,\sigma,\sigma}(\alpha,J)+1\right)\right)^{1/3}}
\end{eqnarray}
for $j=1,\dots,J$, which in turn gives a minimum of $6N^{1/3}\left(\sum_1^J \left(Q_{\Delta_j,\sigma,\sigma}\left(1-\frac{\alpha}{J}\right)+1\right)\right)^{2/3}$ for (\ref{eq:threestageN}). A similar analysis on a four-stage procedure would have the optimal subsample allocation as 
\begin{eqnarray}N_1=N^{1/4}\left(\sum_{k=1}^J \left(Q_{\Delta_j,\sigma,\sigma}\left(1-\frac{\alpha}{J}\right)+1\right) \right)^{3/4}\qquad\text{ and }\nonumber\\ N_2^{(j)}=N_3^{(j)}=2\left(Q_{\Delta_j,\sigma,\sigma}\left(1-\frac{\alpha}{J}\right)+1\right)N^{1/4}\left(\sum_{k=1}^J \left(Q_{\Delta_k,\sigma,\sigma}(\alpha,J)+1\right) \right)^{-1/4}\nonumber\\\text{ for }j=1,\dots,J\end{eqnarray} 
which yields a total of $8N^{1/4}\left( \sum_{k=1}^J \left(Q_{\Delta_j,\sigma,\sigma}\left(1-\frac{\alpha}{J}\right)+1\right) \right)^{3/4}$ points utilized.
~\newline
\begin{minipage}[t]{0.45\linewidth}
	\centering
	\begin{figure}[H]
		\centering
		\caption{\label{tab:threestagemiddleeN} For $N=1.5\times 10^{10}$,
			minimal percentage of the data that must be used for a three stage
			procedure, assuming all jumps have equal SNR and $\alpha=0.01$. 
		}
		~\newline 
		\includegraphics[scale=0.4]{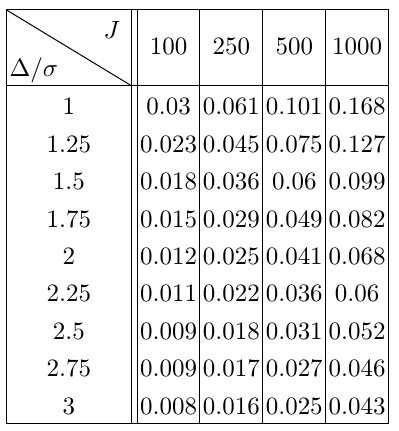}
	\end{figure}
\end{minipage}
\begin{minipage}[t]{0.04\linewidth}
	~~
\end{minipage}
\begin{minipage}[t]{0.45\linewidth}
	\begin{figure}[H]
		\centering
		\caption{\label{tab:fourstagelargeN} For $N=1.5\times 10^{12}$, minimal percentage of the data 
			that must be used for a four stage procedure,
			assuming all jumps have equal SNR and $\alpha=0.01$. 
		}
		\includegraphics[scale=0.4]{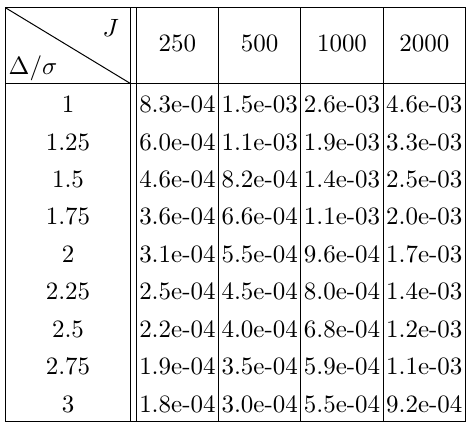}
	\end{figure}
	
\end{minipage}
~\newline
Comparing Figures \ref{tab:percentmiddleN} and \ref{tab:threestagemiddleeN}, we focus on the case of 1000 change points with SNR 1.5: using three stages allows us to decrease the minimal required points by a factor of around five. The ease on computations is greater when looking at the largest amount of data the computer must handle at a time:\setcounter{footnote}{0}\footnote{For intelligent sampling the largest data subset the computer has to work with and hold in memory at any moment, under these optimal allocations and when all change points have equal SNR, is the roughly $N_1$ sized data set used at the initial step for BinSeg. All subsequent steps can work with sub-intervals of data less than $N_1$ in size.} this is $N_1\approx 2.1\times 10^7$ for two stages and $N_1\approx2.5\times 10^6$ for three stages, a decrease by a factor of 9. Meanwhile for a dataset of size 1.5 trillion, using four stages allows us to work with subsamples of size at most $N_1\approx 4.7\times 10^6$ for the more demanding scenario of SNR 1.5 and 2000 change points, a very manageable dataset for most computers. 

We note here that these optimal allocations are valid assuming that BinSeg is able to pin down $\hat{J}$ and the change points with the initial subsample. In general, this will be the case provided the SNR is reasonable, and the initial subsample is large enough so that the change-points are adequately spaced apart. For example, in the context of the above tables, one can ask whether BinSeg will accurately estimate the parameters on a 2.4 million length dataset with 1000 evenly spaced change points, or 2000 change points on a 4.7 million length data with 2000 evenly spaced change points, under a constant SNR of 1.5 (which is of modest intensity). To this end, we ran a set of simulations 
and concluded
that if there are over 1000 data points between consecutive change points of SNR 1.5, based on these two settings and for appropriate tuning parameters, BinSeg's estimators satisfy $\hat{J}=J$ and $\max|\hat{\tau}_j-\tau_j|\leq 150$ with probability over 99\%.

Observe also that the formulas provided depend on the values of the SNRs at the change points and the actual number of change points ($J$). 
In practice, neither will be known, and the the practitioner will not be able to determine the derived allocations exactly. In such situations, conservative lower bounds on the SNRs and a conservative higher bound on $J$, can yield valid (but conservative) sampling allocations when plugged in to the expressions derived through this section. Such bounds can be obtained if background information about the problem and the data are available, or via rough pilot estimates on an appropriately sparse subsample.

It is also worth pointing out that the intelligent sampling procedure is readily adaptable to a distributed computing environment, which can come into play, especially with data sets of length exceeding $10^{12}$ that are stored sequentially across several storage disks. In such cases, the two sparse subsamples at the first stage, which are of much smaller order, can be transferred over to a central server (a much easier exercise than transferring all the data on to one server), where the first is analyzed via binary segmentation to determine the initial change-points, and the other used for the re-estimation procedure and associated confidence intervals as described in Section \ref{sec:refitting}. As the number of disks on which the data are stored is of a much smaller order than the length of the data, each re-estimated change-point and its associated confidence interval will typically belong to a stretch of data completely contained within one storage disk, and the subsequent resampling and estimation steps can be performed on the local processor, after the information on the confidence interval has been transferred back from the central server. An occasional communication between two machines may be necessary.

\subsection{Dependent Errors}\label{sec:dependenterrors}

The proposed intelligent sampling procedure for multiple change point problems has so far been presented in the setting of i.i.d. data for a signal-plus-noise model. However, many data sequences (such as time series) usually exhibit temporal correlation. Hence, it is of interest to examine the properties of the procedure under a non-i.i.d. data generating mechanism.

While we believe that results akin to Theorem 3 (growing number of change-points in the i.i.d. error regime) should go through under various forms of dependence among errors, a theoretical treatment of this would require a full investigation of the tail properties of random walks under dependent increments and is outside the scope of this paper. An asymptotic distributional result, analogous to Theorem 5, under finitely many change points in the dependent regime is also expected to hold. We present below a proposition for the finite $J$ case under a set of high-level assumptions. 

Suppose that the data sequence is in the form (\ref{model}) and satisfies conditions (M1) to (M3) and (M6). Upon the error terms, we impose the assumption that they have an autocorrelation structure which dies out at a polynomial rate or faster, and locally around the change points assume that the joint distributions of the errors are fixed [i.e. invariant to $N$]:
\begin{enumerate}[label=(M4-alt\arabic*):]
	\setlength{\itemindent}{.5in}
	\item $\varepsilon_j$'s are each marginally $N(0,\sigma_j^2)$, and there exist positive constants $\sigma_{max}$, $B$ and $\alpha$, independent of $N$, such that $\sigma_j\leq \sigma_{max}$ and $\text{cor}(\varepsilon_j,\varepsilon_{j+k})\leq Bk^{-\alpha}$ for any $j$ and $j+k$ from 1 to $N$.
	\item there exists a sequence $w_e(N)\to\infty$ and Gaussian sequences $\{\epsilon_{i,j} \}_{i\in\mathbb{Z}}$ (not required to be stationary) for $j=1,\dots,J$, such that for all $j=1,\dots,J$ and all sufficiently large $N$, $\{ \varepsilon_{\tau_j-w_e(N)},\dots,\varepsilon_{\tau_j+w_e(N)} \}$ has the same joint distribution as $\{ \epsilon_{-w_e(N),j},\dots,$ $\epsilon_{w_e(N),j} \}$.
\end{enumerate}

On a set of data where (M4-alt1) and (M4-alt2) hold (along with assumptions (M1) to (M3), and (M6)), we want steps (ISM1) and (ISM4) to go through with some procedure that ensures 
\begin{eqnarray}\label{eq:firstconsistent3}
\mathbb{P}\left[\hat{J}=J,\,\underset{i=1,...,J}{\max}|\hat{\tau}^{(1)}_i-\tau_i|\leq w(N),\, \max_{i=0,...,J}|\hat{\nu}^{(1)}_i-\nu_i|\leq \rho_N\right]\to 1.
\end{eqnarray}
for some sequence $w(N)\to\infty$ and $w(N)=o(\delta_N)$.

Next, we desire for the final estimators $\hat{\tau}^{(2)}$ to be $O_p(1)$ consistent and have the property that for each $\epsilon>0$ there exists a constant $C$ such that
\begin{eqnarray}\label{eq:secondnotiidconsistent}
\mathbb{P}\left[\hat{J}=J ;\quad\max_{i=1,\dots,J}\left|\hat{\tau}^{(2)}_i-\tau_i\right|\leq C \right]\geq 1-\epsilon \,.
\end{eqnarray}
for all sufficiently large $N$. 
\begin{proposition}\label{thm:notiidconsistent}
	Suppose conditions (M1) to (M3), (M4-alt1), (M4-alt2), and (M6) are satisfied. Next, suppose the first stage estimators satisfy (\ref{eq:firstconsistent3}) and the second stage estimators, constructed as in the i.i.d. setting but with a minor modification\footnote{See the remark right after the proposition.} satisfy (\ref{eq:secondnotiidconsistent}). Define the random walks
	\begin{eqnarray}
	Z_{i,j}=\begin{cases}
	\Delta_j(\epsilon_{1,j}+\dots+\epsilon_{i,j})-i\Delta_j^2/2,\qquad & i>0\\
	0, & i=0\\
	\Delta_j(\epsilon_{i+1,j}+\dots+ \epsilon_{0,j})-i\Delta_j^2/2, & i<0 \,,
	\end{cases}
	\end{eqnarray}
	with the $\epsilon_{i,j}$'s from condition (M4-alt2), for $j = 1, 2, \ldots, J$, and 
	denote $\tilde{L}_j:=\underset{i\in \mathbb{Z}}{\arg\min} \,Z_{i,j}$.
	Then $|\hat{\tau}^{(2)}_j-\tau_j|$'s for $j=1,\dots,J$ jointly converge to the distribution of $(\tilde{L}_1,...,\tilde{L}_J)$: for any integers $k_1,\cdot,k_J$, 
	\begin{eqnarray}
	\mathbb{P}\left[\hat{J}=J,\, |\hat{\tau}^{(2)}_j-\tau_j|=k_j \text{ for  }1\leq j\leq J\right]\to\prod_{j=1}^J \mathbb{P}[\tilde{L}_j=k_j]
	\end{eqnarray}
\end{proposition}

\begin{remark}
	As in the i.i.d. case, the intervals $[\hat{\tau}^{(1)}_j-Kw(N),\hat{\tau}^{(1)}_j+Kw(N)]$ for $j=1,\dots,J$ [obtained at step (ISM5)] would each contain only one change point with probability approaching one. We are therefore still justified in fitting stump models on each interval, although with a slight modification. Unlike the i.i.d. error terms scenario, the joint distribution of the error terms at the second stage does change when we condition on the estimators $\hat{\tau}^{(1)}_j$'s, regardless if we leave out the $\{Z_j\}$ subsample at the second stage. We thus make the following modification to (ISM5) in \ref{sec:procedure} (and assume this altered procedure is used from here on in this section):
	\begin{itemize}
		[label=(ISM5-alt):]
		\setlength{\itemindent}{.5in}
		\item take $S^{(2)}\left( \hat{\tau}_i^{(1)} \right)$ as all integers in $[\hat{\tau}_i^{(1)}-Kw(N),\hat{\tau}_i^{(1)}+Kw(N)]$ without any points omitted.
	\end{itemize}
	Step (ISM-6) then proceeds as before. 
\end{remark}

\begin{remark}
	We note that the asymptotic distribution given above and in Theorem \ref{thm:multidepend} have the same form, since as before, conditional on (\ref{eq:firstconsistent3}) being true, intelligent sampling simplifies the problem into multiple single change point problems. Using Proposition \ref{thm:notiidconsistent} to construct confidence intervals in a practical setting requires an idea of the joint distribution of the 
	$\{\epsilon_{ij}\}$'s. In practice, one would have to impose some structural conditions, e.g. assuming an ARMA or ARIMA structure on long stretches of the errors to the left and right of the change points.
\end{remark}

\begin{remark}
	The hard work lies in the verification of the high-level conditions (\ref{eq:firstconsistent3}) and (\ref{eq:secondnotiidconsistent}) in different dependent error settings, and as mentioned previously, is not dealt with in this paper but should constitute interesting agenda for follow-up research on this problem. We will, however, use Proposition 3 in our simulation and data analysis sections to construct confidence intervals for various dependent error scenarios. 
\end{remark}

\subsection{Proof of Theorem \ref{thm:multiorder}}\label{sec:multiorderotherproof}
As a reminder, this theorem only required the error terms to be independent and zero-mean subgaussian, with subguassian parameters $\sigma_i$ for $1\leq i\leq N$, which means
\begin{eqnarray}
\mathbb{E}[\exp(s\varepsilon_i)]\leq \exp\left( \frac{\sigma_i^2 s^2}{2} \right)\nonumber\\
\mathbb{P}[|\varepsilon_i|\geq s]\leq 2\exp\left( -\frac{s^2}{2\sigma_i^2} \right)
\end{eqnarray}
for all $s\in \mathbb{R}$. Since condition (M4) also states that the subgaussian parameters ($\sigma_i$'s) are bounded above by a constant $\sigma_{\max}$ not dependent on $N$, this means that 
\begin{eqnarray}
\mathbb{E}[\exp(s\varepsilon_i)]\leq \exp\left( \frac{\sigma_{\max}^2 s^2}{2} \right)\nonumber\\
\mathbb{P}[|\varepsilon_i|\geq s]\leq 2\exp\left( -\frac{s^2}{2\sigma_{\max}^2} \right)
\end{eqnarray}
for all $i=1,\dots,N$ and $s\in \mathbb{R}$.
\newline
\newline
\indent Much of the theory used in this proof and subsequent proof will extensively deal with the points used at the second stage, where points used at stage 1 will be skipped. To make these arguments, we will denote some notation to clearly , define the points $\tau^{(2)}_j$'s as 
\begin{eqnarray}
\tau^{(2)}_j:=\begin{cases}
\tau_j-1\qquad &\text{if }\tau_j\text{ was a first stage subsmaple point}\\
\tau_j &\text{otherwise}
\end{cases}
\end{eqnarray}
for $j=1,\dots,J$ define
\begin{eqnarray}
S^{(2)}(t):=\left\{ i\in\mathbb{N}:\; |i-t|\leq Kw(N),\quad Y_i\text{ not used in 1st stage subsample} \right\}.
\end{eqnarray}

\begin{proof}
	Define the event 
	\begin{eqnarray}\label{eventAN}
	\mathcal{R}_N:=\left\{ \hat{J}=J;\quad \max_{i=1,...,J}\left|\hat{\tau}^{(1)}_i-\tau_i\right|\leq w(N);\quad \max_{i=0,...,J}|\hat{\nu}_i^{(1)}-\nu_i|\leq \rho_N \right\},
	\end{eqnarray}
	Denote $G_N$ as the joint distribution of $J,\hat{\tau}^{(1)}_1,...,\hat{\tau}^{(1)}_J,\hat{\nu}^{(1)}_0,...,\hat{\nu}^{(1)}_J$; the domain of $G_N$ would be $\bigcup_{k=0}^{N-1}\mathbb{N}^{k+1}\times \mathbb{R}^{k+1}$.
	
	Then, for any sequence $\{a_N\}$, we can bound $\mathbb{P}\left[ \hat{J}=J;\quad \max_{i=1,...,J}\left| \hat{\tau}_i^{(2)}-\tau_i^{(2)}\right|\leq a_N\right]$ from below by:
	\begin{eqnarray}\label{eq:deriv1}
	&&\mathbb{P}\left[ \hat{J}=J;\quad \max_{i=1,...,J}\left| \hat{\tau}_i^{(2)}-\tau_i^{(2)}\right|\leq a_N\right]\nonumber\\
	&\geq& \sum_{k=0}^{N-1}\underset{  \substack{ 0<t_1<t_2<....<t_k<N\\v_1,...,v_k\in\mathbb{R}} }{\int}\mathbb{P}\left[\hat{J}=J;\,  \underset{i=1,...,J}{\max}\left| \hat{\tau}^{(2)}_i-\tau_i^{(2)} \right|\leq a_N \Bigr| \hat{J}=k,\hat{\tau}^{(1)}_j=t_j,\hat{\nu}^{(1)}_j=v_j\text{ for }j\leq k\right] \nonumber\\
	&& dG_N(k,t_1,...,t_k,v_0,...,v_k)\nonumber\\
	&\geq& \underset{\substack{ |t_i-\tau_i|\leq Kw(N)\\ |v_i-\nu_i|\leq \rho_{N}\\\text{for all }i }}{\int} \mathbb{P}\left[\hat{J}=J;\, \underset{i=1,...,J}{\max}\left| \hat{\tau}^{(2)}_i-\tau_i^{(2)} \right|\leq a_N \Bigr| \hat{J}=J,\hat{\tau}^{(1)}_j=t_j,\hat{\nu}^{(1)}_j=v_j\text{ for }j\leq J \right]\nonumber\\
	&& dG_N(J,t_1,...,t_J,v_0,...,v_J)\nonumber\\
	&\geq & \left(\underset{\substack{ |t_i-\tau_i|\leq Kw(N)\\ |v_i-\nu_i|\leq \rho_{N}\\\text{for all }i }}{\inf} \mathbb{P}\left[\hat{J}=J;\, \underset{i=1,...,J}{\max}\left| \hat{\tau}^{(2)}_i-\tau_i^{(2)} \right|\leq a_N \Bigr| \hat{J}=J,\hat{\tau}^{(1)}_j=t_j,\hat{\nu}^{(1)}_j=v_j\text{ for }j\leq J \right]\right)\cdot\nonumber\\
	&&\mathbb{P}\left[ \hat{J}=J;\quad \max_{i=1,...,J}\left|\hat{\tau}^{(1)}_i-\tau_i\right|\leq Kw(N);\quad \max_{i=0,...,J}|\hat{\nu}_i^{(1)}-\nu_i|\leq \rho_N\right]\nonumber\\
	&\geq & \left(\underset{\substack{ |t_i-\tau_i|\leq Kw(N)\\ |v_i-\nu_i|\leq \rho_{N}\\\text{for all }i }}{\inf} \mathbb{P}\left[ \underset{i=1,...,J}{\max}\left| \hat{\tau}^{(2)}_i-\tau_i^{(2)} \right|\leq a_N \Bigr| \hat{J}=J,\hat{\tau}^{(1)}_j=t_j,\hat{\nu}^{(1)}_j=v_j\text{ for }j\leq J \right]\right)-\nonumber\\
	&&\qquad\quad\mathbb{P}[\mathcal{R}_N\text{ is false}]
	\end{eqnarray}
	We wish to show that for all $\epsilon>0$, there exists a sequence $a_N=O(\log(J(N)))$ such that 
	$$\mathbb{P}\left[ \hat{J}=J;\quad \max_{i=1,...,J}\left| \hat{\tau}_i^{(2)}-\tau_i^{(2)}\right|\leq a_N\right]>1-\epsilon$$
	for all large $N$. It is sufficient to show this is satisfied by the second to last line of (\ref{eq:deriv1}), as $\mathcal{R}_N$ is true with probability increasing to 1. Henceforth, we will work with the probability 
	$$\mathbb{P}\left[\underset{i=1,...,J}{\max}\left| \hat{\tau}^{(2)}_i-\tau_i^{(2)} \right|\leq a_N \Bigr| \hat{J}=J,\hat{\tau}^{(1)}_j=t_j,\hat{\nu}^{(1)}_j=v_j\text{ for }j\leq J \right]\nonumber $$
	and, in the domain $|t_i-\tau_i|\leq Kw(N)$  and $|v_i-\nu_i|\leq \rho_{N}$ for all $i$, we show that it is greater than $1-\epsilon$ for all sufficiently large $N$ and $a_N=C_1\log J+C_2$ for some $C_1,C_2>0$. In the remainder of the proof, assume that all $t_i$'s and $v_i$'s fall within this domain.
	\newline
	\newline
	\indent For sufficiently large $N$, we have $Kw(N) \leq \delta_N/4$, and therefore no two of the second stage intervals ($[ t_i-Kw(N),$ $t_i+Kw(N)]$ for $i=1,...,J$) intersect.
	Because each $\hat{\tau}^{(2)}_j$ is a function of all $Y_i$'s in the disjoint index sets $S^{(2)}(t_j)\subset [ t_j-Kw(N),$ $t_j+Kw(N)]$ and the two level estimates $\hat{\nu}^{(1)}_{j-1}$ and $\hat{\nu}^{(1)}_j$, conditional independence holds:
	\begin{eqnarray}
	&&\mathbb{P}\left[\quad \max_{i=1,...,J}\left| \hat{\tau}_i^{(2)}-\tau_i^{(2)}\right| \leq a_N\Bigr| \hat{J}=J,\hat{\tau}^{(1)}_j=t_j,\hat{\nu}^{(1)}_j=v_j\text{ for }j\leq J \right]\nonumber\\
	&=& 
	\prod_{i=1}^J \mathbb{P}\left[ \left| \hat{\tau}^{(2)}_i-\tau_i^{(2)} \right|\leq a_N \Bigr| \hat{J}=J,\hat{\tau}^{(1)}_j=t_j,\hat{\nu}^{(1)}_j=v_j\text{ for all }j \right]
	\nonumber
	\end{eqnarray}
	To show the above product is eventually greater than some $1-\varepsilon$, it would suffice to show that, for all $1\leq k\leq J$ and sufficiently large $N$, 
	\begin{eqnarray}\label{mainineq}
	&&\mathbb{P}_{J,\boldsymbol{v},\boldsymbol{t}}\left[ \left| \hat{\tau}^{(2)}_k-\tau_k^{(2)} \right|>a_N \right]\nonumber\\
	&:=&\mathbb{P}\left[ \left| \hat{\tau}^{(2)}_k-\tau_k^{(2)} \right|> a_N \Bigr| \hat{J}=J,\hat{\tau}^{(1)}_j=t_j,\hat{\nu}^{(1)}_j=v_j\text{ for all }j \right]\nonumber\\
	&\leq& C_\epsilon/J
	\end{eqnarray}
	for some $C_\epsilon<-\log(1-\varepsilon)$. 
	\newline
	\newline
	\indent For any $k$ between 1 to $J$ inclusive, we can write explicit expressions for $\tau_k^{(2)}$ and $\hat{\tau}_k^{(2)}$. 

	\begin{eqnarray}
	\hat{\tau}^{(2)}_k&=&\underset{d\in S^{(2)}(t_k)}{\arg\min}\left(\text{sgn}(v_k-v_{k-1})\sum\limits_{i\in S^{(2)}(t_k)}\left( Y_i-\frac{v_{k-1}+v_k}{2} \right)\left[1(i\leq d)-1\left(i\leq \tau_k^{(2)}\right)\right]\right)\nonumber\\
	&:=& \underset{d\in  S^{(2)}(t_k)}{\arg\min}\mathbb{M}_k(d)
	\end{eqnarray} 
	Next, since, $t_k\in [\tau_k-w(N),\tau_k+w(N)]$, for $N$ large enough so that $\frac{K-1}{2}w(N)>1$ we have
	\begin{eqnarray}
	t_k-Kw(N)\leq \tau_k-(K-1)w(N)< \tau_k^{(2)}-\frac{K-1}{2}w(N)<\nonumber\\
	\tau_k^{(2)}+\frac{K-1}{2}w(N)< \tau_k+(K-1)w(N)\leq t_k+Kw(N)
	\end{eqnarray}
	That is to say, the set $S^{(2)}(t_k)$ includes the interval $\left[ \tau^{(2)}_k-\frac{K-1}{2}w(N), \tau^{(2)}_k+\frac{K-1}{2} \right]$, minus the first stage subsample points, regardless of which $t_k$ was used among those permissible in $R_N$. Therefore, for any even integer $a_N$ such that $1<a_N<\frac{K+1}{2}w(N)$ (which would be satisfied for all large $N$ if $a_N=O(\log(N))=o(w(N))$), 
	\begin{eqnarray}
	&&\mathbb{P}_{J,\boldsymbol{v},\boldsymbol{t}}\left[ \left| \hat{\tau}^{(2)}_k-\tau^{(2)}_k \right|>a_N \right]\nonumber\\
	&\leq &\mathbb{P}_{J,\boldsymbol{v},\boldsymbol{t}}\left[ \left|\lambda_2\left( \tau^{(2)}_k,\hat{\tau}^{(2)}_k \right)\right|>\frac{a_N}{2} \right]
	\end{eqnarray}
	for all sufficiently large $N$ such that the first stage subsample samples more sparsely than taking every other point. In order for $\left|\lambda_2\left( \hat{\tau}^{(2)}_k,\tau^{(2)}_k \right)\right|>a_N/2$, there must exists a $d$ such that 
	\begin{eqnarray}
	\left|\lambda_2(\tau^{(2)}_k,d)\right|>a_N/2\qquad\text{ and }\qquad
	\mathbb{M}_k(d)\leq \min_{\substack{\ell\in S^{(2)}(t_k)\\ |\lambda(\tau_k^{(2)},\ell)|\leq \frac{a_N}{2} }}\mathbb{M}_k(\ell)< \mathbb{M}_k(\tau^{(2)}_k)=0.
	\end{eqnarray} 
	Therefore, 
	\begin{eqnarray}\label{eq:condprobsumbound}
	&&\mathbb{P}_{J,\boldsymbol{v},\boldsymbol{t}}\left[ \left| \hat{\tau}^{(2)}_k-\tau^{(2)}_k \right|>a_N \right]\nonumber\\
	&\leq &\mathbb{P}_{J,\boldsymbol{v},\boldsymbol{t}}\left[ \left|\lambda_2\left( \tau^{(2)}_k,\hat{\tau}^{(2)}_k \right)\right|>\frac{a_N}{2} \right]\nonumber\\
	&\leq &\mathbb{P}_{J,\boldsymbol{v},\boldsymbol{t}}\left[ \exists d\in S^{(2)}(t_k)\text{ where }\left|\lambda_2(\tau^{(2)},d)\right|>a_N/2\text{ and } \mathbb{M}_k(d)<0\right]\nonumber\\
	&\leq &\sum_{\substack{d\in S^{(2)}(t_k)\\ \left|\lambda_2(\tau^{(2)},d)\right|>a_N/2}} \mathbb{P}_{J,\boldsymbol{v},\boldsymbol{t}}\left[ \mathbb{M}_k(d)\leq 0 \right]
	\end{eqnarray}
	Each $\mathbb{P}_{J,\boldsymbol{v},\boldsymbol{t}}\left[ \mathbb{M}_k(d)\leq 0 \right]$ can be bounded as follows by recognizing that each $\mathbb{M}_k(d)$ has subGaussian distribution: for every $d\in S^{(2)}(t_k)$ and sufficiently large $N$ so that $\rho_N\leq |\underline{\Delta}|/2$ (and hence $\text{sgn}(v_{k-v_{k-1}})=\text{sgn}(\nu_k-\nu_{k-1})$ for all $k$)
	\begin{eqnarray}
	&&\mathbb{M}_k(d)\nonumber\\
	&=& \text{sgn}(v_k-v_{k-1})\sum\limits_{i\in S^{(2)}(t_k)}\left( Y_i-\frac{v_{k-1}+v_k}{2} \right)\left[1(i\leq d)-1\left(i\leq \tau_k^{(2)}\right)\right]\nonumber\\
	&=&\begin{cases}
	\text{sgn}(\nu_k-\nu_{k-1})\left(\lambda_2( \tau^{(2)}_k,d )\left( \nu_{k}-\frac{v_{k-1}+v_k}{2} \right)+\underset{\substack{\tau^{(2)}_k<\ell\leq d\\ \ell\in S^{(2)}(t_k)}}{\sum}\varepsilon_\ell\right) \quad &\text{for }d>\tau_k^{(2)}\\
	0 &\text{for }d=\tau_k^{(2)}\\
	\text{sgn}(\nu_k-\nu_{k-1})\left(\lambda_2( \tau^{(2)}_k,d )\left( \nu_{k-1}-\frac{v_{k-1}+v_k}{2} \right)-\underset{\substack{d<\ell\leq\tau^{(2)}_k\\ \ell\in S^{(2)}(t_k)}}{\sum}\varepsilon_\ell\right) \quad &\text{for }d<\tau_k^{(2)}		
	\end{cases}\nonumber\\
	&=&\begin{cases}
	\left|\lambda_2( \tau^{(2)}_k,d )\right|\left( \left|\frac{\nu_{k+1}-\nu_k}{2}\right|+\text{sgn}(\nu_k-\nu_{k-1})\hat{D}_k \right)+\text{sgn}(\nu_k-\nu_{k-1})\underset{\substack{\tau^{(2)}_k<\ell\leq d\\ \ell\in S^{(2)}(t_k)}}{\sum}\varepsilon_\ell \quad &\text{for }d>\tau_k^{(2)}\\
	0 &\text{for }d=\tau_k^{(2)}\\
	\left|\lambda_2( \tau^{(2)}_k,d )\right|\left( \left|\frac{\nu_{k}-\nu_{k-1}}{2}\right|-\text{sgn}(\nu_k-\nu_{k-1})\hat{D}_k \right)-\text{sgn}(\nu_k-\nu_{k-1})\underset{\substack{d<\ell\leq\tau^{(2)}_k\\ \ell\in S^{(2)}(t_k)}}{\sum}\varepsilon_\ell \quad &\text{for }d<\tau_k^{(2)}		
	\end{cases}\nonumber\\
	\end{eqnarray}
	where 
	\begin{eqnarray}
	\hat{D}_k:=\frac{(\nu_k-v_k)+(\nu_{k-1}-v_{k-1})}{2}
	\end{eqnarray}
	The maximal deviation of the signal estimates are at most $\rho_N\to 0$ away from the true signal estimates, and hence for all sufficiently large $N$, $\left|\hat{D}_k\right|\leq |\rho_N|\leq\frac{\underline{\Delta}}{2}$ (where $\underline{\Delta}$ bounds the minimum absolute jump from below). In this case
	\begin{eqnarray}
	\left|\frac{\nu_{k+1}-\nu_k}{2}+\hat{D}_k \right|\geq \frac{\underline{\Delta}}{2}\quad\text{ and }\quad\left| \frac{\nu_{k+1}-\nu_k}{2}-\hat{D}_k \right|\geq\frac{\underline{\Delta}}{2}.
	\end{eqnarray}
	Therefore the mean of the $\mathbb{M}_k(d)$'s satisfy
	$$\mathbb{E}[\mathbb{M}_k(d)] = \left|\lambda_2( \tau^{(2)}_k,d )\right|\left( \left|\frac{\nu_{k+1}-\nu_k}{2}\right|\pm\text{sgn}(\nu_k-\nu_{k-1})\hat{D}_k \right)\geq   \left|\lambda_2( \tau^{(2)}_k,d )\right|\frac{\underline{\Delta}}{2}.$$
	At the same time, $\mathbb{M}_k(d)-\mathbb{E}[\mathbb{M}_k(d)]$ is a zero mean subGaussian random variable with variance parameter $\sigma_{\max}^2\left|\lambda_2( \tau^{(2)}_k,d )\right|$, hence

	\begin{eqnarray}
	&&\mathbb{P}_{J,\boldsymbol{v},\boldsymbol{t}}\left[ \mathbb{M}_k(d)\leq 0 \right]\nonumber\\
	&=&\mathbb{P}_{J,\boldsymbol{v},\boldsymbol{t}}\left[ \mathbb{M}_k(d)-\mathbb{E}[\mathbb{M}_k(d)]\leq -\mathbb{E}[\mathbb{M}_k(d)] \right]\nonumber\\
	&\leq &\mathbb{P}_{J,\boldsymbol{v},\boldsymbol{t}}\left[ \mathbb{M}_k(d)-\mathbb{E}[\mathbb{M}_k(d)]\leq -\left|\lambda_2( \tau^{(2)}_k,d )\right|\frac{\underline{\Delta}}{2} \right]\nonumber\\
	&\leq& \frac{1}{2}\exp\left( -\frac{\underline{\Delta}^2}{8\sigma_{\max}^2}\left|\lambda_2(\tau^{(2)}_k,d)\right| \right)
	\end{eqnarray}
	Going back to (\ref{eq:condprobsumbound}), this gives:
	\begin{eqnarray}
	&&\mathbb{P}_{J,\boldsymbol{v},\boldsymbol{t}}\left[ \left| \hat{\tau}^{(2)}_k-\tau^{(2)}_k \right|>a_N \right]\nonumber\\
	&\leq &\sum_{\substack{d\in S^{(2)}(t_k)\\ \left|\lambda_2(\tau^{(2)},d)\right|>a_N/2}} \mathbb{P}_{J,\boldsymbol{v},\boldsymbol{t}}\left[ \mathbb{M}_k(d)\leq 0 \right]\nonumber\\
	&\leq &\frac{1}{2}\sum_{\substack{d\in S^{(2)}(t_k)\\ \left|\lambda_2(\tau^{(2)},d)\right|>a_N/2}}\exp\left( -\frac{\underline{\Delta}^2}{8\sigma_{\max}^2}\left|\lambda_2(\tau^{(2)}_k,d)\right| \right)\nonumber\\
	&\leq &\frac{1}{2}\left[ \sum_{k=\frac{a_N}{2}+1}^\infty\exp\left( -\frac{\underline{\Delta}^2}{8\sigma_{\max}^2}k \right) +\sum_{k=-\frac{a_N}{2}-1}^{-\infty}\exp\left( -\frac{\underline{\Delta}^2}{8\sigma_{\max}^2}|k| \right)  \right]\nonumber\\
	&=&\left(1-\exp\left( -\frac{\underline{\Delta}^2}{8\sigma_{\max}^2} \right)\right)^{-1}\exp\left( -\frac{\underline{\Delta}^2}{8\sigma_{\max}^2}\left(\frac{a_N}{2}+1\right) \right)
	\end{eqnarray}
	Therefore in order to bound $\mathbb{P}_{J,\boldsymbol{v},\boldsymbol{t}}\left[ \left| \hat{\tau}^{(2)}_k-\tau^{(2)}_k \right|>a_N \right]$ by $C_\varepsilon/J$ as we stated back in (\ref{mainineq}), a legitimate choice of $a_N$ which will satisfy this bound can be found by solving
	\begin{eqnarray}
	\frac{\exp\left( -\frac{\underline{\Delta}^2}{8\sigma_{\max}^2} \right)}{1-\exp\left( -\frac{\underline{\Delta}^2}{8\sigma_{\max}^2} \right)}\exp\left( -\frac{\underline{\Delta}^2}{8\sigma_{\max}^2}\cdot\frac{a_N}{2} \right)=\frac{C_\epsilon}{J}.
	\end{eqnarray}
	The solution to this will be of the form $a_N=C_1\log(J)+C_2$, where $C_1$ and $C_2$ are constants not dependent on $N$, and because $\log(J)=O(\log(N))$, this is a valid solution.
\end{proof}

\subsection{Proof sketch of Theorem \ref{thm:increasingJasymprotics}}\label{sec:increasingJasymproticsproofshort}
\label{proof-sketch-Theorem-5} 
\begin{proof}
	Let $\mathcal{R}_N$ be the event
	\begin{eqnarray}
	\mathcal{R}_N:= \left\{ \hat{J}=J;\,\max_{j=1,\dots,J}\left| \hat{\tau}^{(1)}_j-\tau_j \right|\leq w(N);\, \max_{j=0,\dots,J}\left| \hat{\nu}^{(1)}_j-\nu_j \right|\leq \rho_N \right\}
	\end{eqnarray}
	Define $G_N()$ as the joint distribution of the first stage estimates $(\hat{J},\boldsymbol{\hat{\tau}^{(1)}},\boldsymbol{\hat{\nu}^{(1)}})=(\hat{J},\hat{\tau}^{(1)}_1,\dots,\hat{\tau}^{(1)}_{\hat{J}},\hat{\nu}^{(1)}_0,\dots,\hat{\nu}^{(1)}_{\hat{J}})$. Because $\mathbb{P}[\mathcal{R}_N]\to 1$, it is possible to show that the difference between
	\begin{eqnarray}\label{eq:condinte2short}
	\mathbb{P}\left[  \hat{J}=J;\, \left|\lambda_2\left( \tau_j^{(2)},\hat{\tau}_j^{(2)} \right)\right|
	\leq Q_{(|\Delta_j|-2\rho_N)/\sigma}(\sqrt[J]{1-\alpha})\text{ for }j=1,\dots,J  \right]
	\end{eqnarray}
	and $\prod_{j=1}^J P_{(|\Delta_j|-2\rho_N)/\sigma}(\sqrt[J]{1-\alpha})$, equals the expression
	\begin{eqnarray}
	&&\underset{ (k,\boldsymbol{t},\boldsymbol{\alpha})\in\mathcal{R}_N }{\int}\Bigg(\prod_{j=1}^J\mathbb{P}\left[ \left|\lambda_2\left( \tau_j^{(2)},\hat{\tau}_j^{(2)} \right)\right|\leq Q_{(|\Delta_j|-2\rho_N)/\sigma}(\sqrt[J]{1-\alpha})\Big| \hat{J}=k;\,\boldsymbol{\hat{\tau}^{(1)} }=\boldsymbol{t};\, \boldsymbol{\hat{\nu}^{(1)} }=\boldsymbol{v} \right]\nonumber\\
	&& -\prod_{j=1}^J P_{(|\Delta_j|-2\rho_N)/\sigma}(\sqrt[J]{1-\alpha})\Bigg) dG_N(k,\boldsymbol{t},\boldsymbol{v})+o(1)
	\end{eqnarray}
	It is therefore sufficient to show that the difference inside the integral is, for all $(k,\boldsymbol{t},\boldsymbol{\alpha})\in\mathcal{R}_N$, uniformly bounded in absolute value by a $o(1)$ term. Henceforth, consider only such admissible $k$'s, $\boldsymbol{t}$'s, $\boldsymbol{v}$'s (which, of course, restricts $k$ to $J$), and additionally that $N$ is large enough so that $\rho_N\leq |\underline{\Delta}|/8$ and distance between consecutive points in $S^{(1)}$ is more than 2 (i.e. $\min\limits_{i,j\in S^{(1)},i\neq j}|i-j|>2$). We will proceed to show (\ref{eq:increasingJasymprotics1}) by obtaining an upper bound for the following absolute difference, for every $j=1,\dots, J$:
	\begin{eqnarray}
	\Bigg| \mathbb{P}\left[ \left|\hat{\tau}^{(2)}_j-\tau_j\right|\leq Q_{(|\Delta_j|-2\rho_N)/\sigma}(\sqrt[J]{1-\alpha})\Big| \hat{J}=J;\,\boldsymbol{\hat{\tau}^{(1)} }=\boldsymbol{t};\, \boldsymbol{\hat{\nu}^{(1)} }=\boldsymbol{v} \right]\nonumber\\-P_{(|\Delta_j|-2\rho_N)/\sigma}(\sqrt[J]{1-\alpha})\Bigg| \,.
	\end{eqnarray}
	This upper bound will be derived in several components.
	\newline
	\newline
	\noindent \textbf{First Component:} A more explicit expression for the change point estimates is, for every $j=1,\dots,J$,
	\begin{eqnarray}
	\hat{\tau}^{(2)}_j&=&\underset{t\in S^{(2)}(t_i)}{\arg\min}(\text{sgn}(v_j-v_{j-1}))\sum_{i\in S^{(2)}(t_i)}\left( Y_i-\frac{v_j+v_{j-1}}{2} \right)( 1(i\leq t)-1(i\leq \tau_j^{(2)} ))\nonumber\\
	&=:& \underset{t\in S^{(2)}(t_i)}{\arg\min} \hat{X}_j^{(2)}(t)
	\end{eqnarray}
	$\hat{X}^{(2)}(t)$ is a random walk over the set $t\in S^{(2)}(t_j)$, which is a set of integers skipping over the points of $S^{(1)}$. To express this as a random process over a set of integers without skipped points (a requirement to apply some probability bounds), 
	define the one-to-one function $\pi_{2,j}(s):=\lambda_2(\tau_j^{(2)},s)$ for $s \in \{1,\dots,N\}-S^{(1)}$. This allows us to consider the random walk $\hat{X}^{(2)}_j( \pi_{2,j}^{-1}(t) )$ over the set $t\in \pi_{2,j}(S^{(2)}(t_j))$. This is a set of integers containing no skipped points, and furthermore it can be shown that
	\begin{eqnarray}\label{eq:setcontain}
	\left[ \pm  Q_{(|\Delta_j-2\rho_N)/\sigma}(\sqrt[J]{1-\alpha}) \right]\subset  \pi_{2,j}(S^{(2)}(t_j))
	\end{eqnarray}
	(could be done by showing $[\pm ((K-1)/2)w(N)]\subset \pi_{2,j}(S^{(2)}(t_j))$ and using Lemma \ref{lem:quantbound} to argue that $Q_{(|\Delta_j-2\rho_N)/\sigma}(\sqrt[J]{1-\alpha})=O(\log(N))=o(w(N))$).
	\newline
	\newline
	From here, $\hat{X}^{(2)}_j(\pi_{2,j}^{-1}(t))$ is a random walk with non-symmetrical linear drifts on either side of $ t=0 $. We wish to approximate $\hat{X}^{(2)}(t)$ with a random walk with symmetrical drift, so consider $X'_{|\Delta_j|-2|\hat{D}_j|}(t)$, defined as
	\begin{eqnarray}
	X'_{|\Delta_j|-2|\hat{D}_j|}(t):=\begin{cases}
	t\frac{|\Delta_j|-2|\hat{D}_j|}{2}+\text{sgn}(\Delta_j)\sum\limits_{i=1}^{t}\varepsilon_{ \pi_2^{-1}\left(\pi_2(\tau_j^{(2)})+i  \right) }\qquad & \text{for }t>0\\
	0 &t=0\\
	|t|\frac{|\Delta_j|-2|\hat{D}_j|}{2}-\text{sgn}(\Delta_j)\sum\limits_{i=t}^{-1}\varepsilon_{\pi_{2}^{-1}\left(\pi_2(\tau_j^{(2)})+i+1\right)}\qquad & \text{for }t<0
	\end{cases}.
	\end{eqnarray}
	where 
	\begin{eqnarray}
	\hat{D}_j=\frac{\nu_j-v_j}{2}+\frac{\nu_{j-1}-v_{j-1}}{2}.
	\end{eqnarray}
	This random walk is very close to $\hat{X}^{(2)}(\pi_{2,j}^{-1}(t))$, since it equals either 
	$X'_{|\Delta_j|-2|\hat{D}_j|}(t)+2|\hat{D}_jt|1(t>0)$ or $X'_{|\Delta_j|-2|\hat{D}_j|}(t)+2|\hat{D}_jt|1(t<0)$ for $t\in \pi_{2,j}^{-1}(S^{(2)}(t_j)$, and $|\hat{D}_j|<\rho_N\to 0$. In either case, this begs the use of Lemma \ref{lem:randomwalkcompoppo} and Lemma \ref{lem:randomwalkcomp}, two results which allows us to compare the argmin of such similar random walks. The only important condition they require is satisfied by (\ref{eq:setcontain}). Their application gives
	\begin{eqnarray}\label{eq:firstcompabsshort}
	&&\Bigg|\mathbb{P}\left[ \left|\lambda_2\left(\tau_j^{(2)},\hat{\tau}^{(2)}_j\right)\right|\leq Q_{(|\Delta_j|-2\rho_N)/\sigma}(\sqrt[J]{1-\alpha})\Bigg|\hat{J}=J,\boldsymbol{\hat{\tau}^{(1)}}=\boldsymbol{t},\, \boldsymbol{\hat{\nu}^{(1)} }=\boldsymbol{v}\right]\nonumber\\
	&&-\mathbb{P}\left[ \left|\underset{t\in \pi^{-1}_{2,j}(S^{(2)}(t_j))}{\arg\min}X'_{|\Delta_j|-2|\hat{D}_j|}(t)\right|\leq Q_{(|\Delta_j|-2\rho_N)/\sigma}(\sqrt[J]{1-\alpha})\Bigg|\hat{J}=J,\boldsymbol{\hat{\tau}^{(1)}}=\boldsymbol{t},\, \boldsymbol{\hat{\nu}^{(1)} }=\boldsymbol{v} \right]\Bigg|  \nonumber\\
	&=&\Bigg|\mathbb{P}\left[ \left|\underset{t\in \pi^{-1}_{2,j}(S^{(2)}(t_j))}{\arg\min}X^{(2)}_j(t)\right|\leq Q_{(|\Delta_j|-2\rho_N)/\sigma}(\sqrt[J]{1-\alpha})\Bigg|\hat{J}=J,\boldsymbol{\hat{\tau}^{(1)}}=\boldsymbol{t},\, \boldsymbol{\hat{\nu}^{(1)} }=\boldsymbol{v}\right]\nonumber\\
	&&-\mathbb{P}\left[ \left|\underset{t\in \pi^{-1}_{2,j}(S^{(2)}(t_j))}{\arg\min}X'_{|\Delta_j|-2|\hat{D}_j|}(t)\right|\leq Q_{(|\Delta_j|-2\rho_N)/\sigma}(\sqrt[J]{1-\alpha})\Bigg|\hat{J}=J,\boldsymbol{\hat{\tau}^{(1)}}=\boldsymbol{t},\, \boldsymbol{\hat{\nu}^{(1)} }=\boldsymbol{v} \right]\Bigg|  \nonumber\\
	&\leq &C^*_1\rho_N
	\end{eqnarray}
	For some constant $C^*_1>0$, not dependent on $j$ or $N$.
	~\newline
	\newline
	\textbf{Second Component: }Now $X'_{|\Delta|-2|\hat{D}_j|}(t)/\sigma$ has the same exact distribution as $X_{(|\Delta|-2|\hat{D}_j|)/\sigma}(t)$, for all integers $t\in \pi_{2,j}^{-1}(S^{(2)}(t_j)$. It was also shown in the previous section that the set $\pi_{2,j}^{-1}(S^{(2)}(t_j)$ contains the interval of integers $\left[ \pm Q_{(|\Delta_j|-2\rho_N)/\sigma}(\sqrt[J]{1-\alpha}) \right]$ for all large $N$. Therefore apply Lemma \ref{lem:twoXrandwalkcomp} to obtain
	\begin{eqnarray}\label{eq:secondcompabsshort}
	&&\Biggr|\mathbb{P}\left[ \left|\underset{ t\in \pi^{-1}_{2,j}(S^{(2)}(t_j))}{\arg\min}X'_{|\Delta_j|-2|\hat{D}_j|}(t)\right|\leq Q_{(|\Delta_j|-2\rho_N)/\sigma}(\sqrt[J]{1-\alpha})\Bigg|\hat{J}=J,\boldsymbol{\hat{\tau}^{(1)}}=\boldsymbol{t},\, \boldsymbol{\hat{\nu}^{(1)} }=\boldsymbol{v} \right]\nonumber\\
	&&-\mathbb{P}\left[ \left|\underset{t\in \pi^{-1}_{2,j}(S^{(2)}(t_j))}{\arg\min}X_{(|\Delta_j|-2\rho_N)/\sigma}(t)\right|\leq Q_{(|\Delta_j|-2\rho_N)/\sigma}(\sqrt[J]{1-\alpha})\right]\Biggr|\leq C^*_2\rho_N
	\end{eqnarray}
	for some $C^*_2>0$ not dependent on $j$ or $N$.
	~\newline
	\newline
	\textbf{Third Component: }The set $\pi_{2,j}^{-1}(S^{(2)}(t_j))$ contains the set $\left[\pm \left(\frac{(K-1)w(N)}{2}-1\right) \right]$. This allows an application of Lemma \ref{lem:eqprob} to obtain
	\begin{eqnarray}\label{eq:underdiff2short}
	&&\Biggr|\mathbb{P}\left[ \left|\underset{t\in \pi^{-1}_{2,j}(S^{(2)}(t_j))}{\arg\min}X_{(|\Delta_j|-2\rho_N)/\sigma}(t)\right|\leq Q_{(|\Delta_j|-2\rho_N)/\sigma}(\sqrt[J]{1-\alpha})\right]\nonumber\\
	&&-\mathbb{P}\left[ \left|\underset{|t|\leq \frac{(K-1)w(N)}{2}-1}{\arg\min}X_{(|\Delta_j|-2\rho_N)/\sigma}(t)\right|\leq Q_{(|\Delta_j|-2\rho_N)/\sigma}(\sqrt[J]{1-\alpha})\right]\Biggr|\nonumber\\
	&\leq & A_1\exp\left(-B_1\left(\frac{(K-1)w(N)}{2}-1\right)\right)
	\end{eqnarray}
	for some constant $A_1>0$, $B_1>0$ not dependent on $j$ or $N$.	The same application of the lemma can also yield
	\begin{eqnarray}\label{eq:underdiff3short}
	&&\Bigg|\mathbb{P}\left[ \left|\underset{|t|\leq \frac{(K-1)w(N)}{2}-1}{\arg\min}X_{(|\Delta_j|-2\rho_N)/\sigma}(t)\right|\leq Q_{(|\Delta_j|-2\rho_N)/\sigma}(\sqrt[J]{1-\alpha})\right]
	\nonumber\\
	&&-\mathbb{P}\left[ \left|\underset{t\in\mathbb{Z}}{\arg\min}X_{(|\Delta_j|-2\rho_N)/\sigma}(t)\right|\leq Q_{(|\Delta_j|-2\rho_N)/\sigma}(\sqrt[J]{1-\alpha})\right]\Bigg|\nonumber\\
	&\leq &A_2\exp\left(-B_2\left(\frac{(K-1)w(N)}{2}-1\right)\right)
	\end{eqnarray}
	for some $A_2>0$ and $B_2>0$ not dependent on $j$ or $N$. Adding up these two upper bounds imply
	\begin{eqnarray}\label{eq:thirdcompabsshort}
	&&\Biggr|\mathbb{P}\left[ \left|\underset{t\in \pi^{-1}_{2,j}(S^{(2)}(t_j))}{\arg\min}X_{(|\Delta_j|-2\rho_N)/\sigma}(t)\right|\leq Q_{(|\Delta_j|-2\rho_N)/\sigma}(\sqrt[J]{1-\alpha})\right]\nonumber\\
	&&-\mathbb{P}\left[ \left|\underset{t\in\mathbb{Z}}{\arg\min}X_{(|\Delta_j|-2\rho_N)/\sigma}(t)\right|\leq Q_{(|\Delta_j|-2\rho_N)/\sigma}(\sqrt[J]{1-\alpha})\right]\Bigg|\nonumber\\
	&\leq &(A_1\vee A_2)\exp\left(-(B_1\wedge B_2)\left(\frac{(K-1)w(N)}{2}-1\right)\right)
	\end{eqnarray}
	~\newline
	\noindent\textbf{Sum of the Components:} Adding up the differences in (\ref{eq:firstcompabs}), (\ref{eq:secondcompabs}), and (\ref{eq:thirdcompabs}):
	\begin{eqnarray}\label{eq:individualboundshort}
	&&\Bigg|\mathbb{P}\left[ \left|\lambda_2\left(\tau_j^{(2)},\hat{\tau}^{(2)}_j\right)\right|\leq Q_{(|\Delta_j|-2\rho_N)/\sigma}(\sqrt[J]{1-\alpha})\Bigg|\hat{J}=J,\boldsymbol{\hat{\tau}^{(1)}}=\boldsymbol{t},\, \boldsymbol{\hat{\nu}^{(1)} }=\boldsymbol{v}\right]\nonumber\\
	&&-\mathbb{P}\left[ \left|\underset{t\in\mathbb{Z}}{\arg\min}X_{(|\Delta_j|-2\rho_N)/\sigma}(t)\right|\leq Q_{(|\Delta_j|-2\rho_N)/\sigma}(\sqrt[J]{1-\alpha})\right]\Bigg|\nonumber\\
	&\leq& C_4\rho_N+C_5\exp\left[-C_6(K-1)w(N)\right]
	\end{eqnarray}
	for some constants $C_4$, $C_5$, and $C_6$. This allows us to bound
	\begin{eqnarray}
	&&\prod_{j=1}^J\mathbb{P}\left[ \left|\lambda_2\left( \tau_j^{(2)},\hat{\tau}_j^{(2)} \right)\right|\leq Q_{(|\Delta_j|-2\rho_N)/\sigma}(\sqrt[J]{1-\alpha})\Big| \hat{J}=J;\,\boldsymbol{\hat{\tau}^{(1)} }=\boldsymbol{t};\, \boldsymbol{\hat{\nu}^{(1)} }=\boldsymbol{v} \right]\nonumber\\
	&& -\prod_{j=1}^J P_{(|\Delta_j|-2\rho_N)/\sigma}(\sqrt[J]{1-\alpha})
	\end{eqnarray}
	with
	\begin{eqnarray}
	&&J\Bigg|\mathbb{P}\left[ \left|\lambda_2\left(\tau_j^{(2)},\hat{\tau}^{(2)}_j\right)\right|\leq Q_{(|\Delta_j|-2\rho_N)/\sigma}(\sqrt[J]{1-\alpha})\Bigg|\hat{J}=J,\boldsymbol{\hat{\tau}^{(1)}}=\boldsymbol{t},\, \boldsymbol{\hat{\nu}^{(1)} }=\boldsymbol{v}\right]\nonumber\\
	&&-\mathbb{P}_{|\Delta_j|-2\rho_N}(\sqrt[J]{1-\alpha})\Bigg|\left(\mathbb{P}_{|\Delta_j|-2\rho_N}(\sqrt[J]{1-\alpha})\right)^{-1}\nonumber\\
	&\leq&J\left(C_4\rho_N+C_5\exp\left[-C_6(K-1)w(N)\right]\right)(1-\alpha)^{-1/J}\nonumber\\
	&\leq&C_4(1-\alpha)^{-1}J\rho_N+C_5(1-\alpha)^{-1}N\exp\left[-C_6(K-1)w(N)\right]
	,\end{eqnarray}
	which goes to 0 since $J\rho_N\to 0$ and $w(N)\geq C N^{1-\gamma}$ for some constant $C$.
\end{proof}

\subsection{Proof of Theorem \ref{thm:increasingJasymprotics}}\label{sec:increasingJasymproticsproof}
In this section we perform the proof in the special situation where all the error terms of the data sequence are iid $N(0,\sigma^2)$ for some positive $\sigma$. In the more general case where the error terms are independent Gaussian with variances bounded above and below by $\sigma^2_{\max}$ and $\sigma_{\min}^2$, with identical distributions between change points, similar results from Section \ref{sec:supplementpartc} and methods from the following proof will yield the statement of Theorem \ref{thm:multidepend}. Because we are dealing with iid error terms, we could simplify the notations presented with the theorem in the main draft. First, we could simplify the random walks
\begin{eqnarray}
X_{\Delta}(t):=X_{\Delta,1,1}(t)=\begin{cases}
t\frac{|\Delta|}{2}+\sum_{i=1}^t\varepsilon^*_{ i }\qquad & t>0\\
0 & t=0\\
\frac{|t\Delta|}{2}+\sum_{i=0}^{t-1}\varepsilon^*_{-i} \quad t<0
\end{cases}\\
\text{where }\varepsilon^*_i\overset{\text{iid}}{\sim} N(0,1)\text{ for all }i,
\end{eqnarray}
denote the argmin of this random walk as $L_{\Delta}:=\underset{t\in\mathbb{Z}}{\arg\min}\; X_\Delta(t)$, simplify the quantiles
\begin{eqnarray}
Q_{\Delta}(1-\alpha):=Q_{\Delta,1,1}(1-\alpha)=\min\;\left\{t\geq 0:\mathbb{P}\left[\left|L_\Delta\right|\leq t  \right]\geq 1-\alpha\right\}
\end{eqnarray}
for any $\alpha\in [0,1]$, and their associated probabilities 
\begin{equation}P_{\Delta}(1-\alpha)=P_{\Delta,1,1}(1-\alpha)\left[ |L_\Delta|\leq Q_{\Delta}(1-\alpha) \right] .
\end{equation}
We will use these notations for the remainder of this section and the entirety of Section \ref{sec:supplementpartc}.
\newline
\newline
Using this notation and working within this simplified framework of iid errors, the results of the theorem translates to:
\begin{eqnarray}\label{eq:increasingJasymproticsineqs}  
&&\mathbb{P}\left[ \hat{J}=J;\, \left|\lambda_2\left( \tau_j^{(2)},\hat{\tau}_j^{(2)} \right)\right|
\leq Q_{(|\Delta_j|-2\rho_N)/\sigma}(\sqrt[J]{1-\alpha})\text{ for }j=1,\dots,J  \right]\nonumber\\
&=& \left(\prod_{j=1}^JP_{(|\Delta_j|-2\rho_N)/\sigma}(\sqrt[J]{1-\alpha})\right)+o(1),
\end{eqnarray}  
and the second inequality
\begin{eqnarray}\label{eq:increasingJasymproticsineqs2} 
&&\mathbb{P}\left[ \hat{J}=J;\, \left|\lambda_2\left( \tau_j^{(2)},\hat{\tau}_j^{(2)} \right)\right|\leq Q_{(|\Delta_j|+2\rho_N)/\sigma}(\sqrt[J]{1-\alpha})\text{ for }j=1,\dots,J  \right]\nonumber\\ &=& \left(\prod_{j=1}^JP_{(|\Delta_j|+2\rho_N)/\sigma}(\sqrt[J]{1-\alpha})\right)+o(1),
\end{eqnarray}
Both these inequalities can be shown utilizing various probability bounds regarding the argmins of random walks of the form $X_\Delta(\cdot)$, and other similar random walks. To utilize such result, the estimators $\hat{\tau}^{(2)}_j$ must be expressed as the argmin of random walks with positive linear drifts. We thus introduce some new notation and new insight to facilitate this task.
\newline
\newline
\indent First we define the indexing function $\pi_2=\pi_{2,N}$, which maps the set of integers $\{1,\dots,N\}-S^{(1)}$ to the set of integers $\{1,\dots, N-|S^{(1)}| \}$:
\begin{eqnarray}\label{def:pi2func}
\pi_2(k)=\sum_{j=1}^k1(j\notin S^{(1)}).
\end{eqnarray}
For every $N$, $\pi_2$ is a strictly increasing bijection, and it also has the property such that for any two values $i$ and $j$ in the domain of $\pi_2$, $\lambda_2(i,j)=\pi_2(j)-\pi_2(i)$. Additionally, consider the following subset of the full data:
\begin{eqnarray}\label{set:secstagedata}
Y_{\pi_2^{-1}(1)},Y_{\pi_2^{-1}(2)},\dots, Y_{\pi_2^{-1}(N-|S^{(1)}|)}.
\end{eqnarray}
These all the datapoints which were not used in the first stage subsample. This subset of the full dataset is also a change point model following conditions (M1) to (M4) with the same signal upper bound $\bar{\theta}$, the signal jump lower bound $\underline{\Delta}$, and the same error distribution. The change points for this subset is $\pi_2(\tau_j^{(2)})$ for $j=1,\dots,J$. In fact, (\ref{set:secstagedata}) and the first stage subsample $\{ Y_i \}_{i\in S^{(1)}}$ can be considered statistically independent change point models, resulting in the first stage estimates obtained using $\{ Y_i \}_{i\in S^{(1)}}$ being independent of (\ref{set:secstagedata}). This means that conditional on the first stage estimates, the distribution of (\ref{set:secstagedata}) does not change from their marginal distributions.
\newline
\newline
\indent A property of the $\pi_2$ function that will be used is a relationship between the $\pi_2^{-1}$ function and the "normal" subtraction, namely, we want to be able to compare $\pi_2^{-1}(a,b)$ to $b-a$ for $a,b\in \{1,\dots, N-|S^{(1)}|\}$. For all $N$ large enough such that the distance between consecutive points in $S^{(1)}$ is more than 2 (i.e. $\min\limits_{i,j\in S^{(1)},i\neq j}|i-j|>2$), we have the following property: for any integers $a$ and $b$ such that $a$ and $a+b$ are in $\{1,\dots, N-|S^{(1)}|\}$,
\begin{eqnarray}\label{eq:compdists}
\left|\pi_2^{-1}(a+b)-\pi_2^{-1}(a)\right|\leq 2|b|.
\end{eqnarray}
One way to see this is the following: for any $c,d\in \{ 1,\dots,N \}-S^{(1)}$, $|\lambda_2(c,d)|$ counts the number of points in either $(c,d]$ (if $c\leq d$) or $(d,c]$ (if otherwise), and therefore $|\lambda_2(c,d)|\geq |d-c|/2$ due to the $S^{(1)}$ containing no two points that are less than 2 apart. Knowing this, the fact that
\begin{eqnarray}
\left|\lambda_2\left( \pi_2^{-1}(a),\pi_2^{-1}(a+b) \right)\right|=\left|\pi_2\left(\pi_2^{-1}(a+b)\right)-\pi_2\left( \pi_2^{-1}(a) \right)\right|= |b|,
\end{eqnarray}
means that
\begin{eqnarray}
2|b|=  2\left|\lambda_2\left( \pi_2^{-1}(a),\pi_2^{-1}(a+b) \right)\right|\geq
\left|\pi_2^{-1}(a+b)-\pi_2^{-1}(a)\right|
\end{eqnarray}

\indent We will offer a proof of the inequality given in (\ref{eq:increasingJasymproticsineqs}). The steps for verifying (\ref{eq:increasingJasymproticsineqs2}) would only require a few modifications.
\begin{proof}
	Let $\mathcal{R}_N$ be the event
	\begin{eqnarray}
	\mathcal{R}_N:= \left\{ \hat{J}=J;\,\max_{j=1,\dots,J}\left| \hat{\tau}^{(1)}_j-\tau_j \right|\leq w(N);\, \max_{j=0,\dots,J}\left| \hat{\nu}^{(1)}_j-\nu_j \right|\leq \rho_N \right\}
	\end{eqnarray}
	Again define $G_N()$ joint distribution of the first stage estimates $(\hat{J},\boldsymbol{\hat{\tau}^{(1)}},\boldsymbol{\hat{\nu}^{(1)}})=(\hat{J},\hat{\tau}^{(1)}_1,\dots,\hat{\tau}^{(1)}_{\hat{J}},\hat{\nu}^{(1)}_0,\dots,\hat{\nu}^{(1)}_{\hat{J}})$.
	\begin{eqnarray}\label{eq:condinte2}
	&&\mathbb{P}\left[  \hat{J}=J;\, \left|\lambda_2\left( \tau_j^{(2)},\hat{\tau}_j^{(2)} \right)\right|
	\leq Q_{(|\Delta_j|-2\rho_N)/\sigma}(\sqrt[J]{1-\alpha})\text{ for }j=1,\dots,J  \right]\nonumber\\
	&= & \mathbb{P}\left[  \hat{J}=J;\, \left|\lambda_2\left( \tau_j^{(2)},\hat{\tau}_j^{(2)} \right)\right|
	\leq Q_{(|\Delta_j|-2\rho_N)/\sigma}(\sqrt[J]{1-\alpha})\,\forall j;\, \mathcal{R}_N  \right]\nonumber\\
	&&+\mathbb{P}\left[ \hat{J}=J;\, \left|\lambda_2\left( \tau_j^{(2)},\hat{\tau}_j^{(2)} \right)\right|
	\leq Q_{(|\Delta_j|-2\rho_N)/\sigma}(\sqrt[J]{1-\alpha})\,\forall j;\,\text{not }\mathcal{R}_N \right]\nonumber\\
	&=&\underset{ (k,\boldsymbol{t},\boldsymbol{\alpha})\in\mathcal{R}_N }{\int}\mathbb{P}\left[ \max_{j=1,\dots,J}\left|\lambda_2\left( \tau_j^{(2)},\hat{\tau}_j^{(2)} \right)\right|\leq Q_{(|\Delta_j|-2\rho_N)/\sigma}(\sqrt[J]{1-\alpha})\Big| \hat{J}=k;\,\boldsymbol{\hat{\tau}^{(1)} }=\boldsymbol{t};\, \boldsymbol{\hat{\nu}^{(1)} }=\boldsymbol{v} \right]dG_N(k,\boldsymbol{t},\boldsymbol{v})\nonumber\\
	&&+\mathbb{P}\left[\hat{J}=J;\, \left|\lambda_2\left( \tau_j^{(2)},\hat{\tau}_j^{(2)} \right)\right|
	\leq Q_{(|\Delta_j|-2\rho_N)/\sigma}(\sqrt[J]{1-\alpha})\,\forall j;\, \text{not }\mathcal{R}_N \right]\nonumber
	\end{eqnarray}
	Because the probability of $\mathcal{R}_N$ goes to 1, the difference between this probability and $\prod_{j=1}^J P_{(|\Delta_j|-2\rho_N)/\sigma}(\sqrt[J]{1-\alpha})$ is
	\begin{eqnarray}
	&&\underset{ (k,\boldsymbol{t},\boldsymbol{\alpha})\in\mathcal{R}_N }{\int}\Bigg(\mathbb{P}\left[ \max_{j=1,\dots,J}\left|\lambda_2\left( \tau_j^{(2)},\hat{\tau}_j^{(2)} \right)\right|\leq Q_{(|\Delta_j|-2\rho_N)/\sigma}(\sqrt[J]{1-\alpha})\Big| \hat{J}=k;\,\boldsymbol{\hat{\tau}^{(1)} }=\boldsymbol{t};\, \boldsymbol{\hat{\nu}^{(1)} }=\boldsymbol{v} \right]\nonumber\\
	&& -\prod_{j=1}^J P_{(|\Delta_j|-2\rho_N)/\sigma}(\sqrt[J]{1-\alpha})\Bigg) dG_N(k,\boldsymbol{t},\boldsymbol{v})+o(1)
	\end{eqnarray}
	It is therefore sufficient to show that the difference inside the integral is, for all $(k,\boldsymbol{t},\boldsymbol{\alpha})\in\mathcal{R}_N$, uniformly bounded in absolute value by a $o(1)$ term. In other words, show there is a sequence $C_{N,\alpha}=o(1)$ such that
	\begin{eqnarray}\label{eq:condprodprobdev}
	&&\Bigg| \mathbb{P}\left[ \max_{j=1,\dots,J}\left|\lambda_2\left( \tau_j^{(2)},\hat{\tau}_j^{(2)} \right)\right|\leq Q_{(|\Delta_j|-2\rho_N)/\sigma}(\sqrt[J]{1-\alpha})\Big| \hat{J}=k;\,\boldsymbol{\hat{\tau}^{(1)} }=\boldsymbol{t};\, \boldsymbol{\hat{\nu}^{(1)} }=\boldsymbol{v} \right]\nonumber\\
	&&-\prod_{j=1}^J P_{(|\Delta_j|-2\rho_N)/\sigma}(\sqrt[J]{1-\alpha})\Bigg|\leq C_{N,\alpha}
	\end{eqnarray}
	for all admissible $(k,\boldsymbol{t},\boldsymbol{v})\in\mathcal{R}_N$.
	%
	\newline
	\newline
	\indent Henceforth, consider only such admissible $k$ (which restricts to $k=J$), $\boldsymbol{t}$'s and $\boldsymbol{v}$'s, and additionally suppose that $N$ is at least large enough so that $\rho_N\leq |\underline{\Delta}|/16$ and distance between consecutive points in $S^{(1)}$ is more than 2 (i.e. $\min\limits_{i,j\in S^{(1)},i\neq j}|i-j|>2$). We will proceed to show (\ref{eq:condprodprobdev}), by obtaining an upper bound for the following absolute difference, for every $j=1,\dots, J$:
	\begin{eqnarray}
	&&\Bigg| \mathbb{P}\left[ \left|\hat{\tau}^{(2)}_j-\tau_j\right|\leq Q_{(|\Delta_j|-2\rho_N)/\sigma}(\sqrt[J]{1-\alpha})\Big| \hat{J}=J;\,\boldsymbol{\hat{\tau}^{(1)} }=\boldsymbol{t};\, \boldsymbol{\hat{\nu}^{(1)} }=\boldsymbol{v} \right]\nonumber\\
	&&-P_{(|\Delta_j|-2\rho_N)/\sigma}(\sqrt[J]{1-\alpha})\Bigg|
	\end{eqnarray}
	This upper bound will be derived in several components.
	\newline
	\newline
	\noindent \textbf{First Component:} A more explicit expression for the change point estimates is
	\begin{eqnarray}
	\hat{\tau}^{(2)}_j&=&\underset{t\in S^{(2)}(t_i)}{\arg\min}(\text{sgn}(v_j-v_{j-1}))\sum_{i\in S^{(2)}(t_i)}\left( Y_i-\frac{v_j+v_{j-1}}{2} \right)( 1(i\leq t)-1(i\leq \tau_j^{(2)} ))\nonumber\\
	&=:& \underset{t\in S^{(2)}(t_i)}{\arg\min} \hat{X}_j^{(2)}(t)
	\end{eqnarray}
	Since $N$ was assumed to be large enough so that $\rho_N< |\underline{\Delta}|/8\leq (\nu_j-\nu_{j-1})/8$, the sign of $v_j-v_{j-1}$ is the same as the sign of $\Delta_j:=\nu_j-\nu_{j-1}$, making  the optimized expression above equal
	\begin{align}
	\hat{X}^{(2)}_j(t) &=\text{sgn}(\Delta_j)\sum_{i\in S^{(2)}(t_i)}\left( Y_i-\frac{v_j+v_{j-1}}{2} \right)( 1(i\leq t)-1(i\leq \tau_j^{(2)} ))\nonumber\\
	&=\begin{cases}
	\left|(\tau_j^{(2)},t]\cap S^{(2)}(t_j)\right|\left[\frac{|\Delta_j|}{2}+\text{sgn}(\Delta_j)\hat{D}_j\right]+\text{sgn}(\Delta_j)\underset{i\in (\tau_j^{(2)},t]\cap S^{(2)}(t_j)}{\sum}\varepsilon_i \, & \text{for }t>\tau_j\\
	0 &t=\tau_j\\
	\left| (t,\tau_j^{(2)}]\cap S^{(2)}(t_j) \right|\left[\frac{|\Delta_j|}{2}-\text{sgn}(\Delta_j)\hat{D}_j\right]-\text{sgn}(\Delta_j)\underset{i\in (t,\tau_j^{(2)}]\cap S^{(2)}(t_j) }{\sum}\varepsilon_{i}\qquad & \text{for }t>\tau_j
	\end{cases}\nonumber\\
	&=\begin{cases}
	\left(\pi_2(t)-\pi_2(\tau_j^{(2)})\right)\cdot\left[\frac{|\Delta_j|}{2}+\text{sgn}(\Delta_j)\hat{D}_j\right]+\text{sgn}(\Delta_j)\sum\limits_{i=\pi_2(\tau_j^{(2)})+1  }^{\pi_2(t)}\varepsilon_{\pi_2^{-1}(i)} \qquad & \text{for }t>\tau_j\\
	0 &t=\tau_j\\
	\left(\pi_2(\tau_j^{(2)})-\pi_2(t)\right)\cdot\left[\frac{|\Delta_j|}{2}-\text{sgn}(\Delta_j)\hat{D}_j\right]-\text{sgn}(\Delta_j)\sum\limits_{i=\pi_2(t)+1}^{\pi_2(\tau_j^{(2)})}\varepsilon_{\pi_2^{-1}(i)}\qquad & \text{for }t<\tau_j
	\end{cases}\nonumber\\
	&\text{since }(a,b]\cap S^{(2)}(t_j)=\pi_2^{-1}\left((\pi_2(a),\pi_2(b)]\right)\text{ for any }(a,b]\subset S^{(2)}(t_j)\nonumber\\
	\end{align}
	where 
	\begin{eqnarray}
	\hat{D}_j=\frac{\nu_j-v_j}{2}+\frac{\nu_{j-1}-v_{j-1}}{2}
	\end{eqnarray}
	which is less than $\rho_N$ in absolute value. From the equalities written above, we can deduce that for any integer $t$ such that $\pi_2^{-1}\left( \pi_2(\tau^{(2)}_j)+t \right)\in S^{(2)}(t_j)$, 
	\begin{eqnarray}
	&&\hat{X}_j^{(2)}\left( \pi_2^{-1}\left(\pi_2(\tau^{(2)}_j)+t  \right) \right)\nonumber\\
	&=&\begin{cases}
	t\cdot\left[\frac{|\Delta_j|}{2}+\text{sgn}(\Delta_j)\hat{D}_j\right]+\text{sgn}(\Delta_j)\sum\limits_{i=\pi_2(\tau_j^{(2)})+1  }^{\pi_2(\tau_j^{(2)})+t}\varepsilon_{\pi_2^{-1}(i)} \qquad & \text{for }t>\tau_j\\
	0 &t=\tau_j\\
	\left|t\right|\cdot\left[\frac{|\Delta_j|}{2}-\text{sgn}(\Delta_j)\hat{D}_j\right]-\text{sgn}(\Delta_j)\sum\limits_{i=\pi_2(\tau_j^{(2)})+t+1}^{\pi_2(\tau_j^{(2)})}\varepsilon_{\pi_2^{-1}(i)}\qquad & \text{for }t<\tau_j
	\end{cases}
	\end{eqnarray}
	This allows us to compare $\hat{X}_j^{(2)}$ with the random walk $X'_{|\Delta_j|-2|\hat{D}_j|}$:
	\begin{eqnarray}
	X'_{|\Delta_j|-2|\hat{D}_j|}(t):=\begin{cases}
	t\frac{|\Delta_j|-2|\hat{D}_j|}{2}+\text{sgn}(\Delta_j)\sum\limits_{i=1}^{t}\varepsilon_{ \pi_2^{-1}\left(\pi_2(\tau_j^{(2)})+i  \right) }\qquad & \text{for }t>0\\
	0 &t=0\\
	|t|\frac{|\Delta_j|-2|\hat{D}_j|}{2}-\text{sgn}(\Delta_j)\sum\limits_{i=t}^{-1}\varepsilon_{\pi_{2}^{-1}\left(\pi_2(\tau_j^{(2)})+i+1\right)}\qquad & \text{for }t<0
	\end{cases}.
	\end{eqnarray}
	Specifically, the random process $\hat{X}^{(2)}_j\left( \pi_2^{-1}\left(\pi_2(\tau^{(2)}_j)+t  \right) \right)$ either equals the sum $X'_{|\Delta_j|-2|\hat{D}_j|}(t)+2|\hat{D}_jt|1(t>0)$ or $X'_{|\Delta_j|-2|\hat{D}_j|}(t)+2|\hat{D}_jt|1(t<0)$. In either case, this begs the use of Lemma \ref{lem:randomwalkcompoppo} and Lemma \ref{lem:randomwalkcomp}, but first we must verify some the conditions of those results, namely that $|\Delta_j|>2|\hat{D}_j|$ (automatically true since $N$ was assumed to be large enough so that $|\Delta_j|\geq \underline{\Delta} \geq 8\rho_N$), and secondly, $\pi_2^{-1}\left( \pi_2(\tau^{(2)}_j)+t \right)\in S^{(2)}(t_j)$ for $|t|\leq Q_{(|\Delta_j|-2\rho_N)/\sigma}(\sqrt[J]{1-\alpha})$.
	To see that this is true, first use (\ref{eq:compdists}) to arrive at
	\begin{eqnarray}
	&&\left|\pi_2^{-1}\left(\pi_2(\tau^{(2)}_j)\pm Q_{(|\Delta_j|-2\rho_N)/\sigma}(\sqrt[J]{1-\alpha}) \right)-\tau_j^{(2)}\right|\nonumber\\
	&=&\left|\pi_2^{-1}\left(\pi_2(\tau^{(2)}_j)\pm Q_{(|\Delta_j|-2\rho_N)/\sigma}(\sqrt[J]{1-\alpha}) \right)-\pi_2^{-1}\left(\pi_2\left(\tau_j^{(2)}\right)\right)  \right|\nonumber\\
	&\leq&
	2Q_{(|\Delta_j|-2\rho_N)/\sigma}(\sqrt[J]{1-\alpha})
	\end{eqnarray}
	Therefore, given any $|t|\leq Q_{(|\Delta_j|-2\rho_N)/\sigma}(\sqrt[J]{1-\alpha})$, $\pi_2^{-1}\left( \pi_2\left( \tau_j^{(2)} \right)+t \right)\in\left[ \tau_j^{(2)}\pm 2Q_{(|\Delta_j|-2\rho_N)/\sigma}(\sqrt[J]{1-\alpha}) \right]-S^{(1)}$. Next, due to Lemma \ref{lem:quantbound}, there are positive expressions $C_1(\cdot)$, $C_2(\cdot)$, both decreasing, such that
	\begin{eqnarray}
	Q_{(|\Delta_j|-2\rho_N)/\sigma}(\sqrt[J]{1-\alpha})&\leq& C_1(|\Delta_j|-2\rho_N)\log\left(\frac{C_2(|\Delta_j|-2\rho_N)J}{\alpha} \right)\nonumber\\
	&\leq & C_1(\underline{\Delta}/2)\log\left(\frac{C_2(\underline{\Delta}/2)J}{\alpha} \right)\nonumber\\
	&\leq &C_1(\underline{\Delta}/2)\log\left(\frac{C_2(\underline{\Delta}/2)N}{\alpha} \right)
	\end{eqnarray}
	Since $w(N)$ is greater than order of $N^{1-\gamma}$, which in turn is greater in order than $\log(N)$, we see that for all large $N$:	
	\begin{eqnarray}
	&&t_j-Kw(N)\leq \tau_j^{(2)}-(K-1)w(N)+1 < \tau_j^{(2)}-2Q_{(|\Delta_j|-2\rho_N)/\sigma}(\sqrt[J]{1-\alpha})\nonumber\\
	&\leq & \pi_2^{-1}\left(\pi_2(\tau^{(2)}_j)- Q_{(|\Delta_j|-2\rho_N)/\sigma}(\sqrt[J]{1-\alpha}) \right)<\pi_2^{-1}\left(\pi_2(\tau^{(2)}_j)+ Q_{(|\Delta_j|-2\rho_N)/\sigma}(\sqrt[J]{1-\alpha}) \right)\nonumber\\
	&\leq &\tau_j^{(2)}+2Q_{(|\Delta_j|-2\rho_N)/\sigma}(\sqrt[J]{1-\alpha})\leq \tau_j^{(2)}+(K-1)w(N)-1\leq t_j+Kw(N)
	\end{eqnarray}
	Therefore given any $|t|\leq Q_{(|\Delta_j|-2\rho_N)/\sigma}(\sqrt[J]{1-\alpha})$,
	\begin{eqnarray}
	\pi_2^{-1}\left( \pi_2\left( \tau_j^{(2)} \right)+t \right)&\in&\left[ \tau_j^{(2)}\pm 2Q_{(|\Delta_j|-2\rho_N)/\sigma}(\sqrt[J]{1-\alpha}) \right]-S^{(1)}\nonumber\\
	&\subset& \left[ t_j\pm Kw(N) \right]-S^{(1)}\nonumber\\
	&=& S^{(2)}(t_j)
	\end{eqnarray}
	showing that the conditions of Lemma \ref{lem:randomwalkcompoppo} are satisfied.
	Before continuing, we make a small point before continuing. Note that for any integer $t^*\in S^{(2)}(t_j)$,  $\lambda_2\left(\tau_j^{(2)},\hat{\tau}^{(2)}_j\right)=t^*$ is an equivalent event to $\underset{t:\,\pi_2^{-1}\left( \pi_2(\tau_j^{(2)})+t \right)\in S^{(2)}}{\arg\min}\hat{X}^{(2)}_j\left( \pi_2^{-1}\left( \pi_2(\tau_j^{(2)})+t \right) \right)=t^*$. This is because
	\begin{eqnarray}\label{eq:redefinelambda2}
	&&\underset{t:\,\pi_2^{-1}\left( \pi_2(\tau_j^{(2)})+t \right)\in S^{(2)}}{\arg\min}\hat{X}^{(2)}_j\left( \pi_2^{-1}\left( \pi_2(\tau_j^{(2)})+t \right) \right)=t^*\nonumber\\
	&\longleftrightarrow & \underset{t\in S^{(2)}(t_j)}{\arg\min} \hat{X}^{(2)}_j(t)=\pi_2^{-1}\left( \pi_2(\tau_j^{(2)})+t^*\right)\nonumber\\
	&\longleftrightarrow & \pi_2\left( \underset{t\in S^{(2)}(t_j)}{\arg\min} \hat{X}^{(2)}_j(t) \right)-\pi_2(\tau_j^{(2)})=t^*\nonumber\\
	&\longleftrightarrow & \lambda_2\left(\tau_j^{(2)},\hat{\tau}^{(2)}_j\right)=t^*
	\end{eqnarray}
	We are now ready to apply Lemma \ref{lem:randomwalkcompoppo}:		
	\begin{eqnarray}\label{eq:underdiff1}
	&&\mathbb{P}\left[ \left|\lambda_2\left(\tau_j^{(2)},\hat{\tau}^{(2)}_j\right)\right|\leq Q_{(|\Delta_j|-2\rho_N)/\sigma}(\sqrt[J]{1-\alpha})\Bigg|\hat{J}=J,\boldsymbol{\hat{\tau}^{(1)}}=\boldsymbol{t},\, \boldsymbol{\hat{\nu}^{(1)} }=\boldsymbol{v}\right]\nonumber\\
	&&-\mathbb{P}\left[ \left|\underset{t:\,\pi_2^{-1}\left( \pi_2(\tau_j^{(2)})+t\right)\in S^{(2)}}{\arg\min}X'_{|\Delta_j|-2|\hat{D}_j|}(t)\right|\leq Q_{(|\Delta_j|-2\rho_N)/\sigma}(\sqrt[J]{1-\alpha})\Bigg|\hat{J}=J,\boldsymbol{\hat{\tau}^{(1)}}=\boldsymbol{t},\, \boldsymbol{\hat{\nu}^{(1)} }=\boldsymbol{v} \right]  \nonumber\\
	&=&\mathbb{P}\left[ \left|\underset{t:\,\pi_2^{-1}\left( \pi_2(\tau_j^{(2)})+t \right)\in S^{(2)}}{\arg\min}\sigma^{-1}\hat{X}^{(2)}_j\left( \pi_2^{-1}\left( \pi_2(\tau_j^{(2)})+t \right) \right)\right|\leq Q_{(|\Delta_j|-2\rho_N)/\sigma}(\sqrt[J]{1-\alpha})\Bigg|(\hat{J},\boldsymbol{\hat{\tau}^{(1)}},\boldsymbol{\hat{\nu}^{(1)} })=(J,\boldsymbol{t}, \boldsymbol{v})\right]\nonumber\\
	&&-\mathbb{P}\left[ \left|\underset{t:\,\pi_2^{-1}\left( \pi_2(\tau_j^{(2)})+t\right)\in S^{(2)}}{\arg\min}\sigma^{-1}X'_{|\Delta_j|-2|\hat{D}_j|}(t)\right|\leq Q_{(|\Delta_j|-2\rho_N)/\sigma}(\sqrt[J]{1-\alpha})\Bigg|\hat{J}=J,\boldsymbol{\hat{\tau}^{(1)}}=\boldsymbol{t},\, \boldsymbol{\hat{\nu}^{(1)} }=\boldsymbol{v} \right]\nonumber\\
	&&\text{indexing change by }(\ref{eq:redefinelambda2});\text{ and division by constant } \sigma\text{ does not change argmin value}\nonumber\\
	&\leq& 2\hat{D}_j\sigma^{-1}\Big[A_1^+\left((|\Delta_j|-2\hat{D}_j)/\sigma\right)\left(Q_{(|\Delta_j|-2\rho_N)/\sigma}(\sqrt[J]{1-\alpha})\right)^{3/2}\nonumber\\
	&&+B_1^+\left((|\Delta_j|-2\hat{D}_j)/\sigma\right)\sqrt{Q_{(|\Delta_j|-2\rho_N)/\sigma}(\sqrt[J]{1-\alpha})}\Big]\nonumber\\
	&&\times \exp\left[ -C_1^+\left( (|\Delta_j|-2\hat{D}_j)/\sigma \right)Q_{(|\Delta_j|-2\rho_N)/\sigma}(\sqrt[J]{1-\alpha}) \right]\nonumber
	\end{eqnarray}
	Because, as stated previously, $\hat{X}^{(2)}_j\left( \pi_2^{-1}\left(\pi_2(\tau^{(2)}_j)+t  \right) \right)$ either equals $X'_{|\Delta_j|-2|\hat{D}_j|}(t)+2|\hat{D}_jt|1(t>0)$ or $X'_{|\Delta_j|-2|\hat{D}_j|}(t)+2|\hat{D}_jt|1(t<0)$, Lemma \ref{lem:randomwalkcompoppo} leads to this inequality for some positive monotone expressions $A_1^+()$, $B_1^+()$, and $C_1^+()$, which are decreasing, decreasing, and increasing. This expression could be further bounded by
	\begin{align}\label{eq:firstcomp1}
	&\leq  2\sigma^{-1}\hat{D}_j\left[A_1^+\left(\underline{\Delta}/2\sigma\right)\left(Q_{(|\Delta_j|-2\rho_N)/\sigma}(\sqrt[J]{1-\alpha})\right)^{3/2}+B_1^+\left(\underline{\Delta}/2\sigma\right)\sqrt{Q_{(|\Delta_j|-2\rho_N)/\sigma}(\sqrt[J]{1-\alpha})}\right]\nonumber\\
	&\times\exp\left[ -C_1^+\left( \underline{\Delta}/2\sigma \right)Q_{(|\Delta_j|-2\rho_N)/\sigma}(\sqrt[J]{1-\alpha}) \right]\nonumber\\
	&\qquad \leq 2C^*\left(\underline{\Delta}/2\sigma\right)\rho_N\nonumber\\
	&\text{where }C_+^*(\cdot)=\sup_{x\in\mathbb{R}^+}\left([A(\cdot)x^{3/2}+B(\cdot)\sqrt{x}]\exp(-C(\cdot)x)\right)\text{, guaranteed to be finite}
	\end{align}
	In a similar manner, apply Lemma \ref{lem:randomwalkcomp} to obtain 
	\begin{align}\label{eq:firstcomp2}
	&\mathbb{P}\left[ \left|\lambda_2\left(\tau_j^{(2)},\hat{\tau}^{(2)}_j\right)\right|\leq Q_{(|\Delta_j|-2\rho_N)/\sigma}(\sqrt[J]{1-\alpha})\Bigg|\hat{J}=J,\boldsymbol{\hat{\tau}^{(1)}}=\boldsymbol{t},\, \boldsymbol{\hat{\nu}^{(1)} }=\boldsymbol{v}\right]\nonumber\\
	&-\mathbb{P}\left[ \left|\underset{t:\,\pi_2^{-1}\left( \pi_2(\tau_j^{(2)})+t\right)\in S^{(2)}}{\arg\min}X'_{|\Delta_j|-2|\hat{D}_j|}(t)\right|\leq Q_{(|\Delta_j|-2\rho_N)/\sigma}(\sqrt[J]{1-\alpha})\Bigg|\hat{J}=J,\boldsymbol{\hat{\tau}^{(1)}}=\boldsymbol{t},\, \boldsymbol{\hat{\nu}^{(1)} }=\boldsymbol{v} \right]  \nonumber\\
	&=\mathbb{P}\left[ \left|\underset{t:\,\pi_2^{-1}\left( \pi_2(\tau_j^{(2)})+t \right)\in S^{(2)}}{\arg\min}\frac{1}{\sigma}\hat{X}^{(2)}_j\left( \pi_2^{-1}\left( \pi_2(\tau_j^{(2)})+t \right) \right)\right|\leq Q_{(|\Delta_j|-2\rho_N)/\sigma}(\sqrt[J]{1-\alpha})\Bigg|\hat{J}=J,\boldsymbol{\hat{\tau}^{(1)}}=\boldsymbol{t},\, \boldsymbol{\hat{\nu}^{(1)} }=\boldsymbol{v}\right]\nonumber\\
	&-\mathbb{P}\left[ \left|\underset{t:\,\pi_2^{-1}\left( \pi_2(\tau_j^{(2)})+t\right)\in S^{(2)}}{\arg\min}\frac{1}{\sigma}X'_{|\Delta_j|-2|\hat{D}_j|}(t)\right|\leq Q_{(|\Delta_j|-2\rho_N)/\sigma}(\sqrt[J]{1-\alpha})\Bigg|\hat{J}=J,\boldsymbol{\hat{\tau}^{(1)}}=\boldsymbol{t},\, \boldsymbol{\hat{\nu}^{(1)} }=\boldsymbol{v} \right]\nonumber\\
	&\geq  -2A_1^-\left(\left( |\Delta_j|-6|\hat{D}_j|\right)/\sigma \right)\rho_N\sqrt{Q_{(|\Delta_j|-2\rho_N)/\sigma}(\sqrt[J]{1-\alpha})}\nonumber\\
	&\times\exp\left[ -B_1^- \left(\left( |\Delta_j|-6|\hat{D}_j|\right)/\sigma \right)  Q_{(|\Delta_j|-2\rho_N)/\sigma}(\sqrt[J]{1-\alpha}) \right]\nonumber\\
	&\text{for some positive decreasing expression }A_1^-\text{ and some positive increasing expression }B_1^-\nonumber\\
	&\geq -2A_1^-\left( \underline{\Delta}/2\sigma \right)\rho_N\sqrt{Q_{(|\Delta_j|-2\rho_N)/\sigma}(\sqrt[J]{1-\alpha})}\exp\left[ -B_1^- \left( \underline{\Delta}/2\sigma \right)  Q_{(|\Delta_j|-2\rho_N)/\sigma}(\sqrt[J]{1-\alpha}) \right]\nonumber\\
	&\geq -2C_1^-(\underline{\Delta}/2\sigma)\rho_N\quad\text{ where }C_1^-(\cdot)=\sup_{x\in\mathbb{R}^+}A_1^-(\cdot)\sqrt{x}\exp\left[ B_1^-(\cdot)x \right]\nonumber\\
	\end{align}
	Altogether, both (\ref{eq:firstcomp1}) and (\ref{eq:firstcomp2}) together imply
	\begin{eqnarray}\label{eq:firstcompabs}
	&&\Bigg|\mathbb{P}\left[ \left|\lambda_2\left(\tau_j^{(2)},\hat{\tau}^{(2)}_j\right)\right|\leq Q_{(|\Delta_j|-2\rho_N)/\sigma}(\sqrt[J]{1-\alpha})\Bigg|\hat{J}=J,\boldsymbol{\hat{\tau}^{(1)}}=\boldsymbol{t},\, \boldsymbol{\hat{\nu}^{(1)} }=\boldsymbol{v}\right]\nonumber\\
	&&-\mathbb{P}\left[ \left|\underset{t\in S^{(2)}(t_j)-\tau_j}{\arg\min}X'_{|\Delta_j|-2|\hat{D}_j|}(t)\right|\leq Q_{(|\Delta_j|-2\rho_N)/\sigma}(\sqrt[J]{1-\alpha})\Bigg|\hat{J}=J,\boldsymbol{\hat{\tau}^{(1)}}=\boldsymbol{t},\, \boldsymbol{\hat{\nu}^{(1)} }=\boldsymbol{v} \right]\Bigg|  \nonumber\\
	&\qquad &\leq 2\left( C_1^-(\underline{\Delta}/2\sigma)\vee C_1^+(\underline{\Delta}/2\sigma) \right)\rho_N
	\end{eqnarray}
	~\newline
	\newline
	\textbf{Second Component: }Now $X'_{|\Delta|-2|\hat{D}_j|}(t)/\sigma$ has the same exact distribution as $X_{(|\Delta|-2|\hat{D}_j|)/\sigma}(t)$, for all integers $t$ such that $\pi_2^{-1}\left( \pi_2(\tau_j^{(2)})+t\right)\in S^{(2)}(t_j)$. It was also shown in the previous section that the set $\left\{ t:\,\pi_2^{-1}\left( \pi_2(\tau_j^{(2)})+t\right)\in S^{(2)}(t_j) \right\}$ contains the interval of integers $\left[ \pm Q_{(|\Delta_j|-2\rho_N)/\sigma}(\sqrt[J]{1-\alpha}) \right]$. Therefore apply Lemma \ref{lem:twoXrandwalkcomp} to first obtain
	\begin{eqnarray}
	&&\mathbb{P}\left[ \left|\underset{ t:\,\pi_2^{-1}\left( \pi_2(\tau_j^{(2)})+t\right)\in S^{(2)}(t_j)}{\arg\min}X'_{|\Delta_j|-2|\hat{D}_j|}(t)\right|\leq Q_{(|\Delta_j|-2\rho_N)/\sigma}(\sqrt[J]{1-\alpha})\Bigg|\hat{J}=J,\boldsymbol{\hat{\tau}^{(1)}}=\boldsymbol{t},\, \boldsymbol{\hat{\nu}^{(1)} }=\boldsymbol{v} \right]\nonumber\\
	&=&\mathbb{P}\left[ \left|\underset{ t:\,\pi_2^{-1}\left( \pi_2(\tau_j^{(2)})+t\right)\in S^{(2)}(t_j)}{\arg\min}\frac{1}{\sigma}X'_{|\Delta_j|-2|\hat{D}_j|}(t)\right|\leq Q_{(|\Delta_j|-2\rho_N)/\sigma}(\sqrt[J]{1-\alpha})\Bigg|\hat{J}=J,\boldsymbol{\hat{\tau}^{(1)}}=\boldsymbol{t},\, \boldsymbol{\hat{\nu}^{(1)} }=\boldsymbol{v} \right]\nonumber
	\end{eqnarray}
	\begin{eqnarray}
	&=&\mathbb{P}\left[ \left|\underset{ t:\,\pi_2^{-1}\left( \pi_2(\tau_j^{(2)})+t\right)\in S^{(2)}(t_j)}{\arg\min}X_{(|\Delta_j|-2|\hat{D}_j|)/\sigma}(t)\right|\leq Q_{(|\Delta_j|-2\rho_N)/\sigma}(\sqrt[J]{1-\alpha}) \right]\nonumber\\
	&\geq &\mathbb{P}\left[ \left|\underset{ t:\,\pi_2^{-1}\left( \pi_2(\tau_j^{(2)})+t\right)\in S^{(2)}(t_j)}{\arg\min}X_{(|\Delta_j|-2\rho_N)/\sigma}(t)\right|\leq Q_{(|\Delta_j|-2\rho_N)/\sigma}(\sqrt[J]{1-\alpha}) \right]
	\end{eqnarray}
	and apply Lemma \ref{lem:twoXrandwalkcomp} to obtain another equality in the other direction
	\begin{eqnarray}\label{eq:underdiff4}
	\mathbb{P}\left[ \left|\underset{t:\,\pi_2^{-1}\left( \pi_2(\tau_j^{(2)})+t\right)\in S^{(2)}(t_j)}{\arg\min}X'_{|\Delta_j|-2|\hat{D}_j|}(t)\right|\leq Q_{(|\Delta_j|-2\rho_N)/\sigma}(\sqrt[J]{1-\alpha})\Bigg|\hat{J}=J,\boldsymbol{\hat{\tau}^{(1)}}=\boldsymbol{t},\, \boldsymbol{\hat{\nu}^{(1)} }=\boldsymbol{v} \right]\nonumber
	\end{eqnarray}
	\begin{eqnarray}
	&=&\mathbb{P}\left[ \left|\underset{t:\,\pi_2^{-1}\left( \pi_2(\tau_j^{(2)})+t\right)\in S^{(2)}(t_j)}{\arg\min}X_{(|\Delta_j|-2|\hat{D}_j|)/\sigma}(t)\right|\leq Q_{(|\Delta_j|-2\rho_N)/\sigma}(\sqrt[J]{1-\alpha}) \right]\nonumber\\
	&\leq & \mathbb{P}\left[ \left|\underset{t:\,\pi_2^{-1}\left( \pi_2(\tau_j^{(2)})+t\right)\in S^{(2)}(t_j)}{\arg\min}X_{(|\Delta_j|-2\rho_N)/\sigma}(t)\right|\leq Q_{(|\Delta_j|-2\rho_N)/\sigma}(\sqrt[J]{1-\alpha})\right]+
	\end{eqnarray}
	\begin{eqnarray}
	&&(2\rho_N-2|\hat{D}_j|)\Bigg[A_2\left((|\Delta_j|-2\rho_N)/\sigma\right)\left(Q_{(|\Delta_j|-2\rho_N)/\sigma}(\sqrt[J]{1-\alpha})\right)^{3/2}\nonumber\\
	&&+B_2\left((|\Delta_j|-2\rho_N)/\sigma\right)\sqrt{Q_{(|\Delta_j|-2\rho_N)/\sigma}(\sqrt[J]{1-\alpha})}\Bigg]\nonumber\\
	&&\times\exp\left[ -C_2\left( |\Delta_j|-2\rho_N \right)Q_{|\Delta_j|-2\rho_N}(\sqrt[J]{1-\alpha}) \right]\nonumber\\
	&\leq &\mathbb{P}\left[ \left|\underset{t\in S^{(2)}(t_j)-\tau_j}{\arg\min}X_{(|\Delta_j|-2\rho_N)/\sigma}(t)\right|\leq Q_{(|\Delta_j|-2\rho_N)/\sigma}(\sqrt[J]{1-\alpha})\right]+\nonumber\\
	&&2\rho_N\left[A_2\left(\underline{\Delta}/2\sigma\right)\left(Q_{(|\Delta_j|-2\rho_N)/\sigma}(\sqrt[J]{1-\alpha})\right)^{3/2}+B_2\left(\underline{\Delta}/2\sigma\right)\sqrt{Q_{(|\Delta_j|-2\rho_N)/\sigma}(\sqrt[J]{1-\alpha})}\right]\nonumber\\
	&&\times\exp\left[ -C_2\left( \underline{\Delta}/2\sigma \right)Q_{(|\Delta_j|-2\rho_N)/\sigma}(\sqrt[J]{1-\alpha}) \right]\nonumber\\
	&&\text{since }N\text{ is large enough such that } 2\rho_N<\underline{\Delta}/2\text{, and by the monotonicity of }A_2,B_2,\text{ and }C_2\nonumber\\
	&\leq &\mathbb{P}\left[ \left|\underset{t\in S^{(2)}(t_j)-\tau_j}{\arg\min}X_{(|\Delta_j|-2\rho_N)/\sigma}(t)\right|\leq Q_{(|\Delta_j|-2\rho_N)/\sigma}(\sqrt[J]{1-\alpha})\right]+2C^*_2\left(\frac{\underline{\Delta}}{2\sigma}\right)\rho_N\nonumber\\
	&&\text{where }C^*_2(\cdot)=\sup_{x\in\mathbb{R}^+}\left([A_2(\cdot)x^{3/2}+B_2(\cdot)\sqrt{x}]\exp(-C_2(\cdot)x)\right)
	\end{eqnarray}
	These two inequalities together imply a bound on the absolute difference:
	\begin{align}\label{eq:secondcompabs}
	\Biggr|\mathbb{P}&\left[ \left|\underset{ t:\,\pi_2^{-1}\left( \pi_2(\tau_j^{(2)})+t\right)\in S^{(2)}(t_j)}{\arg\min}X'_{|\Delta_j|-2|\hat{D}_j|}(t)\right|\leq Q_{(|\Delta_j|-2\rho_N)/\sigma}(\sqrt[J]{1-\alpha})\Bigg|\hat{J}=J,\boldsymbol{\hat{\tau}^{(1)}}=\boldsymbol{t},\, \boldsymbol{\hat{\nu}^{(1)} }=\boldsymbol{v} \right]\nonumber\\
	&-\mathbb{P}\left[ \left|\underset{t:\,\pi_2^{-1}\left( \pi_2(\tau_j^{(2)})+t\right)\in S^{(2)}(t_j)}{\arg\min}X_{(|\Delta_j|-2\rho_N)/\sigma}(t)\right|\leq Q_{(|\Delta_j|-2\rho_N)/\sigma}(\sqrt[J]{1-\alpha})\right]\Biggr|\nonumber\\
	&\qquad\leq 2C^*_2\left(\frac{\underline{\Delta}}{2\sigma}\right)\rho_N
	\end{align}
	~\newline
	\newline
	\textbf{Third Component: }We note that the set $\left\{ t:\,\pi_2^{-1}\left( \pi_2(\tau_j^{(2)})+t\right)\in S^{(2)}(t_j) \right\}$ contains the set $\left[\pm \left(\frac{(K-1)w(N)}{2}-1\right) \right]$. This is because by (\ref{eq:compdists}),
	\begin{eqnarray}
	\left|\pi_2^{-1}\left( \pi_2(\tau_j^{(2)})\pm \left(\frac{(K-1)w(N)}{2}-1\right)\right)-\tau_j^{(2)}\right|\leq 2\left(\frac{(K-1)w(N)}{2}-1\right)
	\end{eqnarray}
	secondly because
	\begin{eqnarray}
	&&t_j-Kw(N)\leq \tau_j^{(2)}-(K-1)w(N)+1 < \tau_j^{(2)}-2\left( \frac{(K-1)w(N)}{2}-1 \right)\nonumber\\
	&< & \tau_j^{(2)}+2\left( \frac{(K-1)w(N)}{2}-1 \right)\leq \tau_j^{(2)}+(K-1)w(N)\leq t_j+Kw(N)
	\end{eqnarray}
	Therefore, for any $t$ such that $|t|\leq \left(\frac{(K-1)w(N)}{2}-1\right)$, we have $\pi_2^{-1}(\pi_2(\tau_j^{(2)})+t)\in [t_j\pm Kw(N)]-S^{(1)}=S^{(2)}(t_j)$. This allows an application of Lemma \ref{lem:eqprob} to obtain
	\begin{eqnarray}\label{eq:underdiff2}
	&&\Biggr|\mathbb{P}\left[ \left|\underset{t:\,\pi_2^{-1}\left( \pi_2(\tau_j^{(2)})+t\right)\in S^{(2)}(t_j)}{\arg\min}X_{(|\Delta_j|-2\rho_N)/\sigma}(t)\right|\leq Q_{(|\Delta_j|-2\rho_N)/\sigma}(\sqrt[J]{1-\alpha})\right]\nonumber\\
	&&-\mathbb{P}\left[ \left|\underset{|t|\leq \frac{(K-1)W(N)}{2}-1}{\arg\min}X_{(|\Delta_j|-2\rho_N)/\sigma}(t)\right|\leq Q_{(|\Delta_j|-2\rho_N)/\sigma}(\sqrt[J]{1-\alpha})\right]\Biggr|\nonumber\\
	&\leq&  A_3((|\Delta_j|-2\rho_N)/\sigma)\exp\left(-B_3((|\Delta_j|-2\rho_N)/\sigma)\left(\frac{(K-1)w(N)}{2}-1\right)\right)\nonumber\\
	&&\text{for some decreasing expression }A_3()\text{ and increasing expression }B_3()\nonumber\\
	&\leq & A_3(\underline{\Delta}/2\sigma)\exp\left(-B_3(\underline{\Delta}/2\sigma)\left(\frac{(K-1)w(N)}{2}-1\right)\right)
	\end{eqnarray}
	The same application of the lemma can also yield
	\begin{eqnarray}\label{eq:underdiff3}
	&&\Bigg|\mathbb{P}\left[ \left|\underset{|t|\leq \frac{(K-1)W(N)}{2}-1}{\arg\min}X_{(|\Delta_j|-2\rho_N)/\sigma}(t)\right|\leq Q_{(|\Delta_j|-2\rho_N)/\sigma}(\sqrt[J]{1-\alpha})\right]
	\nonumber\\
	&&-\mathbb{P}\left[ \left|\underset{t\in\mathbb{Z}}{\arg\min}X_{(|\Delta_j|-2\rho_N)/\sigma}(t)\right|\leq Q_{(|\Delta_j|-2\rho_N)/\sigma}(\sqrt[J]{1-\alpha})\right]\Bigg|\nonumber\\
	&\leq &A_3(\underline{\Delta}/2\sigma)\exp\left(-B'(\underline{\Delta}/2\sigma)\left(\frac{(K-1)w(N)}{2}-1\right)\right)
	\end{eqnarray}
	Adding up these two upper bounds imply
	\begin{eqnarray}\label{eq:thirdcompabs}
	&&\Biggr|\mathbb{P}\left[ \left|\underset{t:\,\pi_2^{-1}\left( \pi_2(\tau_j^{(2)})+t\right)\in S^{(2)}(t_j)}{\arg\min}X_{(|\Delta_j|-2\rho_N)/\sigma}(t)\right|\leq Q_{(|\Delta_j|-2\rho_N)/\sigma}(\sqrt[J]{1-\alpha})\right]\nonumber\\
	&&-\mathbb{P}\left[ \left|\underset{t\in\mathbb{Z}}{\arg\min}X_{(|\Delta_j|-2\rho_N)/\sigma}(t)\right|\leq Q_{(|\Delta_j|-2\rho_N)/\sigma}(\sqrt[J]{1-\alpha})\right]\Bigg|\nonumber\\
	&\leq &2A_3(\underline{\Delta}/2\sigma)\exp\left(-B'(\underline{\Delta}/2\sigma)\left(\frac{(K-1)w(N)}{2}-1\right)\right)
	\end{eqnarray}
	~\newline
	\noindent\textbf{Sum of the Components:} Adding up the differences in (\ref{eq:firstcompabs}), (\ref{eq:secondcompabs}), and (\ref{eq:thirdcompabs}):
	\begin{eqnarray}\label{eq:individualbound}
	&&\Bigg|\mathbb{P}\left[ \left|\lambda_2\left(\tau_j^{(2)},\hat{\tau}^{(2)}_j\right)\right|\leq Q_{(|\Delta_j|-2\rho_N)/\sigma}(\sqrt[J]{1-\alpha})\Bigg|\hat{J}=J,\boldsymbol{\hat{\tau}^{(1)}}=\boldsymbol{t},\, \boldsymbol{\hat{\nu}^{(1)} }=\boldsymbol{v}\right]\nonumber\\
	&&-\mathbb{P}\left[ \left|\underset{t\in\mathbb{Z}}{\arg\min}X_{(|\Delta_j|-2\rho_N)/\sigma}(t)\right|\leq Q_{(|\Delta_j|-2\rho_N)/\sigma}(\sqrt[J]{1-\alpha})\right]\Bigg|\nonumber\\
	&\leq& C_4\rho_N+C_5\exp\left[-C_6(K-1)w(N)\right]
	\end{eqnarray}
	for some constants $C_4$, $C_5$, and $C_6$.
	\newline
	\newline
	\indent Finally, in order to bound (\ref{eq:condprodprobdev}), we note that given the two real valued triangular arrays $a_{N,1},\dots,a_{N,J}$ and $b_{N,1},\dots, b_{N,J}$, all of which contained in the continuous interval $[0,1]$, such that $\left|\frac{a_{N,i}-b_{N,i}}{b_{N,i}}\right|\leq C_N$ for $1\leq i\leq J$, where $JC_N\to 0$ as $N\to \infty$, then $\left|\prod a_{N,j}-\prod b_{N,j}\right|\to 0$. This is because
	\begin{eqnarray}
	\left|\prod_{j=1}^J a_{N,j}-\prod_{j=1}^J b_{N,j}\right|&=& \left|\prod_{j=1}^J b_{N,j}\right|\left| \prod_{j=1}^J \left(1-\frac{a_{N,j}-b_{N,j}}{b_{N,j}}\right)-1 \right|\nonumber\\
	\end{eqnarray}
	Since $\left|\prod b_{N,j}\right|\in [0,1]$, the above converges to 0 if $\prod \left(1-\frac{a_{N,j}-b_{N,j}}{b_{N,j}}\right)\to 1$, which is true since
	\begin{eqnarray}
	&&\prod_{j=1}^J \left(1-\frac{a_{N,j}-b_{N,j}}{b_{N,j}}\right)\geq (1-C_N)^J\geq 1-C_NJ\to 1\nonumber\\
	&&\prod_{j=1}^J \left(1-\frac{a_{N,j}-b_{N,j}}{b_{N,j}}\right)\leq (1+C_N)^J\leq (e^{C_N})^J\to e^0=1
	\end{eqnarray}
	This result is useful because for all $N$ large enough so that $2Kw(N)\leq \delta_N$, the stage two sets $S^{(2)}(t_j)$'s for $j=1,\dots,J$ are mutually exclusive sets, and hence by conditional independence, (\ref{eq:condprodprobdev}) equals
	\begin{eqnarray}\label{eq:condprodprobdev2}
	&&\Bigg| \prod_{j=1}^J\mathbb{P}\left[ \left|\lambda_2\left( \tau_j^{(2)},\hat{\tau}_j^{(2)} \right)\right|\leq Q_{(|\Delta_j|-2\rho_N)/\sigma}(\sqrt[J]{1-\alpha})\Big| \hat{J}=J;\,\boldsymbol{\hat{\tau}^{(1)} }=\boldsymbol{t};\, \boldsymbol{\hat{\nu}^{(1)} }=\boldsymbol{v} \right]\nonumber\\
	&&-\prod_{j=1}^J P_{(|\Delta_j|-2\rho_N)/\sigma}(\sqrt[J]{1-\alpha})\Bigg|
	%
	%
	\end{eqnarray}
	More-ever, using (\ref{eq:individualbound}),
	\begin{eqnarray}
	&&J\Bigg|\mathbb{P}\left[ \left|\lambda_2\left(\tau_j^{(2)},\hat{\tau}^{(2)}_j\right)\right|\leq Q_{(|\Delta_j|-2\rho_N)/\sigma}(\sqrt[J]{1-\alpha})\Bigg|\hat{J}=J,\boldsymbol{\hat{\tau}^{(1)}}=\boldsymbol{t},\, \boldsymbol{\hat{\nu}^{(1)} }=\boldsymbol{v}\right]\nonumber\\
	&&-\mathbb{P}_{|\Delta_j|-2\rho_N}(\sqrt[J]{1-\alpha})\Bigg|\left(\mathbb{P}_{|\Delta_j|-2\rho_N}(\sqrt[J]{1-\alpha})\right)^{-1}\nonumber\\
	&\leq&J\left(C_4\rho_N+C_5\exp\left[-C_6(K-1)w(N)\right]\right)(1-\alpha)^{-1/J}\nonumber\\
	&\leq&C_4(1-\alpha)^{-1}J\rho_N+C_5(1-\alpha)^{-1}N\exp\left[-C_6(K-1)w(N)\right]
	,\end{eqnarray}
	which goes to 0 since $J\rho_N\to 0$ and $w(N)\geq C N^{1-\gamma}$ for some constant $C$. This lets us conclude that (\ref{eq:condprodprobdev2}) converges to 0.
	%
\end{proof}

\subsection{Proof of Theorem \ref{thm:multidepend}}\label{sec:multidependthmproof}
\indent Notation-wise, we again utilize the $\pi_2$ function defined in (\ref{def:pi2func}), which is a bijection from the set $\{ 1,\dots,N \}-S^{(1)}$ to the set $\{ 1,\dots,N-|S^{(1)}| \}$.
\begin{proof} 
	Let $S^{(2)}\left( \hat{\tau}^{(1)}_k\right)$ for $k=1,...,\hat{J}$ be the second stage subsamples. As in previous sections, define
	\begin{eqnarray}
	\mathcal{R}_N:= \left\{ \hat{J}=J;\,\max_{j=1,\dots,J}\left| \hat{\tau}^{(1)}_j-\tau_j \right|\leq w(N);\, \max_{j=0,\dots,J}\left| \hat{\nu}^{(1)}_j-\nu_j \right|\leq \rho_N \right\}
	\end{eqnarray}
	We also define the following random functions on integers: for $k=1,...,J$, on the event $\mathcal{R}_N$ let
	\begin{align}
	\hat{X}^{(2)}_k(d)&:=
	\text{sgn}\left(\hat{\nu}^{(1)}_{k}-\hat{\nu}^{(1)}_{k-1}\right)\underset{j\in S^{(2)}(\hat{\tau}^{(1)}_k)}{\sum}\left(Y_j-\frac{\hat{\nu}^{(1)}_{k}+\hat{\nu}^{(1)}_{k-1}}{2}\right) \left(1\left(j\leq \pi_2^{-1}(\pi_2(\tau_k^{(2)})+d)\right)-1\left(j\leq \tau_k^{(2)}\right)\right) \nonumber\\ 
	&\qquad \text{for }\pi_2^{-1}(\pi_2(\tau_k^{(2)})+d)\in S^{(2)}_k\nonumber\\
	&\qquad\hat{X}^{(2)}_k(d):=\infty \qquad\text{otherwise}
	\end{align}
	and on the event $\mathcal{R}_N^C$ let 
	\begin{equation}
	\hat{X}^{(2)}_k(d):=d.
	\end{equation}
	so that the $\arg\min$ of $\hat{X}^{(2)}_k(d)$ is $d=-\infty$ on the event $\mathcal{R}_N^C$. Using this definition, for all sufficiently large $N$, the event $\{ \hat{J}=J,\, \lambda_2(\tau_k^{(2)},\hat{\tau}_k^{(2)})=j_k\text{ for }k=1,...,J \}\cap\mathcal{R}_N$ is equivalent to the event $$\left\{ \underset{d\in \mathbb{Z}}{\arg\min}\,\hat{X}^{(2)}_k(d)=j_k \text{ for }k=1,...,J\right\}.$$
	\newline
	\newline
	Further, for any integer $M>0$, we can easily obtain convergence properties by restricting the function $\hat{X}^{(2)}_k$ to the set $\{-M,-(M-1),...,M \}$. For sufficiently large $N$, when event $\mathcal{R}_N$ occurs, we have for any $d\in \{ -M,...,M \}$, and for all $k=1,...,J$,
	\begin{align}
	\hat{X}^{(2)}_k(d)= \text{sgn}(\Delta_k)\underset{j\in S^{(2)}(\hat{\tau}^{(1)}_k)}{\sum}\left(Y_j-\frac{\hat{\nu}^{(1)}_{k}+\hat{\nu}^{(1)}_{k-1}}{2}\right) \left(1\left(j\leq \pi_2^{-1}(\pi_2(\tau_k^{(2)})+d)\right)-1\left(j\leq \tau_k^{(2)}\right)  \right)\nonumber\\
	=\begin{cases} \text{sgn}(\Delta_k)\sum\limits_{j=\pi_2(\tau_k^{(2)})+1}^{\pi_2(\tau_k^{(2)})+d}\left( Y_{\pi^{-1}(j)}-\nu_k+\frac{\nu_k-\nu_{k-1}}{2}+\frac{1}{2}\left( \nu_k-\hat{\nu}_k^{(1)}+\nu_{k-1}-\hat{\nu}_{k-1}^{(1)}\right)\right)\quad &\text{for }d>0\\
	0 &\text{for }d=0\\
	-\text{sgn}(\Delta_k)\sum\limits_{j=\pi_2(\tau_k^{(2)})+d+1}^{\pi_2(\tau_k^{(2)})}\left( Y_{\pi_2^{-1}(j)}-\nu_{k-1}+\frac{\nu_{k-1}-\nu_{k}}{2}+\frac{1}{2}\left( \nu_k-\hat{\nu}_k^{(1)}+\nu_{k-1}-\hat{\nu}_{k-1}^{(1)}\right)\right) & \text{for }d<0
	\end{cases}\nonumber\\
	=\begin{cases}
	\frac{d|\Delta_k|}{2}+\text{sgn}(\Delta_k)\left(\sum\limits_{j=\pi_2(\tau_k^{(2)})+1}^{\pi_2(\tau_k^{(2)})+d}\varepsilon_{\pi_2^{-1}(j)}+\frac{d}{2}\left( \nu_k-\hat{\nu}_k^{(1)}+\nu_{k-1}-\hat{\nu}_{k-1}^{(1)}\right)\right)\quad &\text{for }d>0\\
	0 &\text{for }d=0\\
	-d\frac{|\Delta_k|}{2}-\text{sgn}(\Delta_k)\left(\sum\limits_{j=\pi_2(\tau_k^{(2)})+d+1}^{\pi_2(\tau_k^{(2)})}\varepsilon_{\pi_2^{-1}(j)}+\frac{d}{2}\left( \nu_k-\hat{\nu}_k^{(1)}+\nu_{k-1}-\hat{\nu}_{k-1}^{(1)}\right)\right) &\text{for }d<0
	\end{cases}\nonumber\\
	\end{align}
	Because $\max\limits_{i=1,...,J}|\hat{\nu}^{(1)}_i-\nu_o|\leq \rho_N$ under $\mathcal{R}_N$, this gives the uniform bound
	\begin{eqnarray}
	\left| \frac{d}{2\sigma}\left( \nu_k-\hat{\nu}_k^{(1)}+\nu_{k-1}-\hat{\nu}_{k-1}^{(1)}\right)\right| \leq \frac{1}{2}M\rho_N.
	\end{eqnarray}
	for all $k$ and $d$. The right side of the above inequality converges to 0 since $\rho_N\to 0$, and because all of this occurs with probability $\mathbb{P}[\mathcal{R}_N]\to 1$, this shows that the  $\hat{X}^{(2)}_k(d)$'s all jointly converge. Specifically, let $\varepsilon^*_{j,k}$ for $j=0,..,J$ and $k\in\mathbb{Z}$ be random variables such that $\{\varepsilon_{j,k}\}_{k\in\mathbb{Z}}$ are iid $\mathcal{E}_j$ variables for $j=0,\dots,J$. Next, we denote the random walks
	\begin{eqnarray}\label{eq:generalizedrandomwalk}
	X^*_{j}(d)=\begin{cases}
	d\frac{|\Delta_j|}{2}+\text{sgn}(\Delta_j)\sum\limits_{l=1}^d\varepsilon^*_{j,l}\quad &\text{for }d>0\\
	0 &\text{for }d=0\\
	-d\frac{|\Delta_j|}{2}-\text{sgn}(\Delta_j)\sum\limits_{l=d+1}^{0}\varepsilon^*_{j,l} &\text{for }d<0
	\end{cases}
	\end{eqnarray}
	(note that if the error terms are iid $N(0,\sigma^2)$, the random walk $X^*_{\Delta_k}(d)$ has precisely the same distribution as the random walk $\sigma X_{\Delta_k/\sigma}(d)$). We have the joint weak convergence 
	\begin{eqnarray}\label{eq:finitestochproc}
	&&\begin{pmatrix}
	\hat{X}^{(2)}_1(-M), & \hat{X}^{(2)}_1(-M+1), &\dots, &\hat{X}^{(2)}_1(M),\\
	\hat{X}^{(2)}_2(-M), & \hat{X}^{(2)}_2(-M+1), &\dots, & \hat{X}^{(2)}_2(M),\\
	\hdotsfor{4}, \\
	\hat{X}^{(2)}_J(-M), &\hat{X}^{(2)}_J(-M+1), &\dots,& \hat{X}^{(2)}_J(M) 
	\end{pmatrix} \quad\Rightarrow\quad\nonumber\\
	&&\begin{pmatrix}
	X_{1}^*(-M), & X_{1}^*(-M+1), &\dots, &X_{1}^*(M),\\
	X_{2}^*(-M), & X_{2}^*(-M+1), &\dots, & X_{2}^*(M),\\
	\hdotsfor{4}, \\
	X_{J}^*(-M), &X_{J}^*(-M+1), &\dots,& X_{J}^*(M)
	\end{pmatrix}
	\end{eqnarray}
	\indent Define $L^*_{j}:=\underset{k\in\mathbb{Z}}{\arg\min}X_{j}^*(k)$, and $L^{*(M)}_{j}:=\underset{|k|\leq M}{\arg\min}X_{j}(k)$ for $j=1,\dots,J$ (note that if the error terms of the data sequence is $N(0,1)$, $L^*_{\Delta_k}$ has the same distribution as $L_{\Delta_k}$). We have the joint weak convergence 
	\begin{eqnarray}
	\left(  \underset{|j|\leq M}{\arg\max}\hat{X}^{(2)}_1(d),\dots,\underset{|j|\leq M}{\arg\max}\hat{X}^{(2)}_J(d)\right) \quad\Rightarrow\quad \left(  L_{\Delta_1}^{*(M)},\dots, L^{*(M)}_{\Delta_J}\right) 
	\end{eqnarray}
	by the continuous mapping theorem, because $\arg\max$ is a continuous function on $\mathbb{R}^{2M+1}$ (except when at least two of the coordinates are equal, which has probability 0 if the error terms have densities).
	\newline
	\newline
	\indent Next, we establish that $\mathbb{P}\left[ \hat{J}=J,\, \lambda_2(\tau_k^{(2)},\hat{\tau}_k^{(2)})=j_k\text{ for }k=1,...,J \right]$ converges to the product of $\mathbb{P}[ L_{k}^*=j_k ]$ for $k=1,...,J$. We will do this by showing for any fixed $\epsilon>0$, the absolute difference between the two is smaller than $\epsilon$ for all large $N$. As in the proof of the single change point problem, this is accomplished through three main inequalities.
	\newline
	\newline
	\textbf{First Inequality}: From the result of Theorem \ref{thm:multiorder}, and by the fact that $\mathbb{P}[\mathcal{R}_N]\to1$, we can find an integer $K_0$ greater than $\max_k |j_k|$, such that for any $K_1\geq K_0$, we have for sufficiently large $N$
	\begin{eqnarray}
	\mathbb{P}\left[ \hat{J}=J,\, \max\limits_{k=1,...,J}\left| \hat{\tau}^{(2)}_k-\tau_k\right| \leq (2K_1+2),\;\mathcal{R}_N\right] \geq 1-\frac{\epsilon}{4}
	\end{eqnarray}
	For all sufficiently large $N$, $\left\{\hat{J}=J,\, \max\limits_{k=1,...,J}\left| \hat{\tau}^{(2)}_k-\tau_k\right| \leq (2K_1+2)\right\}\cap\mathcal{R}_N$ would mean $$\left\{\hat{J}=J,\,\max\limits_{k=1,..,J}\left|\lambda_2\left(\tau_k^{(2)},\hat{\tau}^{(2)}_k \right)\right|\leq K_1\right\}\cap\mathcal{R}_N$$
	and hence
	\begin{eqnarray}
	1-\frac{\epsilon}{3}&\leq& \mathbb{P}\left[\hat{J}=J,\,\min\limits_{k=1,..,J}\left|\lambda_2\left(\tau_k^{(2)},\hat{\tau}^{(2)}_k \right)\right|\leq K_1,\;\mathcal{R}_N\right]\nonumber\\
	&=& \mathbb{P}\left[\hat{J}=J;\; \max_{k=1,...,J}\left| \underset{d\in\mathbb{Z}}{\arg\min}\hat{X}^{(2)}_k(d)\right|\leq K_1\right]
	\end{eqnarray}
	Now  
	\begin{eqnarray}
	&&\underset{d\in\mathbb{Z}}{\arg\min}\hat{X}^{(2)}_k(d)=j_k \text{ for }k=1,...,J\nonumber\\
	&\longleftrightarrow& \underset{|d|\leq K_1}{\arg\min}\hat{X}^{(2)}_k(d)=j_k\text{ for } k=1,...,J,\text{ and }\max_{k=1,...,J}\left| \underset{d\in\mathbb{Z}}{\arg\min}\hat{X}^{(2)}_k(d)\right|\leq K_1
	\end{eqnarray}
	With steps very similar to those used in (\ref{eqrep1}), it can be shown that
	\begin{eqnarray}\label{firstineq}
	&&\left| \mathbb{P}\left[ \underset{d\in\mathbb{Z}}{\arg\min}\hat{X}^{(2)}_k(d)=j_k \text{ for }k=1,...,J\right] -\mathbb{P}\left[ \underset{|d|\leq K_1}{\arg\min}\hat{X}^{(2)}_k(d)=j_k \text{ for }k=1,...,J\right]\right|\nonumber\\
	&\leq & \mathbb{P}\left[   \max_{k=1,...,J}\left| \underset{d\in\mathbb{Z}}{\arg\min}\hat{X}^{(2)}_k(d)\right|> K_1 \right]\nonumber\\
	&\qquad&\leq \epsilon/4
	\end{eqnarray}
	\textbf{Second Inequality} We can find some integer $K_2>K_0$ such that
	\begin{eqnarray}
	\mathbb{P}\left[  \max_{k=1,..,J}|L_{k}^*|\leq K_2\right]\geq 1-\frac{\epsilon}{4}
	\end{eqnarray}
	Now $L_{k}^*=j_k$ for $k=1,...,J$ if and only if both $L_{k}^{*(K_2)}=j_k$ for $k=1,...,J$ and $\max_k|L_{k}|\leq K_2$. With steps very similar to those in (\ref{eqrep2}), we have
	\begin{eqnarray}\label{secondineq}
	&&\left| \mathbb{P}\left[  L_{k}^*=j_k\text{ for }k=1,...,J\right]-\mathbb{P}\left[  L^{*(K_2)}_{k}=j_k\text{ for }k=1,...,J\right]\right|\\
	&\leq & \mathbb{P}\left[  \max_{k=1,..,J}|L_{k}^*|> K_2\right]\nonumber\\
	&\leq &\epsilon/4
	\end{eqnarray}
	\textbf{Third Inequality} By weak convergence, we have
	\begin{eqnarray}\label{thirdineq}
	\left| \mathbb{P}\left[ L_{k}^{*(K_2)}=j_k\text{ for }k=1,...,J\right]-\mathbb{P}\left[ \underset{|d|\leq K_2}{\arg\min}\hat{X}^{(2)}_k(d)=j_k \text{ for }k=1,...,J\right]\right|\leq \frac{\epsilon}{4}
	\end{eqnarray}
	for all sufficiently large $N$.
	\newline
	\newline
	Combining the inequalities in (\ref{firstineq}), (\ref{secondineq}), and (\ref{thirdineq}) will give 
	\begin{eqnarray}
	&&\left| \mathbb{P}\left[  L_{k}^*=j_k\text{ for }k=1,...,J\right]-\mathbb{P}\left[ \hat{J}=J,\, \lambda_2\left( \tau_k^{(2)},\hat{\tau}_k^{(2)}\right)=j_k\text{ for }k=1,..,J,\;\mathcal{R}_N\right]\right|\nonumber\\
	&=& \left| \mathbb{P}\left[  L_{k}^*=j_k\text{ for }k=1,...,J\right]- \mathbb{P}\left[ \underset{d\in\mathbb{Z}}{\arg\min}\hat{X}^{(2)}_k(d)=j_k \text{ for }k=1,...,J\right]\right|\nonumber\\
	&\leq & \left| \mathbb{P}\left[ \underset{d\in\mathbb{Z}}{\arg\min}\hat{X}^{(2)}_k(d)=j_k \text{ for }k=1,...,J\right] -\mathbb{P}\left[ \underset{|d|\leq K_2}{\arg\min}\hat{X}^{(2)}_k(d)=j_k \text{ for }k=1,...,J\right]\right|\nonumber\\
	&+& \left| \mathbb{P}\left[ L_{k}^{*(K_2)}=j_k\text{ for }k=1,...,J\right]-\mathbb{P}\left[ \underset{|d|\leq K_2}{\arg\min}\hat{X}^{(2)}_k(d)=j_k \text{ for }k=1,...,J\right]\right|\nonumber\\
	&+& \left| \mathbb{P}\left[  L_{k}^*=j_k\text{ for }k=1,...,J\right]-\mathbb{P}\left[  L^{*(K_2)}_{k}=j_k\text{ for }k=1,...,J\right]\right|\nonumber\\
	&\qquad &\leq 3\epsilon/4
	\end{eqnarray}
	Additionally, for sufficiently large $N$ we have $\mathbb{P}[\text{not }\mathcal{R}_N]<\epsilon/4$ and hence
	\begin{eqnarray}
	&&\Bigg| \mathbb{P}\left[ \hat{J}=J,\, \lambda_2\left( \tau_k^{(2)},\hat{\tau}_k^{(2)}\right)=j_k\text{ for }k=1,..,J,\;\mathcal{R}_N\right]\nonumber\\
	&-&\mathbb{P}\left[ \hat{J}=J,\, \lambda_2\left( \tau_k^{(2)},\hat{\tau}_k^{(2)}\right)=j_k\text{ for }k=1,..,J\right] \Bigg|\nonumber\\&<&\epsilon/4
	\end{eqnarray}
	and hence
	\begin{eqnarray}
	&&\Bigg| \mathbb{P}\left[  L_{k}^*=j_k\text{ for }k=1,...,J\right]-\mathbb{P}\left[ \hat{J}=J,\, \lambda_2\left( \tau_k^{(2)},\hat{\tau}_k^{(2)}\right)=j_k\text{ for }k=1,..,J\right]\Bigg|<\epsilon
	\end{eqnarray}
	for all sufficiently large $N$.
\end{proof}

\subsection{Proof of Theorem \ref{thm:reconsistentineq}}\label{sec:reconsistentproof}
We will show that 
\begin{eqnarray}
\mathbb{P}\left[ \hat{J}=J;\,\left|\hat{\tau}^{re}_j -\tau^{**}_j\right|\leq Q_{(|\Delta_j|+2\rho_N)/\sigma}\left(1-\frac{\alpha_N}{J}\right)\;\forall j\right]\geq 1+\alpha_N+o(1)
\end{eqnarray}
\begin{proof}
	Letting $\mathcal{R}_N$ be the event that
	\begin{eqnarray}
	\left\{ \hat{J}=J;\, \max_{j=1,\dots,J}|\hat{\tau}^*_j-\tau^{**}_j|\leq w^*(N^*) ;\, \max_{j=0,\dots,J}|\hat{\nu}^{(1)}_j-\nu_j|\leq \rho_N  \right\}.
	\end{eqnarray}
	In a similar way to the proof for Theorem \ref{thm:increasingJasymprotics}, to prove
	\begin{eqnarray}
	&&\mathbb{P}\left[ \hat{J}=J;\, \max_{j=1,\dots,J}|\hat{\tau}^{re}_j-\tau^{**}_j|\leq Q_{(|\Delta_j|+2\rho_N)/\sigma}\left(1-\frac{\alpha_N}{J}\right)\text{ for all }j=1\dots,J \right]\nonumber\\
	&\geq&1-\alpha_N+o(1)
	\end{eqnarray}
	it is sufficient to demonstrate that 
	\begin{eqnarray}
	&&\mathbb{P}\left[ \max_{j=1,\dots,J}|\hat{\tau}^{re}_j-\tau^{**}_j|\leq Q_{(|\Delta_j|+2\rho_N)/\sigma}\left(1-\frac{\alpha_N}{J}\right)\; \forall j\Big| J=\hat{J}, \hat{\tau}^*_j=t_j,\hat{\nu}_j^{(1)}=v_j\;\forall j \right]\nonumber\\
	&=& 1-P_{(|\Delta_j|+2\rho_N)/\sigma}\left(1-\frac{\alpha_N}{J}\right)+o(1)
	\end{eqnarray}
	for all $t_j$'s and $v_j$'s permissible within $\mathcal{R}_N$, which we will assume when we write $t_j$'s and $v_j$'s from here on. 
	\newline
	\newline
	\indent We now try to bound the difference between 
	$$ \mathbb{P}\left[|\hat{\tau}^{re}_j-\tau^{**}_j|\leq Q_{(|\Delta_j|+2\rho_N)/\sigma}\left(1-\frac{\alpha_N}{J}\right)\Big| J=\hat{J}, \hat{\tau}^*_j=t_j,\hat{\nu}_j^{(1)}=v_j\text{ for all }j \right] $$
	and $ P_{(|\Delta_j|+2\rho_N)/\sigma}\left(1-\frac{\alpha_N}{J}\right)$ for all $j$. Each estimator equals the argmin of a random walk:
	\begin{eqnarray}
	\hat{\tau}^{re}_j&:=&\underset{t\in [t_j\pm \hat{d}_j ]}{\arg\min}(\text{sgn}(v_j-v_{j-1}))\sum_{i\in  [t_j\pm \hat{d}_j ]}\left( Y_i-\frac{v_j+v_{j-1}}{2} \right)\left[ 1(i\leq t)-1(i\leq \tau^{**}_j ) \right]\nonumber\\
	&=:&\underset{t\in [t_j\pm \hat{d}_j ]}{\arg\min}\hat{X}_j(t).
	\end{eqnarray}
	In comparison with the random walk
	\begin{eqnarray}
	X_j'(t):=\begin{cases}
	t\left(\frac{|\Delta_j|}{2}+\text{sgn}(\Delta_j)\hat{D}_j\right)+\text{sgn}(\Delta_j)\sum_{_i=1}^t\varepsilon_{\tau^{**}_j+i}\quad &t>0\\
	0 & t=0\\
	|t|\left(\frac{|\Delta_j|}{2}-\text{sgn}(\Delta_j)\hat{D}_j\right)-\text{sgn}(\Delta_j)\sum_{_i=1}^t\varepsilon_{\tau^{**}_j-i+1}
	\end{cases}
	\end{eqnarray}
	where $\hat{D}_j=\frac{v_j-\nu_j+v_{j-1}-\nu_{j-1}}{2}$. We have, for all sufficiently large $N$, $\hat{X}_j(t+\tau^{**}_j)$ equaling either $X_j'(t)-2|t|\hat{D}_j1(t>0)$ or $X'(t)-2|t|\hat{D}_j1(t<0)$ for all integers $t\in [t_j\pm \hat{d}_j]-\tau^{**}_j$. With regards to the set $[t_j\pm \hat{d}_j]-\tau^{**}_j$, for all large $N$ (such that $3w^*(N^*)\leq \delta^*_{N^*}/2$) this set contains the interval $[-\delta^*_{N^*}/2,\delta^*_{N^*}/2]$. This can be seen by the series of inequalities
	\begin{eqnarray}
	&&t_j-\hat{d}_j\leq \tau^{**}_j+w^*(N^*)-(\delta^*_{N^*}-2w^*(N^*))\leq \tau^{**}_j-\delta^*_{N^*}/2\nonumber\\
	&<&\tau^{**}_j-\delta^*_{N^*}/2\leq \tau^{**}_j-w^*(N^*)+(\delta^*_{N^*}-2w^*(N^*))\leq t_j+\hat{d}_j.
	\end{eqnarray}
	Furthermore, since $\log(J/\alpha_N)=o(\delta^*_{N^*})$, we can use Lemma \ref{lem:probbound} to obtain, for all large $N$,
	\begin{eqnarray}
	&&\tau^{**}_j-\delta^*_{N^*}/2\leq \tau^{**}_j-\frac{1}{B}\log\left( \frac{AJ}{\alpha_N} \right)\leq\tau^{**}_j-Q_{(|\Delta_j|+2\rho_N)/\sigma}\left(1-\frac{\alpha_N}{J}\right)\nonumber\\
	&<&\tau^{**}_j+Q_{(|\Delta_j|+2\rho_N)/\sigma}\left(1-\frac{\alpha_N}{J}\right)\leq \tau^{**}_j+\frac{1}{B}\log\left( \frac{AJ}{\alpha_N} \right)\leq \tau^{**}_j+\delta^*_{N^*}/2
	\end{eqnarray}
	where $A$ and $B$ are some constants. Therefore, we can apply Lemma \ref{lem:randomwalkcompoppo} to obtain the inequality
	\begin{eqnarray}
	&&\mathbb{P}\left[|\hat{\tau}^{re}_j-\tau^{**}_j|\leq Q_{(|\Delta_j|+2\rho_N)/\sigma}\left(1-\frac{\alpha_N}{J}\right)\Big| J=\hat{J}, \hat{\tau}^*_j=t_j,\hat{\nu}_j^{(1)}=v_j\text{ for all }j \right]\nonumber\\
	&=&\mathbb{P}\left[\left|\underset{t\in [t_j\pm \hat{d}_j]-\tau^{**}_j}{\arg\min}\hat{X}_j(t+\tau^{**}_j)\right|\leq Q_{(|\Delta_j|+2\rho_N)/\sigma}\left(1-\frac{\alpha_N}{J}\right)\Bigg| J=\hat{J}, \hat{\tau}^*_j=t_j,\hat{\nu}_j^{(1)}=v_j\text{ for all }j \right]\nonumber
	\end{eqnarray}
	\begin{eqnarray}
	&\geq &\mathbb{P}\left[\left|\underset{t\in [t_j\pm \hat{d}_j]-\tau^{**}_j}{\arg\min}\frac{1}{\sigma}X_j'(t)\right|\leq Q_{(|\Delta_j|+2\rho_N)/\sigma}\left(1-\frac{\alpha_N}{J}\right)\right]- C_1(|\underline{\Delta}|/2)|\hat{D}_j| \nonumber\\
	&\geq & \mathbb{P}\left[\left|\underset{t\in [t_j\pm \hat{d}_j]-\tau^{**}_j}{\arg\min}\frac{1}{\sigma}X_j'(t)\right|\leq Q_{(|\Delta_j|+2\rho_N)/\sigma}\left(1-\frac{\alpha_N}{J}\right)\right]- C_1(|\underline{\Delta}|/2)\rho_N
	\end{eqnarray}
	for all large $N$, where $C_1(|\underline{\Delta}|/2)$ is some expression independent of $N$ and $j$.
	\newline
	\newline
	Next, use Lemma \ref{lem:twoXrandwalkcomp} to obtain
	\begin{eqnarray}
	&&\mathbb{P}\left[\left|\underset{t\in [t_j\pm \hat{d}_j]-\tau^{**}_j}{\arg\min}\frac{1}{\sigma}X_j'(t)\right|\leq Q_{(|\Delta_j|+2\rho_N)/\sigma}\left(1-\frac{\alpha_N}{J}\right)\right]\nonumber\\
	&\geq& \mathbb{P}\left[\left|\underset{t\in [t_j\pm \hat{d}_j]-\tau^{**}_j}{\arg\min}X_{(|\Delta_j|+2\rho_N)/\sigma}(t)\right|\leq Q_{(|\Delta_j|+2\rho_N)/\sigma}\left(1-\frac{\alpha_N}{J}\right)\right]
	\end{eqnarray}
	We could also use Lemma \ref{lem:eqprob} twice to obtain
	\begin{eqnarray}
	&&\mathbb{P}\left[\left|\underset{t\in [t_j\pm \hat{d}_j]-\tau^{**}_j}{\arg\min}X_{(|\Delta_j|+2\rho_N)/\sigma}(t)\right|\leq Q_{(|\Delta_j|+2\rho_N)/\sigma}\left(1-\frac{\alpha_N}{J}\right)\right]\nonumber\\
	&\geq&P_{(|\Delta_j|+2\rho_N)/\sigma}\left(1-\frac{\alpha_N}{J}\right)-C_2(\underline{\Delta})\exp(-C_3(\underline{\Delta})\Delta^*_{N^*})
	\end{eqnarray}
	for some positive expressions $C_2(\underline{\Delta})$ and $C_3(\underline{\Delta})$ independent of $N$ and $j$. By combining these inequalities, we come to the conclusion that
	\begin{eqnarray}
	&&\mathbb{P}\left[|\hat{\tau}^{re}_j-\tau^{**}_j|\leq Q_{(|\Delta_j|+2\rho_N)/\sigma}\left(1-\frac{\alpha_N}{J}\right)\Big| J=\hat{J}, \hat{\tau}^*_j=t_j,\hat{\nu}_j^{(1)}=v_j\text{ for all }j \right]\nonumber\\
	&\geq &P_{(|\Delta_j|+2\rho_N)/\sigma}\left(1-\frac{\alpha_N}{J}\right)-C_1(|\underline{\Delta}|/2)\rho_N-C_2(\underline{\Delta})\exp(-C_3(\underline{\Delta})\delta^*_{N^*})
	\end{eqnarray}
	Therefore
	\begin{eqnarray}
	&&\mathbb{P}\left[ \max_{j=1,\dots,J}|\hat{\tau}^{re}_j-\tau^{**}_j|\leq Q_{(|\Delta_j|+2\rho_N)/\sigma}\left(1-\frac{\alpha_N}{J}\right)\; \forall j\Big| J=\hat{J}, \hat{\tau}^*_j=t_j,\hat{\nu}_j^{(1)}=v_j\;\forall j \right]\nonumber\\
	&=&\mathbb{P}\left[ \underset{j=1,\dots,J}{\bigcap}|\hat{\tau}^{re}_j-\tau^{**}_j|\leq Q_{(|\Delta_j|+2\rho_N)/\sigma}\left(1-\frac{\alpha_N}{J}\right)\; \forall j\Big| J=\hat{J}, \hat{\tau}^*_j=t_j,\hat{\nu}_j^{(1)}=v_j\;\forall j \right]\nonumber\\
	&\geq &1-\sum_{j=1}^J\mathbb{P}\left[|\hat{\tau}^{re}_j-\tau^{**}_j|> Q_{(|\Delta_j|+2\rho_N)/\sigma}\left(1-\frac{\alpha_N}{J}\right)\Big| J=\hat{J}, \hat{\tau}^*_j=t_j,\hat{\nu}_j^{(1)}=v_j\text{ for all }j \right]\nonumber\\
	&\geq &1-J\left[1-P_{(|\Delta_j|+2\rho_N)/\sigma}\left(1-\frac{\alpha_N}{J}\right)+\left(C_1(|\underline{\Delta}|/2)\rho_N+C_2(\underline{\Delta})\exp(-C_3(\underline{\Delta})\delta^*_{N^*})\right)\right]\nonumber\\
	&\qquad &\geq 1-\alpha_N-J\left(C_1(|\underline{\Delta}|/2)\rho_N+C_2(\underline{\Delta})\exp(-C_3(\underline{\Delta})\delta^*_{N^*})\right)
	\end{eqnarray}
	which converges to 1 since $\alpha_N\to 0$, $J\rho_N\to 0$, and $\delta^*_{N^*}>C(N^*)^\eta$ for some positive constant $C$ and $\eta$. 
\end{proof}

\subsection{Proof of Lemma \ref{lem:binsegsignalconsistent}}\label{sec:binsegsignalconsistentproof}
In order to prove Lemma \ref{lem:binsegsignalconsistent}, we will rely on the following result:
\begin{lemma} 
	\label{cor:signalconsistent}
	(i) Suppose that 
	we have an estimation scheme which when applied onto $Z_1,\dots, Z_{N^*}$ gives estimates $\hat{J}$ for $J$ and $\left( \hat{\tau}^*_1,\cdots,\hat{\tau}^*_{\hat{J}}\right)$ for $(\tau^*_1,\cdots,\tau^*_J)$ such that
	\begin{eqnarray}
	\mathbb{P}\left[ \hat{J}=J;\quad \max_{i=1,\cdots,J}|\hat{\tau}^*_j-\tau^*_j|\leq w^*(N^*)  \right]\geq 1-B^*_{N^*}
	\end{eqnarray} 
	for some sequences $w^*(N^*)$ and $B_{N^*}$, with $w^*(N^*)=o(\delta^*_{N^*})$ and $B_{N^*}\to 0$. Then, for any positive sequence $\{\rho^*_{N^*}\}$ such that $\frac{w^*(N^*)}{\delta^*_{N^*}}=o(\rho^*_{N^*})$, there exist constants $C_1$ and $C_2$, where
	\begin{eqnarray}\label{pbound}
	\mathbb{P}\left[ \hat{J}=J;\quad |\hat{\nu}^*_i-\nu^*_i|\geq \rho^*_{N^*} \right]\leq B_{N^*}+C_1w^*(N^*) \frac{\exp\left[ -C_2\delta^*_{N^*}\rho^{* \,2}_{N^*}\right]}{\sqrt{\delta^*_{N^*}}\rho^*_{N^*}}
	\end{eqnarray}
	for all $i=1,\cdots,J$, when $N^*$ is sufficiently large.
	\newline
	(ii) 	Moreover, as a consequence of part (i),
	\begin{eqnarray}\label{eq:sigconsistent}
	&&\mathbb{P}\left[ \hat{J}=J;\quad \max_{i=0,\cdots,J}|\hat{\nu}^*_i-\nu_i^*|<\rho^*_{N^*}\right] \geq 1 - \left(\frac{N^*}{\delta^*_{N^*}} + 2\right)B_{N^*} 
	\nonumber\\ &&-C_1 \left(\frac{N^*}{\delta^*_{N^*}} + 1 \right)\,w^*(N^*)\frac{\exp[-C_2\delta^*_{N^*} \rho^{*\,2}_{N^*}]}{\sqrt{\delta^*_{N^*}}\rho^*_{N^*}} \,.
	\end{eqnarray}
	It follows that in addition to the conditions in (i), if, furthermore, $N^* B_{N^*}/\delta^*_{N^*}\to 0$ and $(N^* w^*(N^*)/\delta^{*\,3/2}_{N^*}\rho^*_{N^*}) =$ $o(\exp[C_2\delta^*_{N^*} \rho^{*\,2}_{N^*}])$, then the probability in (\ref{eq:sigconsistent}) goes to 1. The $\hat{\nu}^*_i$'s are simultaneously consistent if $\rho^*_{N^*}$ also converges to 0.
\end{lemma}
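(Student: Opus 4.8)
The plan is to prove part (i) by conditioning on the good first-stage event
\[
\mathcal{A}_{N^*}:=\Big\{\hat{J}=J;\ \max_{i=1,\dots,J}|\hat{\tau}^*_i-\tau^*_i|\le w^*(N^*)\Big\},
\]
which by hypothesis has $\mathbb{P}[\mathcal{A}_{N^*}^c]\le B_{N^*}$. On $\mathcal{A}_{N^*}^c$ we simply pay $B_{N^*}$, so it suffices to bound $\mathbb{P}[\mathcal{A}_{N^*};\,|\hat{\nu}^*_i-\nu^*_i|\ge\rho^*_{N^*}]$ for each fixed $i$. Writing $n_i=\hat{\tau}^*_{i+1}-\hat{\tau}^*_i$ and using that $\theta_j=\nu^*_i$ for every $j\in(\tau^*_i,\tau^*_{i+1}]$, I would decompose
\[
\hat{\nu}^*_i-\nu^*_i=\frac{1}{n_i}\sum_{j\in(\hat{\tau}^*_i,\hat{\tau}^*_{i+1}]}(\theta_j-\nu^*_i)\;+\;\frac{1}{n_i}\sum_{j\in(\hat{\tau}^*_i,\hat{\tau}^*_{i+1}]}\varepsilon_j .
\]
On $\mathcal{A}_{N^*}$ the estimated and true intervals differ by at most $w^*(N^*)$ indices at each endpoint, and $n_i\ge\delta^*_{N^*}-2w^*(N^*)\ge\tfrac12\delta^*_{N^*}$ for large $N^*$ since $w^*(N^*)=o(\delta^*_{N^*})$. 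For the first (deterministic) term, the only indices with $\theta_j\ne\nu^*_i$ lie in the overhang of total size $\le 2w^*(N^*)$, each contributing at most $2\bar\theta$ by (M1); hence this term is $\le 8\bar\theta\,w^*(N^*)/\delta^*_{N^*}=o(\rho^*_{N^*})$, and for $N^*$ large it is below $\rho^*_{N^*}/2$ and may be dropped.

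The second (noise) term is where the work lies: since $\hat{\tau}^*_i,\hat{\tau}^*_{i+1}$ are functions of the $Z$'s, the $\varepsilon_j$ over the estimated interval cannot be treated as fresh Gaussians after conditioning. The key derandomization step is to bound, on $\mathcal{A}_{N^*}$,
\[
\Big|\frac{1}{n_i}\!\!\sum_{j\in(\hat{\tau}^*_i,\hat{\tau}^*_{i+1}]}\!\!\varepsilon_j\Big|\le \frac{2}{\tau^*_{i+1}-\tau^*_i}\Big|\!\!\sum_{j\in(\tau^*_i,\tau^*_{i+1}]}\!\!\varepsilon_j\Big|+\frac{2}{\delta^*_{N^*}}\big(R^-_i+R^+_i\big),
\]
where $R^\mp_i$ are (up to a factor $2$) maxima over $0\le k\le w^*(N^*)$ of mean-zero random walks $\big|\sum_l\varepsilon_l\big|$ anchored at the true change points $\tau^*_i,\tau^*_{i+1}$; crucially, all three quantities on the right depend on the errors alone. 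The first is $|N(0,\sigma^2)|/\sqrt{\tau^*_{i+1}-\tau^*_i}$, so a Mills-ratio Gaussian tail bound at level $\rho^*_{N^*}/8$ contributes $C(\sqrt{\delta^*_{N^*}}\rho^*_{N^*})^{-1}\exp(-C'\delta^*_{N^*}\rho^{*2}_{N^*})$ with \emph{no} $w^*(N^*)$ factor. For $R^\mp_i$, Doob's (or Lévy's reflection) maximal inequality reduces the tail at level $\tfrac18\delta^*_{N^*}\rho^*_{N^*}$ to that of a single $N(0,w^*(N^*)\sigma^2)$ variable, and the Mills-ratio bound yields a factor $\sqrt{w^*(N^*)}(\delta^*_{N^*}\rho^*_{N^*})^{-1}\exp\!\big(-c(\delta^*_{N^*}\rho^*_{N^*})^2/w^*(N^*)\big)$; since $w^*(N^*)=o(\delta^*_{N^*})$ the exponent dominates $\delta^*_{N^*}\rho^{*2}_{N^*}$, so this is absorbed into $C_1 w^*(N^*)(\sqrt{\delta^*_{N^*}}\rho^*_{N^*})^{-1}\exp(-C_2\delta^*_{N^*}\rho^{*2}_{N^*})$ for large $N^*$. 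Summing the two noise pieces with the $B_{N^*}$ from $\mathcal{A}_{N^*}^c$ gives (i).

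Part (ii) is a union bound. On $\{\hat J=J\}$ there are $J+1$ level estimators, and since the $J+1$ true segments each have length $\ge\delta^*_{N^*}$ and partition $\{1,\dots,N^*\}$ we have $J+1\le N^*/\delta^*_{N^*}$; hence
\[
\mathbb{P}\Big[\hat J=J;\ \max_i|\hat\nu^*_i-\nu^*_i|\ge\rho^*_{N^*}\Big]\le \sum_{i=0}^{J}\mathbb{P}\big[\hat J=J;\,|\hat\nu^*_i-\nu^*_i|\ge\rho^*_{N^*}\big]\le \frac{N^*}{\delta^*_{N^*}}\Big(B_{N^*}+C_1 w^*(N^*)\frac{\exp[-C_2\delta^*_{N^*}\rho^{*2}_{N^*}]}{\sqrt{\delta^*_{N^*}}\rho^*_{N^*}}\Big),
\]
and tracking the two end segments and the possible mismatch of $\hat J$ with $J$ produces the stated constants $N^*/\delta^*_{N^*}+2$ and $N^*/\delta^*_{N^*}+1$ in (\ref{eq:sigconsistent}). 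Finally, under $N^*B_{N^*}/\delta^*_{N^*}\to0$ the first term on the right vanishes, and under $N^*w^*(N^*)/(\delta^{*3/2}_{N^*}\rho^*_{N^*})=o(\exp[C_2\delta^*_{N^*}\rho^{*2}_{N^*}])$ the second equals $(N^*w^*(N^*)/\delta^{*3/2}_{N^*}\rho^*_{N^*})\exp[-C_2\delta^*_{N^*}\rho^{*2}_{N^*}]=o(1)$; so the probability in (\ref{eq:sigconsistent}) tends to $1$, and simultaneous consistency of the $\hat\nu^*_i$ follows once $\rho^*_{N^*}\to0$.

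The main obstacle, as flagged, is the noise term of (i): one must resist any conditioning argument that implicitly treats the errors on a data-selected window as independent of that window, and instead dominate the noise uniformly by error-only functionals — the full-segment Gaussian sum plus boundary random-walk maxima anchored at the deterministic $\tau^*_i$. Producing only a single power of $w^*(N^*)$ (rather than $w^*(N^*)^2$ from a naive double union over both endpoints) and verifying that the resulting $\exp\!\big(-c(\delta^*_{N^*}\rho^*_{N^*})^2/w^*(N^*)\big)$ tail is genuinely negligible — precisely the point where $w^*(N^*)=o(\delta^*_{N^*})$ is used — is the delicate bookkeeping that the rest of the argument hinges on.
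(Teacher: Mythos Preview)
Your argument is correct, but the route for part~(i) differs from the paper's. The paper does not decompose the noise term into a full-segment sum plus boundary random-walk maxima; instead it handles the data-dependence of the averaging window by a direct \emph{union bound over the realized values} of the endpoint(s). For the edge segments ($i=0,J$) this means summing, over all $\tau$ with $|\tau-\tau^*_1|\le w^*(N^*)$, the Gaussian tail $\mathbb{P}\big[|\tau^{-1}\sum_{j\le\tau}(Z_j-\nu^*_0)|\ge\rho^*_{N^*}\big]$, which immediately produces the linear factor $w^*(N^*)$. For interior $i$ the naive double union over both endpoints would indeed give $w^*(N^*)^2$; the paper avoids this by splitting $\hat{\nu}^*_i$ at the deterministic midpoint $\tau_i^{*(m)}=\lceil(\tau^*_i+\tau^*_{i+1})/2\rceil$ into two convex-combined averages, each depending on only \emph{one} random endpoint, and then union-bounds each separately.

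Your approach is arguably cleaner: the uniform bound via the true-segment sum and the maxima $R^\pm_i$ anchored at deterministic points sidesteps both the case split and the midpoint trick, at the price of invoking Doob's (or L\'evy's) maximal inequality. It also makes transparent why only one power of $w^*(N^*)$ appears. The paper's approach is more elementary (only pointwise Gaussian tails) but requires the midpoint device. Both yield the same bound \eqref{pbound}; part~(ii) is handled identically by the union bound and $J+1\le N^*/\delta^*_{N^*}$. One small point: for the full-segment term to be absorbed into the form $C_1w^*(N^*)\cdot(\ldots)$ you implicitly need $w^*(N^*)\ge 1$ eventually, which the paper also assumes (and which holds in all applications since $w^*(N^*)\to\infty$).
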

\begin{proof}
	See Section \ref{sec:signalconsistentproof}
\end{proof}
In order to prove Lemma \ref{lem:binsegsignalconsistent}, it is sufficient to find a sequence $\rho_{N^*}^*\to 0$ such that $\rho_{N^*}^*$ satisfies all the conditions of Lemma \ref{cor:signalconsistent} and  $J\rho_{N^*}^*\to 0$ . The proof will proceed in such a fashion.

\begin{proof}
	\indent We start off by defining some notations. Since $\delta_N \geq CN^{1-\Xi}$ by (M3), we must have $J\leq C'N^\Lambda$ for some $C'>0$ and $\Lambda\in [0,\Xi]$. Using this notation, we will show that by setting $\rho^*_{N^*}=(N^*)^\theta$ where $\theta$ is chosen to be any value in $\Big( (3\Xi/\gamma-1)\vee (-3/8),-\frac{\Lambda}{\gamma} \Big)$, we will have a $\rho^*_{N^*}\to 0$ which satisfies the conditions of Lemma \ref{cor:signalconsistent} and $J\rho^*_{N^*}\to 0$. 
	\newline
	\newline
	\indent We must verify that $\Big( (3\Xi/\gamma-1)\vee (-3/8),-\frac{\Lambda}{\gamma} \Big)$ is a nonempty set by showing that both $(3\Xi/\gamma-1)$ and $-3/8$ are strictly smaller than $-\frac{\Lambda}{\gamma}$. First, due to condition (M7 (BinSeg)) we know that $\Xi/\gamma<1/4$, and therefore
	\begin{eqnarray}
	&&\frac{3\Xi}{\gamma}+\frac{\Lambda}{\gamma} \leq \frac{4\Xi}{\gamma}<1\nonumber\\
	&\rightarrow&  \frac{3\Xi}{\gamma}-1<-\frac{\Lambda}{\gamma},
	\end{eqnarray}
	and additionally, 
	\begin{eqnarray}
	-\frac{3}{8}<-\frac{1}{7}\leq -\frac{\Xi}{\gamma}\leq -\frac{\Lambda}{\gamma}.
	\end{eqnarray}
	Therefore, it is possible to choose some value of $\theta$ within the set $\Big( (3\Xi/\gamma-1)\vee (-3/8),-\frac{\Lambda}{\gamma} \Big)$.
	\newline
	\newline
	\indent To verify that $J\rho_{N^*}^*\to 0$, we first note 
	$$J\rho_{N^*}^*\lesssim N^\Lambda (N^*)^\theta\lesssim N^{\Lambda+\gamma\theta}.$$
	The rightmost term goes to 0 because $\theta<\Lambda/\gamma$.
	\newline
	\newline
	\indent To show that the BinSeg estimators satisfy the conditions of Lemma \ref{cor:signalconsistent}, we proceed as follows. First note that $\delta^*_{N^*}\geq C_1(N^*)^{1-\Xi/\gamma}$ for some $C_1>0$, and where $\Xi/\gamma < 1/4$.  For some positive constant $C_2$, set $w^*(N^*)= C_2 E_{N^*} = C_2\left(\frac{N^*}{\delta^*_{N^*}}\right)^2\log(N^*)$. Then, there is a positive constant $C_4$ 
	such that $w^*(N^*) = (C_4+o(1))(N^*)^{2\Xi/\gamma}\log(N^*)$. Set $B_{N^*}=C_5 /N^*$; this would mean 
	\begin{itemize}
		\item since $(N^*)^{2\Xi/\gamma}\log(N^*)=o((N^*)^{1-\Xi/\gamma})$ this does allow $w^*(N^*)=o(\delta^*_{N^*})$ to be satisfied;  
		\item $\frac{N^*}{\delta^*_{N^*}}B_{N^*}= \frac{C_5}{\delta^*_{N^*}} \to 0$; 
		\item since $3\Xi/\gamma-1<0$, and because $\rho^*_{N^*}=(N^*)^\theta$ for some $\theta$ satisfying $(3\Xi/\gamma-1)\vee (-3/8)<\theta<0$, this $\rho^*_{N^*}\to 0$ and satisfies:
		\begin{itemize}
			\item  $\frac{w^*(N^*)}{\delta^*_{N^*}}\leq C_6(N^*)^{3\Xi/\gamma-1}\log(N^*)$ for some $C_6>0$; latter expression is $o(\rho^*_{N^*})$; 
			\item $N^* w^*(N^*)/(\delta^{*\,3/2}_{N^*}\rho^*_{N^*}) \leq C_7(N^*)^{((7\Xi/\gamma - 1)/2)\,- \theta}$ and $\exp[C_2\delta^*_{N^*} (\rho^*_{N^*})^2] \geq \exp[C_8 (N^*)^{1-\Xi/\gamma + 2\theta}]$, for some positive $C_7$, $C_8$; as $1-\Xi/\gamma + 2\theta>3/4+2\theta > 0$ it follows that $N^* w^*(N^*)/\delta^{*\,3/2}_{N^*}\rho^*_{N^*} =o(\exp[C_2\delta^*_{N^*} (\rho^*_{N^*})^2])$. 
		\end{itemize}
	\end{itemize}
	Therefore, all conditions of Lemma \ref{cor:signalconsistent} for a sequence $\rho^*_{N^*}$ tending to 0 are satisfied. Next, combining the results of Theorem \ref{frythm} and Lemma  \ref{cor:signalconsistent}, we establish the simultaneous consistency of $\hat{J}$, the $\hat{\tau}_i$'s, and the $\hat{\nu}_i$'s. Specifically,  under conditions (M1) to (M6 (BinSeg)), we could combine the two limit results
	\begin{eqnarray}
	&&\mathbb{P}\left[\hat{J}=J;\quad \max_{i=1,...,J}|\hat{\tau}^*_i-\tau^*_i|\leq C E_{N^*}\right]\to 1\nonumber\\
	&&\mathbb{P}\left[\hat{J}=J;\quad \max_{i=0,...,J}|\hat{\nu}^*_i-\nu^*_i|\leq \rho^*_{N^*}\right]\to 1\nonumber\\
	&&\text{for any }\rho^*_{N^*}=(N^*)^\theta\text{, where }\theta\in\left( (3\Xi/\gamma-1)\vee\left( -\frac{3}{8} \right) \right)
	\end{eqnarray}
	to get the following through the Bonferroni inequality: 
	\begin{eqnarray}\label{eq:binseg1stconsistentproof}
	\mathbb{P}\left[\hat{J}=J;\quad \max_{i=1,...,J}|\hat{\tau}^*_i-\tau^*_i|\leq C E_{N^*};\quad \max_{i=0,...,J}|\hat{\nu}^*_i-\nu^*_i|\leq \rho^*_{N^*}\right]\to 1
	\end{eqnarray}
	as $N^*\to\infty$. 
\end{proof}

\subsection{Proof For Lemma \ref{cor:signalconsistent}}\label{sec:signalconsistentproof}
\begin{proof} First we focus on part (i). Consider the separate cases of $\tau_i$ for $i=0$ or $i=J$ (case 1), and $1\leq i<J$ (case 2). We know
	\begin{equation}
	\mathbb{P}\Big[ \hat{J}=J;\quad \max_{j=1,...,J} |\hat{\tau}_j^*-\tau_j^*|\leq w^*(N^*) \Big]\geq 1-B_{N^*}
	\end{equation}
	for some sequences $w^*(N^*)$ and $B_{N^*}$ which are  $o(\delta_{N^*}^*)$, and $o(1)$, respectively. Also as in the statement of the theorem, assume there is some sequence $\rho^*_{N^*}$ such that $w^*({N^*})/\delta_{N^*}^*=o(\rho_{N^*})$. For the rest of the proof assume ${N^*}$ is large enough so that 
	\begin{itemize}
		\item $1<w^*({N^*})< \frac{\delta^*_{N^*}}{6}$
		\item$6\frac{w^*({N^*})}{\delta^*_{N^*}}\bar{\theta}<\frac{\rho^*_{N^*}}{2}$
		\item $\delta_{N^*}^*>3$
	\end{itemize}
	\noindent \textbf{Case 1}: For $i=0$, $\hat{\nu}^*_0$ is the average of all $Z_t$'s where $t$ lies between 1 and $\hat{\tau}^*_1$, inclusive. We have 
	\begin{eqnarray}
	&&\mathbb{P}\left[ \hat{J}=J;\quad |\hat{\nu}^*_0-\nu^*_0|\geq \rho^*_{N^*} \right]\nonumber\\
	&\leq&\mathbb{P}\left[ \hat{J}=J;\quad |\hat{\tau}^*_1-\tau^*_1|>w^*({N^*})\right]+\mathbb{P}\left[ \hat{J}=J;\quad |\hat{\nu}^*_0-\nu^*_0|\geq \rho^*_{N^*};\quad  |\hat{\tau}^*_1-\tau^*_1|\leq w^*({N^*}) \right] \nonumber\\
	&\leq& B_{N^*}+\sum_{\tau:\,|\tau-\tau^*_1|\leq w^*({N^*})}\mathbb{P}\left[  \hat{J}=J;\quad |\hat{\nu}^*_0-\nu^*_0|\geq \rho^*_{N^*};\quad  \hat{\tau}^*_1=\tau  \right]\nonumber\\
	&\leq&  B_{N^*}+\sum_{\tau:\,|\tau-\tau^*_1|\leq w^*({N^*})}\mathbb{P}\left[ \hat{J}=J;\quad\hat{\tau}^*_1=\tau;\quad \left|\frac{1}{\tau}\sum_{j=1}^\tau Z_j-\nu^*_0\right|\geq \rho^*_{N^*} \right]\nonumber\\
	&\qquad&\leq B_{N^*}+\sum_{\tau:\,|\tau-\tau^*_1|\leq w^*({N^*})}\mathbb{P}\left[\left|\frac{1}{\tau}\sum_{j=1}^\tau (Z_j-\nu^*_0)\right|\geq \rho^*_{N^*} \right]
	\end{eqnarray}
	For all $\tau_1-w^*({N^*})\leq\tau\leq\tau^*_1$, we have $\frac{1}{\tau}\sum_{j=1}^\tau (Z_j-\nu^*_0)\sim N(0,\sigma^2/\tau)$, and hence
	\begin{eqnarray}\label{lowerpb1}
	\mathbb{P}\left[\left|\frac{1}{\tau}\sum_{j=1}^\tau (Z_j-\tau_0)\right|\geq \rho^*_{N^*} \right]
	&=& 2\left(1-\Phi(\sqrt{\tau}\rho^*_{N^*})\right)\nonumber\\
	&\leq& 2\frac{\phi(\sqrt{\tau}\rho^*_{N^*}/\sigma)}{\sqrt{\tau}\rho^*_{N^*}/\sigma}\nonumber\\
	&\leq& \sqrt{\frac{2}{\pi}}\cdot\frac{\exp\Big( -(\tau_1-w^*({N^*}))(\rho^*_{N^*})^2/(2\sigma^2)\Big)}{\sqrt{\tau_1-w^*({N^*})}\rho^*_{N^*}/\sigma}\nonumber\\
	&\leq & \frac{2\sigma}{\sqrt{\pi}}\cdot\frac{\exp\Big( -\delta^*_{N^*}(\rho^*_{N^*})^2/(4\sigma^2)\Big)}{\sqrt{\delta^*_{N^*}}\rho^*_{N^*}}\nonumber\\
	&&\left(\text{by }\tau_1w^*({N^*})>\delta^*_{N^*}/2\right)
	\end{eqnarray}
	For all $\tau^*_1<\tau\leq \tau^*_1+w^*({N^*})$ we have $\frac{1}{\tau}\sum_{j=1}^\tau (Z_j-\nu^*_0)\sim N\left(\frac{\tau-\tau^*_1}{\tau}(\nu^*_1-\nu^*_0),\frac{\sigma^2}{\tau}\right)$. Because
	\begin{eqnarray}
	\left|\frac{\tau-\tau^*_1}{\tau}(\nu^*_1-\nu^*_0)\right|\leq \frac{w^*({N^*})}{\delta^*_{N^*}}(2\bar{\theta})\leq \frac{\rho^*_{N^*}}{2},
	\end{eqnarray}
	the magnitude of the z-score of both $\pm\rho^*_{N^*}$ for the $N\left(\frac{\tau-\tau^*_1}{\tau}(\nu^*_1-\nu^*_0),\frac{\sigma^2}{\tau}\right)$ distribution is at least $\frac{\rho^*_{N^*}\sqrt{\tau}}{2\sigma}$, and hence
	\begin{eqnarray}
	\mathbb{P}\left[\left|\frac{1}{\tau}\sum_{j=1}^\tau (Z_j-\tau^*_0)\right|\geq \rho^*_{N^*} \right]&\leq&  2\left(1-\Phi\left(\frac{\rho^*_{N^*}\sqrt{\tau}}{2}\right)\right)\nonumber\\
	&\leq& 2\frac{\phi\left(\frac{\rho^*_{N^*}\sqrt{\tau}}{2\sigma}\right)}{\frac{\rho^*_{N^*}}{2\sigma}\sqrt{\tau}/\sigma}\nonumber\\
	&\leq& \frac{2\sigma\sqrt{2}}{\sqrt{\pi}}\cdot\frac{\exp\left(-\delta^*_{N^*}(\rho^*_{N^*})^2/(8\sigma^2)\right)}{\sqrt{\delta^*_{N^*}}\rho^*_{N^*} }
	\end{eqnarray}
	Therefore, the expression (\ref{lowerpb1}) can be bounded from above by
	\begin{align}\label{bb1}
	& B_{N^*}+(w^*({N^*})+1)\frac{2\sigma}{\sqrt{\pi}}\cdot\frac{\exp\Big( -\delta^*_{N^*}(\rho^*_{N^*})^2/(4\sigma^2)\Big)}{\sqrt{\delta^*_{N^*}}\gamma_{N^*}}+w^*({N^*})\frac{2\sigma\sqrt{2}}{\sqrt{\pi}}\cdot\frac{\exp\left(-\delta^*_{N^*}(\rho^*_{N^*})^2/(8\sigma^2)\right)}{\sqrt{\delta^*_{N^*}}\rho^*_{N^*}} \nonumber\\
	&\qquad \qquad\leq  B_{N^*}+\frac{6\sigma\sqrt{2}}{\sqrt{\pi}}\cdot\frac{w^*({N^*})\exp\left(-\delta^*_{N^*}(\rho^*_{N^*})^2/(8\sigma^2)\right)}{\sqrt{\delta^*_{N^*}}\rho^*_{N^*}}.\nonumber\\
	\end{align}
	For $i=J$, a very similar argument will bound $\mathbb{P}\left[ \hat{J}=J;\quad |\hat{\nu}^*_J-\nu^*_J|\geq \rho^*_{N^*} \right]$ by the same expression in (\ref{bb1}).
	\newline
	\newline
	\noindent \textbf{Case 2}: The procedure for this case will be similar to the steps for Case 1, but there are a few modifications. For $0<i<J$, $\hat{\nu}^*_i$ is the average of all $Z_t$'s for $\hat{\tau}^*_i<t\leq \hat{\tau}^*_{i+1}$. For the following part we re-write this average by considering the midpoint $\tau_i^{*(m)}:=\frac{\lceil \tau^*_i+\tau^*_{i+1}\rceil}{2}$ where $1<i<J$.\newline
	\newline
	\indent In the case where $\hat{\tau}^*_i$ and $\hat{\tau}^*_{i+1}$ are within $\delta^*_{N^*}/3$ (which is less than $|\tau^*_{i+1}-\tau^*_i|/3$) of $\tau^*_i$ and $\tau^*_{i+1}$ respectively, we have $\hat{\tau}^*_i < \tau_i^{*(m)} <\hat{\tau}^*_{i+1}$, and hence we can bound $|\hat{\nu}^*_i-\nu^*_i|$ by
	\begin{align}
	&\left|\frac{1}{\hat{\tau}^*_{i+1}-\hat{\tau}^*_i}\sum_{j=\hat{\tau}^*_i+1}^{\hat{\tau}^*_{i+1}}(Z_j-\nu^*_i)\right|\nonumber\\
	&= \left|\frac{\tau_i^{*(m)}-\hat{\tau}^*_{i}}{\hat{\tau}^*_{i+1}-\hat{\tau}^*_i}\left( \frac{1}{\tau_i^{*(m)}-\hat{\tau}^*_{i}}\sum_{j= \hat{\tau}^*_i+1}^{\tau_i^{*(m)}}(Z_j-\nu^*_i)\right)+\frac{\hat{\tau}^*_{i+1}-\tau_i^{*(m)}}{\hat{\tau}^*_{i+1}-\hat{\tau}^*_i}\left( \frac{1}{\hat{\tau}^*_{i+1}-\tau_i^{*(m)}}\sum_{j= \tau_i^{*(m)}+1}^{\hat{\tau}^*_{i+1}}(Z_j-\nu^*_i)\right)\right|
	\nonumber\\
	&\leq\frac{\tau_i^{*(m)}-\hat{\tau}^*_{i}}{\hat{\tau}^*_{i+1}-\hat{\tau}^*_i}\left| \frac{1}{\tau_i^{*(m)}-\hat{\tau}^*_{i}}\sum_{j= \hat{\tau}^*_i+1}^{\tau_i^{*(m)}}(Z_j-\nu^*_i)\right|+\frac{\hat{\tau}^*_{i+1}-\tau_i^{*(m)}}{\hat{\tau}^*_{i+1}-\hat{\tau}^*_i}\left| \frac{1}{\hat{\tau}^*_{i+1}-\tau_i^{(m)}}\sum_{j= \tau_i^{(m)}+1}^{\hat{\tau}^*_{i+1}}(Z_j-\nu^*_i)\right|\nonumber\\
	\end{align}
	In order for $|\hat{\nu}^*_i-\nu^*_i|$ to exceed $\rho^*_{N^*}$, at least one of $\left| \frac{1}{\tau_i^{*(m)}-\hat{\tau}^*_{i}}\sum_{j= \hat{\tau}^*_i+1}^{\tau_i^{*(m)}}(Z_j-\nu^*_i)\right|$ or \newline
	$\left| \frac{1}{\hat{\tau}^*_{i+1}-\tau_i^{*(m)}}\sum_{j= \tau_i^{*(m)}+1}^{\hat{\tau}^*_{i+1}}(Z_j-\nu^*_i)\right|$ must exceed $\rho^*_{N^*}$, or in other words,
	
	\begin{align}\label{bbp2}
	&\quad\mathbb{P}\left[ \hat{J}=J;\quad |\hat{\nu}^*_i-\nu^*_i|\geq \rho^*_{N^*}\right]\nonumber\\
	&\leq  \mathbb{P}\left[\hat{J}=J;\quad |\hat{\tau}^*_i-\tau^*_i|>w^*({N^*})\quad\text{ or }\quad|\hat{\tau}^*_{i+1}-\tau^*_{i+1}|>w(N)\right]+\nonumber\\
	&\mathbb{P}\left[ \hat{J}=J;\quad |\hat{\tau}^*_i-\tau^*_i|\leq w^*({N^*});\quad |\hat{\tau}^*_{i+1}-\tau^*_{i+1}|\leq w^*({N^*});\quad |\hat{\nu}^*_i-\nu^*_i|\geq\rho^*_{N^*}\right]\nonumber\\
	&\leq  B_{N^*}+\mathbb{P}\left[  \hat{J}=J;\quad |\hat{\tau}^*_i-\tau^*_i|\leq w^*({N^*});\quad |\hat{\tau}^*_{i+1}-\tau^*_{i+1}|\leq w^*({N^*});\quad \left| \frac{1}{\tau_i^{*(m)}-\hat{\tau}^*_{i}}\sum_{j= \hat{\tau}^*_i+1}^{\tau_i^{*(m)}}(Z_j-\nu^*_i)\right|\geq \rho^*_{N^*}\right]\nonumber\\
	&+\mathbb{P}\left[  \hat{J}=J;\quad |\hat{\tau}^*_i-\tau^*_i|\leq w^*({N^*});\quad |\hat{\tau}^*_{i+1}-\tau^*_{i+1}|\leq w^*({N^*});\quad \left| \frac{1}{\hat{\tau}^*_{i+1}-\tau_i^{*(m)}}\sum_{j= \tau_i^{*(m)}+1}^{\hat{\tau}^*_{i+1}}(Z_j-\nu^*_i)\right|\geq\rho^*_{N^*}\right]\nonumber\\
	&\leq  B_{N^*}+\mathbb{P}\left[  \hat{J}=J;\quad |\hat{\tau}^*_i-\tau^*_i|\leq w^*({N^*});\quad \left| \frac{1}{\tau_i^{*(m)}-\hat{\tau}^*_{i}}\sum_{j= \hat{\tau}^*_i+1}^{\tau_i^{*(m)}}(Z_j-\nu^*_i)\right|\geq \rho^*_{N^*}\right]\nonumber\\
	&+\mathbb{P}\left[  \hat{J}=J;\quad |\hat{\tau}^*_{i+1}-\tau^*_{i+1}|\leq w^*({N^*});\quad \left| \frac{1}{\hat{\tau}^*_{i+1}-\tau_i^{*(m)}}\sum_{j= \tau_i^{*(m)}+1}^{\hat{\tau}^*_{i+1}}(Z_j-\nu^*_i)\right|\geq\rho^*_{N^*}\right]\nonumber\\
	&\leq  B_{N^*}+\sum_{\tau:\, |\tau-\tau^*_i|\leq w^*({N^*})}\mathbb{P}\left[ \left| \frac{1}{\tau_i^{*(m)}-\tau}\sum_{j= \tau+1}^{\tau_i^{*(m)}}(Z_j-\nu^*_i)\right|\geq \rho^*_{N^*}\right]\nonumber\\
	&+\sum_{\tau:\,|\tau-\tau^*_{i+1}|\leq w^*({N^*})}\mathbb{P}\left[ \left| \frac{1}{\tau-\tau_i^{*(m)}}\sum_{j= \tau_i^{*(m)}+1}^{\tau}(Z_j-\nu^*_i)\right|\geq\rho^*_{N^*}\right]\nonumber\\
	\end{align}
	Next, we will bound $\mathbb{P}\left[ \left| \frac{1}{\tau_i^{*(m)}-\tau}\sum_{j= \tau+1}^{\tau_i^{*(m)}}(Z_j-\nu^*_i)\right|\geq \rho^*_{N^*}\right]$ for each $\tau$ such that $|\tau-\tau^*_i|\leq w^*({N^*})$. For $\tau^*_i\leq \tau\leq \tau^*_i+w^*({N^*})$ we have $\frac{1}{\tau_i^{*(m)}-\tau}\sum_{j= \tau+1}^{\tau_i^{*(m)}}(Z_j-\nu^*_i)\sim N\left(0,\frac{\sigma^2}{\tau_i^{*(m)}-\tau}\right)$, and hence
	\begin{eqnarray}\label{bbbp1}
	\mathbb{P}\left[ \left| \frac{1}{\tau_i^{*(m)}-\tau}\sum_{j= \tau+1}^{\tau_i^{*(m)}}(Z_j-\nu^*_i)\right|\geq \rho^*_{N^*}\right]&\leq& 2\left(1-\Phi\left( \rho^*_{N^*}\sigma^{-1}\sqrt{\tau_i^{*(m)}-\tau}\right)\right)\nonumber\\
	&\leq& \frac{2\sigma\sqrt{3}}{\sqrt{\pi}}\cdot \frac{\exp(-\delta^*_{N^*}(\rho^*_{N^*})^2/(12\sigma^2))}{\rho^*_{N^*}\sqrt{\delta^*_{N^*}}},
	\end{eqnarray}
	where we used the fact that $\tau_i^{*(m)}-\tau>\tau_i^{*(m)}-\tau^*_i-w^*({N^*})>\delta^*_{N^*}/3-\delta^*_{N^*}/6$. 
	\newline
	\newline
	For  $\tau^*_i-w^*({N^*})\leq\tau<\tau^*_i$, we have $\frac{1}{\tau_i^{*(m)}-\tau}\sum_{j= \tau+1}^{\tau_i^{*(m)}}(Z_j-\nu^*_i)\sim N\left(\frac{\tau^*_i-\tau}{\tau_i^{*(m)}-\tau}(\nu^*_i-\nu^*_{i-1}),\frac{\sigma^2}{\tau_i^{*(m)}-\tau}\right)$. The z-scores of $\pm\rho^*_{N^*}$ would have magnitudes greater than 
	\begin{align}
	&\quad\sigma^{-1}\sqrt{\tau_i^{*(m)}-\tau}\left(\rho^*_{N^*}-\left|\frac{\tau^*_i-\tau}{\tau_i^{*(m)}-\tau}(\nu^*_i-\nu^*_{i-1})\right|\right)\nonumber\\
	&\geq \sigma^{-1}\sqrt{\frac{\delta^*_{N^*}}{3}}\left(\rho^*_{N^*}-\frac{w^*({N^*})}{\delta^*_{N^*}/3}(2\bar{\theta})\right)\nonumber\\
	&\geq \sqrt{\frac{\delta^*_{N^*}}{3}}\cdot\frac{\rho^*_{N^*}}{2\sigma}
	\end{align}
	Hence, this gives the probability bound
	\begin{eqnarray}\label{bbbp2}
	\mathbb{P}\left[ \left| \frac{1}{\tau_i^{*(m)}-\tau}\sum_{j= \tau+1}^{\tau_i^{*(m)}}(Z_j-\nu^*_i)\right|\geq \rho^*_{N^*}\right]&\leq& 2\left(1-\Phi\left(  \sqrt{\frac{\delta^*_{N^*}}{3}}\cdot\frac{\rho^*_{N^*}}{2\sigma}\right)\right)\nonumber\\
	&\leq &\frac{2\sigma\sqrt{6}}{\sqrt{\pi}}\cdot\frac{\exp\left(- \delta(\gamma^*_{N^*})^2/(24\sigma^2)\right) }{\rho^*_{N^*}\sqrt{\delta^*_{N^*}}}.
	\end{eqnarray}
	Putting together the bounds in (\ref{bbbp1}) and (\ref{bbbp2}) will give 
	\begin{align}
	&\quad \sum_{\tau:\, |\tau-\tau_i|\leq w^*({N^*})}\mathbb{P}\left[ \left| \frac{1}{\tau_i^{*(m)}-\tau}\sum_{j= \tau+1}^{\tau_i^{*(m)}}(Z_j-\nu^*_i)\right|
	\geq \rho^*_{N^*}\right]\nonumber\\
	&\leq 3w^*({N^*})\cdot \frac{2\sigma\sqrt{6}}{\sqrt{\pi}}\cdot\frac{\exp\left(- \delta^*_{N^*}(\rho^*_{N^*})^2/(24\sigma^2)\right) }{\rho^*_{N^*}\sqrt{\delta^*_{N^*}}}
	\end{align}
	In an extremely similar manner, it can be argued that 
	\begin{align}
	&\quad\sum_{\tau:\,|\tau-\tau^*_{i+1}|\leq w^*({N^*})}\mathbb{P}\left[ \left| \frac{1}{\tau-\tau_i^{*(m)}}\sum_{j= \tau_i^{*(m)}+1}^{\tau}(Z_j-\nu^*_i)\right|\geq\rho^*_{N^*}\right]\nonumber\\
	&\leq 3w^*({N^*})\cdot \frac{2\sigma\sqrt{6}}{\sqrt{\pi}}\cdot\frac{\exp\left(- \delta^*_{N^*}(\rho^*_{N^*})^2/(24\sigma^2)\right) }{\rho^*_{N^*}\sqrt{\delta^*_{N^*}}}
	\end{align}
	Therefore, (\ref{bbp2}) can be bounded by 
	\begin{eqnarray}\label{bb2}
	B_{N^*}+\frac{12\sigma\sqrt{6}}{\sqrt{\pi}}\cdot \frac{w^*({N^*})\exp\left( -\delta^*_{N^*}(\rho^*_{N^*})^2/(24\sigma^2) \right)}{\rho^*_{N^*}\sqrt{\delta^*_{N^*}}}
	\end{eqnarray}
	\newline
	\newline
	By taking constants $C_1$ and $C_2$ to be the "worse" of the coefficients in (\ref{bb1}) and (\ref{bb2}), which are $\frac{12\sigma\sqrt{6}}{\sqrt{\pi}}$ and $1/(24\sigma^2)$ respectively, we can combine the result of both cases and establish
	\begin{eqnarray}
	\mathbb{P}\left[ \hat{J}=J;\; |\hat{\nu}^*_i-\nu^*_i|\geq\rho^*_{N^*} \right]\leq B_{N^*}+C_1w^*({N^*})\frac{\exp\left( -C_2\delta^*_{N^*}(\rho^*_{N^*})^2 \right)}{\rho^*_{N^*}\sqrt{\delta^*_{N^*}}}
	\end{eqnarray}
	for all $i=1,\dots,J$
\end{proof}
\noindent Using part (i), previously shown, it is straightforward to show part (ii):
\begin{proof}	 
	The complement of the event $\{\hat{J}=J;\quad \max_{i=0,...,J}|\hat{\nu}_i-\nu_i|<\rho_N\}$ is the event where either $\hat{J}\neq J$ or $\hat{J}=J$ and $|\hat{\nu}_i-\nu_i|\geq\rho_N$ for some $i$. For all sufficiently large $N$ and some positive constants $C_1$ and $C_2$ we have
	\begin{eqnarray}
	&& 1-\mathbb{P}\left[ \hat{J}=J;\quad \max_{i=0,...,J}|\hat{\nu}^*_i-\nu^*_i|<\rho^*_{N^*}\right] \nonumber\\
	&\leq& \mathbb{P}[\hat{J}\neq J]+\sum_{i=0}^J\mathbb{P}\left[ \hat{J}=J;\quad |\hat{\nu}^*_i-\nu^*_i|\geq \rho^*_{N^*}\right]\nonumber\\
	&\leq & B_{N^*}+(J+1)\left( B_{N^*}+C_1w^*({N^*}) \frac{\exp\left[ -C_2\delta^*_{N^*}(\rho^*_{N^*})^2\right]}{\sqrt{\delta^*_{N^*}}\rho^*_{N^*}} \right)\nonumber\\
	& \leq & B_{N^*}+\left(\frac{{N^*}}{\delta^*_{N^*}}+1\right)\left( B_{N^*}+C_1w^*({N^*}) \frac{\exp\left[ -C_2\delta^*_{N^*}(\rho^*_{N^*})^2\right]}{\sqrt{\delta^*_{N^*}}\rho^*_{N^*}} \right)\nonumber\\
	&\to & 0
	\end{eqnarray}
\end{proof}

\section{Supplement Part C (Probability Bounds on Argmin of Random Walks Absolute Value Drifts)}\label{sec:supplementpartc}
\subsection{Probability bound for Argmin of Random Walk}
Here we will derive a probability bound for random walks of the form
\begin{eqnarray}
X_\Delta(t):=\begin{cases}
t\left|\frac{\Delta}{2}\right|+\sum_{i=1}^t\varepsilon_i\qquad &t>0\\
0 &t=0\\
|t|\cdot\left|\frac{\Delta}{2}\right|-\sum_{i=-1}^{|t|}\varepsilon_i &t<0
\end{cases}
\end{eqnarray}
Specifically, the following exponential bound applies:
\begin{lemma}\label{lem:probbound}
	Suppose that $S$ is a set of integers and $m$ a positive integer such that $[-m,m]\subset S$, then 
	\begin{eqnarray}
	\mathbb{P}\left[ \left|\underset{t\in S}{\arg\min}X_\Delta(t)\right|>m \right]\leq A(\Delta)\exp(-B(\Delta)m)
	\end{eqnarray}
	where $A$ and $B$ are expressions dependent only on $|\Delta|$, with $A$ decreasing and $B$ increasing in $|\Delta|$.
\end{lemma}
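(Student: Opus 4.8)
The plan is to decompose the event $\{|\arg\min_{t\in S} X_\Delta(t)| > m\}$ via a union bound over the location of the argmin, and then bound each term using the positive drift $|\Delta|/2$ together with a reflection/maximal-type inequality for the random walk with Gaussian increments. First I would observe that if the argmin is attained at some $t^\star$ with $|t^\star| > m$, then in particular $X_\Delta(t^\star) \le X_\Delta(0) = 0$, since $0 \in [-m,m] \subset S$. By symmetry of the construction (the two halves $t>0$ and $t<0$ are i.i.d.\ copies up to sign of the increments), it suffices to bound $\mathbb{P}[\exists\, t > m : X_\Delta(t) \le 0]$ and double it. Writing $X_\Delta(t) = t|\Delta|/2 + S_t$ where $S_t = \varepsilon_1 + \cdots + \varepsilon_t$ is a mean-zero Gaussian random walk with variance $t\sigma^2$, the event $X_\Delta(t)\le 0$ is $\{S_t \le -t|\Delta|/2\}$.

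Next I would handle the supremum over $t>m$ by a dyadic (peeling) argument: partition $(m,\infty)$ into blocks $(2^k m, 2^{k+1} m]$ for $k = 0,1,2,\dots$, and on each block use the fact that $S_t \le -t|\Delta|/2$ forces $\min_{2^k m < t \le 2^{k+1}m} S_t \le -2^k m |\Delta|/2$. Applying Lévy's maximal inequality (or the reflection principle) for the Gaussian random walk on each block, together with the standard Gaussian tail bound $\mathbb{P}[S_n \le -x] \le \exp(-x^2/(2n\sigma^2))$, the $k$-th block contributes at most a constant times $\exp\big(-c\, 2^k m |\Delta|^2/\sigma^2\big)$ for an absolute constant $c$ (the key point being that the squared threshold $(2^k m |\Delta|/2)^2$ divided by the block length $2^{k+1}m$ is of order $2^k m |\Delta|^2$, so the exponent grows geometrically in $k$). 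Summing the geometric-type series over $k\ge 0$ yields a bound of the form $A(\Delta)\exp(-B(\Delta) m)$ with $B(\Delta)$ proportional to $|\Delta|^2/\sigma^2$ (hence increasing in $|\Delta|$) and $A(\Delta)$ a decreasing function of $|\Delta|$ coming from the convergent series and the leading block. Adding the symmetric contribution from $t < -m$ only changes the constant $A$ by a factor of $2$.

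The main obstacle I anticipate is making the maximal inequality step clean and uniform across blocks while correctly tracking how the constants depend on $|\Delta|$ (as opposed to treating $\Delta$ as fixed): one must verify that the same $c$ works for every block and that the leftover terms from the peeling are genuinely summable with the claimed geometric rate, rather than merely finite. A secondary technical point is that $S$ may be a proper subset of $\mathbb{Z}$ containing $[-m,m]$ but missing some integers beyond $m$; this is harmless because restricting the argmin to a smaller index set can only decrease the probability that the minimizer lands outside $[-m,m]$ --- formally, $\{\arg\min_{t\in S} X_\Delta(t) \notin [-m,m]\} \subseteq \{\exists\, t\in S,\ |t|>m,\ X_\Delta(t) \le 0\} \subseteq \{\exists\, t\in\mathbb{Z},\ |t|>m,\ X_\Delta(t)\le 0\}$, so the bound for $S=\mathbb{Z}$ transfers verbatim. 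Once these points are in hand, collecting terms gives the stated inequality with explicit monotone $A(\Delta)$ and $B(\Delta)$.
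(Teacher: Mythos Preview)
Your proposal is correct, but it is more elaborate than necessary compared with the paper's argument. Both you and the paper start from the same observation---if the argmin over $S$ lies at some $t^\star$ with $|t^\star|>m$ then $X_\Delta(t^\star)\le X_\Delta(0)=0$---and both reduce to bounding $\mathbb{P}[\exists\,t\in S,\ |t|>m,\ X_\Delta(t)\le 0]$. The difference is in how that tail probability is controlled. The paper simply applies a crude union bound over each individual integer $|t|>m$: since $X_\Delta(t)\sim N(|t||\Delta|/2,\,|t|)$, one has $\mathbb{P}[X_\Delta(t)\le 0]\le \exp(-|t|\Delta^2/8)$, and these terms already form a geometric series in $|t|$, summing to $2\bigl(1-e^{-\Delta^2/8}\bigr)^{-1}e^{-\Delta^2/8}\cdot e^{-m\Delta^2/8}$. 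No maximal inequality or peeling is required because the pointwise Gaussian tails decay exponentially in $t$, not merely polynomially.

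Your dyadic peeling with L\'evy's inequality would work and yields the same exponential rate (with a slightly worse constant, roughly $\Delta^2/16$ in the exponent instead of $\Delta^2/8$), but it is overkill here: that machinery is typically reserved for situations where the individual-$t$ bounds are not directly summable (e.g., random walks without drift or with subexponential control). The ``obstacle'' you flag---tracking constants uniformly across blocks---is a concern you have created for yourself; the paper's direct sum sidesteps it entirely and makes the monotonicity of $A(\Delta)$ and $B(\Delta)$ in $|\Delta|$ immediate from the closed form.
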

\begin{proof}
	\begin{eqnarray}
	&&\mathbb{P}\left[ \left|\underset{t\in S}{\arg\min}X_\Delta(t)\right|>m \right]\nonumber\\
	&= & \sum_{j>m ,\,j\in S}\mathbb{P}\left[ \underset{t\in S}{\arg\min}X_\Delta(t)=j \right]+\sum_{j<-m ,\,j\in S}\mathbb{P}\left[ \underset{t\in S}{\arg\min}X_\Delta(t)=j \right]\nonumber\\
	&\leq & \sum_{t>m ,\,t\in S}\mathbb{P}\left[ X_\Delta(t)< 0 \right]+\sum_{t<-m ,\,t\in S}\mathbb{P}\left[ X_\Delta(t)<0 \right]\nonumber\\
	&\leq &\sum_{t>m ,\,t\in S}\mathbb{P}\left[ N\left(t\frac{|\Delta|}{2},t\right)< 0 \right]+\sum_{t<-m ,\,t\in S}\mathbb{P}\left[  N\left(|t|\frac{|\Delta|}{2},|t|\right)<0 \right]\nonumber\\
	&\leq &\sum_{t=m+1 }^\infty\mathbb{P}\left[ N\left(0,1\right)< -\frac{\sqrt{t}|\Delta|}{2} \right]+\sum_{t=-m-1}^{-\infty}\mathbb{P}\left[  N\left(0,1\right)<-\frac{\sqrt{|t|}|\Delta|}{2} \right]\nonumber\\
	&\leq & \sum_{t=m+1 }^\infty \exp\left( -\frac{t\Delta^2}{8} \right)+\sum_{t=-m-1}^{-\infty}\exp\left( -\frac{|t|\Delta^2}{8}\right)\nonumber\\
	&=& 2\left( \frac{\exp\left( -\frac{\Delta^2}{8} \right)}{1+\exp\left( -\frac{\Delta^2}{8} \right)} \right)\exp\left( -\frac{m\Delta^2}{8} \right)
	\end{eqnarray}
	
\end{proof}
This result has another implication. With probability approaching to 1 at an exponential pace, the argmin of $X_\Delta(t)$ equals the argmin over a smaller set:
\begin{lemma}\label{lem:eqprob}
	For any set of integers $S$ and positive integer $m$ such that $[-m,m]\in S$, 
	\begin{eqnarray}
	\mathbb{P}\left( \underset{t\in [-m,m]}{\arg\min}X_\Delta(t)=\underset{t\in S}{\arg\min}X_\Delta(t) \right)\geq 1-A(\Delta)\exp(-B(\Delta)m)
	\end{eqnarray}
	where $A$, $B$ are expressions in $\Delta$ that are, respectively, decreasing and increasing in $|\Delta|$ Lemma \ref{lem:probbound}.
\end{lemma}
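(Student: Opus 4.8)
The plan is to reduce the statement to the tail bound of Lemma \ref{lem:probbound}. The key observation is that the two argmins, $\underset{t\in[-m,m]}{\arg\min}X_\Delta(t)$ and $\underset{t\in S}{\arg\min}X_\Delta(t)$, necessarily coincide whenever the global argmin over $S$ already lies inside the interval $[-m,m]$: if $t^\star := \underset{t\in S}{\arg\min}X_\Delta(t)$ satisfies $|t^\star|\le m$, then since $[-m,m]\subset S$ and $X_\Delta$ attains its minimum over $S$ at $t^\star$, it also attains its minimum over the smaller set $[-m,m]$ at $t^\star$, and that minimizer is unique (the random walk increments have densities, so ties occur with probability zero). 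Hence
\begin{eqnarray*}
\left\{ \underset{t\in[-m,m]}{\arg\min}X_\Delta(t)=\underset{t\in S}{\arg\min}X_\Delta(t) \right\} \supseteq \left\{ \left|\underset{t\in S}{\arg\min}X_\Delta(t)\right|\le m \right\},
\end{eqnarray*}
up to a null set.

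Taking complements and applying Lemma \ref{lem:probbound} directly to the event on the right-hand side gives
\begin{eqnarray*}
\mathbb{P}\left( \underset{t\in[-m,m]}{\arg\min}X_\Delta(t)=\underset{t\in S}{\arg\min}X_\Delta(t) \right) &\ge& \mathbb{P}\left( \left|\underset{t\in S}{\arg\min}X_\Delta(t)\right|\le m \right)\\
&=& 1-\mathbb{P}\left( \left|\underset{t\in S}{\arg\min}X_\Delta(t)\right|> m \right)\\
&\ge& 1-A(\Delta)\exp(-B(\Delta)m),
\end{eqnarray*}
with the same $A(\Delta)$ decreasing in $|\Delta|$ and $B(\Delta)$ increasing in $|\Delta|$ as produced in the proof of Lemma \ref{lem:probbound} (explicitly one may take $A(\Delta)=2\exp(-\Delta^2/8)/(1+\exp(-\Delta^2/8))$ and $B(\Delta)=\Delta^2/8$). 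This is exactly the claimed bound.

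The only subtlety — and it is a very minor one rather than a real obstacle — is the well-definedness of the argmin, i.e. ruling out ties in the random walk, which is why one should note that the $\varepsilon_i$ are Gaussian (hence continuously distributed), so with probability one no two partial sums of the form $X_\Delta(t)$ are equal; on this full-probability event both argmins are unique and the set inclusion above holds pointwise. Everything else is an immediate consequence of Lemma \ref{lem:probbound}, so no new estimates are needed.
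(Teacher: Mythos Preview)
Your proof is correct and follows essentially the same approach as the paper: both reduce the event that the two argmins differ to the event that the argmin over $S$ falls outside $[-m,m]$, and then invoke Lemma~\ref{lem:probbound}. Your version is slightly more careful in explicitly addressing uniqueness of the argmin via continuity of the Gaussian increments, which the paper leaves implicit.
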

\begin{proof}
	The two argmins of $X_\Delta(t)$ (over $S$ and over $[m,m]$) are different if and only if the argmin over $S$ is outisde of the interval $[-m,m]$. Therefore
	\begin{eqnarray}
	&&\mathbb{P}\left( \underset{t\in [-m,m]}{\arg\min}X_\Delta(t)\neq\underset{t\in S}{\arg\min}X_\Delta(t) \right)\nonumber\\
	&=&\mathbb{P}\left( \left|\underset{t\in S}{\arg\min}X_\Delta(t)\right|>m \right)\nonumber\\
	&\leq &A(\Delta)\exp(-B(\Delta)m)
	\end{eqnarray}
	%
	%
\end{proof}
This leads to
\begin{lemma}
	Suppose that $S$ is an integer set, $\ell$ and $m$ are positive integers, and $[-\ell,\ell]\subset [-m,m]\subset S$, then
	\begin{eqnarray}
	\left| \mathbb{P}\left[ \left|\underset{t\in[-m,m]}{\arg\min}X_\Delta(t)\right|\leq \ell \right]-\mathbb{P}\left[ \left|\underset{t\in S}{\arg\min}X_\Delta(t)\right|\leq \ell \right] \right|\leq A(\Delta)\exp(-B(\Delta)m)
	\end{eqnarray}
	for some expressions $A()$ and $B()$ that are respectively, decreasing and increasing with respect to $|\Delta|$.
\end{lemma}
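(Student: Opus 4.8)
The plan is to obtain this as an immediate corollary of Lemma~\ref{lem:eqprob}. Write $T_m:=\underset{t\in[-m,m]}{\arg\min}\,X_\Delta(t)$ and $T_S:=\underset{t\in S}{\arg\min}\,X_\Delta(t)$. Since the increments of $X_\Delta$ are non-degenerate Gaussians, the values $X_\Delta(t)$ at distinct integers are almost surely distinct, so both argmins are well defined with probability one and the events below are measurable.

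The key observation is that on the event $\{T_m=T_S\}$ the two events $\{|T_m|\le\ell\}$ and $\{|T_S|\le\ell\}$ coincide, so their symmetric difference is contained in $\{T_m\neq T_S\}$. Hence
\[
\Big|\,\mathbb{P}\big[|T_m|\le\ell\big]-\mathbb{P}\big[|T_S|\le\ell\big]\,\Big|
\;\le\;\mathbb{P}\big(\{|T_m|\le\ell\}\triangle\{|T_S|\le\ell\}\big)
\;\le\;\mathbb{P}\big(T_m\neq T_S\big).
\]
Now invoke Lemma~\ref{lem:eqprob} with this same $S$ and $m$ (the only hypothesis needed is $[-m,m]\subset S$, which is given): it yields $\mathbb{P}(T_m\neq T_S)=1-\mathbb{P}(T_m=T_S)\le A(\Delta)\exp(-B(\Delta)m)$, where $A$ is decreasing and $B$ increasing in $|\Delta|$, these monotonicity properties being inherited from Lemma~\ref{lem:probbound}. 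Combining the two displays gives the asserted bound with the very same $A,B$.

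There is no genuine obstacle here, as the statement is essentially a restatement of Lemma~\ref{lem:eqprob}; the only points worth a line of care are (i) recording that the argmin is a.s.\ uniquely attained, so that the set identity $\{|T_m|\le\ell\}=\{|T_S|\le\ell\}$ holds exactly on $\{T_m=T_S\}$, and (ii) noting that the inner inclusion $[-\ell,\ell]\subset[-m,m]$ is not actually used—the inequality is valid for every $\ell$—while $[-m,m]\subset S$ is exactly what Lemma~\ref{lem:eqprob} requires.
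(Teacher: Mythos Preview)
Your proposal is correct and follows essentially the same route as the paper: both reduce the bound to $\mathbb{P}(T_m\neq T_S)$ and then invoke Lemma~\ref{lem:eqprob}. The only cosmetic difference is that the paper writes the (signed) difference as $\mathbb{P}\big[|T_m|\le\ell,\ |T_S|>\ell\big]$, implicitly using the inclusion $\{|T_S|\le\ell\}\subset\{|T_m|\le\ell\}$ (which holds since $T_S\in[-\ell,\ell]\subset[-m,m]$ forces $T_m=T_S$), whereas your symmetric-difference bound handles both directions at once and makes the uniqueness of the argmin explicit.
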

\begin{proof}
	\begin{eqnarray}
	&&\mathbb{P}\left[ \left|\underset{t\in[-m,m]}{\arg\min}X_\Delta(t)\right|\leq \ell \right]-\mathbb{P}\left[ \left|\underset{t\in S}{\arg\min}X_\Delta(t)\right|\leq \ell \right]\nonumber\\
	&=&\mathbb{P}\left[ \left|\underset{t\in[-m,m]}{\arg\min}X_\Delta(t)\right|\leq\ell \text{ and } \left|\underset{t\in S}{\arg\min}X_\Delta(t)\right|> \ell \right]\nonumber\\
	&\leq &\mathbb{P}\left[ \left|\underset{t\in[-m,m]}{\arg\min}X_\Delta(t)\right|\neq\left|\underset{t\in S}{\arg\min}X_\Delta(t)\right|\right]\nonumber\\
	\end{eqnarray}
	The last line is greater than 0, and by Lemma \ref{lem:eqprob}, less than  $A(\Delta)\exp(-B(\Delta)m)$ for some appropriate expressions $A()$ and $B()$.
\end{proof}
\subsection{Quantiles}\label{sec:proofquantbound}
Due to Lemma \ref{lem:probbound}, the following statement can be made regarding the quantile: 
\begin{lemma}\label{lem:quantbound}
	Using the $A$ and $B$ from Lemma \ref{lem:probbound}, 
	\begin{eqnarray}
	Q_\Delta(\sqrt[J]{1-\alpha})\leq \frac{1}{B}\log\frac{AJ}{\alpha}
	\end{eqnarray}
\end{lemma}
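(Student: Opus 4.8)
\textbf{Proof proposal for Lemma \ref{lem:quantbound}.}

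The plan is to use the exponential tail bound from Lemma \ref{lem:probbound} directly, applied with $S = \mathbb{Z}$. Recall that $Q_\Delta(1-\beta)$ is defined as the $(1-\beta)$-quantile of $\left|\underset{t\in\mathbb{Z}}{\arg\min}X_\Delta(t)\right|$, so by definition $Q_\Delta(1-\beta)$ is the smallest integer $m$ such that $\mathbb{P}\left[\left|\underset{t\in\mathbb{Z}}{\arg\min}X_\Delta(t)\right|\leq m\right]\geq 1-\beta$, equivalently the smallest $m$ with $\mathbb{P}\left[\left|\underset{t\in\mathbb{Z}}{\arg\min}X_\Delta(t)\right|> m\right]\leq \beta$. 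Thus it suffices to exhibit some $m_0 \leq \frac{1}{B}\log\frac{AJ}{\alpha}$ for which the tail probability is at most $\beta := 1 - \sqrt[J]{1-\alpha}$, since then the quantile, being the \emph{smallest} such integer, is bounded by $m_0$, hence by $\frac{1}{B}\log\frac{AJ}{\alpha}$.

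The key steps, in order: First I would invoke Lemma \ref{lem:probbound} with $S=\mathbb{Z}$ and an arbitrary positive integer $m$, giving $\mathbb{P}\left[\left|\underset{t\in\mathbb{Z}}{\arg\min}X_\Delta(t)\right|>m\right]\leq A(\Delta)\exp(-B(\Delta)m)$. Second, I would find the threshold on $m$ making the right-hand side $\leq \beta$: solving $A\exp(-Bm)\leq\beta$ yields $m\geq \frac{1}{B}\log\frac{A}{\beta}$. Third, I would control $\beta$ from below in terms of $\alpha/J$: since $1-\sqrt[J]{1-\alpha}\geq \alpha/J$ (this is the standard Bernoulli/convexity inequality $(1-x)^{1/J}\leq 1 - x/J$ for $x\in[0,1]$, applied with $x=\alpha$, so $\sqrt[J]{1-\alpha}\leq 1-\alpha/J$ and hence $\beta = 1-\sqrt[J]{1-\alpha}\geq \alpha/J$), we get $\frac{1}{B}\log\frac{A}{\beta}\leq \frac{1}{B}\log\frac{AJ}{\alpha}$. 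Fourth, take $m_0=\left\lceil \frac{1}{B}\log\frac{AJ}{\alpha}\right\rceil$ if one wants an integer, or simply note that for $m = \frac{1}{B}\log\frac{AJ}{\alpha}$ (monotonicity in $m$ of the tail bound) the probability $\mathbb{P}\left[\left|\underset{t\in\mathbb{Z}}{\arg\min}X_\Delta(t)\right|>m\right]$ is already $\leq\alpha/J\leq\beta$, so the quantile $Q_\Delta(\sqrt[J]{1-\alpha})$ cannot exceed $\frac{1}{B}\log\frac{AJ}{\alpha}$.

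I do not anticipate a genuine obstacle here — the statement is essentially a one-line corollary of Lemma \ref{lem:probbound}. The only point requiring a small amount of care is the inequality $1-\sqrt[J]{1-\alpha}\geq\alpha/J$, which should be stated cleanly (it follows from $(1-\alpha)^{1/J}\leq 1-\alpha/J$, itself a consequence of concavity of $x\mapsto x^{1/J}$ on $[0,1]$ or of the weighted AM-GM / Bernoulli inequality), and the minor bookkeeping of whether $\frac{1}{B}\log\frac{AJ}{\alpha}$ is itself an integer — which is harmless since a quantile is defined as the smallest integer clearing the probability threshold, and any real upper bound on that threshold location is automatically an upper bound on the quantile. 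If one prefers to avoid even that subtlety, one can phrase the conclusion as $Q_\Delta(\sqrt[J]{1-\alpha})\leq \left\lceil\frac{1}{B}\log\frac{AJ}{\alpha}\right\rceil$, but the stated form is fine under the convention just described.
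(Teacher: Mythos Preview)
Your proposal is correct and follows essentially the same approach as the paper: both invoke Lemma \ref{lem:probbound} with $S=\mathbb{Z}$ at $m=\frac{1}{B}\log\frac{AJ}{\alpha}$ and then use the elementary inequality $1-\alpha/J\geq\sqrt[J]{1-\alpha}$ (equivalently your $\beta\geq\alpha/J$). The paper's version is just the four-line direct computation $\mathbb{P}[|\cdot|\leq m]\geq 1-A\exp(-Bm)=1-\alpha/J\geq\sqrt[J]{1-\alpha}$, whereas you unpack the quantile definition more explicitly, but the substance is identical.
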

\begin{proof}
	Using the inequality from Lemma \ref{lem:probbound}, 
	\begin{eqnarray}
	&&\mathbb{P}\left[ \left|\underset{t\in\mathbb{Z}}{\arg\min}X_\Delta(t)\right|\leq \frac{1}{B}\log\frac{AJ}{\alpha} \right]\nonumber\\
	&\geq & 1-A\exp\left( -B\frac{1}{B}\log\frac{AJ}{\alpha} \right)\nonumber\\
	&=&1-\frac{\alpha}{J}\nonumber\\
	&\geq &\sqrt[J]{1-\alpha}
	\end{eqnarray}
\end{proof}

\subsection{Comparison Between Random Walks, Part 1}
Here will show some probability inequalities between random walks with different drifts. These results are useful in proving Theorem \ref{thm:increasingJasymprotics}.
\begin{lemma}\label{lem:generalcompoppo}
	Suppose that $S$ is a set of integers, $m$ is a positive integer, and $[-m,m]\subset S$. Define, for any positive $\Delta_1$, $\Delta_2$, the random walks
	\begin{eqnarray}
	W_{\Delta_1,\Delta_2}(t)=\begin{cases}
	|t|\frac{|\Delta_1|}{2}+\sum_{i=-1}^t\varepsilon_i\qquad &t<0\\
	0 & t=0\\
	t\frac{|\Delta_2|}{2}+\sum_{i=1}^{t}\varepsilon_i\qquad& t>0
	\end{cases}
	\end{eqnarray}
	Then for any $\eta>0$,
	\begin{eqnarray}\label{eq:oppoineqfirst}
	&&\mathbb{P}\left[ \left|\underset{t\in S}{\arg\min} W_{\Delta_1,\Delta_2}(t)  \right|\leq m\right] \geq\mathbb{P}\left[ \left|\underset{t\in S}{\arg\min} W_{\Delta_1,\Delta_2+2\eta}(t)  \right|\leq m \right]\nonumber\\
	&&-\left(A(\Delta_2)m^{3/2}+B(\Delta_2)\sqrt{m}\right)\eta\exp(-C(\Delta_2)m)
	\end{eqnarray}
	for some expressions $A(\Delta_2)$, $B(\Delta_2)$ and $C(\Delta_2)$ that are, respectively, decreasing, decreasing, and increasing with respect to $|\Delta_2|$. Similarly, the following inequality holds:
	\begin{eqnarray}
	&&\mathbb{P}\left[ \left|\underset{t\in S}{\arg\min} W_{\Delta_1,\Delta_2}(t)  \right|\leq m\right] \geq\mathbb{P}\left[ \left|\underset{t\in S}{\arg\min} W_{\Delta_1+2\eta,\Delta_2}(t)  \right|\leq m \right]\nonumber\\
	&&-\left(A(\Delta_1)m^{3/2}+B(\Delta_1)\sqrt{m}\right)\eta\exp(-C(\Delta_1)m)
	\end{eqnarray}
	where the form of the expressions $A()$, $B()$, and $C()$ has identical forms as expressions used in (\ref{eq:oppoineqfirst}).
\end{lemma}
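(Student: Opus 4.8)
The plan is to prove the first inequality only; the second follows at once by applying it to the reflected walk $t \mapsto W_{\Delta_1,\Delta_2}(-t)$, which interchanges the roles of $\Delta_1$ and $\Delta_2$. Write $W(t):=W_{\Delta_1,\Delta_2}(t)$ and $\tilde W(t):=W_{\Delta_1,\Delta_2+2\eta}(t)$, built on the same error sequence, so that $\tilde W(t)=W(t)+\eta\,t\,\mathbbm{1}(t>0)$; thus $\tilde W\ge W$ pointwise, with equality on $\{t\le 0\}$. Since $\mathbb{P}[B]-\mathbb{P}[\tilde B]\le \mathbb{P}[B\setminus\tilde B]$ for any events, it suffices to bound
$$\mathbb{P}\Big[\,\big|\arg\min_{t\in S}\tilde W(t)\big|\le m,\ \big|\arg\min_{t\in S}W(t)\big|>m\,\Big].$$
First I would dispose of the part where $\arg\min_S W<-m$: because $\tilde W=W$ on $\{t\le 0\}$ and $\tilde W\ge W$ on $\{t>0\}$, a minimizer of $W$ over $S$ sitting at some $t_0<-m$ is also a minimizer of $\tilde W$ over $S$, so $\{\arg\min_S\tilde W\in[-m,m]\}$ cannot occur (ties have probability $0$ as the errors are continuous). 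On the complement, $\arg\min_S W=t_0$ for some $t_0>m$ while $\arg\min_S\tilde W=:i\in[-m,m]$, and chaining $W(t_0)\le W(i)\le\tilde W(i)\le\tilde W(t_0)=W(t_0)+\eta t_0$ together with $W(t_0)\le\min_{[-m,m]}W\le W(i)$ forces the near‑tie $0\le \min_{[-m,m]}W-W(t_0)\le \eta t_0$. Hence the quantity to bound is at most $\sum_{t_0\in S,\ t_0>m}\mathbb{P}\big[\arg\min_S W=t_0,\ 0\le\min_{[-m,m]}W-W(t_0)\le\eta t_0\big]$.

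For each fixed $t_0>m$ I would condition on $\mathcal G:=\sigma(\varepsilon_i:|i|\le m)$, which determines $W$ on $[-m,m]$. Given $\mathcal G$ the increment $R:=W(t_0)-W(m)$ is $N\big((t_0-m)|\Delta_2|/2,\ t_0-m\big)$, and with $G:=\min_{[-m,m]}W-W(m)\le 0$ the near‑tie event is $\{R\in[G-\eta t_0,\ G]\}$, an interval of length $\le\eta t_0$ contained in $(-\infty,0]$, hence at distance at least $(t_0-m)|\Delta_2|/2$ below $\mathbb E[R\mid\mathcal G]$. A Gaussian density bound then contributes, conditionally, a factor $\dfrac{\eta t_0}{\sqrt{2\pi(t_0-m)}}\exp\!\big(-(t_0-m)\Delta_2^2/8\big)$. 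To produce the extra decay in $m$ I would further exploit that $\{\arg\min_S W=t_0\}$ forces $W$ to undercut the whole window $[-m,m]$, which in turn forces a large deviation of the partial sums over an index block inside $[m/2,m]$ — a block disjoint from the increments past $m$ used above — contributing an independent factor of the form $\exp(-Cm)$ via the tail estimates behind Lemma~\ref{lem:probbound}. Summing the resulting per‑$t_0$ bound over $t_0>m$ and substituting $s=t_0-m$, the series $\sum_{s\ge1}\frac{s+m}{\sqrt s}\,e^{-s\Delta_2^2/8}$ is geometrically convergent and is dominated by $A(\Delta_2)\,m^{3/2}+B(\Delta_2)\sqrt m$ with constants decreasing in $|\Delta_2|$; multiplied by the $\eta\,e^{-Cm}$ factor this matches the asserted form, with $C(\Delta_2)$ visibly increasing in $|\Delta_2|$. (If one prefers, the reduction of each $\arg\min_S$ to $\arg\min_{[-m,m]}$ can be made explicit first using Lemma~\ref{lem:eqprob}, at the cost of two further $A\exp(-Bm)$ terms, which are absorbed.)

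The main obstacle is the middle step of the previous paragraph: extracting a clean $e^{-Cm}$ factor that is genuinely separate from the $\eta$ near‑tie factor. The near‑tie and the ``argmin escapes the window'' events both live on increments of the same walk, so one must split the index set into disjoint blocks — one inside $[m/2,m]$ carrying the large‑deviation cost of escaping, the block past $m$ carrying the near‑tie — and argue independence across them, while keeping the bookkeeping of necessary conditions (which of $W(t_0)\le W(m)$, $W(t_0)\le\min_{[-m,m]}W$, $W(t_0)\le W(0)$ to invoke where) mutually consistent. Everything else — the monotone‑coupling argument killing the $\{<-m\}$ part, the Gaussian density and tail estimates, and the convergence of the tail series — is routine.
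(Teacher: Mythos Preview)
Your reduction---symmetry for the second inequality, the coupling argument killing the $\{t_0<-m\}$ case, and the near-tie characterization $0\le \min_{[-m,m]}W-W(t_0)\le\eta t_0$---is exactly how the paper begins. The gap is precisely where you flag it: conditioning on all of $\mathcal G=\sigma(\varepsilon_i:|i|\le m)$ leaves only the increment $R=W(t_0)-W(m)$ free, and since the near-tie interval sits at most $(t_0-m)|\Delta_2|/2$ below $\mathbb E[R\mid\mathcal G]$, the density bound gives only $\exp(-(t_0-m)\Delta_2^2/8)$. Summed over $t_0>m$ this is $O(\eta m)$ with no $e^{-Cm}$. Your proposed repair---an independent large-deviation factor from a block $[m/2,m]$---does not obviously go through: the near-tie event involves $G=\min_{[-m,m]}W-W(m)$, which depends on \emph{every} increment in $[-m,m]$, including those in $[m/2,m]$; and ``undercutting the whole window'' places no constraint on the $[m/2,m]$ increments in isolation, only the joint inequality $W(t_0)\le W(t)$ for each $t$. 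Making the blocks genuinely independent while still capturing both the near-tie and an $e^{-Cm}$ cost would require a more delicate decomposition than you indicate.

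The paper avoids the issue by not conditioning on the full window. It splits $\min_{[-m,m]}W$ according to whether the minimum lies in $[-m,0]$ or in $[0,m]$. If it lies in $[-m,0]$, then $W(k)$ for $k>m$ is independent of that minimum, and the Gaussian density bound on the \emph{full} increment from $0$ to $k$ gives $\eta\sqrt{k}\exp(-k\Delta_2^2/8)$---with $k$, not $k-m$, in the exponent. If it lies in $[0,m]$, a union bound over the index $\ell\in[0,m]$ where the minimum is attained, followed by the independence of $W(\ell)$ and $W(k)-W(\ell)$, yields the product
\[
\mathbb P[W(\ell)\le 0]\cdot\mathbb P\big[W(k)-W(\ell)\in[-\eta k,0]\big]\;\lesssim\;\exp(-\ell\Delta_2^2/8)\cdot\frac{\eta k}{\sqrt{k-\ell}}\exp(-(k-\ell)\Delta_2^2/8),
\]
and the two exponents recombine to the full $\exp(-k\Delta_2^2/8)$. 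Summing over $\ell\in[0,m]$ produces the $\sqrt m$ prefactor; summing over $k>m$ then delivers $e^{-Cm}$ automatically as the leading term of a geometric series. The key point is that by \emph{not} conditioning away the drift accumulated on $[0,m]$, the paper keeps it inside the Gaussian tail rather than having to recover it afterwards.
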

\begin{proof}
	It is only required to prove the inequality between $W_{\Delta_1, \Delta_2}$ and $W_{\Delta_1, \Delta_2+2\eta}$. This is because $W_{\Delta_1,\Delta_2}(t)$ has the same distribution as $W_{\Delta_2, \Delta_1}(-t)$ for all $t\in S$, and therefore 
	\begin{eqnarray}
	&&\mathbb{P}\left[ \left|\underset{t\in S}{\arg\min} W_{\Delta_1,\Delta_2}(t)  \right|\leq m \right]\nonumber\\
	&=&\mathbb{P}\left[ \left|\underset{t\in S}{\arg\min} W_{\Delta_2,\Delta_1}(-t)  \right|\leq m \right]\nonumber\\
	&=&\mathbb{P}\left[ \left|\underset{t\in -S}{\arg\min} W_{\Delta_2,\Delta_1}(t)  \right|\leq m \right]\nonumber\\
	&&\text{where }-S:=\{-t:t\in S\}
	\end{eqnarray}
	where the last inequality is due to the fact that the existence of an $|\ell|\leq m$ such that $W_{\Delta_2,\Delta_1}(-\ell)<W_{\Delta_2,\Delta_1}(-t)$ for all $t\in S$, $|t|>m$ could be true if and only if there exists an $|\ell|\leq m$ such that $W_{\Delta_2,\Delta_1}(\ell)<W_{\Delta_2,\Delta_1}(t)$ for all $t\in -S$, $|t|>m$. Similarly, we also have
	\begin{eqnarray}
	\mathbb{P}\left[ \left|\underset{t\in S}{\arg\min} W_{\Delta_1+2\eta,\Delta_2}(t)  \right|\leq m \right]=\mathbb{P}\left[ \left|\underset{t\in -S}{\arg\min} W_{\Delta_2,\Delta_1+2\eta}(t)  \right|\leq m \right],
	\end{eqnarray}
	and from here, an inequality can be derived by comparing 
	$$ \mathbb{P}\left[ \left|\underset{t\in -S}{\arg\min} W_{\Delta_2,\Delta_1}(t)  \right|\leq m \right]$$
	and 
	$$ \mathbb{P}\left[ \left|\underset{t\in -S}{\arg\min} W_{\Delta_2,\Delta_1+2\eta}(t)  \right|\leq m \right]$$
	using (\ref{eq:oppoineqfirst}). Therefore, the rest of the proof will only concern the random walks $W_{\Delta_1,\Delta_2}(\cdot)$ and $W_{\Delta_1,\Delta_2+2\eta}(\cdot)$.
	\newline
	\newline
	For the sake of brevity here, we will use the shorthand notations $W(t)$ for the random walk $W_{\Delta_1,\Delta_2}(t)$, and $W_+(t)$ for the random walk $W_{\Delta_1,\Delta_2+2\eta}(t)$. We are interested in the probability of the event when
	$\left|\underset{t\in S}{\arg\min}W(t)\right|> m$ and $\left|\underset{t\in S}{\arg\min}W_+(t)\right|\leq m$, so for now, assume that for some integer $k\in S$, such that $|k|>m$, $W(k)< W(t)$ for all $t\in S$, $|t|\leq m$. 
	\begin{itemize}
		\item If $k<-m$, then $W_+(k)=W(k)< W(t)\leq W(t)+t\eta 1(t>0)=W_+(t)$ for all $t\in S-\{k\}$; in other words $\left|\underset{t\in S}{\arg\min}W_+(t)\right|>m$, a contradiction. Therefore it is not possible for $k<-m$.
		\item This leaves the possibility that $k>m$. Additionally:
		\begin{itemize}
			\item as how '$k$' was defined, $W(k)< \min_{|t|\leq m}W(t)$
			\item because $W_+(k)=W(k)+k\eta$ is not the minimum among the $W_+(t)$'s for $t\in S$, we have 
			\begin{eqnarray}
			W(k)+k\eta&=&W_+(k)\nonumber\\
			&\geq& \min_{|t|\leq m} W_+(t) \nonumber\\
			&\geq&\min_{|t|\leq m}W(t)
			\end{eqnarray} 
		\end{itemize}
	\end{itemize}
	This breakdown of events shows that in order for the argmin of $W_+(t)$ to be within $[-m,m]$ but for the argmin of $W(t)$ to be outside this interval, there must be a $k>m$ where $ \left(\min_{|t|\leq m}W(t)\right)-\eta k< W(k) \leq \min_{|t|\leq m}W(t)$. Therefore
	\begin{eqnarray}\label{ref:probdcomposepos}
	&&\mathbb{P}\left[\left|\underset{t\in S}{\arg\min}W(t)\right|> m\text{ and }\left|\underset{t\in S}{\arg\min}W_+(t)\right|\leq m\right]\nonumber\\
	&\leq & \mathbb{P}\left[ \exists k:\, k>m\text{ and } \left(\min_{|t|\leq m}W(t)\right)-\eta k< W(k) \leq \min_{|t|\leq m}W(t)\right]\nonumber\\
	&\leq & \sum_{k\in S\cap (m,\infty)}\mathbb{P}\left[ \left(\min_{|t|\leq m}W(t)\right)-\eta k< W(k) \leq \min_{|t|\leq m}W(t)\right].
	\end{eqnarray}
	The random variable $\min_{|t|\leq m}W(t)$ can either equal $\min_{t\in [0,m]}W(t)$ or $\min_{t\in [-m,0]}W(t)$. Therefore, for any specific $k$, the event $\left(\min_{|t|\leq m}W(t)\right)-\eta k< W(k) \leq \min_{|t|\leq m}W(t)$ implies that either $\left(\min_{t\in [0,m]}W(t)\right)-\eta k< W(k) \leq \min_{t\in [0,m]}W(t)$ or $\left(\min_{t\in [-m,0]}W(t)\right)-\eta k< W(k) \leq \min_{t\in [-m,0]}W(t)$, yielding the inequality
	\begin{eqnarray}\label{ref:probdcomposepos2}
	&&\mathbb{P}\left[ \left(\min_{|t|\leq m}W(t)\right)-\eta k< W(k) \leq \min_{|t|\leq m}W(t)\right]\nonumber\\
	&\leq & \mathbb{P}\left[ \left(\min_{t\in [0,m]}W(t)\right)-\eta k< W(k) \leq \min_{t\in [0,m]}W(t)\right]\nonumber\\
	&&+\mathbb{P}\left[ \left(\min_{t\in [-m,0]}W(t)\right)-\eta k< W(k) \leq \min_{t\in [-m,0]}W(t)\right]
	\end{eqnarray}
	Both of the two probabilities in the last part can be bounded. First, because $k>m$, $W(k)$ is independent of $W(-1),\dots, W(-m)$, the distribution of $W(k)$ is still $N\left( k\frac{|\Delta_2|}{2},k \right)$ even after conditioning on the value of $\min_{t\in [-m,0]}W(t)$, hence
	\begin{eqnarray}\label{eq:probboundnegside}
	&&\mathbb{P}\left[ \left(\min_{t\in [-m,0]}W(t)\right)-\eta k< W(k) \leq \min_{t\in [-m,0]}W(t)\right]\nonumber\\
	&=&\mathbb{E}\left[ \mathbb{P}\left[ x-\eta k< W(k) \leq x\Big| \min_{t\in [-m,0]}W(t)=x \right] \right]\nonumber\\
	&= & \mathbb{E}\left[\mathbb{P}\left[ \frac{x}{\sqrt{k}}-\sqrt{k}\left(  \frac{|\Delta_2|}{2}+\eta  \right)< N(0,1)\leq \frac{x}{\sqrt{k}}-\sqrt{k} \frac{|\Delta_2|}{2}\Bigg| \min_{t\in [-m,0]}W(t)=x  \right]\right]\nonumber\\
	&&\text{where }x<0\text{ since }\min_{t\in [-m,0]}W(t)\leq W(0)=0\nonumber\\
	&=& \mathbb{E}\left[\int_{\frac{x}{\sqrt{k}}-\sqrt{k}\left(  \frac{|\Delta_2|}{2}+\eta  \right)}^{ \frac{x}{\sqrt{k}}-\sqrt{k} \frac{|\Delta_2|}{2}} \frac{\exp(-z^2/2)}{\sqrt{2\pi}} \,dz \Bigg| \min_{t\in [-m,0]}W(t)=x  \right]\nonumber\\
	&\leq &\int_{-\sqrt{k}\left(  \frac{|\Delta_2|}{2}+\eta  \right)}^{ -\sqrt{k} \frac{|\Delta_2|}{2}} \frac{\exp(-z^2/2)}{\sqrt{2\pi}} \,dz\nonumber\\
	&\leq & \frac{\eta\sqrt{k}}{\sqrt{2\pi}}\exp\left[ -\frac{\Delta_2^2}{8}k \right]
	\end{eqnarray}
	As for the other inequality, consider the event that $\left(\min_{t\in [0,m]}W(t)\right)-\eta k< W(k) \leq \min_{t\in [0,m]}W(t)$. Because $\min_{t\in [0,m]}W(t)\leq W(0)=0$, this event implies that for some $\ell\in [0,m]$, we have $W(\ell)-\eta k<W(k)<W(\ell)\leq 0$ (namely, letting $\ell=\underset{t\in [0,m]}{\arg\min}W(t)$ would work). Therefore
	\begin{eqnarray}\label{eq:decomposeprobdepend}
	&& \mathbb{P}\left[ \left(\min_{t\in [0,m]}W(t)\right)-\eta k< W(k) \leq \min_{t\in [0,m]}W(t)\right]\nonumber\\
	&\leq & \sum_{\ell=0}^{m}\mathbb{P}\left[W(\ell)-\eta k< W(k)\leq W(\ell)\leq 0 \right]\nonumber\\
	&= & \sum_{\ell=0}^{m}\mathbb{P}\left[-\eta k< W(k)-W(\ell)\leq 0\text{ and }W(\ell)\leq 0 \right]
	\end{eqnarray}
	Now $W(\ell)=\ell\frac{|\Delta_2|}{2}+\sum_{j=1}^\ell\varepsilon_j$ and $W(k)-W(\ell)=(k-\ell)\frac{|\Delta_2|}{2}+\sum_{j=\ell+1}^k\varepsilon_j$ are independent random variables, with distributions $N\left( \ell\frac{|\Delta_2|}{2},\ell \right)$ and $N\left( (k-\ell)\frac{|\Delta_2|}{2},(k-\ell) \right)$. Hence
	\begin{eqnarray}
	&&\mathbb{P}\left[-\eta k< W(k)-W(\ell)\leq 0\text{ and }W(\ell)\leq 0 \right]\nonumber\\
	&=& \mathbb{P}\left[-\eta k< W(k)-W(\ell)\leq 0\right]\cdot \mathbb{P}\left[ W(\ell)\leq 0 \right]\nonumber\\
	&=&\mathbb{P}\left[-\sqrt{k-\ell}\frac{|\Delta_2|}{2}-\frac{\eta k}{\sqrt{k-\ell}}< N(0,1)\leq-\sqrt{k-\ell}\frac{|\Delta_2|}{2}\right]\cdot \mathbb{P}\left[ N(0,1)\leq -\sqrt{\ell}\frac{|\Delta_2|}{2} \right]\nonumber\\
	&=&\int_{-\sqrt{k-\ell}\frac{|\Delta_2|}{2}-\frac{\eta k}{\sqrt{k-\ell}}}^{-\sqrt{k-\ell}\frac{|\Delta_2|}{2}}\frac{\exp(-z^2/2)}{\sqrt{2\pi}}\,dz\cdot \mathbb{P}\left[ N(0,1)\leq -\sqrt{\ell}\frac{|\Delta_2|}{2} \right]\nonumber\\
	&\leq & \frac{\eta k}{\sqrt{k-\ell}}\frac{\exp\left( -\frac{\Delta_2^2}{8}(k-\ell) \right)}{\sqrt{2\pi}}\cdot \frac{1}{2}\exp\left[ -\frac{\Delta_2^2}{8}\ell \right]\nonumber\\
	&=&\frac{1}{2\sqrt{2\pi}}\cdot \frac{\eta k}{\sqrt{k-\ell}}\exp\left[ -\frac{\Delta_2^2}{8} k \right]
	\end{eqnarray}
	Therefore, using (\ref{eq:decomposeprobdepend}),
	\begin{eqnarray}\label{eq:probboundposside}
	&& \mathbb{P}\left[ \left(\min_{t\in [0,m]}W(t)\right)-\eta k< W(k) \leq \min_{t\in [0,m]}W(t)\right]\nonumber\\
	&\leq & \sum_{\ell=0}^{m}\frac{1}{2\sqrt{2\pi}}\cdot \frac{\eta k}{\sqrt{k-\ell}}\exp\left[ -\frac{\Delta_2^2}{8} k \right]\nonumber\\
	&\leq &\frac{\exp\left[ -\frac{\Delta_2^2}{8} k \right]}{2\sqrt{2\pi}}\cdot\eta k\cdot \int_0^{m+1}\frac{1}{\sqrt{k-x}}\,dx\nonumber\\
	&=& \frac{\exp\left[ -\frac{\Delta_2^2}{8} k \right]}{\sqrt{2\pi}}\cdot\eta k\cdot(\sqrt{k}-\sqrt{k-m-1})\nonumber\\
	&\leq &\frac{\exp\left[ -\frac{\Delta_2^2}{8} k \right]}{\sqrt{2\pi}}\cdot\eta k\cdot \sqrt{m+1}
	\end{eqnarray}
	Therefore, 
	\begin{eqnarray}
	&&\mathbb{P}\left[\left|\underset{t\in S}{\arg\min}W(t)\right|> m\text{ and }\left|\underset{t\in S}{\arg\min}W_+(t)\right|\leq m\right]\nonumber\\
	&\leq & \sum_{k\in S\cap (m,\infty)} \mathbb{P}\left[ \left(\min_{t\in [0,m]}W(t)\right)-\eta k< W(k) \leq \min_{t\in [0,m]}W(t)\right]\nonumber\\
	&&+\sum_{k\in S\cap (m,\infty)}\mathbb{P}\left[ \left(\min_{t\in [-m,0]}W(t)\right)-\eta k< W(k) \leq \min_{t\in [-m,0]}W(t)\right]\nonumber\\
	&&\text{according to }(\ref{ref:probdcomposepos})\text{ and }(\ref{ref:probdcomposepos2})\nonumber\\
	&\leq &\sum_{k=m+1}^\infty \left(\frac{\exp\left[ -\frac{\Delta_2^2}{8} k \right]}{\sqrt{2\pi}}\cdot\eta k\cdot \sqrt{m+1}\right)+\sum_{k=m+1}^\infty \left(\frac{\eta\sqrt{k}}{\sqrt{2\pi}}\exp\left[ -\frac{\Delta_2^2}{8}k \right]\right)\nonumber\\
	&&\text{according to }(\ref{eq:probboundnegside})\text{ and }(\ref{eq:probboundposside})\nonumber\\
	&\leq &\frac{\eta(\sqrt{m+1}+1)}{\sqrt{2\pi}}\sum_{m+1}^\infty k \exp\left[ -\frac{\Delta_2^2}{8}k \right]\nonumber\\
	&\leq & \frac{3\sqrt{m}}{\sqrt{2\pi}} \left( \frac{m\exp\left[ -\frac{\Delta_2^2}{8} \right]}{1-\exp\left[ -\frac{\Delta_2^2}{8} \right]}+\frac{\exp\left[ -\frac{\Delta_2^2}{8} \right]}{\left( 1-\exp\left[ -\frac{\Delta_2^2}{8} \right] \right)^2} \right)\left( \eta \exp\left[ -\frac{\Delta_2^2}{8}m \right]\right)\nonumber\\
	&&\text{since }\sum_{k=a}^\infty kx^k=\left( \frac{a-1}{1-x}+\frac{1}{(1-x)^2} \right)x^a\text{ for any }|x|<1, a\in\mathbb{N}
	\end{eqnarray}
	From here, 
	\begin{eqnarray}
	&&\mathbb{P}\left[ \left|\underset{t\in S}{\arg\min} W(t)  \right|\leq m \right]\nonumber\\
	&\geq &\mathbb{P}\left[ \left|\underset{t\in S}{\arg\min} W_+(t)  \right|\leq m \right]-\mathbb{P}\left[ \left|\underset{t\in S}{\arg\min} W_+(t)  \right|\leq m\text{ and }\left|\underset{t\in S}{\arg\min} W(t)  \right|> m \right]\nonumber\\
	&\geq &\mathbb{P}\left[ \left|\underset{t\in S}{\arg\min} W_+(t)  \right|\leq m \right]-\left(A(\Delta_2)m^{3/2}+B(\Delta_2)\sqrt{m}\right)\eta\exp(-C(\Delta_2)m)
	\end{eqnarray}
	for some expressions $A(\Delta_2)$, $B(\Delta_2)$ and $C(\Delta_2)$ that are, respectively, decreasing, decreasing, and increasing with respect to $|\Delta_2|$.
\end{proof}
This result immediately leads to some results concerning the random walks $X_\Delta(\cdot)$, as they are a special case of the random walks $W_{\Delta_1,\Delta_2}(\cdot)$ where $\Delta_1=\Delta_2$.
\begin{lemma}\label{lem:randomwalkcompoppo}
	Suppose that for the random walk $Y_+(t)$ for $t\in S$ equals
	\begin{eqnarray}
	Y_+(t)=\begin{cases}
	X_\Delta(t)\qquad &\text{for }t\leq 0\\
	X_\Delta(t)+\eta t&\text{for }t>0
	\end{cases}
	\end{eqnarray}
	for some constant $\eta$ such that $0<\eta<\frac{|\Delta|}{2}$. Then for any $[-m,m]\subset S\subset\mathbb{Z}$, 
	\begin{align}\label{eq:lemrwalkcompoppo1}
	&\quad\mathbb{P}\left[\left|\underset{t\in S}{\arg\min}Y_+(t)\right|\leq m \right]\leq 	\mathbb{P}\left[\left|\underset{t\in S}{\arg\min}X_\Delta(t)\right|\leq m \right]\nonumber\\
	&+\eta\left[A(\Delta)m^{3/2}+B(\Delta)\sqrt{m}\right]\exp\left[-C(\Delta)m\right]
	\end{align}
	for some expressions $A()$, $B()$, and $C()$ which are, respectively, decreasing, decreasing, and increasing with respect to $|\Delta|$. The same probability inequality will hold if $Y_+(t)=X_\Delta(t)+\eta |t|1(t<0)$. 
	\newline
	\newline
	A similar set of inequalities hold for the random walk $Y_-(t)$ for $t\in S$, defined as
	\begin{eqnarray}
	Y_-(t)=\begin{cases}
	X_\Delta(t)\qquad &\text{for }t\leq 0\\
	X_\Delta(t)-\eta t&\text{for }t>0
	\end{cases}
	\end{eqnarray}
	For $\eta<\frac{|\Delta|}{2}$, and $[-m,m]\subset S$, we have
	\begin{eqnarray}\label{eq:lemrwalkcompoppo2}
	&&\mathbb{P}\left[\left|\underset{t\in S}{\arg\min}Y_-(t)\right|\leq m \right]\geq 	\mathbb{P}\left[\left|\underset{t\in S}{\arg\min}X_\Delta(t)\right|\leq m \right]-\nonumber\\
	&&\eta\left[A\left( \frac{|\Delta|}{2}-\eta \right)m^{3/2}+B\left( \frac{|\Delta|}{2}-\eta \right)\sqrt{m}\right]\exp\left[-C\left( \frac{|\Delta|}{2}-\eta \right)m\right]
	\end{eqnarray}
	for some expressions $A()$, $B()$, and $C()$ which are, respectively, decreasing, decreasing, and increasing with respect to $|\Delta|/2-\eta$. The same probability inequality will hold if $Y_-(t)=X_\Delta(t)-\eta |t|1(t<0)$.
\end{lemma}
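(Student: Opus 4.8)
The plan is to derive Lemma~\ref{lem:randomwalkcompoppo} as an immediate consequence of Lemma~\ref{lem:generalcompoppo}, after recognizing $X_\Delta(\cdot)$, $Y_+(\cdot)$ and $Y_-(\cdot)$ as particular instances of the two-armed walk $W_{\Delta_1,\Delta_2}(\cdot)$. Every claim will then follow by matching drift parameters, quoting inequality~(\ref{eq:oppoineqfirst}) of Lemma~\ref{lem:generalcompoppo} (or its twin, the second inequality there, for a left-arm perturbation), and transposing terms.

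First I would record the distributional identity $X_\Delta(\cdot)\stackrel{d}{=}W_{|\Delta|,|\Delta|}(\cdot)$ as processes indexed by $S$. On the right arm this is literal from the definitions; on the left arm $X_\Delta$ carries the increment block $-\sum\varepsilon_i$ while $W_{|\Delta|,|\Delta|}$ carries $+\sum\varepsilon_i$, but the left and right blocks are i.i.d.\ centered Gaussian and mutually independent, so flipping the signs of the left block does not change the joint law of the path; since $\{\,|\underset{t\in S}{\arg\min}(\cdot)|\le m\,\}$ is a measurable functional of the path, the two probabilities coincide. In the same way $Y_+$ and $Y_-$ are reparametrizations of $W$: for $t>0$, $Y_+(t)=t(|\Delta|/2+\eta)+\sum_{i=1}^t\varepsilon_i$, so $Y_+(\cdot)\stackrel{d}{=}W_{|\Delta|,\,|\Delta|+2\eta}(\cdot)$, and (since $\eta<|\Delta|/2$ keeps the drift strictly positive) $Y_-(t)=t(|\Delta|/2-\eta)+\sum_{i=1}^t\varepsilon_i$, so $Y_-(\cdot)\stackrel{d}{=}W_{|\Delta|,\,|\Delta|-2\eta}(\cdot)$. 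The variants with perturbation on the left arm give instead $W_{|\Delta|+2\eta,\,|\Delta|}$ and $W_{|\Delta|-2\eta,\,|\Delta|}$.

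Now the deductions are one line each. For $Y_+$: $W_{|\Delta|,|\Delta|}$ is obtained from $W_{|\Delta|,\,|\Delta|+2\eta}$ by lowering the right drift by $2\eta$, so (\ref{eq:oppoineqfirst}) with both parameters equal to $|\Delta|$ reads
\[
\mathbb{P}\!\left[\Big|\underset{t\in S}{\arg\min}\,X_\Delta(t)\Big|\le m\right]\ \ge\ \mathbb{P}\!\left[\Big|\underset{t\in S}{\arg\min}\,Y_+(t)\Big|\le m\right]-\eta\big(A(\Delta)m^{3/2}+B(\Delta)\sqrt{m}\big)\exp(-C(\Delta)m),
\]
which is exactly (\ref{eq:lemrwalkcompoppo1}) after transposing; the left-arm variant of $Y_+$ uses the second inequality of Lemma~\ref{lem:generalcompoppo} in the identical way, its error term again being governed by the unperturbed drift $|\Delta|$. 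For $Y_-$: writing $X_\Delta\stackrel{d}{=}W_{|\Delta|,\,(|\Delta|-2\eta)+2\eta}$ and applying (\ref{eq:oppoineqfirst}) with right parameter $|\Delta|-2\eta$ gives
\[
\mathbb{P}\!\left[\Big|\underset{t\in S}{\arg\min}\,Y_-(t)\Big|\le m\right]\ \ge\ \mathbb{P}\!\left[\Big|\underset{t\in S}{\arg\min}\,X_\Delta(t)\Big|\le m\right]-\eta\big(A\,m^{3/2}+B\,\sqrt{m}\big)\exp(-Cm),
\]
where $A,B,C$ are the Lemma~\ref{lem:generalcompoppo} expressions evaluated at the reduced drift, i.e.\ at $|\Delta|/2-\eta$; this is (\ref{eq:lemrwalkcompoppo2}), and the left-arm variant of $Y_-$ again uses the second inequality of Lemma~\ref{lem:generalcompoppo}.

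Since everything collapses onto Lemma~\ref{lem:generalcompoppo}, there is no substantive obstacle; the only care needed is bookkeeping: justifying the sign-flip reduction $X_\Delta\stackrel{d}{=}W_{|\Delta|,|\Delta|}$, checking that $\eta<|\Delta|/2$ keeps the perturbed drifts positive so that $W_{|\Delta|,\,|\Delta|-2\eta}$ (resp.\ $W_{|\Delta|-2\eta,\,|\Delta|}$) is a legitimate member of the family in Lemma~\ref{lem:generalcompoppo}, and keeping the monotone functions $A,B,C$ attached to the correct drift for each walk. The last item involves a harmless clash of convention: the argument of $A,B,C$ in the statement of Lemma~\ref{lem:randomwalkcompoppo} is the relevant drift \emph{coefficient} $|\Delta|/2$ (or $|\Delta|/2-\eta$), rather than the drift \emph{parameter} $|\Delta|$ used in Lemma~\ref{lem:generalcompoppo}; since only monotonicity of these functions is ever invoked, the discrepancy is immaterial.
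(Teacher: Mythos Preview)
Your proposal is correct and takes essentially the same approach as the paper: the paper's proof is the one-liner ``Apply Lemma~\ref{lem:generalcompoppo} with $\Delta_1=\Delta_2=\Delta$ to prove (\ref{eq:lemrwalkcompoppo1}), and with $\Delta_1=\Delta$, $\Delta_2=\Delta-2\eta$ to prove (\ref{eq:lemrwalkcompoppo2}),'' which is exactly your parameter matching. Your write-up is in fact more careful than the paper's, since you explicitly justify the distributional identity $X_\Delta\stackrel{d}{=}W_{|\Delta|,|\Delta|}$ via the sign-flip on the left-arm increments and you flag the notational mismatch between the argument $|\Delta|/2-\eta$ in the statement and the $|\Delta|-2\eta$ that Lemma~\ref{lem:generalcompoppo} actually produces.
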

\begin{proof}
	Apply Lemma \ref{lem:generalcompoppo} with $\Delta_1=\Delta_2=\Delta$ to prove (\ref{eq:lemrwalkcompoppo1}), and with $\Delta_1=\Delta$, $\Delta_2=\Delta-2\eta$ to prove (\ref{eq:lemrwalkcompoppo2}).
\end{proof}
Additionally, we can make probabilistic statements regarding the argmin of $X_\Delta(t)$ for two different values of $\Delta$:
\begin{lemma}\label{lem:twoXrandwalkcomp}
	For any $\Delta\neq 0$, $\eta>0$, and a set $S$ which contains the interval $[-m,m]$, 
	\begin{eqnarray}
	\mathbb{P}\left[ \left|\underset{t\in S}{\arg\min}X_\Delta(t)\right|\leq m \right]\leq \mathbb{P}\left[ \left|\underset{t\in S}{\arg\min}X_{|\Delta|+2\eta}(t)\right|\leq m \right]
	\end{eqnarray}
	and 
	\begin{eqnarray}
	&&\mathbb{P}\left[ \left|\underset{t\in S}{\arg\min}X_\Delta(t)\right|\geq m \right]\geq \mathbb{P}\left[ \left|\underset{t\in S}{\arg\min}X_{|\Delta|+2\eta}(t)\right|\leq m \right]\nonumber\\&&-2\eta\left[A(\Delta)m^{3/2}+B(\Delta)\sqrt{m}\right]\exp\left[-C(\Delta)m\right]
	\end{eqnarray}
	for some expressions $A()$, $B()$, and $C()$ which can take the same form and have the same monotonicity properties as the ones used in Lemma \ref{lem:generalcompoppo}. 
\end{lemma}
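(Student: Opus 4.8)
The plan is to read both bounds off the one-step drift comparison already established in Lemma \ref{lem:generalcompoppo}, after identifying $X_\Delta$ and $X_{|\Delta|+2\eta}$ with the symmetric-drift walks $W_{\Delta,\Delta}$ and $W_{|\Delta|+2\eta,|\Delta|+2\eta}$ in that lemma's notation. Since $X_\Delta$ depends on $\Delta$ only through $|\Delta|$, I may take $\Delta>0$ throughout, and the hypothesis $[-m,m]\subset S$ is precisely the interval containment required to apply Lemma \ref{lem:generalcompoppo}. I would couple the two walks by driving them with a common noise sequence $\{\varepsilon_i\}$, so that pathwise $X_{|\Delta|+2\eta}(t)=X_\Delta(t)+\eta|t|$ for every $t\in S$.

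For the first (error-free) inequality I would use this coupling directly. On the event $\{|\arg\min_{t\in S}X_\Delta(t)|\le m\}$, fix a minimizer $j^\star$ with $|j^\star|\le m$; then for every $t\in S$ with $|t|>m$ one has $X_{|\Delta|+2\eta}(j^\star)=X_\Delta(j^\star)+\eta|j^\star|\le X_\Delta(t)+\eta|t|=X_{|\Delta|+2\eta}(t)$, using $X_\Delta(j^\star)\le X_\Delta(t)$ together with $|j^\star|\le m<|t|$. Hence some minimizer of $X_{|\Delta|+2\eta}$ over $S$ again lies in $[-m,m]$ (ties carry probability zero because the increments have densities), which gives the event inclusion $\{|\arg\min_S X_\Delta|\le m\}\subseteq\{|\arg\min_S X_{|\Delta|+2\eta}|\le m\}$ and therefore the stated stochastic ordering with no remainder term. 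Note that this clean monotonicity runs in the direction opposite to the error-carrying estimate of Lemma \ref{lem:generalcompoppo}, which is why a separate coupling argument is needed here.

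For the second displayed bound, which lower-bounds $\mathbb{P}[|\arg\min_S X_\Delta|\ge m]$, the pathwise ordering is unavailable in the required direction, so instead I would pass from $X_\Delta=W_{\Delta,\Delta}$ to $X_{|\Delta|+2\eta}=W_{\Delta+2\eta,\Delta+2\eta}$ in two stages — first lifting the right-hand drift from $\Delta$ to $\Delta+2\eta$, then the left-hand drift — invoking Lemma \ref{lem:generalcompoppo} at each stage against the intermediate walk $W_{\Delta,\Delta+2\eta}$ (for which $[-m,m]\subset S$ still holds, since $S$ is unchanged). Each stage controls, through the Gaussian-tail estimate of Lemma \ref{lem:generalcompoppo}, the chance that raising one drift by $2\eta$ relocates the argmin across the threshold $|t|=m$, contributing a correction of order $\eta[A(\Delta)m^{3/2}+B(\Delta)\sqrt m]e^{-C(\Delta)m}$; summing the two stages produces the factor $2\eta$ in the advertised remainder, with $A,B,C$ inheriting their monotonicity in $|\Delta|$ directly from Lemma \ref{lem:generalcompoppo}. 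The main obstacle is purely one of matching constants and rates: I must check that comparing through the asymmetric intermediate walk $W_{\Delta,\Delta+2\eta}$ does not degrade the exponential rate $C(\Delta)$ and that the two interval-crossing corrections consolidate into a single expression of the stated form, so that the displayed lower bound holds with monotone $A,B,C$ of the same shape as in Lemma \ref{lem:generalcompoppo}. Taken with the first inequality, this pins the concentration probabilities of $X_\Delta$ and $X_{|\Delta|+2\eta}$ on $[-m,m]$ to within the same error, which is exactly the form in which the lemma is consumed in the proof of Theorem \ref{thm:increasingJasymprotics}.
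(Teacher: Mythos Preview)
Your proposal is correct and matches the paper's proof essentially line for line: the paper proves the first inequality by the same pathwise coupling $X_{|\Delta|+2\eta}(t)=X_\Delta(t)+\eta|t|$ and event inclusion, and proves the second by applying Lemma~\ref{lem:generalcompoppo} twice through the intermediate walk $W_{|\Delta|,|\Delta|+2\eta}$, summing the two error terms to produce the factor $2\eta$. Your worry about rate degradation is unfounded: in the first application the error constants depend on $\Delta_2=|\Delta|$, and in the second (using the $\Delta_1\to\Delta_1+2\eta$ form of Lemma~\ref{lem:generalcompoppo}) they depend on $\Delta_1=|\Delta|$, so both terms carry the same $A(\Delta),B(\Delta),C(\Delta)$ and combine cleanly. (Note also that the ``$\geq m$'' in the second displayed inequality of the statement is a typo for ``$\leq m$'', as the paper's own proof makes clear; your argument already targets the correct version.)
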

\begin{proof}
	The first inequality can be shown by noticing that the event $\left|\underset{t\in S}{\arg\min}X_\Delta(t)\right|\leq m$ implies that for some $|\ell|\leq m$, $X_\Delta(\ell)\leq X_\Delta(t)$ for all $t\in S$, $|t|>m$. This in turn implies that $X_{|\Delta|+\eta}(\ell)=X_{\Delta}(\ell)+|\ell|\eta<X_{\Delta}(t)+|t|\eta=X_{|\Delta|+\eta}(t)$ for all $t\in S$, $|t|>m$, which means $\left|\underset{t\in S}{\arg\min}X_{|\Delta|+\eta}(t)\right|\leq m$.
	\newline
	\newline
	The second inequality can be shown by applying Lemma \ref{lem:generalcompoppo} twice:
	\begin{eqnarray}
	&&\mathbb{P}\left[ \left|\underset{t\in S}{\arg\min}X_\Delta(t)\right|\leq m \right]\nonumber\\
	&\geq& \mathbb{P}\left[ \left|\underset{t\in S}{\arg\min}W_{|\Delta|,|\Delta|+2\eta}(t)\right|\leq m \right]-\eta\left[A(\Delta)m^{3/2}+B(\Delta)\sqrt{m}\right]\exp\left[-C(\Delta)m\right]\nonumber\\
	&\geq&\mathbb{P}\left[ \left|\underset{t\in S}{\arg\min}W_{|\Delta|+2\eta,|\Delta|+2\eta}(t)\right|\leq m \right]-2\eta\left[A(\Delta)m^{3/2}+B(\Delta)\sqrt{m}\right]\exp\left[-C(\Delta)m\right]\nonumber\\
	&=&\mathbb{P}\left[ \left|\underset{t\in S}{\arg\min}X_{|\Delta|+\eta}(t)\right|\leq m \right]-2\eta\left[A(\Delta)m^{3/2}+B(\Delta)\sqrt{m}\right]\exp\left[-C(\Delta)m\right]
	\end{eqnarray}
\end{proof}

\subsection{Comparison between Random Walks, Part 2}
Here we will prove inequalities similar to those presented in Lemma \ref{lem:generalcompoppo}, but in the other direction. These results are also useful in proving Theorem \ref{thm:increasingJasymprotics}.
\begin{lemma}
	Let the random walks $W_{\Delta_1,\Delta_2}$ be as they were defined in Lemma \ref{lem:generalcompoppo}. Then given any positive $\eta$, positive integer $m$, and set $S$ such that $|\eta|<|\Delta_1|/2$ and $[-m,m]\subsetneq{S}$,
	\begin{eqnarray}
	&&\mathbb{P}\left[ \left| \underset{t\in S}{\arg\min}W_{\Delta_1,\Delta_2+2\eta}(t) \right|\leq m \right]\geq \mathbb{P}\left[ \left| \underset{t\in S}{\arg\min}W_{\Delta_1,\Delta_2}(t) \right|\leq m \right]\nonumber\\
	&&-A\left( \frac{\Delta_1}{2}-\eta \right)\eta\sqrt{m}\exp\left( -B\left( \frac{\Delta_1}{2}-\eta \right)m \right)
	\end{eqnarray}
	for some positive expressions $A()$ and $B()$ which are, respectively, decreasing and increasing in $\frac{\Delta_1}{2}-\eta$. Similarly, between the random walks $W_{\Delta_1,\Delta_2}$ and $W_{\Delta_1+2\eta,\Delta_2}$ for $0<\eta<\frac{\Delta_2}{2}$ there is the inequality 
	\begin{eqnarray}
	&&\mathbb{P}\left[ \left| \underset{t\in S}{\arg\min}W_{\Delta_1+2\eta,\Delta_2}(t) \right|\leq m \right]\geq \mathbb{P}\left[ \left| \underset{t\in S}{\arg\min}W_{\Delta_1,\Delta_2}(t) \right|\leq m \right]\nonumber\\
	&&-A\left( \frac{\Delta_2}{2}-\eta \right)\eta\sqrt{m}\exp\left( -B\left( \frac{\Delta_2}{2}-\eta \right)m \right)
	\end{eqnarray}
	for some positive expressions $A()$ and $B()$ which are, respectively, decreasing and increasing in $\frac{\Delta_2}{2}-\eta$.
\end{lemma}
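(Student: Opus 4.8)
The plan is to prove the two displayed inequalities by one and the same technique, reducing the second to the first via the reflection identity $W_{\Delta_1,\Delta_2}(t)\stackrel{d}{=}W_{\Delta_2,\Delta_1}(-t)$ already exploited in the proof of Lemma~\ref{lem:generalcompoppo}: reflection turns a perturbation of the positive-side drift $\Delta_2$ into a perturbation of the negative-side drift of the reflected walk, and leaves the event $\{|\arg\min|\le m\}$ invariant, so it suffices to treat the first inequality. Throughout I would couple the two walks with a single sequence of increments, writing $W=W_{\Delta_1,\Delta_2}$ and $\widetilde W=W_{\Delta_1,\Delta_2+2\eta}$, so that $\widetilde W(t)=W(t)+\eta t\,\mathbf 1(t>0)$ pointwise.

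First I would record the monotonicity that drives everything: since $\widetilde W$ is obtained from $W$ by adding the nondecreasing penalty $\eta t\,\mathbf 1(t>0)$, one checks directly that $\arg\min_S\widetilde W\le\arg\min_S W$ in the coupling (if $\arg\min_S W\le 0$ the argmin is unchanged, and if it is positive it cannot move to the right). Writing $L=\arg\min_S W$ and $\widetilde L=\arg\min_S\widetilde W$, the indicator difference $\mathbf 1(|\widetilde L|\le m)-\mathbf 1(|L|\le m)$ is then $+1$ exactly on the \emph{gain} event $\{L>m\ge\widetilde L\ge-m\}$ and $-1$ exactly on the \emph{loss} event $\{\widetilde L<-m\le L\le m\}$, so that the difference of the two sides of the claimed inequality equals $\mathbb P[\text{gain}]-\mathbb P[\text{loss}]$.

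The core of the argument is then to exhibit a lower bound for the gain and to bound the loss from above. For the loss I would mimic the summation step of Lemma~\ref{lem:generalcompoppo}: on that event some $k<-m$ must have $W(k)$ (with $\widetilde L=k$) lying in the width-$\eta|L|\le\eta m$ window just above $\min_{|t|\le m}W$, and since on the negative side $W(k)\sim N(|k|\,|\Delta_1|/2,\,|k|)$, absorbing this $\eta$-wide window into a shift of the drift from $|\Delta_1|/2$ to $|\Delta_1|/2-\eta$ (this is where the hypothesis $\eta<|\Delta_1|/2$ and the argument $\tfrac{\Delta_1}{2}-\eta$ of the error enter) and summing the Gaussian tails over $k<-m$, exactly as in \eqref{eq:probboundnegside}--\eqref{eq:probboundposside}, yields a bound of the form $A(\tfrac{\Delta_1}{2}-\eta)\eta\sqrt m\,e^{-B(\tfrac{\Delta_1}{2}-\eta)m}$. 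For the gain I would isolate one explicit favorable configuration inside the gain event, namely the global minimum of $W$ occurring at a single distinguished point just beyond $m$, with its value sitting in the $\eta$-window below $\min_{[-m,0]}W$ (so the penalty $\eta t$ lifts it back above the non-positive minimum) while the negative side avoids any deeper excursion past $-m$, and lower bound its probability by replacing the relevant Gaussian upper-tail estimates with matching density lower bounds.

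The step I expect to be the main obstacle is making these two estimates fit together quantitatively: both the favorable gain configuration and the loss event are governed by rare excursions carrying an $\eta$-wide acceptance window, and I must show the gain lower bound dominates the loss upper bound so that $\mathbb P[\text{gain}]-\mathbb P[\text{loss}]$ is at least the stated positive quantity. This forces me to choose the distinguished excursion location and the comparison drift $\tfrac{\Delta_1}{2}-\eta$ so that the exponential rate and the $\eta\sqrt m$ prefactor of the two bounds line up, and to verify that all constants are uniform over the admissible sets $S\supsetneq[-m,m]$; the asserted monotonicity of $A(\cdot)$ and $B(\cdot)$ in $\tfrac{\Delta_1}{2}-\eta$ will then drop out of the monotonicity of the Gaussian tail in its drift parameter.
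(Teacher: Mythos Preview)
Your plan contains the right ideas, but you have been led astray by a sign error in the lemma's statement. Compare the displayed inequality with the last three lines of the paper's own proof: the proof concludes with
\[
\mathbb{P}\Big[\big|\arg\min_{t\in S}W_+(t)\big|\le m\Big]\ \ge\ \mathbb{P}\Big[\big|\arg\min_{t\in S}W(t)\big|\le m\Big]\ -\ A'\eta\sqrt m\,e^{-B'm},
\]
with a \emph{minus} sign on the error term, and the corollary Lemma~\ref{lem:randomwalkcomp} (which is the only place this lemma is used) carries exactly that minus sign. So the ``$+$'' in the statement is a typo; what is actually needed (and proved) is the weaker bound.

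Once you read the statement with ``$-$'', your decomposition into gain and loss events is correct, and only the loss event matters: one simply discards the gain event and uses
\[
\mathbb{P}[|\widetilde L|\le m]\ \ge\ \mathbb{P}[|L|\le m]\ -\ \mathbb{P}[\text{loss}].
\]
Your upper bound for the loss event is essentially the paper's argument. In fact the paper sharpens the geometry slightly: on the loss event one must have $L\in(0,m]$ (not just $L\in[-m,m]$), because your own monotonicity observation $\widetilde L\le L$ together with $\widetilde W=W$ on $\{t\le 0\}$ forces $\widetilde L=L$ whenever $L\le 0$. Hence the relevant minimum is $\min_{t\in[0,m]}W(t)$, which is independent of $W(\ell)$ for $\ell<-m$; conditioning on its value $x\le 0$ and using that the Gaussian density is monotone on the relevant range lets one replace $x$ by $0$, then bound $\eta m/\sqrt{|\ell|}\le\eta\sqrt m$ and $(\tfrac{\Delta_1}{2}\sqrt{|\ell|}-\eta\sqrt m)^2\ge(\tfrac{\Delta_1}{2}-\eta)^2|\ell|$, and sum the geometric series in $|\ell|$. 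This is where the argument $\tfrac{\Delta_1}{2}-\eta$ and the condition $\eta<\Delta_1/2$ come from, exactly as you anticipated.

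The ``main obstacle'' you identify --- producing a matching \emph{lower} bound for the gain event so that $\mathbb{P}[\text{gain}]-\mathbb{P}[\text{loss}]$ is positive --- is therefore a red herring created by the typo; no such lower bound is needed (and proving one uniformly in $S\supsetneq[-m,m]$ would be genuinely harder, since for large $m$ both probabilities are close to $1$ and the gain can be arbitrarily small). Drop that part and your proposal is the paper's proof.
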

\begin{proof}
	We will show the inequality between $W_{\Delta_1,\Delta_2+2\eta}$ and $W_{\Delta_1,\Delta_2}$, and the result between $W_{\Delta_1+2\eta,\Delta_2}$ and $W_{\Delta_1,\Delta_2}$ can be shown in a similar fashion or in an argument similar to what was found in the proof for Lemma \ref{lem:generalcompoppo}. As in the proof of that lemma, we will use the shorthand notation $W$ for  $W_{\Delta_1,\Delta_2}$ and $W_+$ for $W_{\Delta_1,\Delta_2+2\eta}$.
	%
	\newline
	\newline
	We are interested in how $\left|\underset{t\in S}{\arg\min}W(t)\right|\leq m$ and $\left|\underset{t\in S}{\arg\min}W_+(t)\right|> m$ can simultaneously occur, which we will do by considering the possible values of the argmin of $W(t)$. If these two events are true, then first note that for some integer $k\in [-m,m]$, $W(k)\leq W(t)$ for all $t\in S$, $t\neq k$. 
	\begin{itemize}
		\item if $k\leq 0$, then $W_+(k)=W(k)\leq W(t)+|t|\eta 1(t>0)=W_+(t)$ for all $t\in S$ and $|t|>m$; in other words $\left|\underset{t\in S}{\arg\min}W_+(t)\right|\leq m$,
		\item thus $k >0$ and $W(k)=\underset{t\in [0,m]}{\arg\min}W(t)$. The only possible way for $\left|\underset{t\in S}{\arg\min}W_+(t)\right|> m$ is for the argmin to be less than $-m$: for any $t\in S$: 
		\begin{itemize}
			\item if $t>m$ then $W_+(t)=W(t)+t\eta>W(k)+k\eta=W_+(k)$, which means that the argmin of $W_+(t)$ cannot be greater than $m$ in absolute value
			\item therefore the only possible argmin for $W_+(t)$ is for some $\ell <-m$, and it must satisfy
			\begin{eqnarray}
			W(\ell)=W_+(\ell)<W_+(k)=W(k)+\eta k \leq W(k)+\eta m
			\end{eqnarray}
			but at the same time since $W(\ell)$ was not the minimum among the $W(t)$'s, we have $W(\ell)\geq W(k)$
		\end{itemize}
	\end{itemize}
	This breakdown of events shows that in order for the argmin of $W(t)$ to be within $[-m,m]$ but for the argmin of $W_+(t)$ to be outside this interval, there must be an $\ell<-m$ where $ \min_{t\in [0,m]}W(t)< W(\ell) \leq \min_{t\in [0,m]}W(t)+\eta m$. Hence:
	\begin{eqnarray}
	&&\mathbb{P}\left[ \left|\underset{t\in S}{\arg\min}W(t)\right|\leq m\text{ and } \left|\underset{t\in S}{\arg\min}W_+(t)\right|>m\right]\nonumber\\
	&\leq & \mathbb{P}\left[ \exists \ell<-m \text{ where } \min_{t\in [0,m]}W(t)< W(\ell) \leq \min_{t\in [0,m]}W(t)+\eta m \right]\nonumber\\
	&\leq & \sum_{\ell=-m-1}^{-\infty} \mathbb{P}\left[  \min_{t\in [0,m]}W(t)< W(\ell) \leq \min_{t\in [0,m]}W(t)+\eta m \right]\nonumber\\
	&= &\sum_{\ell=-m-1}^{-\infty} \mathbb{E}\left[\mathbb{P}\left[  x< W(\ell) \leq x+\eta m\Bigg| \min_{t\in [0,m]}W(t)=x \right]\right]\nonumber
	\end{eqnarray}
	\begin{align}
	&\leq  \sum_{\ell=-m-1}^{-\infty} \mathbb{E}\left[\mathbb{P}\left[  \frac{x}{\sqrt{|\ell|}}-\frac{\Delta_1}{2}\sqrt{|\ell|}< N\left( 0,1 \right) \leq \frac{x}{\sqrt{|\ell|}}-\frac{\Delta_1}{2}\sqrt{|\ell|}+\frac{\eta m}{\sqrt{|\ell|}}\Bigg| \min_{t\in [0,m]}W(t)=x \right]\right]\nonumber\\
	&= \sum_{\ell=-m-1}^{-\infty}\mathbb{E}\left[ \int_{\frac{x}{\sqrt{|\ell|}}-\frac{\Delta_1}{2}\sqrt{|\ell|}}^{\frac{x}{\sqrt{|\ell|}}-\frac{\Delta_1}{2}\sqrt{|\ell|}+\frac{\eta m}{\sqrt{|\ell|}}} \frac{1}{\sqrt{2\pi}}\exp(-z^2/2)\,dz \Bigg| \min_{t\in [0,m]}W(t)=x\right]\nonumber\\
	&\leq \sum_{\ell=-m-1}^{-\infty}  \int_{-\frac{\Delta_1}{2}\sqrt{|\ell|}}^{-\frac{\Delta_1}{2}\sqrt{|\ell|}+\frac{\eta m}{\sqrt{|\ell|}}} \frac{1}{\sqrt{2\pi}}\exp(-z^2/2)\,dz\nonumber\\
	&\text{since all possible values of }x\text{ are negative, and the integrated density is monotone}\nonumber\\
	&\leq \sum_{\ell=-m-1}^{-\infty}  \int_{-\frac{\Delta_1}{2}\sqrt{|\ell|}}^{-\frac{\Delta_1}{2}\sqrt{|\ell|}+\eta\sqrt{m}} \frac{1}{\sqrt{2\pi}}\exp(-z^2/2)\,dz\nonumber\\
	&\leq \sum_{\ell=-m-1}^{-\infty} \frac{\eta}{\sqrt{2\pi}}\sqrt{m}\exp\left[ -\frac{1}{2}\left( \frac{\Delta_1}{2}\sqrt{|\ell|}-\eta \sqrt{m}\right)^2 \right]\nonumber\\
	&\leq \frac{\eta\sqrt{m}}{\sqrt{2\pi}}\sum_{\ell=-m-1}^{-\infty} \exp\left[ -\frac{1}{2}\left(\frac{\Delta_1}{2}-\eta\right)^2|\ell| \right]\nonumber\\
	&\leq  \frac{1}{\sqrt{2\pi}}\left( \frac{\exp\left[- \frac{1}{2}\left( \frac{\Delta_1}{2}-\eta \right)^2 \right]}{1-\exp\left[- \frac{1}{2}\left( \frac{\Delta_1}{2}-\eta \right)^2\right]}\right)\eta\sqrt{m}\exp\left( -\frac{1}{2}\left( \frac{\Delta_1}{2}-\eta \right)^2m \right).\nonumber\\
	\end{align}
	Therefore, 
	\begin{eqnarray}
	&&\mathbb{P}\left[  \left|\underset{t\in S}{\arg\min}W_+(t)\right|\leq m\right]\nonumber\\
	&\geq & \mathbb{P}\left[ \left|\underset{t\in S}{\arg\min}W(t)\right|\leq m\right]-\mathbb{P}\left[ \left|\underset{t\in S}{\arg\min}W(t)\right|\leq m\text{ and } \left|\underset{t\in S}{\arg\min}Y_+(t)\right|>m\right]\nonumber\\
	& \geq & \mathbb{P}\left[ \left|\underset{t\in S}{\arg\min}W(t)\right|\leq m\right]-A'\eta \sqrt{m}\exp(-B'm)
	\end{eqnarray}
	for some constants $A'$ and $B'$ depending only on $\frac{\Delta_1}{2}-\eta$.
	
	%
	
\end{proof}
We can immediately apply this result to random walks of the form $X_\Delta$:

\begin{lemma}\label{lem:randomwalkcomp}
	Suppose that for the random walk $Y_+(t)$ for $t\in\mathbb{Z}$ equals
	\begin{eqnarray}
	Y_+(t)=\begin{cases}
	X_\Delta(t)\qquad &\text{for }t\leq 0\\
	X_\Delta(t)+\eta t&\text{for }t>0
	\end{cases}
	\end{eqnarray}
	for some constant $\eta$ such that $0<\eta<\frac{|\Delta|}{2}$. Then for any $[-m,m]\subset S\subset\mathbb{Z}$, 
	\begin{align}
	&\mathbb{P}\left[\left|\underset{t\in S}{\arg\min}Y_+(t)\right|\leq m \right]\geq 	\mathbb{P}\left[\left|\underset{t\in S}{\arg\min}X_\Delta(t)\right|\leq m \right]\nonumber\\
	&-A'\left( \frac{|\Delta|}{2}-\eta \right)\eta\sqrt{m}\exp\left( -B'\left( \frac{|\Delta|}{2}-\eta \right)m \right)
	\end{align}
	for some expressions $A'()$ and $B'()$ which are, respectively, decreasing and increasing with respect to $\left( \frac{|\Delta|}{2}-\eta \right)$. The same probability inequality will hold if $Y_+(t)=X_\Delta(t)+\eta |t|1(t<0)$. 
	\newline
	\newline
	In addition, if $Y_-(t)$ is defined differently as
	\begin{eqnarray}
	Y_-(t)=\begin{cases}
	X_\Delta(t)\qquad &\text{for }t\leq 0\\
	X_\Delta(t)-\eta t&\text{for }t>0
	\end{cases}
	\end{eqnarray}
	then we have the inequality 
	\begin{align}
	&\mathbb{P}\left[\left|\underset{t\in S}{\arg\min}Y_-(t)\right|\leq m \right]\leq 	\mathbb{P}\left[\left|\underset{t\in S}{\arg\min}X_\Delta(t)\right|\leq m \right]\nonumber\\
	&+A'\left( \frac{|\Delta|}{2}-\eta \right)\eta\sqrt{m}\exp\left( -B'\left( \frac{|\Delta|}{2}-\eta \right)m \right)
	\end{align}
	for some expressions $A'()$ and $B'()$ which are, respectively, decreasing and increasing with respect to $\left( \frac{|\Delta|}{2}-\eta \right)$.
\end{lemma}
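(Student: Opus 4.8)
The plan is to deduce Lemma~\ref{lem:randomwalkcomp} from the preceding lemma of this subsection (the one comparing $\underset{t\in S}{\arg\min}W_{\Delta_1,\Delta_2+2\eta}$ with $\underset{t\in S}{\arg\min}W_{\Delta_1,\Delta_2}$, and $\underset{t\in S}{\arg\min}W_{\Delta_1+2\eta,\Delta_2}$ with $\underset{t\in S}{\arg\min}W_{\Delta_1,\Delta_2}$), in exactly the way Lemma~\ref{lem:randomwalkcompoppo} was deduced from Lemma~\ref{lem:generalcompoppo}. The crux is a distributional identification. Writing the increments of $X_\Delta$ as iid $N(0,1)$ on each half-line, the walk $Y_+$ obtained by adding $\eta t$ on $\{t>0\}$ has, for $t>0$, value $t\bigl(\tfrac{|\Delta|}{2}+\eta\bigr)+\sum_{i=1}^t\varepsilon_i=t\,\tfrac{|\Delta|+2\eta}{2}+\sum_{i=1}^t\varepsilon_i$, and is unchanged on $\{t\le 0\}$; since flipping the sign of the symmetric increments on the negative half-line leaves all joint laws invariant, $\{Y_+(t)\}_{t\in\mathbb{Z}}\stackrel{d}{=}\{W_{|\Delta|,\,|\Delta|+2\eta}(t)\}_{t\in\mathbb{Z}}$, and likewise $\{X_\Delta(t)\}_{t\in\mathbb{Z}}\stackrel{d}{=}\{W_{|\Delta|,|\Delta|}(t)\}_{t\in\mathbb{Z}}$. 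Hence $\underset{t\in S}{\arg\min}$ has the same distribution within each of these pairs. (If $S=[-m,m]$ exactly both probabilities below equal $1$, so we may assume $[-m,m]\subsetneq S$ to match the hypothesis of the preceding lemma.)

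For the $Y_+$ case I would apply the preceding lemma with $\Delta_1=\Delta_2=|\Delta|$ and the given $\eta$ (the hypotheses $\eta<|\Delta|/2=|\Delta_1|/2$ and $[-m,m]\subset S$ are exactly what it requires), obtaining
\[
\mathbb{P}\!\left[\Bigl|\underset{t\in S}{\arg\min}\,W_{|\Delta|,\,|\Delta|+2\eta}(t)\Bigr|\le m\right]\ \ge\ \mathbb{P}\!\left[\Bigl|\underset{t\in S}{\arg\min}\,W_{|\Delta|,|\Delta|}(t)\Bigr|\le m\right]+A\!\left(\tfrac{|\Delta|}{2}-\eta\right)\eta\sqrt{m}\,\exp\!\left(-B\!\left(\tfrac{|\Delta|}{2}-\eta\right)m\right).
\]
Transporting both sides through the two distributional identities above and discarding the nonnegative correction term (relabelling the surviving expressions $A',B'$, which inherit the asserted monotonicity in $\tfrac{|\Delta|}{2}-\eta$) gives the stated inequality for $Y_+$; one even obtains it with $+$ in place of $-$, but the weaker form as stated is what we need. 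The variant $Y_+(t)=X_\Delta(t)+\eta|t|\,1(t<0)$ is identical after interchanging the half-lines, using instead the second inequality of the preceding lemma (comparing $W_{\Delta_1+2\eta,\Delta_2}$ with $W_{\Delta_1,\Delta_2}$) with $\Delta_1=\Delta_2=|\Delta|$.

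For $Y_-$, note that $Y_-(t)=X_\Delta(t)-\eta t$ for $t>0$ has, on $\{t>0\}$, value $t\,\tfrac{|\Delta|-2\eta}{2}+\sum_{i=1}^t\varepsilon_i$, and $|\Delta|-2\eta>0$ by the hypothesis $\eta<|\Delta|/2$, so $\{Y_-(t)\}\stackrel{d}{=}\{W_{|\Delta|,\,|\Delta|-2\eta}(t)\}$. Apply the preceding lemma with $\Delta_1=|\Delta|$ and $\Delta_2=|\Delta|-2\eta$, so that $\Delta_2+2\eta=|\Delta|$, which yields
\[
\mathbb{P}\!\left[\Bigl|\underset{t\in S}{\arg\min}\,W_{|\Delta|,|\Delta|}(t)\Bigr|\le m\right]\ \ge\ \mathbb{P}\!\left[\Bigl|\underset{t\in S}{\arg\min}\,W_{|\Delta|,\,|\Delta|-2\eta}(t)\Bigr|\le m\right]+A\!\left(\tfrac{|\Delta|}{2}-\eta\right)\eta\sqrt{m}\,\exp\!\left(-B\!\left(\tfrac{|\Delta|}{2}-\eta\right)m\right);
\]
rearranging, identifying the two walks with $X_\Delta$ and $Y_-$ respectively, and weakening $-A(\cdot)$ to $+A'(\cdot)$ gives the claim for $Y_-$; the variant with $-\eta|t|\,1(t<0)$ is symmetric. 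The only real difficulty is bookkeeping: checking that each $Y_\pm$ (and each half-line variant) is genuinely identified in law with the correct $W_{\cdot,\cdot}$, that the drift shifts $|\Delta|\mapsto|\Delta|\pm 2\eta$ land on the intended side, and that the constraints on $\eta$ and the containment $[-m,m]\subset S$ match the hypotheses of the preceding lemma. There is no new probabilistic estimate to perform beyond what was already established for the $W$-walks.
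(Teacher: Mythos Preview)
Your proposal is correct and mirrors exactly the approach the paper intends: just as Lemma~\ref{lem:randomwalkcompoppo} is obtained from Lemma~\ref{lem:generalcompoppo} by specializing $\Delta_1,\Delta_2$, Lemma~\ref{lem:randomwalkcomp} follows from the preceding (Part~2) lemma via the distributional identifications $X_\Delta\stackrel{d}{=}W_{|\Delta|,|\Delta|}$, $Y_+\stackrel{d}{=}W_{|\Delta|,|\Delta|+2\eta}$, $Y_-\stackrel{d}{=}W_{|\Delta|,|\Delta|-2\eta}$, with the choices $\Delta_1=\Delta_2=|\Delta|$ for $Y_+$ and $\Delta_1=|\Delta|$, $\Delta_2=|\Delta|-2\eta$ for $Y_-$. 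Your observation that the ``$+$'' in the preceding lemma's statement (which, from its proof, should read ``$-$'') yields something at least as strong as needed is a correct reading; either sign delivers the claimed bound here.
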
 
\section{Supplement Part D (Intelligent Sampling using Wild Binary Segmentation)}
\label{WBINSEG-SUPP} 
\subsection{Wild Binary Segmentation}\label{sec:wildbinseg}
We next discuss the Wild Binary Segmentation (WBinSeg) algorithm, introduced in \cite{fryzlewicz2014wild}. Similar to our treatment of the BinSeg procedure, we will explain the WBinSeg procedure in the context of applying it to the dataset $Z_1,\dots, Z_{N^*}$, a size $\sim N^\gamma$ size subsample of a larger dataset $Y_1,\dots,Y_N$ which satisfies the conditions (M1)-(M4). The steps of this algorithm are:
\begin{enumerate}
	\item Fix a threshold value $\zeta_{N^*}$ and initialize the segment set $SS=\{ (1,N) \}$, the change point estimate set $\underline{\hat{\tau}}=\emptyset$, and $M_{N^*}$ intervals $[s_1,e_1],\dots,[s_{M_{N^*}},e_{M_{N^*}}]$, where each $s_j$ and $e_j$ are uniformly picked from $\{1,\dots,{N^*}\}$.
	\item Pick any ordered pair $(s,e)\in SS$, remove it from $SS$ (update $SS$ by $SS\leftarrow SS-\{ (s,e) \}$). If $s\geq e$ then skip to step 6, otherwise continue to step 3.
	\item Define $\mathcal{M}_{s,e}:=\left\{ [s_i,e_i]:[s_i,e_i]\subseteq  [s,e] \right\}$.
	\begin{itemize}
		\item As an optional step, also take $\mathcal{M}_{s,e}\leftarrow \mathcal{M}_{s,e}\cup\left\{(s,e)\right\}$.
	\end{itemize}
	\item Find a $[s^*,e^*]\in\mathcal{M}_{s,e}$ such that
	$$\max_{b\in\{s^*,\dots,e^*-1\} }|\bar{Y}^b_{s^*,e^*}|=\max_{[s',e']\in\mathcal{M}_{s,e}}\left( \max_{b\in\{s',\dots,e'-1\} }|\bar{Y}^b_{s',e'}| \right)$$
	and let $b_0=\underset{b\in\{s^*,\dots,e^*-1\} }{\arg\max}|\bar{Z}^b_{s^*,e^*}|$.
	\item If $|\bar{Z}^{b_0}_{s^*,e^*}|\geq \zeta_{N^*}$, then add $b_0$ to the list of change point estimates (add $b_0$ to $\underline{\hat{\tau}}$), and add ordered pairs $(s,b_0)$ and $(b_0+1,e)$ to $SS$, otherwise skip to step 5.
	\item Repeat steps 2-4 until $SS$ contains no elements.
\end{enumerate}
Roughly speaking, WBinSeg performs very much like binary segmentation but with steps that maximize change point estimates over $M_{N^*}$ randomly chosen intervals. The consistency results in \cite{fryzlewicz2014wild} imply that in our setting, the following holds:
\begin{theorem}\label{thm:wbinsegresults}
	Suppose conditions (M1) to (M4) are satisfied and the tuning parameter $\zeta_{N^*}$ is chosen appropriately such that there exists positive constants $C_1$ and $C_2$ with $C_1\sqrt{\log({N^*})}\leq \zeta_{N^*}\leq C_2\sqrt{\delta}_{N^*}$. Denote $\hat{J}$, $\hat{\tau}_1,\dots,\hat{\tau}_{\hat{J}}$ as the estimates obtained from wild binary segmentation. Then, there exists positive constants $C_3,C_4$ where
	\begin{equation}\label{eq:wildbound}
	\mathbb{P}\Big[ \hat{J}=J;\quad \max_{i=1,...,J} |\hat{\tau}_i-\tau_i|\leq C_3\log({N^*}) \Big]\geq 1-C_4({N^*})^{-1}-\left( \frac{{N^*}}{\delta^*_{N^*}} \right)\left(1- \left(\frac{\delta^*_{N^*}}{3{N^*}}\right)^2\right)^{M_{N^*}}
	\end{equation}
\end{theorem}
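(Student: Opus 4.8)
The plan is to reduce Theorem~\ref{thm:wbinsegresults} to the established consistency theory for wild binary segmentation (Theorem~3.2 of \cite{fryzlewicz2014wild}) by (i) checking that conditions (M1)--(M4), applied to the subsample $Z_1,\dots,Z_{N^*}$ viewed as a change point model with change points $\tau_j$ and minimal spacing $\delta^*_{N^*}$, are a special case of the hypotheses there, with the simplification that the minimal jump $\underline{\Delta}$ is a \emph{fixed} constant rather than a shrinking sequence, and (ii) redoing, in the present notation, the elementary bookkeeping that produces the extra $(N^*/\delta^*_{N^*})(1-(\delta^*_{N^*}/(3N^*))^2)^{M_{N^*}}$ term coming from the randomly drawn intervals. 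I would decompose the target event through two events: $\mathcal{A}$, that the $M_{N^*}$ random intervals are ``well placed'' relative to every true change point, and $\mathcal{B}$, that the stochastic part of every CUSUM statistic is uniformly small, say at most $\zeta_{N^*}/4$ in absolute value.

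First I would bound $\mathbb{P}(\mathcal{A}^c)$. For a fixed $\tau_j$, call an interval $[s_m,e_m]$ favorable for $\tau_j$ if $s_m$ falls in a window of length $\asymp \delta^*_{N^*}/3$ ending at $\tau_j$ and $e_m$ falls in a window of length $\asymp \delta^*_{N^*}/3$ starting at $\tau_j$, so that $[s_m,e_m]$ straddles $\tau_j$, has order $\delta^*_{N^*}$ room on each side, and by the minimal spacing condition contains no other change point in its core. Since $s_m,e_m$ are drawn uniformly and independently from $\{1,\dots,N^*\}$, a single interval is favorable for $\tau_j$ with probability at least $(\delta^*_{N^*}/(3N^*))^2$; as the $M_{N^*}$ intervals are i.i.d., the probability that none of them is favorable for $\tau_j$ is at most $(1-(\delta^*_{N^*}/(3N^*))^2)^{M_{N^*}}$. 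A union bound over the $J\le N^*/\delta^*_{N^*}$ change points then gives $\mathbb{P}(\mathcal{A}^c)\le (N^*/\delta^*_{N^*})(1-(\delta^*_{N^*}/(3N^*))^2)^{M_{N^*}}$, which is precisely the last term in (\ref{eq:wildbound}). Next I would bound $\mathbb{P}(\mathcal{B}^c)$ via a Gaussian maximal inequality over the triples $(s,b,e)$ examined by the algorithm (a crude union over all $O((N^*)^3)$ triples already suffices): combined with $\zeta_{N^*}\ge C_1\sqrt{\log N^*}$ for a sufficiently large absolute constant $C_1$, this yields $\mathbb{P}(\mathcal{B}^c)\le C_4(N^*)^{-1}$. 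This is the one place the lower bound on $\zeta_{N^*}$ is used.

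On $\mathcal{A}\cap\mathcal{B}$ I would then invoke the deterministic core of the WBinSeg argument, run inductively over the recursion tree: whenever the current segment $(s,e)$ still contains an undetected change point, it contains an interval favorable for the nearest such change point, and on that interval the signal component of the CUSUM at the true break is of order $\underline{\Delta}\sqrt{\delta^*_{N^*}}$, which exceeds $\zeta_{N^*}$ by the upper bound $\zeta_{N^*}\le C_2\sqrt{\delta^*_{N^*}}$ (detectability), while on $\mathcal{B}$ the noise perturbation is below $\zeta_{N^*}/4$; hence a detection is triggered, and the concavity-type behavior of the signal CUSUM near a change point together with the $O(\sqrt{\log N^*})$ noise bound forces the maximizer to within $C_3\log N^*$ (with $C_3$ depending on $\underline{\Delta}$ and $\sigma$) of a true change point. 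Conversely, a segment containing no change point has zero signal CUSUM, so on $\mathcal{B}$ every CUSUM there is below $\zeta_{N^*}$ and no spurious detection occurs. This gives $\hat{J}=J$ and $\max_i|\hat{\tau}_i-\tau_i|\le C_3\log N^*$ on $\mathcal{A}\cap\mathcal{B}$, and combining with the two probability bounds yields the stated inequality. The main obstacle is this deterministic induction --- matching detections to change points with the correct localization radius while ruling out both over- and under-estimation --- but it is inherited essentially verbatim from \cite{fryzlewicz2014wild}; the genuinely new work is only the translation of hypotheses (noting that $\underline{\Delta}$ being constant collapses their $\log(N^*)/\underline{\Delta}^2$ rate to $C_3\log N^*$) and the precise accounting of the random-interval failure probability carried out above.
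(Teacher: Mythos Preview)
Your proposal is correct and aligns with the paper's treatment: the paper does not give an independent proof of this theorem but simply states that ``the consistency results in \cite{fryzlewicz2014wild} imply that in our setting, the following holds,'' invoking Theorem~3.2 there directly. Your sketch goes further than the paper by actually outlining the reduction --- the favorable-interval/union-bound computation for the random intervals, the Gaussian maximal inequality for the noise, and the deterministic induction on the recursion tree --- all of which is faithful to Fryzlewicz's argument specialized to the fixed-$\underline{\Delta}$ regime, so there is nothing to correct.
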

We remark that the right side of (\ref{eq:wildbound}) does not necessarily converge to 1 unless $M_{N^*}\to\infty$ fast enough. Using some simple algebra, it was shown in the original paper, that for sufficiently large $N$, this expression can be bounded from below by $1-CN^{-1}$ for some $C>0$ if $M_{N^*}\geq  \left(\frac{3{N^*}}{\delta_{N^*}}\right)^2\log({N^*}^2/\delta_{N^*})$, a condition on $M_{N^*}$ which we we assume from here on in order to simplify some later analysis.
\newline
\newline
\indent Compared with the consistency result for binary segmentation given in Theorem \ref{frythm}, $\max_{j=1,\dots,J}|\hat{\tau}_j-\tau_j|$ can be bounded by some constant times $\log({N^*})$, which can grow much slower than $E_{N^*}=({N^*}/\delta^*_{N^*})^2\log({N^*})$ whenever ${N^*}/\delta^*_{N^*}\to \infty$. However, this comes at the cost of computational time. Suppose we perform wild binary segmentation with $M_{N^*}$ random intervals $[s_1,e_1],\dots,[s_{M_{N^*}},e_{M_{N^*}}]$, then for large ${N^*}$ the most time consuming part of the operation is to maximize the CUSUM statistic over each interval, with the other tasks in the WBinSeg procedure taking much less time. This takes an order of $\sum_{j=1}^{M_{N^*}}(e_j-s_j)$ time, and since the interval endpoints are drawn from $\{1,\dots,{N^*}\}$ with equal probability, we have $\mathbb{E}[e_j-s_j]=\frac{{N^*}}{3}(1+o(1))$ for all $j=1,\dots,M_{N^*}$. Hence the scaling of the average computational time for maximizing the CUSUM statistic in all random intervals, and WBinSeg as a whole, is $O({N^*}M_{N^*})=O\left( \frac{{N^*}^3}{(\delta^*_{N^*})^2}\log(({N^*})^2/\delta^*_{N^*}) \right)$ time, which is greater than $O({N^*}\log({N^*}))$ time for binary segmentation whenever ${N^*}/\delta^*_{N^*}\to\infty$. The trade-off between the increased accuracy of the estimates and the bigger computational time, as they pertain to intelligent sampling, will be analyzed later.
\newline
\newline
\indent As with the BinSeg method, we need to verify that the WBinSeg method also satisfies the consistency condition of (\ref{eq:firstconsistent}) at step (ISM3) of intelligent sampling, so that the results of Theorem \ref{thm:increasingJasymprotics} continue to hold with the latter as the first stage procedure. To this end, we need to demonstrate a set of signal estimators that satisfy the condition of Lemma \ref{cor:signalconsistent} with a $\rho^*_{N^*}$ such that $J\rho^*_{N^*}\to 0$. We do this by using the estimator proposed in (\ref{eq:signalestdef}), and also by imposing two further conditions:
\begin{enumerate}[label=(M\arabic* (WBinSeg)):]
	\setlength{\itemindent}{.5in}
	\setcounter{enumi}{7}
	\item $\Xi$ (from condition (M3)) is further restricted by $\Xi\in [0,1/3)\,,$
	\item $N_1$, from step (ISM2), is chosen so that $N_1=K_1N^\gamma$ for some $K_1>0$ and $\gamma>3\Xi$.
\end{enumerate}
\begin{lemma}\label{lem:wbinsegconsistent}
	Under conditions (M1) through (M4), (M8 (WBinSeg)), and (M9 (WBinSeg)), we have 
	\begin{eqnarray}\label{eq:wildsimulconsistent}
	\mathbb{P}\left[\hat{J}=J;\quad \max_{i=1,...,J}|\hat{\tau}_i-\tau_i|\leq w^*({N^*});\quad \max_{i=0,...,J}|\hat{\nu}_i-\nu_i|\leq \rho^*_{N^*}\right]\to 1
	\end{eqnarray}
	for some $\rho^*_{N^*}$ where $J\rho^*_{N^*}\to 0$.
\end{lemma}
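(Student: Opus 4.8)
The plan is to mirror the structure already used for the BinSeg case in Lemma \ref{lem:binsegsignalconsistent}: combine the change‑point consistency result for WBinSeg (Theorem \ref{thm:wbinsegresults}) with the general signal‑estimation bound Lemma \ref{cor:signalconsistent}, then verify that there is a sequence $\rho^*_{N^*}$ tending to $0$ with $J\rho^*_{N^*}\to 0$ that simultaneously satisfies all the hypotheses of Lemma \ref{cor:signalconsistent}. First I would record the relevant scales: under (M3), (M8 (WBinSeg)) and (M9 (WBinSeg)) we have $N^*=(K_1+o(1))N^\gamma$ and $\delta^*_{N^*}=(N_1/N+o(1))\delta_N=(C+o(1))(N^*)^{1-\Xi/\gamma}$ with $\Xi/\gamma<1/3$, and $J\le C'N^\Lambda$ for some $\Lambda\in[0,\Xi]$. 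From Theorem \ref{thm:wbinsegresults} (with $M_{N^*}\ge (3N^*/\delta^*_{N^*})^2\log((N^*)^2/\delta^*_{N^*})$, as already assumed) we may take $w^*(N^*)=C_3\log(N^*)$ and $B_{N^*}=C_4/N^*$ in the notation of Lemma \ref{cor:signalconsistent}.

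Next I would choose $\rho^*_{N^*}=(N^*)^\theta$ and show there is an admissible range for $\theta$. The conditions of Lemma \ref{cor:signalconsistent}(i)–(ii) require $w^*(N^*)=o(\delta^*_{N^*})$ (immediate since $\log(N^*)=o((N^*)^{1-\Xi/\gamma})$), $w^*(N^*)/\delta^*_{N^*}=o(\rho^*_{N^*})$, $N^*B_{N^*}/\delta^*_{N^*}\to0$ (immediate since $\delta^*_{N^*}\to\infty$), and $N^*w^*(N^*)/((\delta^*_{N^*})^{3/2}\rho^*_{N^*})=o(\exp[C_2\delta^*_{N^*}(\rho^*_{N^*})^2])$. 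The condition $w^*(N^*)/\delta^*_{N^*}=o(\rho^*_{N^*})$ becomes roughly $(N^*)^{\Xi/\gamma-1}\log(N^*)=o((N^*)^\theta)$, i.e. $\theta>\Xi/\gamma-1$; note this is weaker than the analogous $3\Xi/\gamma-1$ appearing for BinSeg because WBinSeg's $w^*$ is only $\log(N^*)$ rather than $(N^*)^{2\Xi/\gamma}\log(N^*)$. The exponential condition holds provided $\delta^*_{N^*}(\rho^*_{N^*})^2=(N^*)^{1-\Xi/\gamma+2\theta}$ has positive exponent, i.e. $\theta>-(1-\Xi/\gamma)/2>-1/3$; I would pick $\theta\in\big((\Xi/\gamma-1)\vee(-1/3),\,-\Lambda/\gamma\big)$. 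Requiring $J\rho^*_{N^*}\to0$, i.e. $N^\Lambda(N^*)^\theta\lesssim N^{\Lambda+\gamma\theta}\to0$, forces $\theta<-\Lambda/\gamma$, so I must check this interval is nonempty: since $\Lambda\le\Xi$ and $\Xi/\gamma<1/3$ (from (M8), (M9)), $\Xi/\gamma-1<-2/3<-\Lambda/\gamma$ iff $\Lambda/\gamma<2/3$, which holds as $\Lambda/\gamma\le\Xi/\gamma<1/3$; likewise $-1/3<-\Lambda/\gamma$. Hence the interval is nonempty and such a $\theta$ exists.

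With $\rho^*_{N^*}$ fixed, Lemma \ref{cor:signalconsistent}(ii) gives $\mathbb{P}[\hat J=J;\ \max_i|\hat\nu^*_i-\nu^*_i|\le\rho^*_{N^*}]\to1$, and combining this with Theorem \ref{thm:wbinsegresults} via the Bonferroni inequality yields (\ref{eq:wildsimulconsistent}) with $w^*(N^*)=C_3\log(N^*)$; since $\rho^*_{N^*}\to0$ and $J\rho^*_{N^*}\to0$ by construction, the lemma follows exactly as in the BinSeg argument. I expect the only real subtlety — the "main obstacle" — to be the bookkeeping that the admissible interval for $\theta$ is genuinely nonempty \emph{and} simultaneously delivers $J\rho^*_{N^*}\to0$; everything else is a routine transcription of the BinSeg proof with $E_{N^*}$ replaced by $\log(N^*)$ and the arithmetic constraint $\Xi/\gamma<1/3$ in place of $\Xi/\gamma<1/4$. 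One should also double‑check that the WBinSeg lower bound in (\ref{eq:wildbound}) indeed converges to $1$ under the stated $M_{N^*}$ assumption, so that $B_{N^*}=C_4/N^*$ is legitimate; this was already established in the discussion following Theorem \ref{thm:wbinsegresults}.
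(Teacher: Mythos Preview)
Your proposal is correct and follows essentially the same route as the paper: verify the hypotheses of Lemma~\ref{cor:signalconsistent} with $w^*(N^*)\sim\log(N^*)$, $B_{N^*}\sim(N^*)^{-1}$ from Theorem~\ref{thm:wbinsegresults}, choose $\rho^*_{N^*}=(N^*)^\theta$ for a suitable negative $\theta$, and conclude via Bonferroni. The paper takes the admissible interval $\theta\in(\Xi/(2\gamma)-1/2,\,-\Xi/\gamma)$, whereas you use the slightly different $(-1/3,\,-\Lambda/\gamma)$; both are valid since $(\Xi/\gamma-1)/2<-1/3$ and $\Lambda\le\Xi$ under (M9 (WBinSeg)). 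One small slip: the chain ``$\theta>-(1-\Xi/\gamma)/2>-1/3$'' has the second inequality reversed (in fact $-(1-\Xi/\gamma)/2<-1/3$ when $\Xi/\gamma<1/3$), but this is harmless because your chosen lower bound $-1/3$ is the stronger constraint and still lies above $(\Xi/\gamma-1)/2$, so the exponential condition is satisfied as you claim.
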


\begin{proof}
	We need to show that wild binary segmentation does satisfy the requirements of Lemma \ref{cor:signalconsistent}. We have $\delta^*_{N^*}\gtrsim (N^{*})^{1-\frac{\Xi}{\gamma}}$ and $J\lesssim (N^{*})^{\frac{\Xi}{\gamma}}$. By the properties of the WBinSeg estimator shown in Theorem \ref{thm:wbinsegresults}, $w^*(N^*)\sim\log(N^*)$ and $B_{N^*} \sim{N^*}^{-1}$. We shall show that for $\rho=(N^*)^\theta$ where $\theta\in \left( \frac{\Xi}{2\gamma}-\frac{1}{2},-\frac{\Xi}{\gamma} \right)$, the conditions of Lemma \ref{cor:signalconsistent} are satisfied, and in addition, $J\rho^*_{N^*}\to 0$. 
	\newline
	\newline
	First, the set $ \left( \frac{\Xi}{2\gamma}-\frac{1}{2},-\frac{\Xi}{\gamma} \right)$ is a valid set since
	\begin{eqnarray}
	\frac{3\Xi}{2\gamma}-\frac{1}{2}=\frac{1}{2}\left( 3\frac{\Xi}{\gamma}-1 \right)<0.
	\end{eqnarray}
	by condition (M9 (WBinSeg)). Hence $\frac{\Xi}{2\gamma}-\frac{1}{2}<-\frac{\Xi}{\gamma}$. Second, $J\rho^*_{N^*}\to 0$ since
	\begin{eqnarray}
	J\rho^*_{N^*}\lesssim (N^*)^{ \frac{\Xi}{\gamma} }(N^*)^\theta
	\end{eqnarray}
	But 
	$$\theta+\frac{\Xi}{\gamma}<0.  $$
	Finally, as for the conditions of Lemma \ref{cor:signalconsistent}, we have
	\begin{itemize}
		\item $\frac{w^*(N^*)}{\delta^*_{N^*}}\lesssim(N^*)^{-(1-\frac{\Xi}{\gamma})}\log(N^*)\to 0$, hence $w^*(N^*)=o(\delta^*_{N^*})$
		\item $\frac{N^*}{\delta^*_{N^*}}B_{N^*}\sim\frac{1}{\delta^*_{N^*}}\to 0$
		\item because $\rho_{N^*}^*=(N^*)^\theta$ where $\theta\in \left( \frac{\Xi}{2\gamma}-\frac{1}{2},-\frac{\Xi}{\gamma} \right)$, we have $\rho_{N^*}^*=o(1)$
		\item $\frac{N^*w^*(N^*)}{(\delta^*_{N^*})^{3/2}\rho^*_{N^*}}\lesssim (N^*)^{1-\theta}\log(N^*)$, and $\delta^*_{N^*}(\rho^*_{N^*})^2\gtrsim (N^*)^{2\theta+1-\frac{\Xi}{\gamma}} $; since $2\theta+1-\frac{\Xi}{\gamma}>0$, this means that $\frac{N^*w^*(N^*)}{(\delta^*_{N^*})^{3/2}\rho^*_{N^*}}=o( \exp(C_2\delta^*_{N^*}(\rho^*_{N^*})^2) )$ for any positive constant $C_2$
	\end{itemize}
	Since all conditions of Lemma \ref{cor:signalconsistent} are satisfied, the signal estimators satisfy
	\begin{eqnarray}
	\mathbb{P}\left[ \hat{J}=J;\; \max_{j=0,\dots,J}|\hat{\nu}^*_j-\nu_j|\leq \rho^*_{N^*} \right]\to 1
	\end{eqnarray}
	which combines with the consistency of the change point estimators through a Bonferroni inequality to obtain the consistency result (\ref{eq:firstconsistent}).
\end{proof}

\begin{remark}
	As with the BinSeg algorithm, the WBinSeg procedure is asymptotically consistent but faces the same issues as BinSeg in a practical setting where the goal is to obtain confidence bounds $[\hat{\tau}_i\pm C_3\log({N^*})]$ for the change point $\tau_i$. Namely,  there are unspecified constants associated with the tuning parameter $\zeta_{N^*}$ and the confidence interval width $C_3\log({N^*})$ in (\ref{eq:wildbound}). The issue of choosing a confidence interval width will can be resolved by applying the procedure of Section \ref{sec:refitting}.
\end{remark}

%
%
%
%
\begin{table}
	\caption{Table of $\gamma_{min}$ and computational times for various values of $\Xi$, using WBinSeg at stage 1.}\label{tab:WBStime}
	\centering
	\begin{tabular}{|c|c|c|c|}
		\hline
		$\Xi$ & $[0,1/9)$ & $[1/9,1/7)$ & $[1/7,1/3)$ \\
		\hline
		$\gamma_{min}$ & $\frac{1-2\Xi+\Lambda}{2}$ & $\max\left\{\frac{1-2\Xi+\Lambda}{2},3\Xi+\eta \right\}$ & $\Xi+\eta$\\
		\hline
		Order of Time & $N^{(1+2\Xi+\Lambda)/2}\log(N)$ & $ N^{(1+2\Xi+\Lambda)/2}\log(N)$ or $N^{5\Xi+\eta}\log(N)$  & $N^{5\Xi+\eta}\log(N)$ \\
		\hline
		\hline
		$\gamma_{min}$ ($\Lambda=0$) & $\frac{1-2\Xi}{2}$ & $3\Xi+\eta$ & $3\Xi+\eta$ \\
		\hline
		Time ($\Lambda=0$) & $N^{(1+2\Xi)/2}\log(N)$ & $N^{5\Xi+\eta} \log(N)$  & $N^{5\Xi+\eta}\log(N)$  \\
		\hline
		\hline
		$\gamma_{min}$ ($\Lambda=\Xi$) & $\frac{1-\Xi}{2}$ & $\frac{1-\Xi}{2}$ & $3\Xi+\eta$\\
		\hline
		Time ($\Lambda=\Xi$) & $N^{(1+3\Xi)/2}\log(N)$ & $N^{(1+3\Xi)/2} \log(N)$  & $N^{5\Xi+\eta}\log(N)$ \\
		\hline
	\end{tabular}
\end{table}

\begin{figure}[H]
	\begin{center}
		\begin{overpic}[scale=0.42,tics=10]{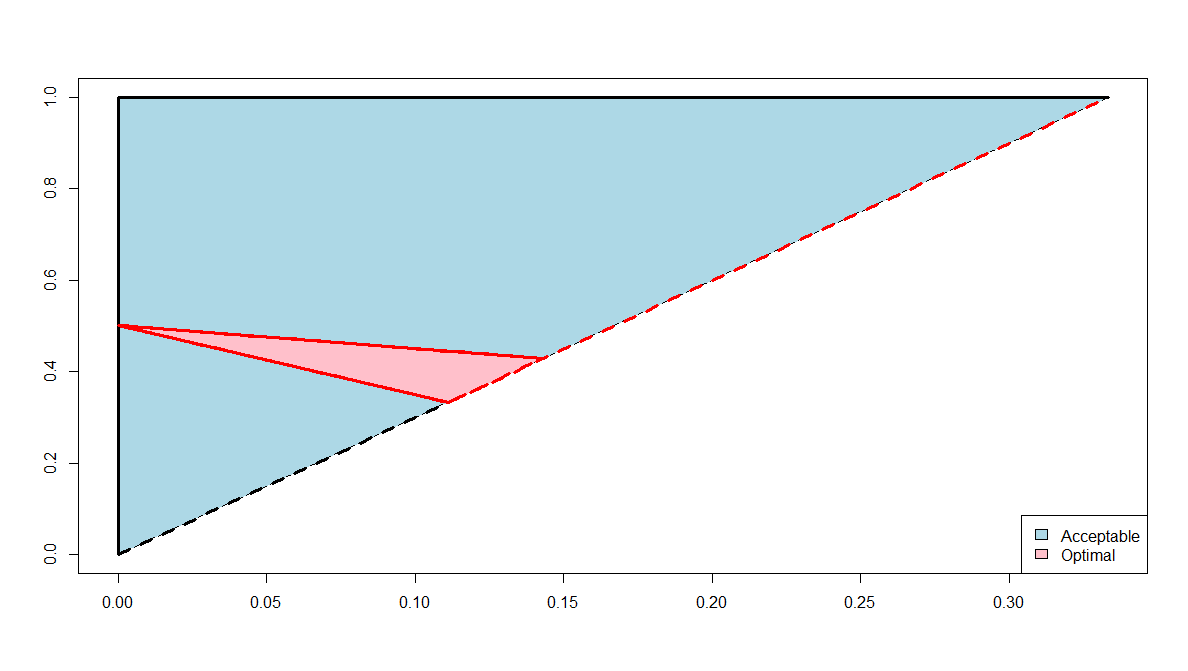}
			\put (55,2){\Large{$\Xi$}}
			\put (0,30){\Large$\gamma$}
			\put (12,21){$ \color{red} \gamma=\frac{1-2\Xi}{2} $}
			\put (20,31){$ \color{red} \gamma=\frac{1-\Xi}{2} $}
			\put (57,30){$\gamma=\Xi $}
			\put (44,56){$\gamma$ vs $\Xi$ for WBinSeg}
		\end{overpic}
	\end{center}
	\caption{Blue triangle encompasses all valid values of $\gamma$ vs $\Xi$ as set by (M8 (WBinSeg)). Pink region, solid red lines, and dotted red lines denotes $\gamma_{min}$ for each $\Xi$.}\label{fig:WBStime}
\end{figure}

\begin{remark} 
	\label{comparing-comuting-time}
	Although the $\gamma_{min}$ values are, across the board, smaller than those given in Table \ref{table-time-binseg} of the main paper (which records the $\gamma_{min}$ values and computational time of Binseg at Stage 1), the order of the actual computational time is also greater across the board. There is some advantage in using WBinSeg since consistency condition (\ref{eq:firstconsistent}) with $J\rho_N\to 0$ is satisfied in a greater regime of $\Xi$ ($\Xi<1/3$ as opposed to $\Xi<1/4$ for BinSeg), but in scenarios where the change points are placed far apart BinSeg would run in shorter time.
\end{remark} 
\subsection{Computational Time Order for Multiple Change Points}\label{sec:multipletime}

\indent To analyze the computational time when using WBinSeg at stage 1, we again assume that $\delta_N/N^\Xi\to K_1$ and $J(N)/N^\Lambda\to K_2$ for some constant $\Lambda\leq \Xi$ and positive constants $K_1$, $K_2$. To summarize the details, for $N_1\sim N^\gamma$, the average time for the first stage is $O(N^{\gamma+2\Xi}\log(N))$, while the second stage takes, on average,  $O(N^{1-\gamma+\Lambda}\log(N))$ time. Together with condition (M9 (WBinSeg)) and setting, $\gamma>3\Xi$, the order of average time for both stages combined is minimized by setting $\gamma_{min}=\max\left\{ \frac{1-2\Xi+\Lambda}{2},\Xi+\eta \right\}$ for any small constant $\eta$, with the average total computational time being $O(N^{\gamma_{min}+2\Xi}\log(N))$.
\newline
\newline
\noindent{\bf Detailed Analysis:} We have $\delta_{N}\sim N^{1-\Xi}$ for some $\Xi\in [0,1/3)$ and $J\sim N^\Lambda$ for some $\Lambda\in [0,\Xi]$. Given $n$ data points from (\ref{model}) with minimum separation $\delta_n$ between change points, it takes an order of $\frac{n^3}{\delta_n^2}\log(n)$ time to perform the procedure, due to having to use $M_n\sim \left(\frac{n}{\delta_n}\right)^2\log(n^2/\delta_n)$ random intervals, each requiring $O(n)$ time on average. The first stage works with a time series data of length order $N^\gamma$ with minimal separation $\delta_{N^*}^*$, and hence has an average computational time that is of the same order as$ (N^*/\delta_{N^*}^*)^2\cdot N^*\log(N^* )$, which is the same order as $N^{\gamma+2\Xi}\log(N)$. The second stage works with $\hat{J}$ intervals, each of width $CN^{1-\gamma}\log(N)$ for some constant $C$. Because we have
\begin{eqnarray}
\mathbb{P}[\hat{J}=J]\geq 1-C(N^*)^{-1}\qquad \text{ for some }C>0
\end{eqnarray}
by our earlier condition on $M_N$, we arrive at $\mathbb{E}[\hat{J}]=O(J)$ because
$$ \mathbb{E}[\hat{J}]\leq J(1-C(N^*)^{-1})+N(C(N^*)^{-1})\leq (C+1)J. $$
This in turn shows the expected computational time of the second stage is $O(JN^{1-\gamma}\log(N))$ which simplifies to $O(N^{1-\gamma+\Lambda}\log(N))$.
\newline
\newline
\indent Both stages combined are expected to take $O\left(N^{(\gamma+2\Xi)\vee (1-\gamma+\Lambda)}\log(N)\right)$ time. This fact combined with the requirement $\gamma>3\Xi$ lead to an optimal way to choose $\gamma$ to minimize the amount of computational time:
\begin{itemize}
	\item On the region $\Xi<1/9$ we can solve the equation $\gamma_{min}+2\Xi=1-\gamma_{min}+\Lambda$ to get the minimizing $\gamma$ as $\gamma_{min}=\frac{1-2\Xi+\Lambda}{2}$, which satisfies $\gamma_{min}>3\Xi$. This results in $O(N^{\frac{1+2\Xi+\Lambda}{2}}\log(N))$ computational time.
	\item On the region $\Xi\in [1/9,1/7)$:
	\begin{itemize}
		\item If $\frac{1-2\Xi+\Lambda}{2}> 3\Xi$, set $\gamma_{min}=\frac{1-2\Xi+\Lambda}{2}$ resulting in $O(N^{\frac{1+2\Xi+\Lambda}{2}}\log(N))$ computational time.
		\item Otherwise if $\frac{1-2\Xi+\Lambda}{2}\leq 3\Xi$, set $\gamma_{min}=3\Xi+\eta$, where $\eta>0$ is small, for $O(N^{5\Xi+\eta}\log(N))$ computational time.
	\end{itemize} 
	\item For $\Xi\in [1/7,1/3)$ also set $\gamma_{min}=\Xi+\eta$, where $\eta>0$ is small, for $O(N^{5\Xi+\eta}\log(N))$ computational time.
\end{itemize}

\subsection{Simulation Results for WBinSeg} We next looked at how effective intelligent sampling with WBinSeg would work in practice, by running a set of simulations with the same set of model parameters as in Setup 2 of Section \ref{sec:simulations}, and used the exact same method of estimation except with WBinSeg used in place of the Binseg algorithm. For the tuning parameters of WBinSeg, the same $\zeta_N$ was retained and the number of random intervals was taken as $M_n=20000$. Although we could have used the theoretically prescribed value of $M_N=9(N_1/\delta_{N_1}^*)^2\log(N_1^2/\delta_{N_1}^*)$, this turns out to be over 400,000 and is excessive as setting $M_N=20,000$ gave accurate estimates.
\begin{figure}[H]
	\begin{center}
		\includegraphics[scale=0.35]{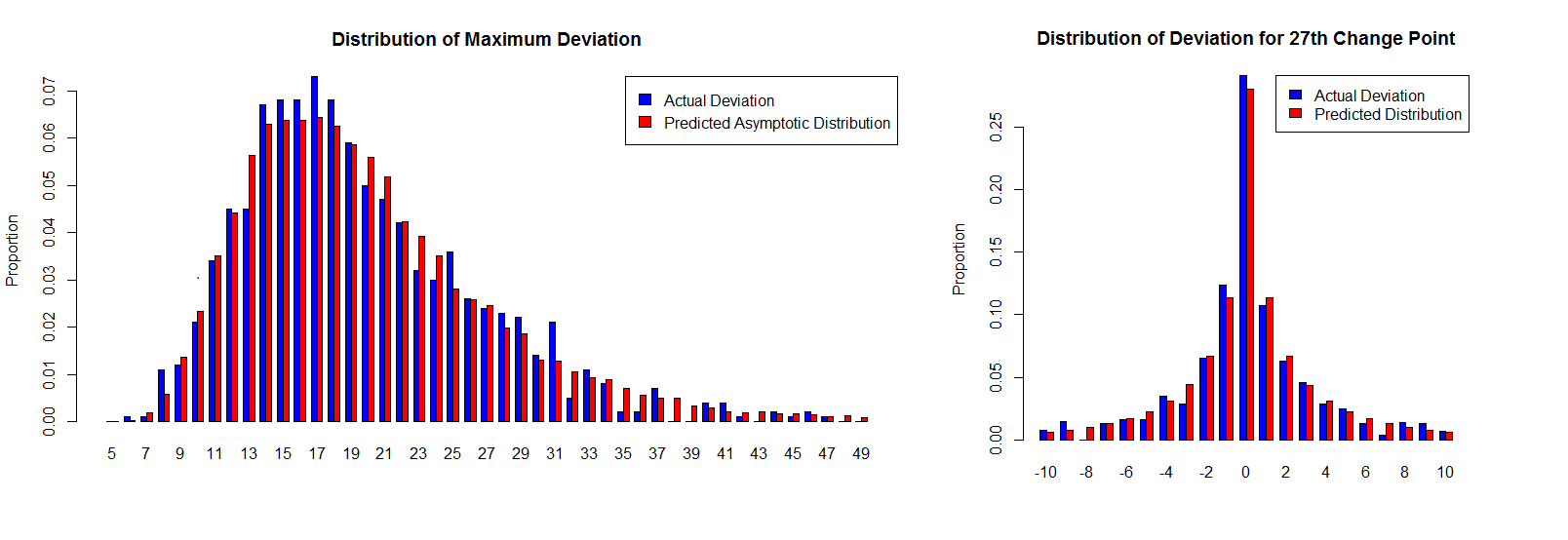}
	\end{center}
	\caption{Distributions of $\max_{1\leq j\leq 55}\lambda_2\left(\tau_j,\hat{\tau}_j^{(2)}\right)$ and $\lambda_2\left(\tau_{27},\hat{\tau}_{27}^{(2)}\right)$ from 1000 trials using the same parameters as setup 2 but employing WBinSeg instead of BinSeg. 
	}\label{fig:setup5}
\end{figure}
Using WBinSeg along with steps (D1) and (D2), the event $\{\hat{J}=J\}$ also occurred over 99\% of the time during simulations. One can also see from Figure \ref{fig:setup5} that the distribution of the $\hat{\tau}^{(2)}_j$'s again match with Theorem \ref{thm:multidepend}. Because the performance of intelligent sampling is near identical for this setup, regardless of whether BinSeg or WBinSeg was used, the reader may wonder why the latter isn't used for the previous few simulations. 
The reason is the following: when the re-fitting method from Section \ref{sec:refitting} is implemented, it results in the second stage intervals being of width $Q_1(1-\alpha/J)N^{1-\gamma}$, irrespective of which of BinSeg or WBinSeg was used at stage one [where $Q_1(1-\alpha/J)$ is the $1-\alpha/J$ quantile of $|L_1|$]. Hence, WBinSeg loses any possible advantage from the tighter confidence bound of width $O(\log(N))$ rather than $O(E_N)$ for BinSeg from stage one. So, in a sparse change point setting and with stage 1 refitting, WBinSeg provides no accuracy advantages but adds to the computational time, e.g., the 1000 iterations used to create Figure \ref{fig:setup5} averaged $\approx 293$ seconds, while the iterations used to create Figure \ref{fig:setup2} averaged $\approx 7$ seconds.

\bibliography{DCCP}

\begin{thebibliography}{10}

\bibitem{ahmed2016survey}
M.~Ahmed, A.~N. Mahmood, and J.~Hu.
\newblock A survey of network anomaly detection techniques.
\newblock {\em Journal of Network and Computer Applications}, 60:19--31, 2016.

\bibitem{bai1998estimating}
J.~Bai and P.~Perron.
\newblock Estimating and testing linear models with multiple structural
  changes.
\newblock {\em Econometrica}, pages 47--78, 1998.

\bibitem{baranowski2019narrowest}
R.~Baranowski, Y.~Chen, and P.~Fryzlewicz.
\newblock Narrowest-over-threshold detection of multiple change points and
  change-point-like features.
\newblock {\em Journal of the Royal Statistical Society: Series B (Statistical
  Methodology)}, 81(3):649--672, 2019.

\bibitem{Basseville1993detection}
M.~Basseville and I.~V. Nikiforov.
\newblock {\em Detection of Abrupt Changes: Theory and Application}.
\newblock Prentice-Hall, Inc., Upper Saddle River, NJ, USA, 1993.

\bibitem{carl2006denial}
G.~Carl, G.~Kesidis, R.~R. Brooks, and S.~Rai.
\newblock Denial-of-service attack-detection techniques.
\newblock {\em IEEE Internet computing}, 10(1):82--89, 2006.

\bibitem{chofrycorrection}
H.~Cho and P.~Frylewicz.
\newblock Corrections on "multiple change-point detection for high-dimensional
  time series via sparsified binary segmentation" [1].
\newblock
  \url{https://people.maths.bris.ac.uk/~mahrc/papers/sbs_correction.pdf}.
\newblock Accessed: 2018-12-04.

\bibitem{cho2015multiple}
H.~Cho and P.~Fryzlewicz.
\newblock Multiple-change-point detection for high dimensional time series via
  sparsified binary segmentation.
\newblock {\em Journal of the Royal Statistical Society: Series B (Statistical
  Methodology)}, 77(2):475--507, 2015.

\bibitem{csorgo1997limit}
M.~Cs{\"o}rg{\"o} and L.~Horv{\'a}th.
\newblock {\em Limit theorems in change-point analysis}.
\newblock John Wiley \& Sons Inc, 1997.

\bibitem{frick2014multiscale}
K.~Frick, A.~Munk, and H.~Sieling.
\newblock Multiscale change point inference.
\newblock {\em Journal of the Royal Statistical Society: Series B (Statistical
  Methodology)}, 76(3):495--580, 2014.

\bibitem{frisen2008financial}
M.~Fris{\'e}n.
\newblock {\em Financial surveillance}, volume~71.
\newblock John Wiley \& Sons, 2008.

\bibitem{fryzlewicz2014wild}
P.~Fryzlewicz et~al.
\newblock Wild binary segmentation for multiple change-point detection.
\newblock {\em The Annals of Statistics}, 42(6):2243--2281, 2014.

\bibitem{geer2000empirical}
S.~A. Geer.
\newblock {\em Empirical Processes in M-estimation}.
\newblock Cambridge university press, 2000.

\bibitem{kallitsis2016amon}
M.~Kallitsis, S.~A. Stoev, S.~Bhattacharya, and G.~Michailidis.
\newblock Amon: An open source architecture for online monitoring, statistical
  analysis, and forensics of multi-gigabit streams.
\newblock {\em IEEE Journal on Selected Areas in Communications},
  34(6):1834--1848, 2016.

\bibitem{khan2016optimal}
N.~Khan, S.~McClean, S.~Zhang, and C.~Nugent.
\newblock Optimal parameter exploration for online change-point detection in
  activity monitoring using genetic algorithms.
\newblock {\em Sensors}, 16(11):1784, 2016.

\bibitem{killick2012optimal}
R.~Killick, P.~Fearnhead, and I.~A. Eckley.
\newblock Optimal detection of changepoints with a linear computational cost.
\newblock {\em Journal of the American Statistical Association},
  107(500):1590--1598, 2012.

\bibitem{koepcke2016single}
L.~Koepcke, G.~Ashida, and J.~Kretzberg.
\newblock Single and multiple change point detection in spike trains:
  Comparison of different cusum methods.
\newblock {\em Frontiers in systems neuroscience}, 10, 2016.

\bibitem{kosorok2007introduction}
M.~R. Kosorok.
\newblock {\em Introduction to empirical processes and semiparametric
  inference}.
\newblock Springer Science \& Business Media, 2007.

\bibitem{kovacs2020seeded}
S.~Kov{\'a}cs, H.~Li, P.~B{\"u}hlmann, and A.~Munk.
\newblock Seeded binary segmentation: A general methodology for fast and
  optimal change point detection.
\newblock {\em arXiv preprint arXiv:2002.06633}, 2020.

\bibitem{loader1996change}
C.~R. Loader et~al.
\newblock Change point estimation using nonparametric regression.
\newblock {\em The Annals of Statistics}, 24(4):1667--1678, 1996.

\bibitem{niu2016multiple}
Y.~S. Niu, N.~Hao, H.~Zhang, et~al.
\newblock Multiple change-point detection: A selective overview.
\newblock {\em Statistical Science}, 31(4):611--623, 2016.

\bibitem{niu2012screening}
Y.~S. Niu and H.~Zhang.
\newblock The screening and ranking algorithm to detect dna copy number
  variations.
\newblock {\em The annals of applied statistics}, 6(3):1306, 2012.

\bibitem{pein2016heterogeneous}
F.~Pein, H.~Sieling, and A.~Munk.
\newblock Heterogeneous change point inference.
\newblock {\em Journal of the Royal Statistical Society: Series B (Statistical
  Methodology)}, 2016.

\bibitem{qiu2013introduction}
P.~Qiu.
\newblock {\em Introduction to statistical process control}.
\newblock CRC Press, 2013.

\bibitem{seifert1994fast}
B.~Seifert, M.~Brockmann, J.~Engel, and T.~Gasser.
\newblock Fast algorithms for nonparametric curve estimation.
\newblock {\em Journal of Computational and Graphical Statistics},
  3(2):192--213, 1994.

\bibitem{shen2016change}
Y.~Shen, R.~Lindenbergh, and J.~Wang.
\newblock Change analysis in structural laser scanning point clouds: The
  baseline method.
\newblock {\em Sensors}, 17(1):26, 2016.

\bibitem{van2014opennetmon}
N.~L. Van~Adrichem, C.~Doerr, and F.~A. Kuipers.
\newblock Opennetmon: Network monitoring in openflow software-defined networks.
\newblock In {\em 2014 IEEE Network Operations and Management Symposium
  (NOMS)}, pages 1--8. IEEE, 2014.

\bibitem{venkatraman1992consistency}
E.~S. Venkatraman.
\newblock {\em Consistency results in multiple change-point problems}.
\newblock PhD thesis, to the Department of Statistics.Stanford University,
  1992.

\bibitem{yao1984estimation}
Y.-C. Yao.
\newblock Estimation of a noisy discrete-time step function: Bayes and
  empirical bayes approaches.
\newblock {\em The Annals of Statistics}, pages 1434--1447, 1984.

\bibitem{zhang2007modified}
N.~R. Zhang and D.~O. Siegmund.
\newblock A modified bayes information criterion with applications to the
  analysis of comparative genomic hybridization data.
\newblock {\em Biometrics}, 63(1):22--32, 2007.

\end{thebibliography}
\bibliographystyle{abbrv}

\end{document}